\newcommand{\bE}{\mathbb{E}}
\newcommand{\cH}{\mathcal{H}}
\newcommand{\ket}[1]{| #1 \rangle}
\newcommand{\bra}[1]{\langle #1|}
\newcommand{\bracket}[3]{\langle #1|#2|#3 \rangle}
\newcommand{\avr}[1]{\left \langle#1 \right \rangle}
\newcommand{\1}{\mathbbm{1}}
\newcommand{\Tr}{\operatorname{Tr}}
\newcommand{\Real}{{\textrm{Re}}}
\newcommand{\Imag}{{\textrm{Im}}}
\newcommand{\mywedge}{{\textrm{{\large $\wedge$}}\!}^2}
\newcommand{\be}{\begin{equation}}
\newcommand{\ee}{\end{equation}}
\newcommand{\bea}{\begin{eqnarray}}
\newcommand{\eea}{\end{eqnarray}}
\newcommand{\bes}{\begin{equation*}}
\newcommand{\ees}{\end{equation*}}
\newcommand{\beas}{\begin{eqnarray*}}
\newcommand{\eeas}{\end{eqnarray*}}
\newtheorem{thm}{Theorem}
\newtheorem{corollary}[thm]{Corollary}
\newtheorem{lemma}[thm]{Lemma}
\newtheorem{prop}[thm]{Proposition}
\newtheorem{definition}[thm]{Definition}
\begin{document}

\title[Classical restrictions of generic matrix product states are quasi-locally Gibbsian]{Classical restrictions of generic matrix product states are quasi-locally Gibbsian}

\author{Y. Aragonés-Soria}
 \email{yaiza.aragonessoria@gmail.com}
 \altaffiliation{Institute for Theoretical Physics, University of Cologne, Zülpicher Str. 77, 50937 Köln, Germany.}
 
\author{J. \AA berg}%
\affiliation{Institute for Theoretical Physics, University of Cologne, Zülpicher Str. 77, 50937 Köln, Germany.
}%

\author{C-Y. Park}
\affiliation{%
Institute for Theoretical Physics, University of Cologne, Zülpicher Str. 77, 50937 Köln, Germany.
}%

\author{M. J. Kastoryano}
\affiliation{Institute for Theoretical Physics, University of Cologne, Zülpicher Str. 77, 50937 Köln, Germany.
}%
\affiliation{Amazon Quantum Solutions Lab, Seattle, Washington 98170, USA}
\affiliation{AWS Center for Quantum Computing, Pasadena, California 91125, USA}

\date{\today}

\begin{abstract}
We show that the norm squared amplitudes with respect to a local orthonormal basis (the classical restriction) of finite quantum systems on  one-dimensional lattices can be exponentially well approximated by Gibbs states of local Hamiltonians (i.e., are quasi-locally Gibbsian) if the classical conditional mutual information (CMI) of any connected tripartition of the lattice is rapidly decaying in the width of the middle region. For injective matrix product states, we moreover show that the classical CMI decays exponentially, whenever the collection of matrix product operators satisfies a `purity condition'; a notion previously  established in the theory of random matrix products. We furthermore show that violations of the purity condition enables a generalized notion of error correction on the virtual space, thus indicating the non-generic nature of such violations. We make this intuition more concrete by constructing a probabilistic model where purity is a typical property. The proof of our main result makes extensive use of the theory of random matrix products, and may find applications elsewhere. 
\end{abstract}

\maketitle

\section{Introduction}

Considerable effort has been devoted to understanding the entanglement properties of many-body quantum states. For finite  one-dimensional-lattice systems, the theory of Matrix Product States (MPSs) provides a complete framework for describing entanglement of gapped many-body systems \cite{HastingsMPSGap}, and allows for efficient high precision simulations via the DMRG algorithm \cite{DMRG0,DMRG1,DMRG2}. Similarly impressive degrees of numerical precision can be reached in other settings, such as disorder \cite{DMRGDisorder}, open systems \cite{OpenMPS}, time evolution \cite{MPSTev}, or critical systems \cite{DMRGCrit}. The success of these simulation methods can be traced back to the accurate parametrization of entanglement in MPSs.   There exist  extensions to  lattices of higher dimensions (projected entangled pair states) but these have been far less useful for simulations, due to their extensive entanglement growth. 

In contrast, quantum Monte-Carlo simulations are largely based on heuristic assumptions on the weights and phases of the underlying state. Indeed, if the system under study can be cast in a form with only positive weights, then Monte-Carlo methods often work well, though  convergence guarantees are only known in very special cases \cite{Bravyi}. This in turn is believed to be due to the local Gibbsian nature of the classical restriction of the state. Classical Monte-Carlo sampling is known to converge rapidly for Ising type problems \cite{Jerrum,Martinelli}, while quantum variational Monte-Carlo is often successful when using a locally restricted Gibbs Ansatz, such as the Jastrow-Ansatz. Further evidence of the importance of locality in the Ansatz wavefunction has been observed for more expressive Ans\"{a}tze, such as the complex Restricted Boltzmann machine \cite{Carleo}, where the activations naturally preserve locality in many cases.  Hence, whereas tensor network states explicitly encode the local entanglement structure in their construction,  quantum (variational) Monte Carlo implicitly invokes locality through the pervasive Gibbsian nature of probability distributions. 

Here, we connect these two pictures by showing that generic injective MPSs~\cite{injectivity} have classical restrictions that are quasi-locally Gibbsian. More precisely, we here refer to a probability distribution as locally Gibbsian if it can be written as the equilibrium distribution of a local Hamiltonian, i.e., as a sum of terms that each spans at most $\ell$ adjacent sites. Well known examples include the Ising and Potts models. We similarly say that a distribution is quasi-locally Gibbsian, if it can be approximated by  Gibbs distributions corresponding to local Hamiltonians, $h^{\ell}$, where the error of the approximation in some sense decays exponentially with increasing $\ell$. Such notions appear in various guises in the literature, e.g., Ref.~\onlinecite{Kozlov}, which  requires that  the coefficients in the cluster expansion of $\log(p)$ are rapidly decaying with the order of the cluster.

As the first step towards proving the generic quasi-local Gibbs property of injective MPSs, we show (in Section \ref{sec:CMI}) that probability distributions on a one-dimensional lattice with open boundary conditions are quasi-locally Gibbsian if the Conditional Mutual Information (CMI) between any tripartition of the lattice is decaying rapidly in the width of the middle region. The stronger the decay of the CMI, the more local the Gibbs distribution. In the case of zero correlation length, the distribution is (strictly) locally Gibbs  \cite{Poulin}. 
A number of recent studies in quantum information theory have revealed connections between the CMI and the Gibbsian nature of density matrices. In Ref.~\onlinecite{BrandaoKato}, the authors show that the quantum CMI  of a full rank density matrix on a one-dimensional lattice is small if and only if the state is Gibbsian. The Gibbsian nature of states has important implications for the nature of  edge states of topologically ordered systems \cite{KatoEdge,MePEPS}. Our results show that similar equivalences hold for classical restrictions of quantum states. Similar conclusions can be reached using perturbative methods for classical restrictions of high temperature quantum Gibbs states of gaped spin chains \cite{maes}.

The second step towards establishing the  quasi-locality is also our main result; that the classical restriction of injective MPSs have an exponentially decaying CMI if the matrices associated to the MPS satisfy a condition referred to as \textit{purity} (see Def.~\ref{DefPur}). This condition  has previously been shown \cite{Benoist,Maassen} to imply the `purification' of  quantum trajectories resulting from the applications of sequences of random matrices on an initial state. In our setting,  the classical CMI can be bounded by a corresponding quantum CMI. The latter can, in turn, be rewritten in terms of the expected entanglement entropy after  measurements on the conditional subsystem.  A vanishing entanglement entropy is thus equivalent to the purification of the state-trajectory induced by the sequence of measurements  on the virtual system. The purification of trajectories implies that the system asymptotically jumps between pure states of a specific stationary measure, irrespective of what (mixed) state the system started in. We are currently not aware of a meaningful operational interpretation of the stationary stochastic process, and believe it to be quite hard to evaluate in practice \cite{Benoist2}. Furthermore, and perhaps counter-intuitively, we observe that the rate of decay towards the stationary measure is unrelated to the gap of the transfer operator of the MPS. We moreover do not know of a closed functional form for the decay rate, in terms of the matrices associated to the MPS. 

One may note that our setting, which focuses on the degree of conditional post-measurement entanglement, is closely related to the notion of localizable entanglement \cite{locent0,locent1,locent2}. The latter is obtained by optimizing the measurements over all possible local bases, while we consider a fixed basis. However, to the best of our knowledge, a general proof of the exponential decay of the localizable entanglement has not been shown previously.

As mentioned above, we show that the purity condition is sufficient for the exponential decay of the \emph{classical} CMI. However,  it is less clear if it also is a sufficient condition. In order to better pinpoint the significance of the purity condition, we show that it in essence is both necessary and sufficient for the exponential decay of the above mentioned \emph{quantum} CMI.

As a further attempt to gain a better understanding of the purity condition, we moreover investigate the conspicuous similarity between   (the violation of) the purity condition (see Def.~\ref{DefPur}) and the Knill-Laflamme error correction condition \cite{KLcondition}. Indeed, we find (in Section \ref{sec:errorcorr}) that the purity-condition can be regarded as the non-existence of a non-trivial correctable subspace that persists indefinitely throughout iterated applications of an error-model, in a somewhat unconventional error correction scenario. One may note that invariant subspaces are special cases of such correctable spaces. As an example, MPSs with Symmetry-Protected Topological (SPT) order are associated to invariant subspaces \cite{Else} and would thus violate the purity condition. 
The above results suggests that violations of the purity condition in some sense are `fragile'. In order to shed some further light on this question, we construct  a  probabilistic model (in Sec.~\ref{SecModelTypicality}), where the purity condition holds, apart for a subset of measure zero.
 
The proof of our main theorem relies heavily on the theory of random matrix products, and in particular on the work of Benoist et.~al.~\cite{Benoist} and Maassen and Kümmerer~\cite{Maassen}. Since these results involve notions from probability theory that likely are unfamiliar to most of the quantum information community, we reproduce in Appendixes \ref{app:A}-\ref{AppProofPropMain} many of the basic results in a language that should be more familiar to the quantum-information reader. We hope that this will facilitate the access to a rich and extensive body of work that should see many more applications in the fields of quantum information and many body physics. For instance, the theory of random matrix products  has recently been leveraged in a different setting, to show ergodicity for ensembles of quantum channels \cite{Ramis1,Ramis2}. 

Concerning the structure of the paper, we begin by introducing the notation in Section \ref{s:notation}, while Section \ref{s:CondMutInf} focuses on the central object in this investigation, namely the CMI with respect to classical restrictions of MPSs. Section \ref{sec:CMI} presents the first result of the paper: an exponentially decaying CMI implies quasi-local Gibbs distributions. Section \ref{s:keytheorem} is devoted to the main result, namely the exponentially decaying CMI for a broad class of MPSs. Section \ref{s:Discussion} provides  examples and observations, where we in Section \ref{ss:SPTphases} observe that MPSs corresponding to SPT phases violate the purity condition. In Section \ref{sec:errorcorr} we further investigate the purity condition and show that its violation can be regarded as a type or error-correction condition. Section \ref{PuritySufficient} is devoted to give a sufficient condition for purity to hold for a set of operators in terms of the span of the operators. We use this relation in Section \ref{SecModelTypicality}, where we construct a model of typicality of purity, to prove that purity is a generic property. Section \ref{ss:RateOfDecay} compares the convergence rate of the CMI with the rate of the converge to the fixed point of the transfer operator. Concrete examples are provided in Section \ref{ss:Exapmles}. We finish with an outlook in Section \ref{s:Outlook}.

\section{Notation}\label{s:notation}
We consider pure states defined on a finite one-dimensional lattice, $\Lambda$, and associate a finite dimensional Hilbert space of dimension $d$ to each site. We index the sites of the lattice according to a tripartition of the lattice $\Lambda=ABC$ as follows: we denote sites in region $A$ as $-|A|+1,-|A|+2,\dots,-1,0$; sites in region $B$ as $1,\dots,N$; and sites in region $C$ as $N+1,\dots,|BC|$ (see Fig. \ref{f:lattice}).  This peculiar indexing of sites will make sense later on when considering the CMI for MPSs. 

\begin{figure}[htb]
\centering
\includegraphics[scale=1]{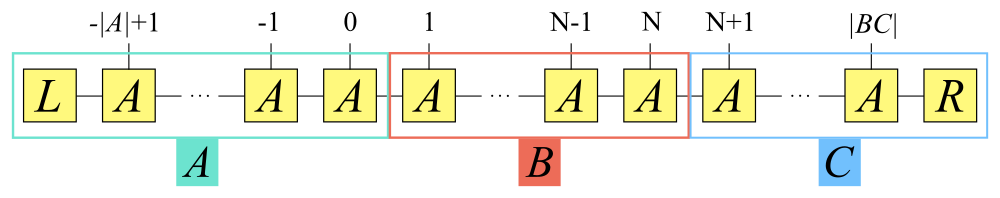}
\caption{We consider a MPS on a finite lattice, $\Lambda$, which is broken up into three contiguous regions such that $\Lambda=ABC$. We denote sites in region $A$ as $-|A|+1,-|A|+2,\dots,-1,0$; sites in region $B$ as $1,\dots,N$; and sites in region $C$ as $N+1,\dots,|BC|$.}
\label{f:lattice}
\end{figure}

Let $\ket{x_\Lambda}=\ket{x_{-|A|+1},\ldots,x_{0},x_{1},\ldots,x_N,\ldots,x_{|BC|}}$ be a local orthonormal basis, where $\{\ket{x_i}\}_{x_i=0}^{d-1}$ is the local basis at site $i$. Unless specified otherwise, we will be working with translationally invariant MPSs with open boundary conditions
\be \label{e:MPS}
\ket{\Psi}=\dfrac{1}{K}\sum_{x_{-|A|+1},\dots,x_{|BC|}=0}^{d-1}\bra{R}A_{x_{|BC|}}\cdots A_{x_{-|A|+1}}\ket{L}\ket{x_{-|A|+1}\cdots  x_{|BC|}},
\ee
where $K$ is a normalization factor.
Here, $A_{x_i}$ are $D\times D$ matrices encoding correlations in the system and $\ket{L}$ and $\ket{R}$ are normalized states on the $D$-dimensional virtual space specifying the boundary conditions, where $D$ is known as the bond dimension of the MPS. Without loss of generality, we consider (left-)normalized MPSs, which enforces that $\sum_{x_i=0}^{d-1} A^\dag_{x_i} A_{x_i}=\1$. Left normalization guarantees that the completely positive map
\begin{align}\label{e:TransferOperator}\bE(\cdot):=\sum_{x_i=0}^{d-1}A_{x_i}\cdot A_{x_i}^\dag
\end{align}
is trace preserving. The map $\bE$ is often referred to as the transfer operator and maps density matrices on the virtual space to density matrices from left to right. The adjoint map, $\bE^*$, maps operators from right to left along the chain. Our choice of boundary conditions serves mainly for notational simplicity. The results in the paper extend naturally to periodic or mixed boundary conditions. For periodic boundary conditions, the regions $ABC$ need to be chosen differently to ensure that $B$ separates $A$ from $C$. 

The normalization constant can be expressed concisely as $K^2=\Tr\left[\bE^{|\Lambda|}\left(L\right)R\right]$, where we use the shorthand notation $R=\ket{R}\bra{R}$ and $L=\ket{L}\bra{L}$.

\paragraph{Classical Restrictions}

For a given local basis $\{\ket{x_\Lambda}\}$, we define the quantum channel
\begin{align}\label{e:channel}
\Phi_\Lambda(\psi)&=\sum_ {x_\Lambda} \ket{x_\Lambda}\bra{x_\Lambda} \bra{x_\Lambda} \psi \ket{x_\Lambda}.
\end{align}
In other words, $\Phi_\Lambda$ generates a state that is diagonal with respect to  the basis $\{\ket{x_\Lambda}\}$, by deleting the off-diagonal elements of the input $\psi$. We refer to $\Phi_\Lambda$ as the classical restriction (also commonly referred to as a `dephasing map' or `pinching'). Since $\Phi_{\Lambda}(\psi)$ is diagonal, the map $\Phi_\Lambda$ effectively defines a classical probability distribution, $p_\psi( x_\Lambda)=\bra{x_\Lambda} \psi \ket{x_\Lambda}$, for any choice of basis $\{|x_\Lambda\rangle\}$.

We also consider the channel that measures a subset of systems  $B\subset\Lambda$ and we denote it as 
\begin{equation}\label{e:reducedchannel}
\begin{split}
    \Phi_B(\psi)&=\sum_{x_B} \ket{x_B}\bra{x_B}\bra{x_B}\psi\ket{x_B},\\
    &=\sum_{x_B}p_\psi(x_B)\psi(x_B),
\end{split}
\end{equation} 
with $ \ket{x_B}=\bigotimes_{i\in B} \ket{x_i}$.  Here, the channel $\Phi_B$ similarly defines a classical probability distribution on the sites in $B$ by $p_\psi(x_B)=\bra{x_B} \psi_B \ket{x_B}$, where $\psi_B:=\Tr_{\Lambda\setminus AC}\psi$ is the reduced state of $\psi$ on $B$. Note that 
\be
p_\psi(x_B)=\sum_{x_{AC}} p_\psi(x),
\ee
where recall that $\Lambda=ABC$.
Moreover, we refer to the post-measurement state after obtaining the measurement outcome $x_B$ as 

\be\label{e:post-measurementstate}
\psi(x_B)=\dfrac{1}{p_\psi(x_B)}\ket{x_B}\bra{x_B}\otimes\bra{x_B}\psi\ket{x_B}.
\ee

Consider now the MPS defined in  Eq.~(\ref{e:MPS}). The probability distribution on $B$ is
\be\label{e:prob}
p_\Psi(x_B)=\dfrac{1}{K^2}\Tr\left[A_{x_N}\cdots A_{x_1}\bE^{|A|}(L)A^\dag_{x_1}\cdots A^\dag_{x_N}\bE^{*|C|}(R)\right],
\ee
where $\bE^n$ is understood as convolution of the map and $x_B:= x_1,\dots,x_N$, with $x_i=0,\dots,d-1$.

In this investigation we primarily focus on translationally invariant injective MPSs \cite{injectivity}, which equivalently can be defined via the primitivity of the map $\mathbb{E}$~\cite{primitivity}. The latter means that $\mathbb{E}$ has a unique full-rank fixed point, i.e., there exists a unique full-rank density operator $\rho$ such that $\mathbb{E}(\rho)= \rho$. A consequence of the injectivity of the MPS is thus that  $\lim_{|A|\rightarrow\infty}\mathbb{E}^{|A|}(\chi) = \rho\Tr(\chi)$, $\lim_{|C|\rightarrow\infty}\mathbb{E}^{*|C|}(Q) = \1\Tr(Q\rho)$, and $\lim_{|A|\rightarrow\infty,|C|\rightarrow\infty}K^2 =  \Tr(R\rho)\neq 0$. Hence, if  region $B$ is kept fixed, while regions $A$ and $C$ both grow to infinity, the probability distribution (\ref{e:prob}) on $B$ reduces to 
\be
p_\Psi(x_B)=\Tr\left[A_{x_N}\cdots A_{x_1}\rho A^\dag_{x_1}\cdots A^\dag_{x_N}\right].
\ee

\section{\label{s:CondMutInf} The post-measurement conditional mutual information}
Throughout the paper, we use a number of entropic quantities, which we introduce in this section. In particular, we switch back and forth between classical and quantum systems. The quantum von Neumann entropy of a mixed state, $\chi$, is denoted as $S(\chi)=-\Tr{\chi\log\chi}$, while the classical entropy is referred to as $H(p)=-\sum_ x p(x)\log p(x)$ for a classical probability distribution $p(x)$. Here, $\log$ denotes the natural logarithm. In Sec. \ref{sec:CMI}, we use the classical relative entropy as a measure of distinguishability between probability distributions. The classical relative entropy of $p_1(x)$ with respect to $p_2(x)$ is defined as
\be\label{e:relativeentropy}
S(p_1||p_2)=\sum_x p_1(x)\log\left[\dfrac{p_1(x)}{p_2(x)}\right].
\ee
The (quantum) CMI between regions $A$ and $C$ conditioned on region $B$, is given by 
\be
I_\chi(A:C|B)=S(\chi_{AB})+S(\chi_{BC})-S(\chi_{B})-S(\chi_{ABC}).
\ee
After applying the classical conditioning map, $\Phi_\Lambda$ in Eq.~(\ref{e:channel}), on a quantum state, $\chi$, we get the classical CMI
\begin{align}
    I_{\Phi_{\Lambda}(\chi)}(A:C|B)&=I_{p_\chi}(A:C|B)=H(p_{\chi,AB})+H(p_{\chi,BC})-H(p_{\chi,B})-H(p_{\chi,ABC}),
\end{align}
where $p_{\chi,A}:= p_\chi(x_A)=\bra{x_A}\chi_A\ket{x_A}$.

We now point out an important observation on the CMI \cite{Hastings}. Suppose that we have a pure state, $\psi=\ket{\psi}\bra{\psi}$, and we measure all spins in region $B$. Then, the quantum CMI of the post-measurement state satisfies
\begin{equation}
\label{e:Ipartials}
\begin{split}
    I_{p_{\psi}}(A:C|B)&\leq
    I_{\Phi_B(\psi)}(A:C|B),\\
    &=\langle S\left[\psi_A(x_B)\right]\rangle_{p_\psi(x_B)}+\langle S\left[\psi_C(x_B)\right]\rangle_{p_\psi(x_B)},\\
    &=2\langle S\left[\psi_C(x_B)\right]\rangle_{p_\psi(x_B)},
\end{split}
\end{equation}
where the distribution $p_{\psi}$ is defined by $p_{\psi}(x_{\Lambda}) = \langle x_{\Lambda}|\psi| x_{\Lambda}\rangle$, and where the state $\psi_{X}(x_B) = \langle x_B|\rho_{XB}|x_B\rangle$ is the reduced state in region $X$ of the post-measurement state, $\psi(x_B)$, and $\langle S\left[\psi(x)\right]\rangle_{p_{\psi(x)}}$ is the average von Neumann entropy of $\psi(x)$ over $p_\psi(x)$, i.e.,
\be\label{e:averageS}
\langle S\left[\psi(x)\right]\rangle_{p_\psi(x)}:=\sum_{x}p_\psi(x)S\left[\psi(x)\right].
\ee

The inequality in Eq.~(\ref{e:Ipartials}) comes from monotonicity of the relative entropy. Note that $S\left[\psi_A(x_B)\right]=S\left[\psi_C(x_B)\right]$ since $\bracket{x_B}{\psi}{x_B}/p_\psi(x_B)$ is a pure state on the bipartition $AC$. Eq.~(\ref{e:Ipartials}) allows us to  characterise the states that have a small post-measurement CMI  by finding the states that have a small average entropy of $\psi_C(x_B)$. \\

Let us now go back to the MPS described in Sec. \ref{s:notation}. With the injective MPS in the canonical form of Eq.~(\ref{e:MPS}), it can be shown that the reduced state of the post-measurement state, $\Psi_C(x_B)$, is (up to zero eigenvalues) isospectral to
\begin{equation}\label{e:isoreducedstate}
\dfrac{1}{p_\Psi(x_B)K^2}\sqrt{\bE^{*|C|}(R)}A_{x_N}\cdots A_{x_1}\bE^{|A|}(L)A_{x_1}^\dag\cdots A_{x_N}^\dag\sqrt{\bE^{*|C|}(R)}.
\end{equation}
The average von Neumann entropy of the reduced state of a post-measurement translationally invariant injective MPS is then
\begin{align}\label{e:entropyreducedstate}
\langle S\left[\Psi_C(x_B)\right]\rangle_{p_\Psi(x_B)}=\sum_{x_B}p_\Psi(x_B)S\Bigg(\dfrac{1}{p_\Psi(x_B)K^2}FA_{x_N}\cdots A_{x_1}\sigma A_{x_1}^\dag\cdots A_{x_N}^\dag F^\dag\Bigg),
\end{align}
where  $\sigma:=\bE^{|A|}(L)$, and 
\begin{equation}
\label{mainFdef}
F :=\sum_{x_{|BC|},\ldots,x_{N+1}}|x_{x_{|BC|},\ldots,x_{N+1}}\rangle\langle R|A_{x_{|BC|},\ldots,x_{N+1}},
\end{equation}
 and thus $F^\dag F =\bE^{*|C|}(R)$. Eq.~(\ref{e:entropyreducedstate}) will be the main object of study throughout this paper. As mentioned earlier, a translationally invariant injective MPS results in a primitive channel $\mathbb{E}$. On a finite-dimensional  space, this implies that for sufficiently large $|A|$ and $|C|$, it follows that both $\sigma$ and $F^{\dagger}F$ are full-rank operators. We also recall that $\lim_{|A|\rightarrow\infty}\mathbb{E}^{|A|}(\chi) = \rho\Tr(\chi)$, $\lim_{|C|\rightarrow\infty}\mathbb{E}^{*|C|}(Q) = \1\Tr(Q\rho)$, and $\lim_{|A|\rightarrow\infty,|C|\rightarrow\infty}K^2 =  \Tr(R\rho)\neq 0$, and consequently (\ref{e:entropyreducedstate}) reduces to
\be\label{e:Sinfinite}
\langle S\left[\Psi_C(x_B)\right]\rangle_{p_\Psi(x_B)}= \sum_{x_B}p_\Psi(x_B)S\left(\frac{A_{x_N}\cdots A_{x_1}\rho A^\dag_{x_1}\cdots A_{x_N}^\dag}{p_\Psi(x_B)}\right),
\ee
for infinite chains.

\section{Distributions with small CMI are quasi-locally Gibbsian}
\label{sec:CMI}

In this section we consider probability distributions, $p_{1,\ldots,|\Lambda|}$, on finite one-dimensional lattices, $\Lambda$, and discuss conditions for when these can be well  approximated by Gibbs distributions of local Hamiltonians. We say that a Hamiltonian is $\ell$-local if it can be written as a sum of terms that each span at most $\ell$ consecutive sites. A distribution is $\ell$-local if it is the Gibbs distribution of some $\ell$-local Hamiltonian. In a similar spirit, we say that a distribution is quasi-locally Gibbsian if it can be approximated by $\ell$-local distributions, where the error of this approximation in some sense decays fast with respect to increasing $\ell$. In this section we show that, if the CMI $I_{p_{1,\ldots,|\Lambda|}}(A:C|B)$ of the distribution $p_{1,\ldots,|\Lambda|}$ decays sufficiently fast with increasing size $|B|$ of the bridging region in a contiguous tripartition $\Lambda = ABC$ of the lattice, then $p_{1,\ldots,|\Lambda|}$ is quasi-locally Gibbsian. (For convenience we change the notation in this section and enumerate the sites of the entire lattice as $1,\dots,|\Lambda|$.) This result is similar in spirit to Kozlov's theorem \cite{Kozlov} (see also Ref.~\onlinecite{Hastings}). Although this section exclusively focuses on probability distributions, the application to quantum states becomes apparent in Section \ref{s:keytheorem}, where we consider classical restrictions of underlying injective MPSs and show that these are quasi-locally Gibbsian under broad conditions. 

Let $p_{1,\ldots,|\Lambda|}$ be a probability distribution over a finite sub-chain $\Lambda$ of a one-dimensional lattice. We let $p_j$ denote the marginal distribution at site  $j$. For $1\leq j \leq k\leq |\Lambda|$ we let $p_{j,\ldots,k}$ denote the marginal distribution of the chain $j,\ldots,k$. 
 In the following, we assume that 
\begin{equation}
p_{1,\ldots,|\Lambda|}(x_1,\ldots,x_{|\Lambda|}) >0,\quad \forall x_1,\ldots,x_{|\Lambda|},
\end{equation}
which consequently leads to $p_{j,\ldots,k}(x_j,\ldots,x_k) >0$. 
With these assumptions, we can define 
\begin{equation}
\label{fbsfgnsfgn1}
h_{j,\ldots,k} := -\log p_{j,\ldots,k},\quad 1\leq j\leq k\leq |\Lambda|,
\end{equation}
and thus $p_{j,\ldots,k} = e^{-h_{j,\ldots,k}}$, where we for notational convenience assume that $h_{j,\ldots,j} := h_{j}$ and $h_{j,\ldots,j+1} := h_{j,j+1}$.

Hence, we have constructed $h_{j,\ldots,k}$ such that $p_{j,\ldots,k}$ is Gibbs distributed with respect to $h_{j,\ldots,k}$, with $\beta = 1$ in $e^{-\beta h_{j,\ldots,k}}/Z(h_{j,\ldots,k})$, where one may note that $Z(h_{j,\ldots,k}) := \sum_{x_j,\ldots,x_k}e^{-h_{j,\ldots,k}(x_j,\ldots,x_k)}  = 1$. 

For $1\leq \ell \leq |\Lambda|-2$, we define
\begin{equation}
\label{hlevelelldef}
h^{\ell}_{1,\ldots,|\Lambda|}   :=     \sum_{j=1}^{|\Lambda|-\ell}h_{j,\ldots,j+\ell} -\sum_{j=1}^{|\Lambda|-\ell -1} h_{j+1,\ldots,j+\ell}.
\end{equation}

Hence, $h^{\ell}_{1,\ldots,|\Lambda|}$ only includes the terms for which the range does not exceed $\ell$.
More precisely, $h^{\ell}_{1,\ldots,|\Lambda|}$ is a $(\ell+1)$-local Hamiltonian.
 The associated $(\ell+1)$-local Gibbs distribution is
\begin{equation}
\label{dvadfv1}
p^{\ell}_{1,\ldots,|\Lambda|}(x_1,\ldots,x_{|\Lambda|}) :=  \frac{e^{- h^{\ell}_{1,\ldots,|\Lambda|}(x_1,\ldots,x_{|\Lambda|}) }  }{
Z(h^{\ell}_{1,\ldots,|\Lambda|})},
\end{equation}
with
$$
Z(h^{\ell}_{1,\ldots,|\Lambda|}) :=  \sum_{x'_1,\ldots,x'_{|\Lambda|}}  e^{- h^{\ell}_{1,\ldots,|\Lambda|}(x'_1,\ldots,x'_{|\Lambda|})}.
$$
The following proposition expresses the classical relative entropy (see Eq.~(\ref{e:relativeentropy})) between the Gibbs distribution $p_{1,\dots,|\Lambda|}$ associated to the full Hamiltonian, $h_{1,\dots,|\Lambda|}$, and the Gibbs distribution $p^\ell_{1,\dots,|\Lambda|}$ associated to the $(\ell+1)$-local Hamiltonian, $h^\ell_{1,\dots,|\Lambda|}$, in terms of the CMIs between suitable regions of the chain. Hence, if the latter are sufficiently small, then the approximating $(\ell+1)$-local distribution $p^\ell_{1,\dots,|\Lambda|}$ is close to the original distribution  $p_{1,\dots,|\Lambda|}$. 
\begin{prop}
\label{PropDecomp}
For $p_{1,\ldots,|\Lambda|}(x_1,\ldots,x_{|\Lambda|}) >0$, let $p^{\ell}_{1,\ldots,|\Lambda|}$ be as defined in Eq.~(\ref{fbsfgnsfgn1}-\ref{dvadfv1}). For $1\leq \ell \leq |\Lambda|-2$ it is the case that

\begin{equation}
\label{fdfbdfb}
S(p_{1,\ldots,|\Lambda|}\Vert p^{\ell}_{1,\ldots,|\Lambda|})  =  \sum_{k=1}^{|\Lambda|-\ell-1}I(1,\ldots,k:k+\ell+1|k+1,\ldots,k+\ell).
\end{equation}

\end{prop}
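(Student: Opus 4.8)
The plan is to rewrite the relative entropy directly in terms of the two Hamiltonians and then telescope. Writing $p_{1,\ldots,|\Lambda|} = e^{-h_{1,\ldots,|\Lambda|}}$ and $p^{\ell}_{1,\ldots,|\Lambda|} = e^{-h^{\ell}_{1,\ldots,|\Lambda|}}/Z(h^{\ell}_{1,\ldots,|\Lambda|})$, the definition in Eq.~(\ref{e:relativeentropy}) gives
\begin{equation}
S(p_{1,\ldots,|\Lambda|}\Vert p^{\ell}_{1,\ldots,|\Lambda|}) = \sum_{x_1,\ldots,x_{|\Lambda|}} p_{1,\ldots,|\Lambda|}\,\big(h^{\ell}_{1,\ldots,|\Lambda|} - h_{1,\ldots,|\Lambda|}\big) + \log Z(h^{\ell}_{1,\ldots,|\Lambda|}).
\end{equation}
So two things must be established: that the partition function contributes nothing, $\log Z(h^{\ell}_{1,\ldots,|\Lambda|}) = 0$, and that the expectation of $h^{\ell}-h$ reproduces the claimed sum of CMIs. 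Both follow once $h^{\ell}_{1,\ldots,|\Lambda|}$ is telescoped. Pairing the $(i{+}1)$-th term of the first sum in Eq.~(\ref{hlevelelldef}) with the $i$-th term of the second sum, each difference $h_{i+1,\ldots,i+1+\ell} - h_{i+1,\ldots,i+\ell}$ collapses, using $h = -\log p$, into a single conditional $-\log p(x_{i+1+\ell}\mid x_{i+1},\ldots,x_{i+\ell})$. After reindexing one obtains
\begin{equation}
h^{\ell}_{1,\ldots,|\Lambda|} = h_{1,\ldots,\ell+1} - \sum_{m=\ell+2}^{|\Lambda|}\log p(x_m\mid x_{m-\ell},\ldots,x_{m-1}).
\end{equation}

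From this identity, $e^{-h^{\ell}_{1,\ldots,|\Lambda|}}$ is the product of the genuine marginal $p_{1,\ldots,\ell+1}$ with the genuine conditionals $p(x_m\mid x_{m-\ell},\ldots,x_{m-1})$, i.e., an $\ell$-th order Markov (chain) approximation of $p$. I would then argue $Z(h^{\ell}_{1,\ldots,|\Lambda|})=1$ by summing this product out variable-by-variable starting from the right end: summing over $x_{|\Lambda|}$ removes the last conditional (the only factor in which $x_{|\Lambda|}$ appears), after which $x_{|\Lambda|-1}$ is likewise only a head variable, and so on down to $m=\ell+2$, leaving $\sum p_{1,\ldots,\ell+1} = 1$. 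Telescoping the full Hamiltonian by the same forward chain rule gives $h_{1,\ldots,|\Lambda|} = h_{1,\ldots,\ell+1} - \sum_{m=\ell+2}^{|\Lambda|}\log p(x_m\mid x_1,\ldots,x_{m-1})$, so the first-block term $h_{1,\ldots,\ell+1}$ cancels in the difference and
\begin{equation}
h^{\ell}_{1,\ldots,|\Lambda|} - h_{1,\ldots,|\Lambda|} = \sum_{m=\ell+2}^{|\Lambda|}\log\frac{p(x_m\mid x_1,\ldots,x_{m-1})}{p(x_m\mid x_{m-\ell},\ldots,x_{m-1})}.
\end{equation}

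Finally, I take the expectation under $p_{1,\ldots,|\Lambda|}$ of each summand. For fixed $m$, set $A = \{1,\ldots,m-\ell-1\}$, $B = \{m-\ell,\ldots,m-1\}$ and $C = \{m\}$; then $p(x_m\mid x_1,\ldots,x_{m-1}) = p(x_C\mid x_A,x_B)$ and $p(x_m\mid x_{m-\ell},\ldots,x_{m-1}) = p(x_C\mid x_B)$, so the conditional-probability form of the CMI, $I(A:C|B) = \sum_{x} p_{1,\ldots,|\Lambda|}\,\log[p(x_C\mid x_A,x_B)/p(x_C\mid x_B)]$, identifies the $m$-th expectation with $I(1,\ldots,m-\ell-1:m\mid m-\ell,\ldots,m-1)$. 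Substituting $k = m-\ell-1$ (so $m$ runs from $\ell+2$ to $|\Lambda|$ as $k$ runs from $1$ to $|\Lambda|-\ell-1$) yields exactly Eq.~(\ref{fdfbdfb}). The only genuinely non-routine step is the normalization claim $Z(h^{\ell}_{1,\ldots,|\Lambda|})=1$; everything else is telescoping together with the conditional form of the CMI. I expect the main care to lie in the index bookkeeping, so that the boundary block $\{1,\ldots,\ell+1\}$ cancels cleanly and the resulting tripartition matches the statement.
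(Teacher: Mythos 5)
Your proposal is correct: the telescoping identity $h^{\ell}_{1,\ldots,|\Lambda|} = h_{1,\ldots,\ell+1} - \sum_{m=\ell+2}^{|\Lambda|}\log p(x_m\mid x_{m-\ell},\ldots,x_{m-1})$ follows exactly as you say from pairing the two sums in Eq.~(\ref{hlevelelldef}), the right-to-left marginalization argument for $Z(h^{\ell}_{1,\ldots,|\Lambda|})=1$ is valid, and the identification of each expected log-ratio with a CMI via $I(A:C|B)=\sum_x p\,\log[p(x_C\mid x_A,x_B)/p(x_C\mid x_B)]$, together with the reindexing $k=m-\ell-1$, lands precisely on Eq.~(\ref{fdfbdfb}). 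Your route differs from the paper's in a worthwhile way. The paper never introduces conditional probabilities: it works entirely in expectation, using $\langle h_{j,\ldots,k}\rangle_{p} = H(p_{j,\ldots,k})$ to expand the sum of CMIs into a telescoping combination of marginal entropies, expanding $S(p\Vert p^{\ell}) = -\langle h\rangle + \langle h^{\ell}\rangle + \log Z(h^{\ell})$, and matching the two expressions term by term; crucially, it only \emph{asserts} $Z(h^{\ell}_{1,\ldots,|\Lambda|})=1$ as ``a somewhat lengthy but straightforward calculation.'' Your pointwise telescoping buys two things the paper's presentation does not: an actual proof of the normalization claim (it becomes a one-line consequence of the product-of-conditionals structure), and the structural insight that $p^{\ell}$ is exactly the $\ell$-th order Markov approximation of $p$, which explains \emph{why} the relative entropy decomposes into precisely these CMIs. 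The paper's version, in exchange, avoids introducing conditionals and keeps the whole argument at the level of entropies of marginals, which is slightly lighter on index bookkeeping. Both are complete modulo that normalization step, which only your version supplies explicitly.
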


\begin{proof}
We first note that
\begin{equation}
\label{nfgnfgnfghn}
\begin{split}
\langle h_{j,\ldots,k}\rangle_{p_{1,\ldots,|\Lambda|}} =  H(p_{j,\ldots,k}).
\end{split}
\end{equation}
A somewhat lengthy but straightforward calculation moreover yields
\begin{equation}
\label{dvadfbdfb}
Z(h^{\ell}_{1,\ldots,|\Lambda|}) = 1.
\end{equation}
Next we observe that 
\begin{equation}
\label{fgnsfgngf}
\begin{split}
 & \sum_{k=1}^{|\Lambda|-\ell-1}I(1,\ldots,k:k+\ell+1|k+1,\ldots,k+\ell)\\
 = & H(p_{1,\ldots,\ell+1})-H(p_{1,\ldots, |\Lambda|})  \\
   & + \sum_{k=1}^{|\Lambda|-\ell-1} \bigg[H(p_{k+1,\ldots, k+\ell+1}) -H(p_{k+1,\ldots,k+\ell})  \bigg],\\
 & [\textrm{By (\ref{nfgnfgnfghn})}]\\
 = & \langle h_{1,\ldots,\ell+1}\rangle-\langle h_{1,\ldots, |\Lambda|}\rangle  \\
   & + \sum_{j=1}^{|\Lambda|-\ell-1} \langle h_{j+1,\ldots, j+\ell+1}\rangle -\sum_{j=1}^{|\Lambda|-\ell-1}\langle h_{j+1,\ldots,j+\ell}\rangle,  \\
 = & -\langle h_{1,\ldots, |\Lambda|}\rangle   + \sum_{j=1}^{|\Lambda|-\ell} \langle h_{j,\ldots, j+\ell}\rangle -\sum_{j=1}^{|\Lambda|-\ell-1}\langle h_{j+1,\ldots,j+\ell}\rangle.
\end{split}
\end{equation}
Next we note that
\begin{equation}
\begin{split}
S(p_{1,\ldots,|\Lambda|}\Vert p^{\ell}_{1,\ldots,|\Lambda|})   =  &- \sum_{x_1,\ldots,x_{|\Lambda|}} p_{1,\ldots,|\Lambda|}(x_1,\ldots,x_{|\Lambda|}) h_{1,\ldots,|\Lambda|}(x_1,\ldots,x_{|\Lambda|})\\
   & +\sum_{x_1,\ldots,x_{|\Lambda|}} p_{1,\ldots,|\Lambda|}(x_1,\ldots,x_{|\Lambda|}) h^{\ell}_{1,\ldots,|\Lambda|}(x_1,\ldots,x_{|\Lambda|})\\
     & +\log Z(h^{\ell}_{1,\ldots,|\Lambda|}),\\
         =  &- \langle  h_{1,\ldots,|\Lambda|}\rangle +\langle h^{\ell}_{1,\ldots,|\Lambda|}\rangle +\log Z(h^{\ell}_{1,\ldots,|\Lambda|}),\\
        & [\quad \textrm{By (\ref{dvadfbdfb})} \quad]\\
            =  &- \langle  h_{1,\ldots,|\Lambda|}\rangle +\langle h^{\ell}_{1,\ldots,|\Lambda|}\rangle, \\
             =  &- \langle  h_{1,\ldots,|\Lambda|}\rangle    +\sum_{j=1}^{|\Lambda|-\ell}\langle h_{j,\ldots,j+\ell}\rangle -\sum_{j=1}^{|\Lambda|-\ell -1} \langle h_{j+1,\ldots,j+\ell}\rangle.
\end{split}
\end{equation}
A comparison with with (\ref{fgnsfgngf}) yields (\ref{fdfbdfb}).
\end{proof}

Loosely speaking, the above proposition tells us that, if the CMIs $I(1,\ldots,k:k+\ell+1|k+1,\ldots,k+\ell)$ in some sense decrease sufficiently fast with increasing $\ell$, then the $(\ell+1)$-local Gibbs distribution $p^{\ell}_{1,\ldots,|\Lambda|}$ approaches the true distribution $p_{1,\ldots,|\Lambda|}$. The following lemma formalizes this intuition.

\begin{lemma}
\label{fbsfgbsfg}
Suppose that the probability distribution $p_{1,\ldots,|\Lambda|}(x_1,\ldots,x_{|\Lambda|}) >0$ is such that there exists a function $\xi:\mathbb{N}\rightarrow\mathbb{R}$, such that   
for every contiguous partition $\Lambda = ABC$, it is the case that
\begin{equation}
\label{sfgnsfgnsf}
I_{p}(A:C|B)\leq \xi(|B|),
\end{equation} 
where $\xi$ is independent of $|A|$ and $|C|$.
Let $p^{\ell}_{1,\ldots,|\Lambda|}$ be as defined in (\ref{dvadfv1}), via  (\ref{hlevelelldef}) and  (\ref{fbsfgnsfgn1}). 
Then, for $1\leq \ell \leq |\Lambda|-2$, we have

\begin{equation}
\label{fgfgnf}
S(p_{1,\ldots,|\Lambda|}\Vert p^{\ell}_{1,\ldots,|\Lambda|})  \leq  (|\Lambda|-\ell-1) \xi(\ell) \leq |\Lambda|\xi(\ell).
\end{equation}

\end{lemma}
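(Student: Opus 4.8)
The plan is to feed the exact decomposition of Proposition~\ref{PropDecomp} into the hypothesis~(\ref{sfgnsfgnsf}), bounding the sum term by term. First I would invoke Eq.~(\ref{fdfbdfb}), which already writes the relative entropy as a sum of conditional mutual informations,
\begin{equation*}
S(p_{1,\ldots,|\Lambda|}\Vert p^{\ell}_{1,\ldots,|\Lambda|})  =  \sum_{k=1}^{|\Lambda|-\ell-1}I(1,\ldots,k:k+\ell+1|k+1,\ldots,k+\ell),
\end{equation*}
so the entire task reduces to bounding each summand by $\xi(\ell)$ and then counting the terms.

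The key structural observation is that in the $k$-th summand the conditioning (middle) region is exactly $\{k+1,\ldots,k+\ell\}$, whose width is $\ell$ \emph{independently} of $k$. The main subtlety — and essentially the only nontrivial point — is that the summand's ``$C$'' region is the single site $\{k+\ell+1\}$, and the three blocks $\{1,\ldots,k\}$, $\{k+1,\ldots,k+\ell\}$, $\{k+\ell+1\}$ do not in general exhaust $\Lambda$; hence the hypothesis~(\ref{sfgnsfgnsf}), which is phrased for genuine tripartitions $\Lambda=ABC$, cannot be applied verbatim. I would reconcile this by \emph{enlarging} the last block: set $A=\{1,\ldots,k\}$, $B=\{k+1,\ldots,k+\ell\}$, and $C=\{k+\ell+1,\ldots,|\Lambda|\}$, which is a genuine contiguous tripartition of $\Lambda$ with $|B|=\ell$. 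By monotonicity of the conditional mutual information under discarding part of one subsystem (the data-processing inequality applied to marginalizing over the sites $\{k+\ell+2,\ldots,|\Lambda|\}$), the single-site summand is dominated by the full-partition quantity,
\begin{equation*}
I(1,\ldots,k:k+\ell+1|k+1,\ldots,k+\ell)\leq I_{p}(A:C|B)\leq \xi(\ell),
\end{equation*}
where the last inequality is precisely the hypothesis, applied with $|B|=\ell$.

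Finally I would sum this uniform bound over the $|\Lambda|-\ell-1$ admissible values of $k$, obtaining $S(p_{1,\ldots,|\Lambda|}\Vert p^{\ell}_{1,\ldots,|\Lambda|})\leq(|\Lambda|-\ell-1)\xi(\ell)$. The remaining inequality $(|\Lambda|-\ell-1)\xi(\ell)\leq|\Lambda|\xi(\ell)$ is immediate once one notes $\xi(\ell)\geq 0$, which is itself forced by the nonnegativity of the conditional mutual information (strong subadditivity): since $I_p(A:C|B)\geq 0$ for every tripartition, any valid upper bound $\xi$ may be taken nonnegative. Everything beyond the data-processing step is pure bookkeeping, so I expect the monotonicity argument that replaces the single-site $C$ by the full right block to be the one place where care is genuinely required.
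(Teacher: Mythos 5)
Your proposal is correct and follows essentially the same route as the paper: it invokes the decomposition of Proposition~\ref{PropDecomp}, enlarges the single-site region $\{k+\ell+1\}$ to the full right block $C=\{k+\ell+1,\ldots,|\Lambda|\}$ via the monotonicity $I(A:C_1|B)\leq I(A:C_1C_2|B)$, applies the hypothesis with $|B|=\ell$, and counts the $|\Lambda|-\ell-1$ terms. Your explicit remark that $\xi(\ell)\geq 0$ (forced by nonnegativity of the CMI) justifies the final inequality is a small but sound addition that the paper leaves implicit.
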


\begin{proof}
With the general observation that $I(A:C_{1}|B)\leq I(A:C_{1} C_{2}|B)$, we can use  $A =  \{1,\ldots,k\}$, $B =  \{ k+1,\ldots,k+\ell\}$, $C_{1} =  \{k +\ell+1\}$ and $C_{2} =  \{k+\ell+2,\ldots,|\Lambda|\}$ in (\ref{fdfbdfb}) and assumption (\ref{sfgnsfgnsf}), which yields
\begin{equation}
\begin{split}
S(p_{1,\ldots,|\Lambda|}\Vert p^{\ell}_{1,\ldots,|\Lambda|}) 
\leq &    \sum_{k=1}^{|\Lambda|-\ell-1} \xi(\ell) = (|\Lambda|-\ell-1) \xi(\ell).
\end{split}
\end{equation}
\end{proof}

Equation (\ref{fgfgnf}) estimates the contribution of the tails of the  distribution $p_{1,\ldots,|\Lambda|}$: the smaller $|\Lambda|\xi(\ell)$ is, the smaller the contribution of these tails. In suitable joint limits of $\ell$ and $|\Lambda|$, where $|\Lambda|\xi(\ell)$ vanishes exponentially, we say that  $p_{1,\dots,|\Lambda|}$ is a quasi-local Gibbs distribution.
At first sight it might not be clear whether there exists an exponentially decreasing bound $\xi$ with the necessary properties. However, in Section \ref{s:keytheorem}, we establish such a bound, when $p_{1,\ldots,|\Lambda|}$ is the classical restriction of a large class of injective MPS, thus showing that those classical restrictions are quasi-locally Gibbsian.

\section{The Main Theorem}\label{s:keytheorem}

In Sec. \ref{sec:CMI}, we showed that  probability distributions  on finite one-dimensional lattices are quasi-locally Gibbsian if the relevant CMI decays sufficiently rapidly. Here, we apply this result to  classical restrictions of injective MPSs, i.e., to the  the square amplitudes in a given local basis. We express the relevant CMI in terms of the average post-measurement entropy (as discussed in Section \ref{s:CondMutInf}) and find sufficient conditions for when this average entropy decays exponentially. This approach thus yields sufficient conditions for injective MPSs to have classical restrictions that are quasi-locally Gibbsian.

Our result builds extensively on the theory of products of random matrices \cite{RMBook}, and its application to quantum trajectories \cite{Benoist,Maassen}. We particularly follow the approach of Ref.~\onlinecite{Benoist} and formulate the condition for the exponential decay of the average post-measurement entropy in terms of the following condition (referred to as {\bf Pur} in Ref.~\onlinecite{Benoist}) on the matrices $A_x$ associated to the MPS.

\begin{definition}[Purity \cite{Benoist}]
\label{DefPur}
Let $\{A_x\}_{x=0}^{d-1}$ be linear operators on a complex finite-dimensional Hilbert space, $\mathcal{H}$. We say that $\{A_x\}_{x=0}^{d-1}$ satisfies the {\rm purity condition} if the following implication holds:
\begin{equation}
\label{dsfbsfg}
\begin{split}
& \textrm{If $P$ is an orthogonal projector on $\mathcal{H}$ such that}\\
& P A_{x_1}^{\dagger}\cdots A_{x_N}^{\dagger}A_{x_N}\cdots A_{x_1}P \propto P,\quad\forall N\in\mathbb{N},\quad \forall (x_1,\ldots,x_N) \in \{0,\ldots,d-1\}^{\times N},\\
& \textrm{then $\mathrm{rank}(P) = 1$}.
\end{split}
\end{equation}
\end{definition}

Note that the condition $ P A_{x_1}^{\dagger}\cdots A_{x_N}^{\dagger}A_{x_N}\cdots A_{x_1}P \propto P$ is trivially true whenever $P$ is a rank-one projector. Hence, the purity condition means that $ P A_{x_1}^{\dagger}\cdots A_{x_N}^{\dagger}A_{x_N}\cdots A_{x_1}P \propto P$ \emph{only} holds for rank-one projectors. 
The purity condition bears some resemblance to the Knill-Laflamme condition \cite{KLcondition}. We discuss the relationship between the purity condition and error correction/detection in Section \ref{sec:errorcorr}.

An immediate question is if the purity condition is commonly satisfied, or if these cases are rare. One can argue that the error-correction perspective in Section \ref{sec:errorcorr}  suggests that violations of the purity condition are `brittle', and thus provides evidence for the purity condition being `generic' or `typical'.  To shed some further light on this question, we do in Section \ref{PuritySufficient}  present a somewhat simpler condition that implies purity, and where the nature of this simplified condition suggests that the purity condition in some sense is `easily' satisfied. As a concrete application and illustration of this simplified condition, Section \ref{SecModelTypicality} considers a specific probabilistic setting, where all $\{A_x\}_{x=0}^{d-1}$  satisfy the purity condition, apart from a subset of measure zero. This construction thus  formalizes  the notion that purity is a typical or generic property.

Even if  the purity condition is generic, another pertinent question is whether it is easy or not to \emph{check} if a given MPS, in terms of the operators $\{A_x\}_{x=0}^{d-1}$, satisfies the purity condition. Although an interesting question, we leave this as an open problem.

The primary focus of this investigation is the classical CMI  $I_{p_{\Psi}}(A:C|B)$ of the distribution $p_{\Psi} = \langle x_{\Lambda}|\Psi|x_{\Lambda}\rangle$. Theorem \ref{thm:main}, below, shows that purity is a sufficient condition for an exponential decay of  $I_{p_{\Psi}}(A:C|B)$ with increasing $|B|$. However, in order to facilitate a better understanding of the role of the purity condition, Theorem \ref{thm:main} also includes the closely related quantity $I_{\Phi_B(\Psi)}(A:C|B)$, which we recall is the quantum CMI of the post-measurement state $\Phi_B(\Psi)$ as defined in (\ref{e:reducedchannel}). Theorem \ref{thm:main} in essence shows that purity is both necessary and  sufficient condition for the exponential decay of $I_{\Phi_B(\Psi)}(A:C|B)$. Since $I_{\Phi_B(\Psi)}(A:C|B)$ can be viewed as the average entanglement entropy of the post-measurement states $\Psi(x_B)$ (which are pure), this loosely speaking means that the latter typically approach pure product states with respect to the bipartition $A$ and $C$. Whether purity also is a necessary condition for the exponential decay of $I_{p_{\Psi}}(A:C|B)$ is less clear, although one may note that one can find pure states for which $I_{p_{\Psi}}(A:C|B) = 0$, while $I_{\Phi_B(\Psi)}(A:C|B)\neq 0$. (For further details, see the end of Section \ref{app:SecondHalf}.) With this observation in mind, it is conceivable that there may exist a weaker condition than purity that would yield an exponential decay of $I_{p_{\Psi}}(A:C|B)$. Although an interesting question, we leave this as an open problem for future investigations.

\begin{thm}
\label{thm:main}
Let $\Psi$ be an injective MPS on a finite one-dimensional lattice, $\Lambda$, with finite bond dimension, $D$, and open boundary conditions.  If the purity condition holds for the matrices associated to the MPS corresponding to a specific local basis, $\{\ket{x}\}$, then there exist constants $1>\kappa\geq 0$ and  $c\geq 0$, such that for any three contiguous regions $\Lambda=ABC$ as in Fig. \ref{f:lattice}, we have
\begin{equation}
\label{gnffgnnfg}
I_{p_{\Psi}}(A:C|B) \leq I_{\Phi_B(\Psi)}(A:C|B) \leq c \kappa^{|B|}.
\end{equation}
The constants $c$ and $\kappa$ are independent of $|A|$, $|B|$, $|C|$, $|L\rangle$, and $|R\rangle$.

Conversely, suppose that there exist some $|R\rangle$, $|L\rangle$, $|A|$, and $|C|$ such that 
$\sigma:=\mathbb{E}^{|A|}(L)$, and $F^\dag F =\mathbb{E}^{*|C|}(R)$ are full rank operators. Moreover, suppose that there  exist constants $c\geq 0$ and $1 > \kappa \geq 0$, such that 
\begin{equation}
\label{gdhmghm}
  I_{\Phi_B(\Psi)}(A:C|B)\leq  c\kappa^{|B|},
\end{equation}
then   $\{A_x\}_{x=0}^{d-1}$ satisfies the purity condition in Definition \ref{DefPur}.
\end{thm}

The following provides an overview of the essential steps of the proof of Theorem \ref{thm:main}. For a more detailed account of the first half of Theorem \ref{thm:main}, i.e., the purity condition as a sufficient condition for  (\ref{gnffgnnfg}), see the proof of Theorem \ref{hndghngdh} in Appendix \ref{app:FirstHalf}. For the second half, with  (\ref{gdhmghm}) implying the purity condition, see the proof of Theorem \ref{TheoremSecondPart} in Appendix \ref{app:SecondHalf}.

\begin{proof}
For the first part of Theorem \ref{thm:main}, the first step is to bound $I_{\Phi_B(\Psi)}(A:C|B)$ in terms of the quantity $f(N)$, defined below in Eq.~(\ref{fsgbsfgb}). Because of the inequality in (\ref{e:Ipartials}), we consequently also bound the post-measurement CMI $I_{p_{\Psi}}(A:C|B)$. The second step is to show that $f(N)$ decays exponentially if  the purity condition is satisfied; this step is shown independently in Prop.~\ref{PropMain}.  We relegate much of the technical details of the proof to Appendixes \ref{app:A}-\ref{AppProofPropMain} to allow for a clearer presentation of the main ideas. 

To start with, we bound the average entropy (Eq.~(\ref{e:averageS})) in terms of a quantity that can be interpreted as the average purity and we get
\be
\langle S\left[\Psi_C(x_B)\right]\rangle_{p_\Psi(x_B)}\leq -Q\log Q+Q\left[1+\log\left(D-1\right)\right],
\ee
with $Q:=1-\sum_{x_B}p_\Psi(x_B)\Vert\Psi_C(x_B)\Vert$. The proof, which is deferred to Lemma \ref{lem:1} in Appendix \ref{app:A}, follows from concavity of the entropy functional. It is clear that exponential decay of $Q$ implies exponential decay of $I_{\Phi_B(\Psi)}(A:C|B)$ by Eq.~(\ref{e:Ipartials}).

Next, we show that $Q$ can be bounded above by a function of the ordered singular values of the matrix product defining the classical post-measurement MPS. First, Lemma \ref{lem2:AppA} in Appendix \ref{app:A} establishes an upper bound on $Q$ in terms of the average second eigenvalue of the matrix product in Eq.~(\ref{e:entropyreducedstate}) as
\be
Q\leq\dfrac{D-1}{K^2}\sum_{x_B}\lambda^{\downarrow}_2\left(FA_{x_N}\cdots A_{x_1}\sigma A_{x_1}^\dag\cdots A_{x_N}^\dag F^\dag\right),
\ee
where $\{\lambda^{\downarrow}_j(O)\}$ and $\{\nu^{\downarrow}_j(O)\}$ denote the eigenvalues and singular values of an operator $O$ in decreasing order, i.e., $\lambda^{\downarrow}_1(O)\geq\cdots\geq\lambda^{\downarrow}_D(O)$ and $\nu^{\downarrow}_1(O)\geq\cdots\geq\nu^{\downarrow}_D(O)$, respectively. 

Then, recalling that for any operator $O$, we have $\lambda_j(OO^\dag)=\nu_j(O)^2$, we get 
\begin{flalign}
Q&\leq\dfrac{D-1}{K^2}\sum_{x_B}\lambda^{\downarrow}_2\left(FA_{x_N}\cdots A_{x_1}\sigma A_{x_1}^\dag\cdots A_{x_N}^\dag F^\dag\right), \nonumber\\
&\leq\dfrac{D-1}{K^2}\sum_{x_B}\sqrt{\lambda^{\downarrow}_1(FA_{x_N}\cdots A_{x_1}\sigma A_{x_1}^\dag\cdots A_{x_N}^\dag F^\dag)\lambda^{\downarrow}_2(FA_{x_N}\cdots A_{x_1}\sigma A_{x_1}^\dag\cdots A_{x_N}^\dag F^\dag)}, \label{bound:subopt}\\
&=\dfrac{D-1}{K^2}\sum_{x_B} \nu^{\downarrow}_1(FA_{x_N}\cdots A_{x_1}\sqrt{\sigma})\nu^{\downarrow}_2(FA_{x_N}\cdots A_{x_1}\sqrt{\sigma})=:\dfrac{D-1}{K^2}f(N),\nonumber
\end{flalign}
where recall that $|B|=N$.

Next, we need to take into account the fact that $K$ depends on the size of the regions $A$, $B$ and $C$, and in principle $K$ could approach zero. However, the assumption that the MPS is injective, implies that $\mathbb{E}(\cdot) = \sum_x A_x \cdot A_x^{\dagger}$ is primitive, which means that $\mathbb{E}$ has a unique full-rank fixed point. The latter is  used in  Lemma \ref{lemmaadflbdlfk} in Appendix \ref{app:A} to show that for all sufficiently large $|B|$ there exists a number $r>0$ such that
\begin{equation}
    K^2= \langle R|\mathbb{E}^{|\Lambda|}(|L\rangle\langle L|)|R\rangle = \langle R|\mathbb{E}^{|A|+|B|+|C|}(|L\rangle\langle L|)|R\rangle\geq r,
\end{equation}
where $r$ is independent of $|A|$, $|C|$, $|L\rangle$ and $|R\rangle$. We use this to obtain an upper bound on $Q$ that only depends on $N$ via $f(N)$.

Finally, in Proposition \ref{PropMain} below, the function $f(N)$  is shown to decay exponentially if  the purity condition holds. Moreover, the constants $\overline{c}$ and $\gamma$ in the bound (\ref{hgmghmg}) can be chosen to be independent of $|A|$, $|B|$, $|C|$, which follows from the fact that $\overline{c}$ and $\gamma$ are independent of $\sigma$ and $F$.

The proof of the first part of Theorem \ref{thm:main}, requires us to find an upper bound of the average entropy $\langle S\left[\Psi_C(x_B)\right]\rangle_{p_\Psi(x_B)}$ in terms of the quantity $f(N)$. For the second part of Theorem \ref{thm:main},
 i.e., that (\ref{gdhmghm}) implies the purity condition, we instead need to find an upper bound to $f(N)$ in terms of $\langle S\left[\Psi_C(x_B)\right]\rangle_{p_\Psi(x_B)}$. We obtain this via a chain of inequalities 
 \begin{equation}
4\log(2)\lambda^{\downarrow}_2(\rho)\lambda^{\downarrow}_1(\rho) \leq 4\log(2)\lambda^{\downarrow}_1(\rho)\big(1-\lambda^{\downarrow}_1(\rho)\big) \leq  H_B\big(\lambda^{\downarrow}_1(\rho)\big) \leq S(\rho),
 \end{equation}
 where $H_B$ is the binary entropy, i.e., $H_B(\lambda):= -\lambda\log\lambda -(1-\lambda)\log(1-\lambda)$ with $H_B(0):=0$ and $H_B(1):=0$.
 These observations are utilized to show that
\begin{equation}
f(N) \leq \frac{1}{2\sqrt{\log(2)}}\sqrt{\langle S\left[\Psi_C(x_B)\right]\rangle_{p_\Psi(x_B)}},
\end{equation}
with the consequence that an exponential decay of $\langle S\left[\Psi_C(x_B)\right]\rangle_{p_\Psi(x_B)}$ with increasing $N$, implies an exponential decay of $f(N)$. By  Prop.~\ref{PropMain}, the exponential decay of $f(N)$ implies purity of  $\{A_x\}_{x=0}^{d-1}$, if $\sigma:=\mathbb{E}^{|A|}(L)$, and $F^\dag F =\mathbb{E}^{*|C|}(R)$ are full rank operators.

\end{proof}

Note that the bound in Eq.~(\ref{bound:subopt}) is likely quite sub-optimal. It is an interesting open question whether there exists a more direct bound of the average purity that does not rely on bounding the function $f(N)$. The main reason to work with $f(N)$ rather than the average purity is because $f(N)$ is explicitly submultiplicative. 

We now state the key proposition adapted from Ref.~\onlinecite{Benoist}, and references therein.

\begin{prop}[Ref.~\onlinecite{Benoist}]
\label{PropMain}
Let $\{A_x\}_{x=0}^{d-1}$ be operators on a finite-dimensional complex Hilbert space, $\mathcal{H}$, such that $\sum_{x=0}^{d-1}A_x^{\dagger}A_x=\1$. 
For an operator $\sigma$ and an operator $F:\mathcal{H}\rightarrow\mathcal{H}'$ for a finite-dimensional complex Hilbert space $\mathcal{H}'$, define
\begin{equation}
\label{fsgbsfgb}
f(N) :=\sum_{x_1,\ldots,x_N=0}^{d-1}\nu_1^{\downarrow}(FA_{x_N}\cdots A_{x_1}\sqrt{\sigma})\nu_2^{\downarrow}(FA_{x_N}\cdots A_{x_1}\sqrt{\sigma}).
\end{equation}
 If $\{A_x\}_{x=1}^{d-1}$ satisfies the purity condition in Definition \ref{DefPur}, then there exist real constants, $0\leq \overline{c}$ and $0<\gamma <1$, such that for all density operators $\sigma$, and all $F$ such that $F^{\dagger}F\leq \1$, it is the case that 
\begin{equation}
\label{hgmghmg}
f(N)\leq \overline{c}\gamma^N,\quad \forall N\in\mathbb{N}.
\end{equation}
Conversely, if there exists constants $0\leq \overline{c}$ and $0<\gamma <1$ such that (\ref{hgmghmg}) holds for some $\sigma$ and $F^{\dagger}F$ that both are full-rank operators, then $\{A_x\}_{x=1}^{d-1}$ satisfies the purity condition.
\end{prop}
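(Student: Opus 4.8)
The plan is to rewrite $f(N)$ in terms of the operator norm on the second exterior power and then read off its decay from the purification of quantum trajectories. First I would use the identity $\nu_1^{\downarrow}(O)\nu_2^{\downarrow}(O) = \|\wedge^{2}O\|$, where $\wedge^{2}O$ is the operator that $O$ induces on the second exterior power of $\mathcal{H}$ and $\|\cdot\|$ is the operator norm. Setting $M_{\vec x} := A_{x_N}\cdots A_{x_1}$ and using functoriality, $\wedge^{2}(FM_{\vec x}\sqrt{\sigma}) = (\wedge^{2}F)(\wedge^{2}M_{\vec x})(\wedge^{2}\sqrt{\sigma})$, together with $\|\wedge^{2}F\|\leq 1$ (as $F^{\dagger}F\leq\1$) and $\|\wedge^{2}\sqrt{\sigma}\|\leq 1$, yields
\[
f(N) = \sum_{\vec x}\big\|\wedge^{2}(FM_{\vec x}\sqrt{\sigma})\big\| \;\leq\; \sum_{\vec x}\big\|\wedge^{2}M_{\vec x}\big\| =: g(N).
\]
The crucial point is that $g(N)$ no longer depends on $\sigma$ or $F$, which is exactly what will make the final constants uniform. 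Splitting each length-$(N+M)$ word into its last $M$ and first $N$ letters, $M_{\vec x} = M_{\vec x''}M_{\vec x'}$, and using submultiplicativity of the operator norm gives $g(N+M)\leq g(N)g(M)$; moreover $g(N)\leq D$ since $\nu_1^{\downarrow}(M_{\vec x})\nu_2^{\downarrow}(M_{\vec x})\leq\Tr(M_{\vec x}M_{\vec x}^{\dagger})$ and $\sum_{\vec x}\Tr(M_{\vec x}M_{\vec x}^{\dagger}) = \Tr\mathbb{E}^{N}(\1) = D$. By Fekete's lemma it therefore suffices to exhibit a single $N_0$ with $g(N_0)<1$: exponential decay of $g$ (and hence of $f$) then follows, with the finitely many initial terms absorbed into $\overline{c}$.

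Next I would give $g(N)$ a trajectory interpretation. Starting the measurement process from the maximally mixed state $\1/D$, the outcome $\vec x$ has probability $p_N(\vec x) = \tfrac{1}{D}\Tr(M_{\vec x}M_{\vec x}^{\dagger})$, which is a bona fide distribution because $\mathbb{E}$ is trace preserving, $\sum_{\vec x}p_N(\vec x) = \tfrac{1}{D}\Tr\mathbb{E}^{N}(\1) = 1$, and the normalized post-measurement state is $\rho_{\vec x} = M_{\vec x}M_{\vec x}^{\dagger}/\Tr(M_{\vec x}M_{\vec x}^{\dagger})$. Since $\nu_j^{\downarrow}(M_{\vec x})^{2} = \lambda_j^{\downarrow}(M_{\vec x}M_{\vec x}^{\dagger})$, a short computation gives
\[
g(N) = D\sum_{\vec x}p_N(\vec x)\sqrt{\lambda_1^{\downarrow}(\rho_{\vec x})\,\lambda_2^{\downarrow}(\rho_{\vec x})},
\]
i.e.\ $g(N)/D$ is the expected value of $\sqrt{\lambda_1^{\downarrow}\lambda_2^{\downarrow}}$ of the trajectory state after $N$ steps. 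Here I would invoke the purification theorem for quantum trajectories \cite{Benoist,Maassen}: under the purity condition the trajectory state purifies almost surely, so $\lambda_2^{\downarrow}(\rho_{\vec x})\to 0$ along almost every trajectory. As $\sqrt{\lambda_1^{\downarrow}\lambda_2^{\downarrow}}\leq\tfrac{1}{2}$ is bounded, dominated convergence gives $g(N)\to 0$, hence $g(N_0)<1$ for some $N_0$, which closes the forward direction.

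For the converse I would argue directly. Suppose purity fails, so there is an orthogonal projector $P$ with $r := \rank(P)\geq 2$ and $P M_{\vec x}^{\dagger}M_{\vec x}P = c_{\vec x}P$ for every word $\vec x$. Then $(M_{\vec x}P)^{\dagger}(M_{\vec x}P) = c_{\vec x}P$ has the eigenvalue $c_{\vec x}$ with multiplicity $r\geq 2$, so $\nu_2^{\downarrow}(M_{\vec x})\geq\nu_2^{\downarrow}(M_{\vec x}P) = \sqrt{c_{\vec x}}$ and thus $\nu_1^{\downarrow}(M_{\vec x})\nu_2^{\downarrow}(M_{\vec x})\geq c_{\vec x}$. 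Because $F^{\dagger}F$ and $\sigma$ are full rank, the min--max characterization of singular values gives $\nu_j^{\downarrow}(FM_{\vec x}\sqrt{\sigma})\geq\sqrt{\lambda_{\min}(F^{\dagger}F)\,\lambda_{\min}(\sigma)}\;\nu_j^{\downarrow}(M_{\vec x})$, whence $f(N)\geq\delta\sum_{\vec x}c_{\vec x}$ with $\delta := \lambda_{\min}(F^{\dagger}F)\lambda_{\min}(\sigma)>0$. Finally, trace preservation yields $\sum_{\vec x}c_{\vec x} = \tfrac{1}{r}\sum_{\vec x}\Tr(M_{\vec x}PM_{\vec x}^{\dagger}) = \tfrac{1}{r}\Tr\mathbb{E}^{N}(P) = 1$, so $f(N)\geq\delta>0$ for all $N$, contradicting (\ref{hgmghmg}). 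Hence purity must hold.

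The main obstacle is the purification theorem used in the second paragraph; everything else is bookkeeping with singular values and trace preservation. To prove it I would model the trajectory as a Markov chain on density operators and show that the purity $\Tr(\rho_n^{2})$ is a bounded submartingale, hence almost surely convergent; the purity condition of Definition \ref{DefPur} is then precisely what rules out a limit of rank $\geq 2$, since the spectral projector onto the support of a higher-rank limit would be invariant (up to scaling) under all conditioned words and thus violate (\ref{dsfbsfg}). The delicate part is making this extraction rigorous---controlling the normalizations as $N\to\infty$ and verifying that the limiting object is a genuine orthogonal projector obeying the proportionality for \emph{all} words---which is exactly where the products-of-random-matrices machinery of \cite{Benoist,Maassen} is indispensable.
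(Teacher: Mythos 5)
Your proposal is correct, and its two halves relate to the paper's proof in different ways. The forward direction follows essentially the paper's own route (Appendix \ref{AppProofPropMain} and Proposition \ref{bofboneabona}): your $g(N)$ is exactly the paper's $w(N)$, the reduction $f(N)\leq g(N)$ via $\Vert\mywedge(F)\Vert\leq 1$ and $\Vert\mywedge(\sqrt{\sigma})\Vert\leq 1$ is the paper's first step (and, as you stress, is what makes $\overline{c},\gamma$ uniform in $\sigma$ and $F$), and the combination of submultiplicativity, Fekete's lemma, and the trajectory identity $g(N)=D\,E\big(\sqrt{\lambda_1^{\downarrow}(\boldsymbol{M}_N)\lambda_2^{\downarrow}(\boldsymbol{M}_N)}\big)$ (your $\rho_{\vec{x}}$ is isospectral to the paper's $\boldsymbol{M}_N$) is precisely the paper's argument. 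The one real difference is that you invoke almost-sure purification from Refs.~\onlinecite{Benoist,Maassen} as a black box; your submartingale sketch of that theorem is plausible but not a complete substitute, as you yourself acknowledge — the paper instead reproves it self-containedly, and that is exactly the content of Lemmas \ref{fnjjnfddanj} and \ref{adfbafdba}, the latter (polar decompositions, sequential compactness of the unitary group) being the technical heart of the whole appendix. Citing the published purification theorem is legitimate here, since the proposition is itself attributed to Ref.~\onlinecite{Benoist}, so I do not count this as a gap. Your converse, by contrast, is genuinely different from, and more elementary than, the paper's. The paper argues: exponential decay of $f$ with full-rank $\sigma$, $F^{\dagger}F$ implies exponential decay of $w$ (by inserting $\mywedge$ of the inverses), hence $w(N)\to 0$, hence $\boldsymbol{M}_{\infty}$ is a.s.\ rank one, hence purity (Lemma \ref{ghdnghnhn}) — a chain that again runs through the martingale machinery. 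You instead argue deterministically by contraposition: if $P$ with $r:=\rank(P)\geq 2$ satisfies $PM_{\vec{x}}^{\dagger}M_{\vec{x}}P=c_{\vec{x}}P$ for all words $\vec{x}$, then the singular value $\sqrt{c_{\vec{x}}}$ of $M_{\vec{x}}P$ has multiplicity $r\geq 2$, so $\nu_1^{\downarrow}(M_{\vec{x}})\nu_2^{\downarrow}(M_{\vec{x}})\geq c_{\vec{x}}$ by $\nu_j^{\downarrow}(M_{\vec{x}}P)\leq\nu_j^{\downarrow}(M_{\vec{x}})$; full-rankness gives $\nu_j^{\downarrow}(FM_{\vec{x}}\sqrt{\sigma})\geq\sqrt{\lambda_{\min}(F^{\dagger}F)\lambda_{\min}(\sigma)}\,\nu_j^{\downarrow}(M_{\vec{x}})$; and trace preservation gives $\sum_{\vec{x}}c_{\vec{x}}=\Tr\big(\mathbb{E}^{N}(P)\big)/r=1$, whence $f(N)\geq\lambda_{\min}(F^{\dagger}F)\lambda_{\min}(\sigma)>0$ for every $N$. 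All three inequalities are standard and correct. What your route buys is a converse that needs no probability at all and proves the stronger statement that failure of purity prevents even $f(N)\to 0$; what the paper's longer route buys is the full equivalence of Proposition \ref{wMain} (purity $\Leftrightarrow$ $w(N)\to 0$ $\Leftrightarrow$ exponential decay of $w$) together with the a.s.\ rank-one structure of $\boldsymbol{M}_{\infty}$, results of independent interest that the paper also uses elsewhere.
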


In the application of this proposition, the Hilbert space $\mathcal{H}$ is the virtual space, while $\mathcal{H}'$ is the Hilbert space corresponding to sub-chain $C$, c.f., the definition of $F$ in (\ref{mainFdef}). Theorem \ref{thm:main} provides the necessary bound $\xi(|B|) = c\kappa^{|B|}$ in Lemma \ref{fbsfgbsfg} for showing the quasi-locality of the classical restriction  $p_{1,\ldots,|\Lambda|}(x_1,\ldots,x_{|\Lambda|})= \langle x_{\Lambda}|\Psi| x_{\Lambda}\rangle$. Theorem \ref{thm:main} and Lemma \ref{fbsfgbsfg} thus yield as a corollary  (for a more exact formulation, see Corollary \ref{nzrgnthmtm} in Appendix \ref{app:A})
\begin{equation}
\begin{split}
S(p_{1,\ldots,|\Lambda|}\Vert p^{\ell}_{1,\ldots,|\Lambda|})  \leq & c|\Lambda|\kappa^{\ell},\quad 1\leq \ell \leq |\Lambda|-2.
\end{split}
\end{equation}
A simple example that leads to an exponential decay of the relative entropy is if  $\ell$ is a constant fraction of $|\Lambda|$, i.e.,
\begin{equation}
\ell = \alpha |\Lambda|,\quad 0<\alpha <1.
\end{equation}
The result is that the relative entropy decays exponentially, and the family of $\ell$-local distributions thus approaches $p_{1,\ldots,|\Lambda|}$ exponentially fast. The classical restriction of typical injective MPSs is thus in this sense quasi-locally Gibbsian.

\subsection{Proof overview for Proposition \ref{PropMain}}

Here, we give a brief overview of the general structure and ideas behind the proof of Proposition \ref{PropMain}, i.e., that $f(N)$ decays exponentially if  $\{A_x\}_x$ satisfies the purity condition. Although we do not always follow the exact same tracks, the essence of the proof is due to Refs. \onlinecite{Benoist,Maassen}, which we have adapted to our particular setting and cast in a language that is hopefully more accessible to the quantum information theory community. The proof is essentially self-contained, and is presented in Appendices \ref{SecTechnicalReview}, \ref{TheProcess}
and \ref{AppProofPropMain},  only referencing some standard results from the theory of Martingales (see Appendix \ref{SecTechnicalReview}), that can be found in a number of classic textbook on the subject. 

As one may note from Eq.~(\ref{fsgbsfgb}), the sequence $f(N)$ not only depends on the operators $A_x$, but also on the operators $\sigma$ and $F$. It turns out to be convenient to first focus on  the function
\begin{equation}
w(N) = \sum_{x_1,\ldots,x_N= 0}^{d-1}\nu_1^{\downarrow}(A_{x_N}\cdots A_{x_1})\nu_2^{\downarrow}(A_{x_N}\cdots A_{x_1}).
\end{equation}
Once we have established the purity condition as a necessary and sufficient condition for exponential decay of $w(N)$, we extend (Proposition \ref{bofboneabona} in Section \ref{SecTransition}) this result to $f(N)$, which thus yields the statement of Proposition \ref{PropMain}.

The proof of the exponential convergence of $w(N)$ is essentially done in two steps. First, it is shown that $w(N)$ converges to zero. Next, it is shown that $w(N)$ is submultiplicative, in the sense that $w(N+M)\leq w(N)w(M)$, and thus $\log w(N)$ is subadditive. This observation is used, together with Fekete's subadditive lemma, to show that $w(N)$ goes to zero exponentially fast. These steps are incorporated into the proof of Proposition \ref{wMain}.

The essential approach for proving that $w(N)$ converges to zero is to interpret $w(N)$ as the average over a stochastic process. This process can be viewed as the random measurement outcomes $\boldsymbol{x}_1,\ldots,\boldsymbol{x}_N$ due to a  repeated sequential measurement of the positive operator-valued measure (POVM) $\{A_x^{\dagger}A_x\}_{x=0}^{d-1}$. (This process is described more precisely in Appendix \ref{TheProcess}.) For the proof, it is useful to introduce the operator 
\begin{equation}
\boldsymbol{M}_N = \frac{A_{\boldsymbol{x}_1}^{\dagger}\cdots A_{\boldsymbol{x}_N}^{\dagger}A_{\boldsymbol{x}_N}\cdots A_{\boldsymbol{x}_1}}{\Tr(A^{\dagger}_{\boldsymbol{x}_1}\cdots A_{\boldsymbol{x}_N}^{\dagger}A_{\boldsymbol{x}_N}\cdots A_{\boldsymbol{x}_1})},
\end{equation}
which thus depends on the sequence of  random measurement outcomes  $\boldsymbol{x}_1,\ldots,\boldsymbol{x}_N$. It turns out that one can express $w(N)$ in terms of  $\boldsymbol{M}_N$  via the relation $w(N) =  E\big( \sqrt{\lambda_1^{\downarrow}(\boldsymbol{M}_N) \lambda_2^{\downarrow}(\boldsymbol{M}_N)} \big)D$,
where $\lambda_1^{\downarrow}(\boldsymbol{M}_N)$ and  $\lambda_2^{\downarrow}(\boldsymbol{M}_N)$ denote the largest and the second largest eigenvalue of $\boldsymbol{M}_N$, respectively, and $D$ the dimension of the underlying Hilbert space.  Moreover, $E$ denotes the expectation value over all  possible measurement outcomes.
One can realize that $\boldsymbol{M}_N$ is positive semi-definite, has trace $1$, and can thus be interpreted as a density operator. The main point is that if $\boldsymbol{M}_N$ would be a rank-one operator, and thus correspond to a pure state, then it follows that $\lambda_2^{\downarrow}(\boldsymbol{M}_N)$ is zero. Intuitively, it thus seems reasonable that $w(N)$ converges to zero if it is `sufficiently likely' that $\boldsymbol{M}_N$ converges to a rank-one operator.

The starting point for demonstrating that  $\boldsymbol{M}_N$ converges to a rank-one operator is to show (Lemma \ref{PropMartingale}) that the sequence $(\boldsymbol{M}_N)_{N\in\mathbb{N}}$ is a martingale relative to the sequence of measurement outcomes $(\boldsymbol{x}_N)_{N\in\mathbb{N}}$. This enables us to show (Lemma \ref{fnjjnfddanj}) that $(\boldsymbol{M}_N)_{N\in\mathbb{N}}$ almost surely converges to a positive operator $\boldsymbol{M}_{\infty}$. (All these notions are reviewed in Appendix \ref{SecTechnicalReview}.) Once this is established, the bulk of the proof is focused on showing that $\boldsymbol{M}_{\infty}$ (almost surely) is a rank-one operator if and only if $\{A_x\}_{x=0}^{d-1}$ satisfies the purity condition.  

The arguable least transparent part of the proof is how to show that the purity condition is  sufficient for $\boldsymbol{M}_{\infty}$ to be a rank-one operator. 
 The first part of the proof (Lemma \ref{ghdgguk}) shows that $\boldsymbol{M}_{N+p}$ and $\boldsymbol{M}_N$ in some sense `approach' each other, even when conditioned on $\boldsymbol{x}_1,\ldots,\boldsymbol{x}_N$.  The second part (Lemma \ref{adfbafdba})  
loosely speaking shows that  $\boldsymbol{M}_{N+p}$ gives rise to a term of the form $\sqrt{\boldsymbol{M}_N}\boldsymbol{U}_N^{\dagger}A_{x_{1}}^{\dagger}\cdots A_{x_{p}}^{\dagger}  A_{x_{p}} \cdots A_{x_{1}}\boldsymbol{U}_N\sqrt{\boldsymbol{M}_N}$ for a unitary operator, $\boldsymbol{U}_N$, while $\boldsymbol{M}_N$ gives rise to a term that is proportional to $\boldsymbol{M}_N$. As these operators approach each other when $N$ approaches infinity, one can use this to show that 
\begin{equation}\label{dgndghm}
\boldsymbol{M}_{\infty}\boldsymbol{U}_{\infty}^{\dagger}A_{x'_{1}}^{\dagger}\cdots A_{x'_{p}}^{\dagger}  A_{x'_{p}} \cdots A_{x'_{1}}\boldsymbol{U}_{\infty}\boldsymbol{M}_{\infty}\propto  \boldsymbol{M}_{\infty}\boldsymbol{U}_{\infty}^{\dagger}\boldsymbol{U}_{\infty}\boldsymbol{M}_{\infty}.
\end{equation} 
In a reformulation  (Lemma \ref{netrswethn})  of the purity condition, the projector, $P$, is replaced by a general operator, $O$, again with the conclusion that $O$ must be a rank-one operator. With $O = \boldsymbol{U}_{\infty}\boldsymbol{M}_{\infty}$ it follows that $\boldsymbol{M}_{\infty}$ is a  rank-one operator. 

To conversely show (Lemma \ref{ghdnghnhn}) that the purity condition is a necessary condition is somewhat less involved. By assuming that a projector $P$ satisfies the proportionality in (\ref{dsfbsfg}) while having a rank larger than one, then it follows that the only way in which $\boldsymbol{M}_{\infty}$ can be a rank-one operator, is if $P\boldsymbol{M}_{\infty}P = 0$. However, this leads to a contradiction with $A_x$ being such that $\sum_{k=1}^{L}A_x^{\dagger}A_x = \1$.

\underline{Remark}. Using the same tools as above, Benoist et.~al.~show in Ref.~\onlinecite{Benoist} that the stochastic process defined in Appendix \ref{TheProcess} equilibrates exponentially. It is worth noting that the average purity can converge to zero much faster than the stochastic process. For instance, if $\sigma$ is a rank-one operator, then it trivially follows that  $f(N)$ is identically zero for all $N$, irrespective of whether $\{A_x\}_{x=0}^{d-1}$ satisfies the purity condition or not.

\section{\label{s:Discussion} Discussions  and examples}

In this section, we discuss the purity condition and the decay of the CMI in the context of quantum information theory. In Sec.~\ref{ss:SPTphases} we specifically study the behaviour of the CMI for SPT phases and obtain that it remains constant. The purity condition is discussed from the point of view of quantum error correction in Sec.~\ref{sec:errorcorr}.
In Sec.~\ref{PuritySufficient} we find a simpler condition that implies the purity condition, and based on this simplified condition we discuss the typicality of the purity condition in Sec.~\ref{SecModelTypicality}.
In Sec.~\ref{ss:RateOfDecay}  we show, by constructing two simple examples, that the decay rate of the CMI is unrelated to the decay of the transfer operator of the corresponding MPS. Finally, in Sec.~\ref{ss:Exapmles} we discuss some concrete examples.

\subsection{Symmetry-protected phases}\label{ss:SPTphases}

Here we briefly discuss  systems that do not satisfy the purity condition, and comment on the relation to SPT phases in one dimension.

Consider an MPS, $\ket{\Psi}$, of the form of Eq.~(\ref{e:MPS}) with matrices $A_{x_i}$ that have a tensor product decomposition into two subsystems such that
\be \label{e:decompositionQ}
A_{x_i}=U_{x_i}\otimes T_{x_i},
\ee
where $U_{x_i}$ is a unitary matrix and $T_{x_i}$ is any matrix. Then, the reduced post-measurement state, $\Psi_C( x_B)$, in the infinite chain case (see Eq.~(\ref{e:Sinfinite})) is isospectral to
$$
\Psi_C(x_B)\simeq\frac{1}{p_\Psi(x_B)K^2}\left(U_{x_N}\cdots U_{x_1}\otimes T_{x_N}\cdots T_{x_1}\right)\rho\left(U^\dag_{x_1}\cdots U^\dag_{x_N}\otimes T^\dag_{x_1}\cdots T_{x_N}^\dag\right).
$$

For simplicity, let us further consider the case where the unique fixed point of the transfer operator is proportional to the identity, i.e., $\rho=\1/(D_1D_2)$, where $D_1$ and $D_2$ are the dimensions of the two sub-systems respectively. We obtain
$$
\Psi_C(x_B)\simeq\frac{1}{D_1D_2p_\Psi(x_B)K^2}\left(\1\otimes T_{x_N}\cdots T_{x_1}T^\dag_{x_1}\cdots T_{x_N}^\dag\right).
$$
The von Neumann entropy of this state has two independent contributions coming from each factor of the tensor product, namely
\begin{align*}
S\left[\Psi_C( x_B)\right]&=\log D_1+S\left(\dfrac{1}{p_\Psi( x_B)D_2K^2}T_{x_N}\cdots T_{x_1}T^\dag_{x_1}\cdots T_{x_N}^\dag\right).
\end{align*}
Consequently, the average entropy of entanglement of $\Psi_C( x_B)$, and thus the post-measurement CMI, $I_{\Phi_B(\Psi)}( A: C|B)$ (see Eq.~(\ref{e:Ipartials})), always has a constant contribution independent of the length of the middle region $B$. More generally, the CMI is non-vanishing for MPSs in a basis where the matrices $A_{x_i}$ can be isometrically mapped to a form as in Eq.~(\ref{e:decompositionQ}) \cite{locent2}. 

It was shown in Ref.~\onlinecite{Else} that, for a SPT phase in the MPS framework, there always exists a local basis in which the matrices have the form of Eq.~(\ref{e:decompositionQ}), with the additional property that the unitary matrices form a representation of the symmetry group. The AKLT model (see Sec.~\ref{AKLT}) is such an example.

\subsection{The purity condition: Relation to error correction}\label{sec:errorcorr}

In this section we will explore the purity condition (see Def.~\ref{DefPur}) in more detail. The purity condition states that the only projectors $P$ that satisfy 
\begin{equation}
\label{fbsfgnbfgn}
 P A_{x_1}^{\dagger}\cdots A_{x_N}^{\dagger}A_{x_N}\cdots A_{x_1}P \propto P,
 \end{equation}
for all $ N\in\mathbb{N}$, and all $(x_1,\ldots,x_N) \in \{0,\ldots,d-1\}^{\times N}$, are those that have rank one. 
Here we investigate the relation between this condition (or rather the violation of it) and the Knill-Laflamme error correction condition \cite{KLcondition}.  

Suppose that there exits a projector, $P$, onto a subspace, $\mathcal{C}$, with $\dim\mathcal{C}\geq 2$ such that 
\begin{equation}
\label{AlternativeErrorCorrection}
PA^{\dagger}_{x} A_{x}P = \lambda_{x}P,\quad \forall x.
\end{equation}
This looks suspiciously similar to the Knill-Laflamme error correction condition, which is
\begin{equation}
\label{StandardErrorCorrection}
 P A^{\dagger}_{x} A_{y}P =c_{xy}P,\quad \forall x,y.
\end{equation}
The question is how one can understand the apparent similarity between Eq.~(\ref{AlternativeErrorCorrection}) and (\ref{StandardErrorCorrection}). 
To this end, let us first recall the  error correction scenario.
If $A_x$ are operators on a Hilbert space, $\mathcal{H}$, with $\sum_xA_x^{\dagger}A_x=\1$, we define the corresponding noise channel 
 \begin{equation}
 \label{StandardScenario}
 \bE(\chi) := \sum_x A_x\chi A_x^{\dagger}.
 \end{equation}
For any state, $\chi$, with support on the subspace $\mathcal{C}\subseteq \mathcal{H}$, it is the case that $\bE(\chi)$ can be restored to $\chi$ if and only if (\ref{StandardErrorCorrection}) is true. More precisely, there exists a recovery operation, $\mathcal{R}$, (that does not depend on $\chi$) such that $\mathcal{R}\circ\bE(\chi) = \chi$ for all density operators $\chi$ with support on $\mathcal{C}$.

It turns out that Eq.~(\ref{AlternativeErrorCorrection}) is also a necessary and sufficient condition for error correction, but for a different type of error-model. The channel $\bE$, in the standard error-correction scenario, is the effect of a unitary evolution that acts on $\cH$ and on an environment, $\cH_E$, where the latter is inaccessible to us. In the alternative scenario, we assume that there exists an ancillary system, $A$, which we do have access to, and which we can use in order to help us restore the initial state on $\cH$. More precisely, we assume an error model of the form
\begin{equation}
\label{fsdfbsdb}
\tilde{\bE}(\chi) = \sum_x|x\rangle_A\langle x|\otimes A_x \chi A_x^{\dagger},
\end{equation}
where $\{|x\rangle_A\}_l$ is an orthonormal basis of the Hilbert space associated to the ancillary system, $\cH_A$. We can interpret this as having access to additional classical information about the error in the register, $A$. We use this additional information in order to restore the state on $\mathcal{C}$. One may note that if we have no access to $A$, then we are back to the standard scenario, where the channel on $\cH$ is $\bE = \Tr_A\tilde{\bE}$. It turns out that  (\ref{AlternativeErrorCorrection}) is a necessary and sufficient condition for the existence of a recovery channel $\tilde{\mathcal{R}}:\mathcal{L}(\cH\otimes\mathcal{H}_A)\rightarrow\mathcal{L}(\cH)$, such that $\tilde{\mathcal{R}}\circ\tilde{\bE}(\chi) = \chi$ for all density operators $\chi$ on $\mathcal{C}$. The proof of this statement is nearly identical to that of the original Knill-Laflamme theorem and is omitted here. 

If one finds a non-trivial projector $P$ (i.e. if $\Tr(P) = \dim\mathcal{C}\geq 2$) such that Eq.~(\ref{AlternativeErrorCorrection}) holds, then  one can explicitly construct a collection of unitary operators, $U_x$, such that 
\begin{equation}
\sum_x U_xA_x\chi A_x^{\dagger}U_x^{\dagger} = \chi,
\end{equation}
for all density operators $\chi$ on $\mathcal{C}$. In other words, the operators $U_x$ perform the error correction on  subspace $\mathcal{C}$. More precisely, if we have a set $\{A_x\}$ with $\sum_xA_x^{\dagger}A_x = \1$, for which there exists a non-trivial projector, $P$, that satisfies Eq.~(\ref{AlternativeErrorCorrection}), then we can construct a new `error-corrected' set, $\{\overline{A}_x\}$, with $\overline{A}_x := U_xA_x$ (and $\sum_x\overline{A}_x^{\dagger}\overline{A}_x = \1$). For this new set we will thus not get a decay to zero of the average entropy (Eq.~(\ref{e:entropyreducedstate})), no matter how long a chain $\overline{A}_{x_N}\cdots \overline{A}_{x_1}$  we construct. 

Nothing prevents us from repeating the above reasoning for products $\{A_{x_2}A_{x_1}\}_{x_2,x_1}$, i.e., we can try to find the largest subspace $\mathcal{C}_2$ with corresponding  projector, $P$,  such that 
\begin{equation}
PA_{x_1}^{\dagger}A_{x_2}^{\dagger}A_{x_2}A_{x_1}P = \lambda_{x_2,x_1}P.
\end{equation}
We can similarly ask for the largest subspace $\mathcal{C}_3$  that is correctable for $\{A_{x_3}A_{x_2}A_{x_1}\}_{x_3,x_2,x_1}$. One can realize that we always have $\mathcal{C}_{n}\subseteq \mathcal{C}_{n-1}$.

The purity condition is violated if and only if there exists a non-trivial projector $P$ such that (\ref{fbsfgnbfgn}) holds for all $N$. By the above reasoning we can thus conclude that the purity condition fails if and only if there for all $N$ exists a fixed non-trivial correctable subspace $\mathcal{C}$. Loosely speaking, we can alternatively phrase the purity condition as the non-existence of a non-trivial correctable subspace that persists indefinitely throughout iterated applications of the error channel. Intuitively, this observation suggests that the violation of the purity condition is a rather `brittle' and non-generic phenomenon.

\subsection{\label{PuritySufficient}A sufficient condition for purity}

It is maybe not entirely clear what is the deeper meaning of the purity-condition, or how easy or difficult it is to satisfy. In order to shed some light on the latter question, we here show that if there exists some $N$ for which the set of operators $A^{\dagger}_{x_1}\cdots A^{\dagger}_{x_N}A_{x_N}\cdots A_{x_1}$ span the space of linear operators $\mathcal{L}(\mathcal{H})$ on the underlying (finite-dimensional) Hilbert space $\mathcal{H}$, then $\{A_x\}_{x=0}^{d-1}$ satisfies the purity condition. 
Another way of phrasing this is to say that if for some $N$, the POVM $\{A^{\dagger}_{x_1}\cdots A^{\dagger}_{x_N}A_{x_N}\cdots A_{x_1}\}_{x_1,\ldots,x_N}$ is informationally complete, then $\{A_x\}_{x=0}^{d-1}$ satisfies purity.

In the general case, it seems intuitively reasonable to expect that the set of products $A^{\dagger}_{x_1}\cdots A^{\dagger}_{x_N}A_{x_N}\cdots A_{x_1}$ eventually spans the whole of $\mathcal{L}(\mathcal{H})$, for sufficiently large $N$ (assuming linear combinations with complex coefficients). The exception would be if there exists some particular algebraic relation between the operators $A_x$, which so to speak `trap' the products $A^{\dagger}_{x_1}\cdots A^{\dagger}_{x_N}A_{x_N}\cdots A_{x_1}$ within a nontrivial subspace of $\mathcal{L}(\mathcal{H})$. This argument suggests that the purity condition in some sense would be `easily' satisfied. We investigate this question further in Section \ref{SecModelTypicality}.

Let us first note that a set of operators  $\{A_x\}_{x=0}^{d-1}$ does \emph{not} satisfy the purity condition if there exists a projector $P$ onto an at least two-dimensional subspace of $\mathcal{H}$, and there exist numbers $r_{x_1,\ldots,x_N}$ such that
\begin{equation}
\label{dfnfdbfbd}
PA^{\dagger}_{x_1}\cdots A^{\dagger}_{x_N}A_{x_N}\cdots A_{x_1}P = r_{x_1,\ldots,x_N}P,
\end{equation}
for all $N\in\mathbb{N}$ and all $x_1,\ldots,x_N$.

\begin{prop}
\label{SufficientForPurity}
Let  $\{A_x\}_{x=0}^{d-1}$ be operators on the finite-dimensional complex Hilbert space $\mathcal{H}$, such that $\sum_{x=0}^{d-1}A_x^{\dagger}A_x = \1$. 
If there  exists an $N\in \mathbb{N}$ such that 
\begin{equation}
\mathrm{Sp}\Big(\{ A^{\dagger}_{x_1}\cdots A^{\dagger}_{x_N}A_{x_N}\cdots A_{x_1}  \}_{x_1,\ldots,x_N = 0}^{d-1}\Big) = \mathcal{L}(\mathcal{H}),
\end{equation}
then  $\{A_x\}_{x=0}^{d-1}$ satisfies the purity condition.
\end{prop}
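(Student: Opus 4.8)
The plan is to prove the contrapositive: assuming $\{A_x\}_{x=0}^{d-1}$ violates the purity condition, I will show that the products $A^{\dagger}_{x_1}\cdots A^{\dagger}_{x_N}A_{x_N}\cdots A_{x_1}$ cannot span $\mathcal{L}(\mathcal{H})$ for any $N$. By the characterization in Eq.~(\ref{dfnfdbfbd}), a violation of purity means there exists an orthogonal projector $P$ onto a subspace $\mathcal{C}$ with $\dim\mathcal{C}\geq 2$, together with scalars $r_{x_1,\ldots,x_N}$, such that
\begin{equation}
\label{myplan1}
PA^{\dagger}_{x_1}\cdots A^{\dagger}_{x_N}A_{x_N}\cdots A_{x_1}P = r_{x_1,\ldots,x_N}P,\quad \forall N,\ \forall (x_1,\ldots,x_N).
\end{equation}
The goal is to use this to produce a \emph{single} nonzero linear functional on $\mathcal{L}(\mathcal{H})$ that annihilates every one of the spanning operators, thereby showing the span is a proper subspace.

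The key observation is that Eq.~(\ref{myplan1}) forces the compression $P O_{x_1,\ldots,x_N}P$ of each product $O_{x_1,\ldots,x_N}:=A^{\dagger}_{x_1}\cdots A^{\dagger}_{x_N}A_{x_N}\cdots A_{x_1}$ to lie in the one-dimensional space $\spann\{P\}$ inside $\mathcal{L}(\mathcal{C})$. Since $\dim\mathcal{C}\geq 2$, there is plenty of room in $\mathcal{L}(\mathcal{C})$ orthogonal to $P$. Concretely, pick any traceless Hermitian operator $Z$ supported on $\mathcal{C}$ with $Z\neq 0$ (this exists precisely because $\dim\mathcal{C}\geq 2$), and define the functional $\phi(X):=\Tr(Z P X P)=\Tr(ZX)$ on $\mathcal{L}(\mathcal{H})$, using that $PZP=Z$. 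For each product we then compute
\begin{equation}
\label{myplan2}
\phi(O_{x_1,\ldots,x_N}) = \Tr\big(Z\, P O_{x_1,\ldots,x_N} P\big) = r_{x_1,\ldots,x_N}\Tr(ZP) = r_{x_1,\ldots,x_N}\Tr(Z) = 0,
\end{equation}
where the last equality uses tracelessness of $Z$. Thus $\phi$ is a nonzero linear functional vanishing on every operator in the proposed spanning set, so these operators span at most the kernel of $\phi$, a proper subspace of $\mathcal{L}(\mathcal{H})$. This holds for every $N$ simultaneously, since the same $P$ and hence the same $Z$ and $\phi$ work uniformly. Contrapositively, if for some $N$ the products do span all of $\mathcal{L}(\mathcal{H})$, then no such $P$ can exist, and purity holds.

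I expect the only delicate point to be establishing that a violation of purity indeed yields the compressed-proportionality form Eq.~(\ref{myplan1}) rather than merely the original Definition~\ref{DefPur}; but the paragraph preceding Eq.~(\ref{dfnfdbfbd}) already asserts exactly this equivalence, so I would simply invoke it. The remaining subtlety is ensuring the functional $\phi$ is genuinely nonzero, which is immediate since $Z\neq 0$ and the trace pairing is nondegenerate on $\mathcal{L}(\mathcal{H})$. No analytic estimates or limiting arguments are needed; the argument is purely linear-algebraic and the main obstacle is merely bookkeeping the reduction to the compressed form, which the paper has already supplied.
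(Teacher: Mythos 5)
Your proof is correct and follows essentially the same route as the paper: both argue the contrapositive, invoke the reformulation of a purity violation as $PA^{\dagger}_{x_1}\cdots A^{\dagger}_{x_N}A_{x_N}\cdots A_{x_1}P = r_{x_1,\ldots,x_N}P$ for a projector $P$ of rank at least two, and exploit the fact that compression by $P$ collapses every product into the one-dimensional space $\spann\{P\}$, which is too small inside $\mathcal{L}(\mathcal{C})$ when $\dim\mathcal{C}\geq 2$. The only difference is a dual choice of final step: the paper picks an operator $Q$ with $PQP=Q$ and $Q\neq cP$ and derives a contradiction from assuming the products span $\mathcal{L}(\mathcal{H})$, whereas you exhibit the annihilating functional $X\mapsto \Tr(ZX)$ with $Z\neq 0$ traceless, Hermitian, and supported on $\mathcal{C}$, placing the span inside a fixed hyperplane; both endings amount to the same one line of linear algebra.
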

\begin{proof}
It turns out to be convenient to prove that if   $\{A_x\}_{x=0}^{d-1}$ does \emph{not} satisfy the purity condition, then $\{ A^{\dagger}_{x_1}\cdots A^{\dagger}_{x_N}A_{x_N}\cdots A_{x_1}  \}_{x_1,\ldots,x_N = 0}^{d-1}$ does not span $\mathcal{L}(\mathcal{H})$ for any $N$. We thus assume that there exists a projector onto an at least two-dimensional subspace, such that (\ref{dfnfdbfbd}) is satisfied for all $N$, and all $x_1,\ldots,x_N$.
Since $P$ projects onto an at least two-dimensional subspace, there exists an operator $Q$ such that $PQP = Q$, and where $Q\neq cP$ for all $c\in\mathbb{C}$. Assume that $\{A^{\dagger}_{x_1}\cdots A^{\dagger}_{x_N}A_{x_N}\cdots A_{x_1}\}_{x_1,\ldots,x_N}$ would span the whole of $\mathcal{L}(\mathcal{H})$. Then, there exist $c_{x_1,\ldots,x_N}\in\mathbb{C}$ such that
\begin{equation}
\sum_{x_1,\ldots,x_N}c_{x_1,\ldots,x_N}A^{\dagger}_{x_1}\cdots A^{\dagger}_{x_N}A_{x_N}\cdots A_{x_1} = Q.
\end{equation}
This in turn implies that
\begin{equation}
\label{bfbfgn}
\sum_{x_1,\ldots,x_N}c_{x_1,\ldots,x_N}PA^{\dagger}_{x_1}\cdots A^{\dagger}_{x_N}A_{x_N}\cdots A_{x_1}P = PQP = Q.
\end{equation}
However, by (\ref{dfnfdbfbd}) we know that 
\begin{equation}
\sum_{x_1,\ldots,x_N}c_{x_1,\ldots,x_N}PA^{\dagger}_{x_1}\cdots A^{\dagger}_{x_N}A_{x_N}\cdots A_{x_1}P = \sum_{x_1,\ldots,x_N}c_{x_1,\ldots,x_N}r_{x_1,\ldots,x_N}P.
\end{equation}
This combined with (\ref{bfbfgn}) yields
\begin{equation}
 \sum_{x_1,\ldots,x_N}c_{x_1,\ldots,x_N}r_{x_1,\ldots,x_N}P = Q.
\end{equation}
However, this is in contradiction with $Q\neq cP$ for all $c\in\mathbb{C}$. Hence, we can conclude that $\{A^{\dagger}_{x_1}\cdots A^{\dagger}_{x_N}A_{x_N}\cdots A_{x_1}\}_{x_1,\ldots,x_N}$ cannot span the whole of $\mathcal{L}(\mathcal{H})$.

To conclude, if  $\{A_x\}_{x=0}^{d-1}$ does not satisfy the purity condition, then $\{A^{\dagger}_{x_1}\cdots A^{\dagger}_{x_N}A_{x_N}\cdots A_{x_1}\}_{x_1,\ldots,x_N}$ cannot span the whole of $\mathcal{L}(\mathcal{H})$ for any $N$. This yields the statement of the Lemma.
\end{proof}

\subsection{\label{SecModelTypicality}A model for typicality of purity}

In this section,  we consider a concrete model for formalizing the notion of typicality of the purity condition. A common method is to assign a probability measure over the set under consideration, and say that a property is typical, or generic, if it holds for all elements in that set, apart from a subset of measure zero.
This approach thus requires us to construct a probability measure over the objects $\{A_x\}_{x=0}^{d-1}$. Within this construction, we will use Proposition \ref{SufficientForPurity}, and a result from the previous literature (Lemma \ref{ghmgdhmgdh} below), to show that the purity condition is satisfied generically. 
As the reader will note, we here only present a construction for Hilbert spaces of odd dimensions. The reason for why we impose this restriction is to avoid the additional technical complications  that arise in the even-dimensional case (briefly explained  below). It seems likely that these complications are due to the particular proof-technique that we use, rather than some genuine limitations.

In order to construct a probability measure on the sets $\{A_x\}_{x=0}^{d-1}$, we consider, apart from the Hilbert space $\mathcal{H}$, also an ancillary Hilbert space, $\mathcal{H}_A$, of dimension $d$. On $\mathcal{H}_A$, we fix an orthonormal basis, $\{|a_x\rangle\}_{x=0}^{d-1}$, and a normalized element, $|a\rangle\in\mathcal{H}_{A}$. For each unitary operator, $U$, on $\mathcal{H}\otimes\mathcal{H}_A$, we let 
\begin{equation}
\label{sgnsfgnfgm}
A_x :=\langle a_x|U|a\rangle,
\end{equation}
where we note that since $U$ is a mapping on $\mathcal{H}\otimes\mathcal{H}_A$, it follows that $A_x$ is a mapping on $\mathcal{H}$. We can regard this as the result of a procedure where we append an ancillary state, $|a\rangle\langle a|$, to an input state, $\rho$, evolve the system unitarily with $U$, and then perform the projective measurement $\{|a_x\rangle\langle a_x|\}_{x=0}^{d-1}$ on the ancillary system. The conditional (unnormalized) post-measurement state resulting from this procedure is $A_x \rho A_{x}^{\dagger}$.
 If we consider the Haar measure over the set of $Dd\times Dd$ unitary matrices $U$, the construction in (\ref{sgnsfgnfgm}) thus induces a probability measure on the class of sets $\{A_x\}_{x=0}^{d-1}$. In the following, we shall argue that, with respect to the Haar measure, the set of $\{A_x\}_{x=0}^{d-1}$ that satisfy the purity condition is typical, in the sense that the set that violates the purity condition has measure zero. To reach this conclusion, we make use of the following result, which we have taken from  \cite{Nechita_2010}.
Consider some polynomial $P$ with real coefficients, over the real and imaginary parts of the elements of complex $K\times K$ matrices. The following lemma  says that there are only two possibilities: either $P$ is zero on the whole set of unitary $K\times K$ matrices, $\mathbb{U}(K)$, or $P$ is non-zero on almost all of $\mathbb{U}(K)$. 
\begin{lemma}[Lemma 4.3 in \cite{Nechita_2010}]
\label{ghmgdhmgdh}
Given a polynomial $P\in\mathbb{R}[X_1,\ldots,X_{2K^2}]$, the set 
\begin{equation}
 \Big\{[U_{i,j}]_{i,j=1}^{K}\in\mathbb{U}(K): P\big(\Real(U_{i,j}),\Imag(U_{i,j})\big) = 0\Big\},
\end{equation}
is either equal to the whole of $\mathbb{U}(K)$, or it has Haar measure $0$.
\end{lemma}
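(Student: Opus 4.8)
The plan is to recognize the stated dichotomy as an instance of a general fact: a real-analytic function on a connected real-analytic manifold either vanishes identically or has a zero set of measure zero. First I would record the three structural facts about the unitary group that make this applicable. The set $\mathbb{U}(K)$, viewed inside $\mathbb{C}^{K\times K}\cong\mathbb{R}^{2K^2}$, is a compact, \emph{connected}, real-analytic submanifold: connectedness follows because every $U\in\mathbb{U}(K)$ can be written as $U=e^{iH}$ for a Hermitian $H$, and the Hermitian matrices form a (connected) real vector space, so $H\mapsto e^{iH}$ is a continuous surjection from a connected set. The function $g(U):=P(\Real(U_{i,j}),\Imag(U_{i,j}))$ is the restriction to $\mathbb{U}(K)$ of a polynomial in the ambient real coordinates, hence real-analytic on the manifold (its pullback under any analytic local parametrization, e.g.\ the exponential chart composed with a left translation, is a composition of real-analytic maps). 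Finally, the Haar measure on the compact Lie group $\mathbb{U}(K)$ coincides up to normalization with the Riemannian volume of a bi-invariant metric, which in every chart is mutually absolutely continuous with Lebesgue measure; hence ``Haar-null'' and ``Lebesgue-null in charts'' agree.

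The analytic heart is a local lemma proved by induction on $n$: if $f$ is real-analytic on an open box $V=\prod_{i=1}^{n}(a_i,b_i)\subseteq\mathbb{R}^n$ and $f\not\equiv 0$ on $V$, then $\{x\in V:f(x)=0\}$ has Lebesgue measure zero. For $n=1$ the zeros of a nonzero real-analytic function on an interval are isolated, hence countable. For the step, write $x=(x',x_n)$ and apply Fubini: for fixed $x'$ the slice $x_n\mapsto f(x',x_n)$ is real-analytic on $(a_n,b_n)$, so its zero set has measure zero \emph{unless} it vanishes identically. Fixing a base point $x_n^{0}\in(a_n,b_n)$, the exceptional set $E=\{x':f(x',\cdot)\equiv 0\}$ equals $\bigcap_{j\ge 0}\{x':g_j(x')=0\}$ with $g_j(x'):=\partial_{x_n}^{j}f(x',x_n^{0})$; were every $g_j$ identically zero we would get $f\equiv 0$ on $V$, so some $g_{j_0}\not\equiv 0$ and $E\subseteq\{g_{j_0}=0\}$ has measure zero by the inductive hypothesis applied on $V'=\prod_{i=1}^{n-1}(a_i,b_i)$. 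Fubini then gives $|\{f=0\}|=\int_{V'}|\{x_n:f(x',x_n)=0\}|\,dx'=0$.

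To globalize, I would invoke the identity theorem for real-analytic functions: on the connected manifold $\mathbb{U}(K)$, a function vanishing on a nonempty open set vanishes everywhere. Hence exactly two cases occur. If $g\equiv 0$, the zero set is all of $\mathbb{U}(K)$, which is the first alternative. Otherwise $g$ vanishes on no nonempty open set; covering the compact manifold by countably many analytic charts and each chart image by countably many boxes, the pulled-back $g$ is not identically zero on any such box, so by the local lemma its zero set there is Lebesgue-null. A countable union of null sets is null, and by the absolute-continuity remark the whole zero set has Haar measure zero, which is the second alternative.

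I expect the principal subtlety to lie in the Fubini step of the local lemma—ensuring the ``bad'' slices $E$ are trapped inside a lower-dimensional analytic zero set so the induction closes—together with making the passage from ``Lebesgue-null in every chart'' to ``Haar-null'' fully rigorous. Both are standard but must be stated with care to avoid a circular appeal to the very fact being proved, and to ensure that the identity theorem is applied on the connected manifold rather than merely chart by chart.
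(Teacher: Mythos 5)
The paper does not prove this lemma at all: it is imported verbatim as Lemma 4.3 of Ref.~\cite{Nechita_2010}, so there is no in-paper argument to compare against, and your proposal supplies a proof where the paper only cites one. On its own merits your argument is correct and is the standard route: (i) $\mathbb{U}(K)$ is a compact \emph{connected} real-analytic manifold --- your surjectivity of $H\mapsto e^{iH}$ from the Hermitian matrices settles connectedness, which is the one hypothesis that cannot be dropped, since on the disconnected group $\mathbb{O}(K)$ the polynomial $\det U+1$ vanishes exactly on a component of Haar measure $1/2$ and the dichotomy would fail; (ii) the restriction of $P$ to the group is real-analytic in any analytic chart; (iii) your local lemma (a real-analytic function on a box that is not identically zero has Lebesgue-null zero set) is proved soundly by induction with Fubini: the exceptional slice set $E=\bigcap_{j}\{g_j=0\}$ is trapped inside $\{g_{j_0}=0\}$ for some $g_{j_0}\not\equiv 0$, which is null by the inductive hypothesis, and measurability is free since all sets involved are closed or countable intersections of closed sets; (iv) the identity theorem on the \emph{connected} manifold is exactly what upgrades ``$g\not\equiv 0$ globally'' to ``$g\not\equiv 0$ on every chart box'', which the local lemma needs; and (v) Haar measure is a smooth positive density in analytic charts, so chart-wise Lebesgue-null implies Haar-null. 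The only simplification available is that compactness of $\mathbb{U}(K)$ gives finitely many charts, so the countable-cover bookkeeping is not needed; nothing essential hinges on it.
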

This means that if we can find a single unitary $U$ for which the polynomial is non-zero, then we know that the polynomial is non-zero for the whole set $\mathbb{U}(K)$, except possibly for a subset of  measure zero.

Regarding the space $\mathcal{L}(\mathcal{H})$ as an inner product space with respect to the Hilbert-Schmidt inner product $\langle B,C\rangle := \Tr(B^{\dagger}C)$, we note that a finite collection of  operators $\mathcal{Q} := \{Q_x\}_{x = 0}^{K-1}$ is linearly independent if and only if the Gram matrix
\begin{equation}
\label{adfdfh}
\boldsymbol{M}(\mathcal{Q}) = [M_{x,x'}]_{x,x' = 0}^{K-1},\quad M_{x,x'} := \langle Q_x, Q_{x'} \rangle = \Tr(Q_{x}^{\dagger}Q_{x'}),
\end{equation}
is positive definite. Since $\boldsymbol{M}(\mathcal{Q})$ in general is positive semi-definite, we thus know that  $\mathcal{Q}$ is linearly independent if and only if all the eigenvalues of $\boldsymbol{M}(\mathcal{Q})$ are non-zero, and consequently, if and only if $\det \boldsymbol{M}(\mathcal{Q}) \neq 0$.

In the following we shall prove that $\{A_x\}_{x=0}^{d-1}$, as constructed via (\ref{sgnsfgnfgm}), satisfies the purity condition for all $U$, except for a subset of Haar measure zero. The general idea of the proof is as follows. For a sufficiently large $N$, we consider a specific subset 
$\mathcal{Q} \subset \{ A^{\dagger}_{x_1}\cdots A^{\dagger}_{x_N}A_{x_N}\cdots A_{x_1}  \}_{x_1,\ldots,x_N = 0}^{d-1}$, 
and we note that $\det\boldsymbol{M}(\mathcal{Q})$ is a polynomial in the matrix elements of $U$. By Lemma \ref{ghmgdhmgdh}, we can thus conclude that either $\det\boldsymbol{M}(\mathcal{Q})=0$ on the whole of $\mathbb{U}(D)$, or $\det\boldsymbol{M}(\mathcal{Q}) \neq 0$ for all $U$ except for a subset of measure zero. If we moreover let $\mathcal{Q}$ contain precisely $D^2$ elements, $|\mathcal{Q}| = D^2$, then this would mean that either $\mathcal{Q}$ does not span $\mathcal{L}(\mathcal{H})$ for any $U$, or $\mathcal{Q}$ spans $\mathcal{L}(\mathcal{H})$ for almost all $U$.
To show the latter, it thus suffices to find one single unitary $U$ such that $\det\boldsymbol{M}(\mathcal{Q}) \neq 0$. The bulk of the proof below is focused on determining such a  unitary, and subset $\mathcal{Q}$, yielding $\det \boldsymbol{M}(\mathcal{Q}) \neq 0$.

An important building block for the construction of the particular unitary operator is the set of generalized Pauli-operators, or shift and clock operators \cite{Sylvester, Weyl, Weyl2, Schwinger}. 
On a Hilbert space $\mathcal{H}$ with dimension $D$, and an orthonormal basis $\{|n\rangle\}_{n=0}^{D-1}$, we define the operators
\begin{equation}
\label{DefShiftClockOperators}
\begin{split}
\Lambda_1 := & \sum_{n=0}^{D-1} |(n+1)\mathrm{mod} D\rangle\langle n|,\quad 
\Lambda_3 :=  \sum_{n=0}^{D-1}\omega^{n}|n\rangle\langle n|,\quad \omega := e^{i2\pi/D},\quad U_{jk} :=   \Lambda_1^{j}\Lambda_3^{k}.
\end{split}
\end{equation}
(The reason for the, at first sight maybe odd-looking, numbering in the subscripts is that $\Lambda_1$ can be regarded as the counterpart to the Pauli-operator $\sigma_1$, and $\Lambda_3$ the counterpart to $\sigma_3$.)
We recall that the set of unitary operators $\{U_{jk}\}_{j,k = 0}^{d-1}$ forms a basis for $\mathcal{L}(\mathcal{H})$, and that  $\langle U_{jk},U_{j'k'}\rangle = \Tr(U_{jk}^{\dagger}U_{j'k'}) = D\delta_{jj'}\delta_{kk'}$, 
and moreover that $\Lambda_3\Lambda_1 = \omega\Lambda_1\Lambda_3$, which in turn leads to 
\begin{equation}
\label{sfgnsfgnfgn}
U_{j'k'}U_{jk} =  \omega^{k'j-j'k}U_{jk} U_{j'k'}.
\end{equation}
We moreover note that 
\begin{equation}
\label{sfgnsfgmhgmsm}
\begin{split}
U_{jk}^{\dagger}=  \omega^{jk}U_{(D-j)\mathrm{mod}D,(D-k)\mathrm{mod}D}.
\end{split}
\end{equation}

Let us here recall that we want a subset $\mathcal{Q}$ made of positive semi-definite operators, $Q_x$, such that $\det M(\mathcal{Q})\neq0$, and thus $\mathcal{Q}$ is linearly independent. For this purpose, we will in the following consider a decomposition (Lemma \ref{Expansion}) of Hermitian operators, $R$, in terms of the operators $\{U_{jk}\}_{j,k=0}^{D-1}$.  Next, we pick an operator $R$ (Lemma \ref{sfgnfnnenene}) with particular properties, which we will use for the construction of  $\mathcal{Q}$. More precisely,  in the following we wish to find a positive semi-definite operator, $R$, that is bounded by the identity, and has non-zero overlaps with all the operators $U_{jk}$, i.e., all the expansion coefficients in the $\{U_{jk}\}_{j,k=0}^{D-1}$ basis should be non-zero. 
By the virtue of being positive semi-definite, the hypothetical operator $R$ has to be Hermitian, i.e., $R^{\dagger} = R$. For this reason, it is useful to take a closer look on the effect, as described by (\ref{sfgnsfgmhgmsm}), of the Hermitian conjugation on the basis elements $U_{jk}$.  
 Suppose now that $D\geq 3$ is an odd number. This means that we can partition the index set $\{0,\ldots, D-1\}$ into the three subsets $\{0\}$, $\{1,\ldots, (D-1)/2\}$, $\{(D+1)/2,\ldots,D-1\}$. 
Under the mapping $n\mapsto (D-n)\, \mathrm{mod}\, D$ the set $\{0\}$ is mapped to itself, while the two sets  $\{1,\ldots, (D-1)/2\}$ and  $\{(D+1)/2,\ldots,D-1\}$ are mapped into each other. In view of  (\ref{sfgnsfgmhgmsm}) one can thus conclude that every Hermitian operator is uniquely determined by the expansion coefficients corresponding to, e.g., the basis elements
\begin{equation}
U_{00},\quad \{U_{n,0}\}_{n=1}^{(D-1)/2},\quad \{U_{0,m}\}_{m=1}^{(D-1)/2},\quad \{U_{n,m}\}_{n = 1,m = 1}^{n=(D-1)/2,m = (D-1)/2},\quad \{U_{n,m}\}_{n = 1,m = (D+1)/2}^{n=(D-1)/2,m = D-1},
\end{equation}
while the expansion coefficients of the remaining basis-elements are fixed by the Hermiticity and the resulting map (\ref{sfgnsfgmhgmsm}).
By the mapping (\ref{sfgnsfgmhgmsm}) it also follows that for a Hermitian operator, the expansion coefficient corresponding to $U_{00}$ has to be real. By the above consideration, we can conclude the following lemma.
\begin{lemma}
\label{Expansion}
For a finite-dimensional complex Hilbert space with odd dimension $D\geq 3$, every Hermitian operator $R$ can be uniquely expanded as
\begin{equation}
\begin{split}
R = & rU_{00}  + \sum_{n=1}^{(D-1)/2}[a_n U_{n,0} + a_n^{*}U_{D-n,0}]   + \sum_{m=1}^{(D-1)/2}[b_m U_{0,m} + b_{m}^{*} U_{0,D-m}]\\
& +\sum_{n=1}^{(D-1)/2} \sum_{m=1}^{(D-1)/2}[A_{n,m}U_{n,m} + A_{n,m}^{*} \omega^{nm}U_{D-n,D-m}]\\
& + \sum_{n=1}^{(D-1)/2} \sum_{m=(D+1)/2}^{D-1}[B_{n,m}U_{n,m} + B_{n,m}^{*} \omega^{nm}U_{D-n,D-m}],\\
\end{split}
\end{equation}
where $r\in\mathbb{R}$ and  $(a_n)_{n=1}^{(D-1)/2},  (b_m)_{m=1}^{(D-1)/2}\in\mathbb{C}^{(D-2)/2}$ and $(A_{n,m})_{n = 1,m = 1}^{n=(D-1)/2,m = (D-1)/2},(B_{n,m})_{n = 1, m=(D+1)/2}^{n=(D+1)/2,m=D-1}\in\mathbb{C}^{(D-2)/2\times (D-1)/2}$. Moreover, by choosing the above coefficients to be non-zero, it follows that all the expansion coefficients of $R$ in the basis $\{U_{jk}\}_{j,k=0}^{D-1}$ are non-zero.
\end{lemma}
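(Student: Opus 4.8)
The plan is to start from the orthogonal basis expansion and turn Hermiticity into a symmetry constraint on the expansion coefficients. Since $\langle U_{jk},U_{j'k'}\rangle = \Tr(U_{jk}^{\dagger}U_{j'k'}) = D\delta_{jj'}\delta_{kk'}$, every operator has a unique expansion $R = \sum_{j,k=0}^{D-1}c_{jk}U_{jk}$ with $c_{jk} = \Tr(U_{jk}^{\dagger}R)/D$. Writing $\bar n := (D-n)\,\mathrm{mod}\,D$ for brevity, I would impose $R = R^{\dagger}$, substitute relation (\ref{sfgnsfgmhgmsm}), and re-index $(j,k)\mapsto(\bar j,\bar k)$. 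Using $(D-j)(D-k)\equiv jk \pmod D$ (so that $\omega^{\bar j\bar k}=\omega^{jk}$) and the uniqueness of the expansion, comparing coefficients shows that Hermiticity is \emph{exactly} equivalent to
\begin{equation}
c_{jk} = \omega^{jk}\, c_{\bar j\bar k}^{*},\qquad \forall j,k.
\end{equation}

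The second step is to analyse the involution $\iota:(j,k)\mapsto(\bar j,\bar k)$ on the index set. Because $D$ is odd, $\gcd(2,D)=1$, so $n\mapsto \bar n$ fixes only $n=0$ and otherwise pairs the low half $\{1,\ldots,(D-1)/2\}$ with the high half $\{(D+1)/2,\ldots,D-1\}$; hence $\iota$ has the single fixed point $(0,0)$, and all remaining indices split into two-element orbits. On the fixed point the constraint reads $c_{00}=c_{00}^{*}$, forcing $c_{00}=r\in\bR$; on each two-orbit $\{(j,k),\iota(j,k)\}$ the constraint links the two coefficients, so one of them (the representative) is a free complex parameter while the other is determined as $\omega^{jk}$ times its conjugate. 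I would verify that the two constraints attached to a single orbit are mutually consistent, again using $\omega^{\bar j\bar k}=\omega^{jk}$ and $\bar{\bar n}=n$. A dimension count then confirms the picture: one real parameter plus $(D^2-1)/2$ complex parameters gives real dimension $D^2$, matching the space of Hermitian operators.

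The third step is the combinatorial bookkeeping of choosing one representative per orbit, which I expect to be the main (though elementary) obstacle, since it is easy to over- or under-count. I would check that the listed elements $U_{00}$, $\{U_{n,0}\}$, $\{U_{0,m}\}$, $\{U_{n,m}\}_{n,m\in\{1,\ldots,(D-1)/2\}}$ and $\{U_{n,m}\}_{n\in\{1,\ldots,(D-1)/2\},\,m\in\{(D+1)/2,\ldots,D-1\}}$ form a complete and irredundant set of orbit representatives: every non-fixed orbit contains exactly one element whose first index lies in the low half, or, when the first index is $0$, exactly one whose second index lies in the low half, and these are precisely the listed elements. Reading off the determined partner coefficient from the displayed constraint for each representative then reproduces the claimed expansion term by term; for instance the partner of $U_{n,m}$ with $n,m$ in the low half carries coefficient $\omega^{nm}A_{n,m}^{*}$, matching the stated form.

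Finally, the ``moreover'' statement follows immediately from the structure just established: every coefficient $c_{jk}$ equals either a free parameter or a unit-modulus scalar times the complex conjugate of a free parameter. Since $|\omega^{jk}|=1$ and complex conjugation preserves non-vanishing, choosing all the free parameters $r,a_n,b_m,A_{n,m},B_{n,m}$ to be non-zero forces every $c_{jk}$ to be non-zero. The only essential use of oddness is in securing the unique fixed point $(0,0)$ of $\iota$; in even dimension further fixed indices (those $n$ with $2n\equiv 0$) appear and must be treated separately, which is the source of the technical complications mentioned before the lemma.
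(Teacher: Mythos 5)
Your proposal is correct and follows essentially the same route as the paper: the paper's own argument is the discussion preceding the lemma, which likewise combines the conjugation relation $U_{jk}^{\dagger}=\omega^{jk}U_{(D-j)\mathrm{mod}D,(D-k)\mathrm{mod}D}$ with the partition of $\{0,\ldots,D-1\}$ into $\{0\}$ and the two halves swapped by $n\mapsto(D-n)\,\mathrm{mod}\,D$, so that $(0,0)$ is the unique fixed index and all other basis elements pair up, forcing $c_{00}$ real and each paired coefficient to be determined by its partner. Your version merely makes this explicit (the coefficient relation $c_{jk}=\omega^{jk}c_{\bar j\bar k}^{*}$, the consistency check, and the dimension count), which is a sound and complete elaboration of the same idea.
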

Remark: The even-dimensional case is more involved since in this case, not only $0$ is invariant under the map $n\mapsto (D-n)\mathrm{mod} D$, but also $D/2$. For this reason we here focus on the more straightforward odd-dimensional case.

The next step, in Lemma \ref{sfgnfnnenene} below, consists of choosing coefficients $r$, $a_n, b_m$, $A_{n,m}$, and $B_{n,m}$ to obtain an operator $R$ with the desired properties to construct $\mathcal{Q}$. However, in order to prove  this lemma we first need the following observation, which we state without proof.

\begin{lemma}
\label{UnitaryToPositive}
Let $U$ be a unitary operator on a finite-dimensional Hilbert space, and $\theta$ a real number, then
\begin{equation}
0\leq \frac{1}{2}\1 +\frac{1}{4}e^{i\theta}U + \frac{1}{4}e^{-i\theta}U^{\dagger} \leq \1.
\end{equation}
\end{lemma}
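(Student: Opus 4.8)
The plan is to avoid the spectral theorem altogether and instead exhibit the operator $M:=\frac{1}{2}\1 +\frac{1}{4}e^{i\theta}U + \frac{1}{4}e^{-i\theta}U^{\dagger}$ as (a multiple of) a manifest sum of squares, which makes both inequalities immediate. The first observation is that $V:=e^{i\theta}U$ is again unitary, so $V^{\dagger}V = VV^{\dagger}=\1$, and $M=\frac{1}{2}\1 + \frac{1}{4}(V+V^{\dagger})$ is visibly Hermitian, so the operator inequalities $0\leq M\leq \1$ are meaningful. The whole proof then rests on the single algebraic identity
\begin{equation}
(\1\pm V)^{\dagger}(\1\pm V) = \1 \pm V \pm V^{\dagger} + V^{\dagger}V = 2\1 \pm (V+V^{\dagger}),
\end{equation}
where I have used $V^{\dagger}V=\1$.

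For the lower bound I would take the $+$ sign. Since $(\1+V)^{\dagger}(\1+V)\geq 0$ for any operator, the identity gives $2\1 + (V+V^{\dagger})\geq 0$, and multiplying by $\tfrac14$ yields exactly $M=\frac{1}{4}(\1+V)^{\dagger}(\1+V)\geq 0$. For the upper bound I would take the $-$ sign: from $(\1-V)^{\dagger}(\1-V)\geq 0$ and the identity we get $2\1 - (V+V^{\dagger})\geq 0$, i.e. $V+V^{\dagger}\leq 2\1$, whence $M=\frac{1}{2}\1+\frac{1}{4}(V+V^{\dagger})\leq \frac{1}{2}\1+\frac{1}{2}\1 = \1$. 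Substituting back $V=e^{i\theta}U$ and $V^{\dagger}=e^{-i\theta}U^{\dagger}$ recovers the stated expression, completing both halves.

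As an independent sanity check (and an alternative writeup), one can diagonalize the normal operator $U=\sum_k e^{i\phi_k}\proj{v_k}$; then $M=\sum_k \frac{1+\cos(\theta+\phi_k)}{2}\proj{v_k}=\sum_k \cos^2\!\big(\tfrac{\theta+\phi_k}{2}\big)\proj{v_k}$, so every eigenvalue of $M$ lies in $[0,1]$, again giving $0\leq M\leq \1$. There is essentially no obstacle here: the statement is elementary, and the only points requiring a modicum of care are noting that $e^{i\theta}U$ is unitary (so the cross terms collapse to $V+V^{\dagger}$ via $V^{\dagger}V=\1$) and that $M$ is Hermitian, so that the partial order $\leq$ on self-adjoint operators applies.
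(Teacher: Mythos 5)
Your proof is correct. Note that the paper itself states this lemma \emph{without} proof (it is introduced with the remark that it is "stated without proof"), so there is no paper argument to compare against; you have simply supplied the elementary verification that the authors considered obvious. Your sum-of-squares route is clean and complete: writing $V := e^{i\theta}U$ and using $(\1\pm V)^{\dagger}(\1\pm V) = 2\1 \pm (V+V^{\dagger}) \geq 0$ yields both bounds at once, with the lower bound coming from the explicit factorization $\frac{1}{2}\1+\frac{1}{4}(V+V^{\dagger}) = \frac{1}{4}(\1+V)^{\dagger}(\1+V)$ and the upper bound from the $-$ sign. The spectral cross-check, diagonalizing the normal operator $U$ and observing that the eigenvalues of the operator in question are $\cos^2\bigl(\tfrac{\theta+\phi_k}{2}\bigr)\in[0,1]$, is equally valid and arguably the shortest way to see the statement; either writeup would serve as a complete proof where the paper gives none.
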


\begin{lemma}
\label{sfgnfnnenene}
On every finite-dimensional Hilbert space of odd dimension $D\geq 3$, there exists an operator $R$, such that
 $\1\geq R\geq 0$, and $\Tr(U_{j,k}^{\dagger}R) \neq 0$ for all $j,k\in \{0,\ldots,D-1\}$, 
where $U_{jk}$ are as defined in (\ref{DefShiftClockOperators}).
\end{lemma}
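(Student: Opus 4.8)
The plan is to build $R$ explicitly as a convex combination of elementary positive operators, one associated with each Hermitian-conjugate pair among the basis operators $\{U_{jk}\}$. The tension to overcome is that positivity ($R\geq 0$) is a restrictive constraint, while we simultaneously demand that $R$ have \emph{non-vanishing} overlap with \emph{every} $U_{jk}$; a generic Hermitian operator, as parametrised in Lemma \ref{Expansion}, has all coefficients non-zero but need not be positive. The observation that reconciles the two is that each $U_{jk}$ is unitary, so Lemma \ref{UnitaryToPositive} (with $\theta = 0$) guarantees that
\begin{equation}
P_{jk} := \tfrac{1}{2}\1 + \tfrac{1}{4}U_{jk} + \tfrac{1}{4}U_{jk}^{\dagger}
\end{equation}
satisfies $0\leq P_{jk}\leq \1$. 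By the conjugation rule (\ref{sfgnsfgmhgmsm}), $U_{jk}^{\dagger} = \omega^{jk}U_{(D-j)\bmod D,(D-k)\bmod D}$, so $P_{jk}$ involves exactly the operators $U_{jk}$ and $U_{(D-j)\bmod D,(D-k)\bmod D}$, each with a coefficient of modulus $1/4$, together with the identity.

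First I would make the pairing precise. Consider the involution $(j,k)\mapsto \big((D-j)\bmod D,(D-k)\bmod D\big)$ on index pairs. A pair is fixed precisely when $2j\equiv 2k\equiv 0 \pmod D$; since $D$ is odd, $\gcd(2,D)=1$, so the only fixed point is $(0,0)$, with $U_{00}=\1$. This is exactly where oddness enters. Consequently the $D^2-1$ non-identity basis operators split into $(D^2-1)/2$ disjoint conjugate pairs, and choosing one representative $(j,k)$ from each pair, the operators $P_{jk}$ are well defined, and each of $U_{jk}$ and $U_{(D-j)\bmod D,(D-k)\bmod D}$ appears in one and only one of them.

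Next I would set
\begin{equation}
R := \frac{2}{D^2-1}\sum_{(j,k)}P_{jk},
\end{equation}
the uniform convex combination over the chosen representatives. Since each $P_{jk}$ satisfies $0\leq P_{jk}\leq\1$ and the weights are non-negative and sum to one, convexity immediately gives $0\leq R\leq \1$. To verify the overlap condition, expand $R=\sum_{j,k}c_{jk}U_{jk}$: the coefficient of $\1$ is $\tfrac{1}{2}$, while for $(j,k)\neq(0,0)$ the coefficient $c_{jk}$ receives a contribution only from the single $P$ containing that operator, and hence equals a non-zero multiple ($\tfrac{1}{4}$ or $\tfrac{1}{4}\omega^{jk}$) of $2/(D^2-1)$. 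Finally, using the orthogonality $\Tr(U_{jk}^{\dagger}U_{j'k'}) = D\delta_{jj'}\delta_{kk'}$, we obtain $\Tr(U_{jk}^{\dagger}R) = D\,c_{jk}\neq 0$ for all $j,k$, as required.

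The main obstacle is the conceptual one of arranging positivity and full support simultaneously, which the convex-combination trick dissolves; the remaining steps are bookkeeping. I would also flag that the argument genuinely requires odd $D$: for even $D$ the involution acquires additional fixed points (e.g.\ $(D/2,0)$, $(0,D/2)$, $(D/2,D/2)$), whose associated basis operators are self-conjugate up to a phase and cannot be absorbed into a pair-based $P_{jk}$; handling them would require a separate positive building block, consistent with the remark following Lemma \ref{Expansion}.
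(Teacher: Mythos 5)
Your proof is correct and follows essentially the same route as the paper: both rest on Lemma \ref{UnitaryToPositive} applied to the conjugate pairs $\{U_{jk}, U_{jk}^{\dagger}\}$, whose pairing structure (with $(0,0)$ as the sole fixed point of $(j,k)\mapsto((D-j)\bmod D,(D-k)\bmod D)$, which is where odd $D$ enters) is exactly what underlies Lemma \ref{Expansion}. Your uniform convex combination $R=\frac{2}{D^2-1}\sum_{(j,k)}P_{jk}$ is in fact a concrete instance of the paper's prescription, with identity coefficient $r=\tfrac{1}{2}$ equal to twice the sum of the moduli of the pair coefficients, so the two constructions coincide up to the choice of weights.
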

\begin{proof}
By combining Lemma \ref{Expansion}  with Lemma \ref{UnitaryToPositive}, and the observation that 
$U_{D-n,0} = U_{n,0}^{\dagger}$, $U_{0,D-m} = U_{0,m}^{\dagger}$ and $\omega^{nm}U_{D-n,D-m} = U_{n,m}$, 
we find that we can obtain an $R$ with the desired properties, if we choose $(a_n)_{n=1}^{(D-1)/2},  (b_m)_{m=1}^{(D-1)/2}\in\mathbb{C}^{(D-2)/2}$ and $(A_{n,m})_{n = 1,m = 1}^{n=(D-1)/2,m = (D-1)/2},(B_{n,m})_{n = 1, m=(D+1)/2}^{n=(D+1)/2,m=D-1}\in\mathbb{C}^{(D-2)/2\times (D-1)/2}$  and $r\in\mathbb{R}$ in Lemma \ref{Expansion} as non-zero numbers, such that 
\begin{equation}
\begin{split}
r:=2\sum_{n=1}^{(D-1)/2}|a_n|  + 2\sum_{m=1}^{(D-1)/2}|b_m|+2\sum_{n=1}^{(D-1)/2} \sum_{m=1}^{(D-1)/2}|A_{nm}|+ 2\sum_{n=1}^{(D-1)/2} \sum_{m=(D+1)/2}^{D-1}|B_{n,m}|\leq \frac{1}{2}.
\end{split}
\end{equation}
\end{proof}

Having established the existence of an operator $R$ with non-zero overlaps with all the operators $U_{jk}$, which at the same time satisfies that $\1\geq R\geq0$, we are in the position to construct the subset $\mathcal{Q} \subset \{ A^{\dagger}_{x_1}\cdots A^{\dagger}_{x_N}A_{x_N}\cdots A_{x_1}  \}_{x_1,\ldots,x_N = 0}^{d-1}$.

\begin{lemma}
\label{LemmaOneUnitary}
Let $\mathcal{H}$ be a finite-dimensional complex Hilbert space with odd dimension $D\geq 3$. Let $\mathcal{H}_A$ be a finite-dimensional complex Hilbert space with dimension $d\geq 5$. Let $\{|a_x\rangle\}_{x=0}^{d-1}$ be an orthonormal basis of $\mathcal{H}_A$, and $|a\rangle\in \mathcal{H}_A$ normalized.  
Then, there exists a unitary operator $U$ on $\mathcal{H}\otimes\mathcal{H}_A$, and a subset  $\mathcal{Q} \subset \{ A^{\dagger}_{x_1}\cdots A^{\dagger}_{x_{2D-1}}A_{x_{2D-1}}\cdots A_{x_1}  \}_{x_1,\ldots,x_{2D-1} = 0}^{d-1}$,
with $|\mathcal{Q}| = D^2$, such that $\det\boldsymbol{M}(\mathcal{Q}) \neq 0$,
where $\boldsymbol{M}(\mathcal{Q})$ is as defined in (\ref{adfdfh}), and where $A_x :=\langle a_x|U|a\rangle$ for $x = 0,\ldots,d-1$.
\end{lemma}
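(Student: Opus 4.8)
The plan is to avoid optimizing over unitaries directly, and instead first build an explicit set of Kraus operators $\{A_x\}_{x=0}^{d-1}$ with $\sum_x A_x^{\dagger}A_x=\1$ whose length-$(2D-1)$ products contain a linearly independent family of size $D^2$, and only at the very end realize this set as $A_x=\langle a_x|U|a\rangle$ through a Stinespring dilation. The guiding observation is that if a word $A_{x_{2D-1}}\cdots A_{x_1}$ equals $\sqrt{R}\,U_{jk}$ up to a nonzero scalar, then the associated product operator $A_{x_1}^{\dagger}\cdots A_{x_{2D-1}}^{\dagger}A_{x_{2D-1}}\cdots A_{x_1}$ is a nonzero multiple of $U_{jk}^{\dagger}R\,U_{jk}$, where $R$ is the operator supplied by Lemma~\ref{sfgnfnnenene} and $U_{jk}$ are the shift--clock operators of (\ref{DefShiftClockOperators}). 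Thus it suffices to (i) produce all conjugates $\{U_{jk}^{\dagger}R\,U_{jk}\}_{j,k=0}^{D-1}$ as such products, and (ii) show that these $D^2$ conjugates are linearly independent.

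For the construction I would take, with small positive scalars $\alpha,\beta,\gamma$, the operators $A_0=\beta\sqrt{R}$ (a ``seed''), $A_1=\alpha\Lambda_1$ (shift), $A_2=\alpha\Lambda_3$ (clock), $A_3=\gamma\1$ (padding), and $A_4=\sqrt{G}$ with $G:=\1-|\beta|^2 R-(2|\alpha|^2+|\gamma|^2)\1$, setting the remaining indices (if $d>5$) to zero operators. Since $0\le R\le\1$, choosing $\alpha,\beta,\gamma$ small enough makes $G\ge 0$, so $A_4$ is well defined and $\sum_x A_x^{\dagger}A_x=\1$; this is precisely where the hypothesis $d\ge 5$ enters. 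For each $(j,k)\in\{0,\dots,D-1\}^2$ I then form the length-$(2D-1)$ word with $A_0$ on the far left, followed by $j$ copies of $A_1$, then $k$ copies of $A_2$, and finally $2D-2-j-k\ge 0$ copies of $A_3$. Because $A_3\propto\1$ and $\Lambda_1^{j}\Lambda_3^{k}=U_{jk}$, this word equals $\beta\alpha^{j+k}\gamma^{\,2D-2-j-k}\sqrt{R}\,U_{jk}$, so the corresponding element of the product set is a nonzero multiple of $U_{jk}^{\dagger}R\,U_{jk}$. Distinct $(j,k)$ yield distinct index tuples, so collecting them gives a subset $\mathcal{Q}$ with $|\mathcal{Q}|=D^2$.

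The heart of the argument is step (ii). Expanding $R=\sum_{j',k'}c_{j'k'}U_{j'k'}$ and using the commutation relation (\ref{sfgnsfgnfgn}) in the form $U_{jk}^{\dagger}U_{j'k'}U_{jk}=\omega^{k'j-j'k}U_{j'k'}$, one finds $U_{jk}^{\dagger}R\,U_{jk}=\sum_{j',k'}c_{j'k'}\,\omega^{k'j-j'k}\,U_{j'k'}$. Viewing these $D^2$ operators in the orthogonal basis $\{U_{j'k'}\}$, the coordinate matrix has entries $c_{j'k'}\,\omega^{jk'}\omega^{-j'k}$, which factorizes as a diagonal matrix $\mathrm{diag}(c_{j'k'})$ — invertible because \emph{all} $c_{j'k'}\neq 0$, which is exactly the content of Lemma~\ref{sfgnfnnenene} — times a (column-permuted) tensor product of two $D\times D$ Vandermonde / discrete-Fourier matrices $[\omega^{jk'}]$ and $[\omega^{-j'k}]$, each invertible since $\omega$ is a primitive $D$-th root of unity. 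Hence the coordinate matrix is invertible, the conjugates are linearly independent, and since every element of $\mathcal{Q}$ is a nonzero scalar multiple of such a conjugate, the Gram matrix of (\ref{adfdfh}) satisfies $\det\boldsymbol{M}(\mathcal{Q})\neq 0$.

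Finally one realizes $\{A_x\}_{x=0}^{d-1}$ in dilated form: define the isometry $V:\mathcal{H}\to\mathcal{H}\otimes\mathcal{H}_A$ by $V|\psi\rangle=\sum_x (A_x|\psi\rangle)\otimes|a_x\rangle$ (an isometry precisely because $\sum_x A_x^{\dagger}A_x=\1$), set $U(|\psi\rangle\otimes|a\rangle):=V|\psi\rangle$ on the $D$-dimensional subspace $\mathcal{H}\otimes|a\rangle$, and extend $U$ to a unitary on all of $\mathcal{H}\otimes\mathcal{H}_A$ by a standard completion-of-isometry argument; by construction $\langle a_x|U|a\rangle=A_x$. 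I expect the main difficulty to be bookkeeping rather than conceptual: checking that fixed-length words with padding reproduce every $U_{jk}$ with a nonzero prefactor, and verifying carefully that the coordinate matrix really does split as the advertised tensor product so that its nonvanishing determinant can be read off. The positivity budget guaranteeing $G\ge 0$ and the isometry extension are routine.
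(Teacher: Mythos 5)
Your proposal is correct and follows essentially the same route as the paper's proof: both invoke Lemma \ref{sfgnfnnenene} to obtain $R$, build a five-element Kraus family consisting of (scalar multiples of) $\sqrt{R}$, $\Lambda_1$, $\Lambda_3$, $\1$ plus a positive completion operator, select length-$(2D-1)$ words of the form (seed)(shifts)(clocks)(identity padding) so that the products become nonzero multiples of $U_{jk}^{\dagger}RU_{jk}$, prove linear independence of these conjugates from the nonvanishing Fourier coefficients of $R$ together with invertibility of the discrete-Fourier structure, and finish by extending an isometry to the unitary $U$. The only deviations are cosmetic — your normalization via $\sqrt{G}$ and weights $\alpha,\beta,\gamma$ instead of the paper's uniform $\tfrac{1}{2}$ factors and $\sqrt{\1-R}$, and your matrix-factorization phrasing (where, strictly, the diagonal factor $\mathrm{diag}(c_{j'k'})$ sits on the column side) instead of the paper's contraction against the Fourier vectors $\boldsymbol{\omega}_{j'}$ — neither of which affects the argument.
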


\begin{proof}
Since $D\geq 3$ is odd, we begin this proof by using Lemma \ref{sfgnfnnenene} in order to construct the subset $\mathcal{Q}$. We know that  there exists an operator $R$ on $\mathcal{H}$ such that $\1\geq R\geq 0$
and $\Tr(U_{j,k}^{\dagger}R) \neq 0$ for all $j,k = 0,\ldots,D-1$,
where the latter implies that
\begin{equation}
\label{knfndfbnfd}
R =\sum_{j,k=0}^{D-1}\xi_{j,k}U_{j,k},\quad \xi_{j,k} = \frac{1}{D}\Tr(U_{j,k}^{\dagger}R) \neq 0.
\end{equation}
Because of  $\1\geq R\geq 0$ both, $\sqrt{R}$ and $\sqrt{1-R}$, are well defined, and since $d\geq 5$, we can define 
\begin{equation}
V := \frac{1}{2}|a_0\rangle\langle a|\otimes\sqrt{R} + \frac{1}{2}|a_1\rangle\langle a|\otimes \Lambda_1 
+ \frac{1}{2}|a_2\rangle\langle a|\otimes \sqrt{1-R} 
+ \frac{1}{2} |a_3\rangle\langle a|\otimes \Lambda_3\\
+ \frac{1}{2}|a_4\rangle\langle a|\otimes\1. 
\end{equation}
We note that $V^{\dagger}V= |a\rangle\langle a|\otimes\1$, and thus $V$ is a partial isometry.

 By virtue of being a partial isometry, $V$ can be extended to a unitary operator $U$ on $\mathcal{H}\otimes\mathcal{H}_A$,  that  satisfies
\begin{equation}
\begin{split}
A_0 := \langle a_0|U|a\rangle = & \langle a_0|V|a\rangle = \sqrt{R},\quad A_1 := \langle a_1|U|a\rangle =  \langle a_1|V|a\rangle = \Lambda_1,\\
A_3 := \langle a_3|U|a\rangle = & \langle a_3|V|a\rangle = \Lambda_3,\quad A_4 := \langle a_4|U|a\rangle =  \langle a_4|V|a\rangle = \1.
\end{split}
\end{equation}
(The above puts no particular restrictions on $A_x := \langle a_x|U|a\rangle$ for   $x\geq 5$.)
In the following, we consider a particular index subset $I\subset \{0,\ldots,d-1\}^{\times(2D-1)}$ of the form
\begin{equation}
(x_1,\ldots,x_{2D-1}) = (\underbrace{4,\ldots,4}_{2D-2-j-k},\underbrace{3,\ldots,3}_k,\underbrace{1,\ldots,1}_{j},0),\quad j,k = 0,\ldots, D-1,
\end{equation}
thus leaving us with operators on the form
\begin{equation}
\begin{split}
 A^{\dagger}_{x_1}\cdots A^{\dagger}_{x_{2D-1}}A_{x_{2D-1}}\cdots A_{x_1} = & {A^{\dagger}_{4}}^{2D-2-k-j}{A_3^{\dagger}}^k{A_1^{\dagger}}^jA_0^{\dagger}A_0A_1^jA_3^kA_{4}^{2D-2-k-j}
  =  U_{jk}^{\dagger}RU_{jk},
\end{split}
\end{equation}
for $j,k= 0,\ldots, D-1$. We let
\begin{equation}
\begin{split}
\mathcal{Q}:= & \{A^{\dagger}_{x_1}\cdots A^{\dagger}_{x_{2D-1}}A_{x_{2D-1}}\cdots A_{x_1} \}_{(x_1,\ldots x_{2D-1})\in I},\\
= &  \{U_{jk}^{\dagger}RU_{jk}\}_{j,k =0}^{D-1}.
\end{split}
\end{equation}
By these constructions, it is clear that $|\mathcal{Q}| = D^2$. Next, we wish to show that $\mathcal{Q}$ is a linearly independent set. We have
\begin{equation}
\begin{split}
Q_{j,k} := &U_{jk}^{\dagger}RU_{jk},  \\
& [\textrm{By (\ref{knfndfbnfd})}]\\
= & \sum_{j',k'=0}^{D-1}\xi_{j',k'}U_{jk}^{\dagger}U_{j'k'}U_{jk}, \\
& [\textrm{By (\ref{sfgnsfgnfgn})}]\\
= & \sum_{j',k'=0}^{D-1}\xi_{j',k'}\omega^{k'j-j'k}U_{j'k'}.
\end{split}
\end{equation}
Hence, if we let $(c_{j,k})_{j,k=0}^{D-1}\in \mathbb{C}^{D\times D}$, then 
\begin{equation}\label{e:liQ}
\begin{split}
 \sum_{j,k=0}^{D-1}c_{j,k}Q_{j,k} = & \sum_{j',k'=0}^{D-1}\xi_{j',k'}\Big(\sum_{j,k=0}^{D-1}c_{j,k}\omega^{k'j-j'k}\Big)U_{j'k'}.
\end{split}
\end{equation}
In order to show that $\mathcal{Q}=\{Q_{j,k}\}_{j,k=0}^{D-1}$ is a linearly independent set, we want to check that Eq.~(\ref{e:liQ}) vanishes only when all $c_{j,k}$ are zero. Since $\{U_{j'k'}\}_{j',k'=0}^{D-1}$ is a basis, we can conclude from Eq.~(\ref{e:liQ}) that  $\sum_{j,k=0}^{D-1}c_{j,k}Q_{j,k}=0$ implies that
$\xi_{j',k'}\sum_{j,k=0}^{D-1}c_{j,k}\omega^{k'j-j'k} = 0$, for all  $j',k' =  0,\ldots, D-1$.
From (\ref{knfndfbnfd}), we know that  $\xi_{j',k'}\neq 0$, and thus it follows that
\begin{equation}
\label{fgnsfgmfhm}
\begin{split}
\boldsymbol{\omega}_{k'}^{\dagger}C\boldsymbol{\omega}_{j'} =   \sum_{j,k=0}^{D-1}c_{j,k}\omega^{k'j-j'k} = 0,\quad j',k' = 0,\ldots, D-1,
\end{split}
\end{equation}
where $C := [c_{j,k}]_{j,k=0}^{D-1}$ and $\boldsymbol{\omega}_{j'} := (1,\omega^{j'},\omega^{2j'},\ldots,\omega^{(D-1)j'})^{t}$.
We note that $\{\boldsymbol{\omega}_{j'}\}_{j'=0}^{D-1}$ forms a basis of $\mathbb{C}^{D}$. Hence, $[\boldsymbol{\omega}_{k'}^{\dagger}C\boldsymbol{\omega}_{j'}]_{k',j'= 0}^{D-1}$ is nothing but the matrix-representation of $C$ in the basis $\{\boldsymbol{\omega}_{j'}\}_{j'=0}^{D-1}$, and thus (\ref{fgnsfgmfhm}) implies that $C =0$, and consequently $c_{j,k} = 0$.
Hence, we can conclude that $\{Q_{j,k}\}_{j,k=0}^{D-1}$ is a linearly independent set. As in (\ref{adfdfh}), we can define $\boldsymbol{M}(\mathcal{Q}) = [M_{jk,j'k'}]_{j,k,j',k' = 0}^{D-1}$ with $M_{jk,j'k'}:= \Tr(Q_{j,k}^{\dagger}Q_{j',k'})$. From the linear independence of $\mathcal{Q}$, it follows that $\det\boldsymbol{M}(\mathcal{Q}) \neq 0$.
\end{proof}

Finally, we assemble the above results in order to prove that purity is a typical property of a set of operators $\{A_x\}_{x=0}^{d-1}$ of a finite-dimensional Hilbert space with odd dimension.

\begin{prop}
Let $\mathcal{H}$ be a finite-dimensional complex Hilbert space with odd dimension $D\geq 3$. Let $\mathcal{H}_A$ be a finite-dimensional complex Hilbert space with dimension $d\geq 5$. Let $\{|a_x\rangle\}_{x=0}^{d-1}$ be an orthonormal basis of $\mathcal{H}_A$, and $|a\rangle\in \mathcal{H}_A$ normalized.  For each unitary operator $U$ on $\mathcal{H}\otimes\mathcal{H}_A$, let $\{A_x\}_{x=0}^{d-1}$ be defined by $A_x :=\langle a_x|U|a\rangle$ for  $x = 0,\ldots,d-1$. 
Then, $\{A_x\}_{x=0}^{d-1}$ satisfies the purity condition in Definition \ref{DefPur} for all unitary operators $U$ on $\mathcal{H}\otimes\mathcal{H}_A$, except for a subset of Haar-measure zero.
\end{prop}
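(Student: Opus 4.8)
The plan is to assemble the three ingredients already in place: the sufficient condition for purity (Proposition~\ref{SufficientForPurity}), the polynomial dichotomy on the unitary group (Lemma~\ref{ghmgdhmgdh}), and the explicit single-witness construction (Lemma~\ref{LemmaOneUnitary}). First I would fix, once and for all and \emph{independently of} $U$, the combinatorial index set $I\subset\{0,\ldots,d-1\}^{\times(2D-1)}$ appearing in Lemma~\ref{LemmaOneUnitary}, namely the tuples $(\underbrace{4,\ldots,4}_{2D-2-j-k},\underbrace{3,\ldots,3}_{k},\underbrace{1,\ldots,1}_{j},0)$ for $j,k=0,\ldots,D-1$. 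This set determines a fixed recipe $U\mapsto\mathcal{Q}(U)$ that sends each unitary $U$ on $\mathcal{H}\otimes\mathcal{H}_A$ to the collection $\{A^{\dagger}_{x_1}\cdots A^{\dagger}_{x_{2D-1}}A_{x_{2D-1}}\cdots A_{x_1}\}_{(x_1,\ldots,x_{2D-1})\in I}$ with $A_x=\langle a_x|U|a\rangle$, so that $|\mathcal{Q}(U)|=D^2$ for every $U$.

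Next I would observe that each matrix element of $A_x=\langle a_x|U|a\rangle$ is linear (affine) in the entries $U_{ij}$ of $U$, hence the entries of any product $A^{\dagger}_{x_1}\cdots A_{x_1}$, the Gram-matrix entries $M_{jk,j'k'}=\Tr(Q_{j,k}^{\dagger}Q_{j',k'})$ from (\ref{adfdfh}), and therefore $p(U):=\det\boldsymbol{M}(\mathcal{Q}(U))$, are all polynomials in $\Real(U_{ij})$ and $\Imag(U_{ij})$. Since $U$ ranges over $\mathbb{U}(Dd)$, this places $p$ in the setting of Lemma~\ref{ghmgdhmgdh} with $K=Dd$. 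Lemma~\ref{LemmaOneUnitary} exhibits one unitary $U_0$ with $p(U_0)=\det\boldsymbol{M}(\mathcal{Q}(U_0))\neq 0$, so $p$ is not identically zero on $\mathbb{U}(Dd)$. By the dichotomy in Lemma~\ref{ghmgdhmgdh}, the zero set $\{U:p(U)=0\}$ then has Haar measure zero, i.e.\ $p(U)\neq 0$ for all $U$ outside a null set.

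To close the argument, I would note that for any $U$ with $p(U)\neq 0$ the Gram matrix $\boldsymbol{M}(\mathcal{Q}(U))$ is positive definite, so $\mathcal{Q}(U)$ is linearly independent (as recorded around (\ref{adfdfh})); since $|\mathcal{Q}(U)|=D^2=\dim_{\mathbb{C}}\mathcal{L}(\mathcal{H})$, the set $\mathcal{Q}(U)$ is in fact a basis of $\mathcal{L}(\mathcal{H})$. In particular the larger family $\{A^{\dagger}_{x_1}\cdots A^{\dagger}_{x_{2D-1}}A_{x_{2D-1}}\cdots A_{x_1}\}_{x_1,\ldots,x_{2D-1}=0}^{d-1}$, which contains $\mathcal{Q}(U)$, spans $\mathcal{L}(\mathcal{H})$, so Proposition~\ref{SufficientForPurity} (applied with $N=2D-1$) guarantees that $\{A_x\}_{x=0}^{d-1}$ satisfies the purity condition of Definition~\ref{DefPur}. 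Thus purity holds for all $U$ except on a set of Haar measure zero.

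The one point requiring genuine care, rather than bookkeeping, is the uniformity of the chosen witness: Lemma~\ref{ghmgdhmgdh} applies only to a \emph{single} fixed polynomial, so it is essential that the index set $I$ (the recipe producing $\mathcal{Q}$) be combinatorial and $U$-independent, with the particular $U_0$ of Lemma~\ref{LemmaOneUnitary} entering \emph{only} to certify that $p\not\equiv 0$. Once this separation between ``fixed recipe'' and ``witness value'' is respected, the transition from one favourable $U_0$ to Haar-almost-every $U$ is immediate. The remaining verifications---that $p$ is polynomial in the real and imaginary coordinates, and that positive definiteness of the Gram matrix is equivalent to linear independence---are routine and already implicit in the preceding lemmas.
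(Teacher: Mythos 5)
Your proposal is correct and follows essentially the same route as the paper's own proof: fix the Gram determinant $\det\boldsymbol{M}(\mathcal{Q}(U))$ as a polynomial in the real and imaginary parts of the entries of $U$, invoke the dichotomy of Lemma~\ref{ghmgdhmgdh}, use the witness unitary of Lemma~\ref{LemmaOneUnitary} to rule out the identically-zero alternative, and conclude spanning of $\mathcal{L}(\mathcal{H})$ plus purity via Proposition~\ref{SufficientForPurity}. Your explicit insistence that the index set $I$ (the recipe $U\mapsto\mathcal{Q}(U)$) be fixed combinatorially and independently of $U$, with the witness $U_0$ entering only to certify $p\not\equiv 0$, is a point the paper leaves implicit, and making it explicit is a welcome sharpening rather than a deviation.
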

\begin{proof}
For any fixed subset  $\mathcal{Q} \subset \{ A^{\dagger}_{x_1}\cdots A^{\dagger}_{x_{2D-1}}A_{x_{2D-1}}\cdots A_{x_1}  \}_{x_1,\ldots,x_{2D-1} = 0}^{d-1}$, with corresponding matrix $\boldsymbol{M}(\mathcal{Q})$ as defined by (\ref{adfdfh}), it is the case that $\det \boldsymbol{M}(\mathcal{Q})$ is a polynomial in the real and  imaginary parts of the matrix elements in a matrix representation of $U$ (where one may note that since $ \boldsymbol{M}(\mathcal{Q})$ is positive semi-definite, it follows that $\det \boldsymbol{M}(\mathcal{Q})$ is real-valued). Hence, by Lemma \ref{ghmgdhmgdh}, we know that either $\det \boldsymbol{M}(\mathcal{Q}) = 0$ for all $U$, or $\det \boldsymbol{M}(\mathcal{Q}) \neq 0$ for all $U$ except for a subset of Haar measure $0$. By Lemma \ref{LemmaOneUnitary}, there exists a unitary $U$ on $\mathcal{H}\otimes\mathcal{H}_A$, and a subset  $\mathcal{Q} \subset \{ A^{\dagger}_{x_1}\cdots A^{\dagger}_{x_{2D-1}}A_{x_{2D-1}}\cdots A_{x_1}  \}_{x_1,\ldots,x_{2D-1} = 0}^{d-1}$,
with $|\mathcal{Q}| = D^2$, such that $\det\boldsymbol{M}(\mathcal{Q}) \neq 0$. Hence, we can conclude that $\det \boldsymbol{M}(\mathcal{Q}) \neq 0$ for all $U$, except for a subset of measure zero. If $\det\boldsymbol{M}(\mathcal{Q}) \neq 0$, then this  implies that the  elements of $\mathcal{Q}$ are linearly independent. Since $|\mathcal{Q}|=D^2 = \dim\mathcal{L}(\mathcal{H})$, it moreover follows that $\mathrm{Sp}(\mathcal{Q}) = \mathcal{L}(\mathcal{H})$. Consequently, $\mathrm{Sp}\Big(\{ A^{\dagger}_{x_1}\cdots A^{\dagger}_{x_{2D-1}}A_{x_{2D-1}}\cdots A_{x_1}  \}_{x_1,\ldots,x_{2D-1} = 0}^{d-1}\Big) = \mathcal{L}(\mathcal{H})$. By Proposition  \ref{SufficientForPurity} it follows that $\{A_x\}_{x=0}^{d-1}$ satisfies the purity condition. We can conclude that for all unitary operators on $\mathcal{H}\otimes\mathcal{H}_A$, except for a subset of Haar measure zero, it is the case that $\{A_x\}_{x=0}^{d-1}$ satisfies the purity condition.

\end{proof}

\subsection{\label{ss:RateOfDecay} Rate of decay}

As we have seen in Section \ref{s:keytheorem}, the CMI of an MPS decays exponentially to zero when the matrices of the MPS satisfy the purity condition (see Def.~\ref{DefPur}). Moreover, the transfer operator of an injective MPS, $\bE$, decays to its unique fixed point exponentially fast at a rate lower bounded by the gap of the channel. The decay rate is often referred to as the correlation length. One could expect that there exists a relation between the correlation length and the decay rate of the classical  CMI of Theorem \ref{thm:main}. In this section, we consider two examples which give clear evidence of the absence of such relation. \\

Consider an MPS of the form of Eq.~(\ref{e:MPS}) such that all matrices $A_{x_i}$ have rank one. After a  measurement of system $B$, the reduced state $\Psi_C(x_B)$ becomes  pure for any measurement outcome (see Eq.~(\ref{e:isoreducedstate})). Therefore, the average von Neumann entropy of $\Psi_C(x_B)$, and thus the post-measurement CMI according to Eq.~(\ref{e:Ipartials}), is zero instantaneously even if region $B$ is a single site. On the other hand, the correlation length need not be zero. For example, given a collection of transition probabilities, $\{P(x_j|x_i)\}_{x_i,x_j=0}^{d-1}$, and an orthonormal basis, $\{\ket{x_i}\}_{x_i=0}^{d-1}$, the repeated application of the transfer operator of an MPS with $A_{x_{ij}}=\sqrt{P(x_j|x_i)}\ket{x_j}\bra{x_i}$ effectively implements a classical Markov process with transition probability $P(x_j|x_i)$ such that
$$
\bE^{\circ N}(\chi)=\sum_{x_1,\dots,x_N=0}^{d-1}P(x_2|x_1)\cdots P(x_N|x_{N-1})\bra{x_1}\chi\ket{x_1}\ket{x_N}\bra{x_N}.
$$
Nothing prevents this Markov chain to have a slow convergence to its equilibrium distribution.

Conversely,  consider an MPS with matrices $A_{x_i}$  proportional to unitary operators. As we have discussed in Section \ref{ss:SPTphases}, this implies that there is no decay of the von Neumann entropy, and thus the CMI remains constant. However, an injective MPS always has a finite correlation length. As a concrete example, consider
$$
A_{x_{ij}}:=\dfrac{1}{D}\sum_{k=0}^{D-1}e^{2\pi i\frac{kx_j}{D}}\ket{k}\bra{(k+x_i)\text{ mod }D},
$$
where $\{\ket{k}\}_{k=0}^{D-1}$ is an orthonormal basis. One can easily check that $A_{x_{ij}}$ are proportional to unitary operators, and hence the average von Neumann entropy, $\langle S\left[\Psi_C(x_B)\right]\rangle$, does not decay. The transfer operator of this MPS is the replacement map that replaces any input state, $\chi$, with the maximally mixed state, i.e.,
$$
\bE(\chi)=\sum_{x_i,x_j=0}^{D-1}A_{x_{ij}}\chi A_{x_{ij}}^\dag=\dfrac{\Tr\chi}{D}\1.
$$

In Refs.~\onlinecite{locent0, locent1}, the decay of classical and quantum correlations is also studied. There, the authors introduce an entanglement measure called Localizable Entanglement (LE). The LE is defined as the maximal amount of entanglement that can be created on average between two spins at positions $i$ and $j$ of a chain by performing local measurements on the other spins. It is easy to note that the LE is similar to the scenario that we are considering in this paper (see Eq.~(\ref{e:averageS})). Indeed, the difference is simply that the LE optimises over the basis of the measurement, while we pick a concrete basis. For the case when the measured spins are spin-1/2, it is shown in Refs.~\onlinecite{locent0, locent1} that the connected correlation function provides a lower bound on the LE.

\subsection{\label{ss:Exapmles} Examples}

In this section we consider examples that illustrate some features of the process under study. As a prototypical example, we look at  the AKLT model and obtain the exact convergence rate in a specific basis. Then, we consider MPSs with strictly contractive transfer operator and pure fixed point. This second example shows that primitivity of the transfer operator is not a necessary condition for the exponential convergence of the post-measurement CMI. In the last example that we construct, the purity condition is violated up to a fixed length $|B|=N$, but satisfied thereafter.

\subsubsection{AKLT state}\label{AKLT}
The first state we want to consider is the 1D AKLT model. The AKLT state defined on a chain has a well-known MPS description with bond dimension $D=2$ and physical dimension $d=3$. The matrices in the MPS picture are given by 
\be A_0=-\frac{1}{\sqrt{6}}\left(\begin{matrix} 
1 & 0 \\
0 & -1 
\end{matrix}\right), ~~~~ 
A_+=\sqrt{\frac{1}{3}}\left(\begin{matrix} 
0 & 0 \\
1& 0 
\end{matrix}\right), ~~~~ 
A_-=-\sqrt{\frac{1}{3}}\left(\begin{matrix} 
0 & 1 \\
0& 0 
\end{matrix}\right).\ee

We take the $\{x_i\}=\{0,+,-\}$ as the basis for our physical space. It can be seen by inspection that the transfer operator,  $\bE(\chi)=\sum_{x_i} A_{x_i}\chi A_{x_i}^\dag$, has a unique stationary state $\rho=\1/2$. In the infinite chain setting, the probability of a measurement outcome $x_B=x_1,\dots, x_N$ is given by 
\be
p_\Psi(x_B)=\dfrac{1}{2}\Tr\left[A_{x_N}\cdots A_{x_1}A_{x_1}^\dag\cdots A_{x_N}^\dag\right].
\ee
We want to calculate the average entropy on $C$ after measurement of system $B$, i.e., we need to estimate
\be\label{e:SofAKLT}
\avr{S\left[\Psi_C\left(x_B\right)\right]}=\sum_{x_B=0,+,-}p_\Psi\left(x_B\right)S\left[\dfrac{A_{x_N}\cdots A_{x_1}A_{x_1}^\dag\cdots A_{x_N}^\dag}{\Tr\left(A_{x_N}\cdots A_{x_1}A_{x_1}^\dag\cdots A_{x_N}^\dag\right)} \right].
\ee
We note two scenarios: $p_\Psi(x_B)=0$ and $p_\Psi(x_B)\neq0$. Since $A_+A_+=A_-A_-=0$, we get that whenever the string $x_B$ contains two (or more) successive $+$ (or $-$), then $p_\Psi(x_B)=0$. In other words, the only strings that give a $p_\Psi(x_B)\neq0$ are those with an alternating sequence (Ex: $+-+-$), possibly interspersed with $0$'s. However, the only string with non-zero entropy is the one with all $0$'s because any alternating sequence has rank one. This string will occur with probability $1/3^{N}$. We get that 
\be\label{e:SofAKLTdecaying}
\avr{S\left[\Psi_C(x_B)\right]} = \frac{1}{3^{N}}S\left[\Psi_C(x_0)\right],
\ee
where $x_0:=0,\dots,0$. Hence, the AKLT model in the standard basis has a post-measurement CMI that is exponentially decaying in the size of $B$ for large $A$ and $C$. The correlation length is coincidentally the same as the classical CMI decay in this basis. 

Let us now consider a change of basis. The 1D AKLT state is also given by the MPS representation with matrices
$$
\tilde{A}_0:=\sqrt{\dfrac{1}{3}}\hat{\sigma}_x=\sqrt{\dfrac{1}{3}}\left(
\begin{matrix}
0 & 1 \\
1 & 0
\end{matrix}\right),~~~~
\tilde{A}_1:=\sqrt{\dfrac{1}{3}}\hat{\sigma}_y=\sqrt{\dfrac{1}{3}}\left(
\begin{matrix}
0 & -i \\
i & 0
\end{matrix}\right),~~~~
\tilde{A}_2:=\sqrt{\dfrac{1}{3}}\hat{\sigma}_z=\sqrt{\dfrac{1}{3}}\left(
\begin{matrix}
1 & 0 \\
0 & -1
\end{matrix}\right),~~~~
$$
where $\hat{\sigma}_i$ are the Pauli matrices. The Pauli matrices are unitary, and thus the average von-Neumann entropy of $\Psi_C(x_B)$ (Eq.~(\ref{e:SofAKLT})) is constant, namely $\log 2$. In other words, the post-measurement CMI of the AKLT chain when measured in the basis corresponding to the Pauli matrices is not decaying. This is consistent with the discussion in Section \ref{ss:SPTphases} because the AKLT chain is in the Haldane phase, which is a SPT phase protected by the $Z_2\times Z_2$ symmetry generated by the $\pi$ rotations around three orthogonal axes.

\subsubsection{\label{s:PureFiXPoint} Strictly contractive map with a pure fixed point}

As mentioned in Section \ref{s:notation}, translationally invariant MPSs are injective if and only if the transfer operator $\bE$, defined in (\ref{e:TransferOperator}),  is primitive~\cite{primitivity}, where primitivity means that the channel possesses a unique full-rank fixed point. Primitivity in turn guarantees the existence of a gapped parent Hamiltonian and exponential decay of correlations~\cite{injectivity,FCS}.

In view of the essential role played by primitivity for the decay of correlations, one may ask how it relates to the purity condition for the exponential decay of the CMI, in the sense of Theorem \ref{thm:main}.  In this section, we present an example which shows that primitivity is \emph{not} a necessary condition for the exponential decay of the CMI. In other words, we consider an MPS with a non-primitive transfer operator, which nevertheless yields and exponentially decaying CMI due to purity.

Consider an MPS of the form of Eq.~(\ref{e:MPS}) which has a strictly contractive transfer operator, $\bE$, with a pure fixed point, denoted by $\ket{\phi}$. Let us recall that a channel, $\Phi$, is strictly contractive if there exists a number $0\leq \alpha <1$ such that $\Vert \Phi(\chi_1)-\Phi(\chi_2)\Vert_1\leq\alpha\Vert\chi_1-\chi_2\Vert_1$ for all density operators $\chi_1$ and $\chi_2$. Note that if a channel is strictly contractive, then the fixed point is unique.  Note further that, since the fixed point of $\mathbb{E}$ is pure, the transfer operator is, by definition, not primitive. Let us also assume that $F=\sqrt{\bE^{*|C|}(R)}$ is full rank. Under these assumptions, our aim is to find an exponentially decaying bound of the average von Neumann entropy of the reduced  post-measurement state (see Eq.~(\ref{e:entropyreducedstate})). We start by using the concavity of the entropy and obtain
\begin{align}
\nonumber\langle S\left[\Psi_C(x_B)\right]\rangle_{p_\Psi(x_B)}&\nonumber\leq S\left(\dfrac{1}{K^2}\sum_{x_B}FA_{x_N}\cdots A_{x_1}\sigma A_{x_1}^\dag\cdots A_{x_N}^\dag F^\dag\right)=S\left(\dfrac{F\bE^{N}\left(\sigma\right)F^\dag}{\Tr\left[F\bE^{N}\left(\sigma\right)F^\dag\right]}\right),
\end{align}
where $K^2=\Tr\left[F\bE^{N}\left(\sigma\right)F^\dag\right]$. The purity of the fixed point of $\bE$ allows us to transform the above inequality to
\begin{align*}
\langle S\left[\Psi_C(x_B)\right]\rangle_{p_\Psi(x_B)}&\leq S\left(\dfrac{F\bE^{N}\left(\sigma\right)F^\dag}{\Tr\left[F\bE^{N}\left(\sigma\right)F^\dag\right]}\right)=\left|S\left(\dfrac{F\bE^{N}\left(\sigma\right)F^\dag}{\Tr\left[F\bE^{N}\left(\sigma\right)F^\dag\right]}\right)-S\left(\dfrac{F\bE^{N}\left(\ket{\phi}\bra{\phi}\right)F^\dag}{\Tr\left[F\bE^{N}\left(\ket{\phi}\bra{\phi}\right)F^\dag\right]}\right)\right|,
\end{align*}

We can further bound the average von Neumann entropy using the Fannes-Audenaert inequality \cite{Fannes,Fannes-Au}. This yields 
\begin{align}\label{e:contractive_boundafterFannes}
\langle S\left[\Psi_C(x_B)\right]\rangle_{p_\Psi(x_B)}\leq t\log(D-1)+H_B(t),
\end{align}
where $H_B$ is the binary entropy, i.e., $H_B(t)=-t\log(t)-(1-t)\log(1-t)$, with $H_B(0):=0$ and $H_B(1):=0$, and where $t$ is defined as
\begin{equation}
\label{jdffjnd}
t:=\dfrac{1}{2}\left\Vert\dfrac{F\bE^{N}\left(\sigma\right)F^\dag}{\Tr\left[F\bE^{N}\left(\sigma\right)F^\dag\right]}-\dfrac{F\bE^{N}\left(\ket{\phi}\bra{\phi}\right)F^\dag}{\Tr\left[F\bE^{N}\left(\ket{\phi}\bra{\phi}\right)F^\dag\right]}\right\Vert_1,
\end{equation}
with $\Vert\cdot\Vert_1$ denoting the trace norm. 

For any full-rank $F$, and any pair of density operators $\chi_1$, $\chi_2$ on a finite-dimensional Hilbert space, one can show that
\begin{equation}
\label{nzmrzmr}
\begin{split}
\left\Vert \frac{F\chi_2 F^{\dagger}}{\Tr(F\chi_2 F^{\dagger})} - \frac{F\chi_1 F^{\dagger}}{\Tr(F\chi_1 F^{\dagger})}\right\Vert_1\leq   2\left(\frac{\nu_1(F)}{\nu_D(F)}\right)^4\Vert\chi_1-\chi_2\Vert_1,
\end{split}
\end{equation}
where $\nu_1(F)$ and $\nu_D(F)$ denote the largest and the smallest singular values of $F$, and where we note that $\nu_D(F)>0$ since $F$ is full-rank on a finite-dimensional space.

By combining (\ref{jdffjnd}) and (\ref{nzmrzmr}) with an iterative use of strict contractivity of $\bE$, we find an upper-bound on $t$ such that
\begin{align}\label{e:boundont}
\nonumber t&\leq\left(\dfrac{\nu_1(F)}{\nu_D(F)}\right)^4\Vert\bE^N\left(\sigma\right)-\bE^N\left(\ket{\phi}\bra{\phi}\right)\Vert_1, \\
&\leq\left(\dfrac{\nu_1(F)}{\nu_D(F)}\right)^4\alpha^N\Vert\sigma-\ket{\phi}\bra{\phi}\Vert_1.
\end{align}
If it would be possible to choose $\alpha = 0$, then $\Vert\bE(\chi_1)-\bE(\chi_2)\Vert_1 = 0$, and thus $\bE(\chi_1)=\bE(\chi_2)$, which implies that the convergence is not only exponential, but immediate. Hence, without loss of generality, we may in the following assume that $0<\alpha<1$.

The next step consists of bounding the binary entropy, $H_B(t)$. For that, we define the function $g(t):=t-t\log t$, with $g(0):=0$. One can show that $g$ is monotonically increasing on $t\in[0,1]$ and satisfies $H_B(t)\leq g(t)$ for $0\leq t\leq 1$. These two properties of $g$ together with inequality (\ref{e:boundont}) lead to
\begin{equation}\label{e:boundonHB}
\begin{split}
H_B(t) \leq &~\left(\dfrac{\nu_1(F)}{\nu_D(F)}\right)^4 \alpha^N\left\Vert\sigma-\ket{\phi}\bra{\phi}\right\Vert_1 \\
&\hspace{0.35cm} - \left(\dfrac{\nu_1(F)}{\nu_D(F)}\right)^4\alpha^N \left\Vert\sigma  - |\phi\rangle\langle\phi| \right\Vert_1\log \left[ \left(\dfrac{\nu_1(F)}{\nu_D(F)}\right)^4\left\Vert\sigma  - |\phi\rangle\langle\phi| \right\Vert_1 \right]\\
&\hspace{0.35cm}  - \left(\dfrac{\nu_1(F)}{\nu_D(F)}\right)^4\left\Vert\sigma  - |\phi\rangle\langle\phi| \right\Vert_1  N\alpha^N   \log  \alpha.
\end{split}
\end{equation}
By combining (\ref{e:contractive_boundafterFannes}) with (\ref{e:boundonHB}), and again using inequality (\ref{e:boundont}), we find that 
\be\label{e:contractive_boundonS}
\langle S\left[\Psi_C(x_B)\right]\rangle_{p_\Psi(x_B)}\leq c_1\alpha^N+c_2N\alpha^N,
\ee
where $c_1$ and $c_2$ are defined as
$$
c_1:=\left(\dfrac{\nu_1(F)}{\nu_D(F)}\right)^4 \left\Vert\sigma  - |\phi\rangle\langle\phi| \right\Vert_1\bigg[ \log(D-1) + 1  - \log \left( \left(\dfrac{\nu_1(F)}{\nu_D(F)}\right)^4\left\Vert\sigma  - |\phi\rangle\langle\phi| \right\Vert_1 \right)\bigg],
$$
$$
c_2:=\left(\dfrac{\nu_1(F)}{\nu_D(F)}\right)^4\left\Vert\sigma  - |\phi\rangle\langle\phi| \right\Vert_1 \left(- \log  \alpha\right).
$$
The right-hand side of Eq.~(\ref{e:contractive_boundonS}) decays exponentially to zero when $N$ grows to infinity because $0<\alpha < 1$. Hence, we have found an exponentially-decaying bound on the post-measurement CMI for an MPS with a transfer operator that is strictly contractive and has a pure fixed point. This shows that primitivity of the transfer operator is not a necessary condition for the CMI to converge.

\subsubsection{Jordan blocks}
Here, we construct a simple example of a two-element set $\{A_0,A_1\}$ that has a nontrivial correctable subspace (in the sense of Section \ref{sec:errorcorr}) for small enough $N$, but where the purity condition nevertheless holds. 
For the Hilbert space $\cH$, we let $\dim\mathcal{H} = D+1$ and let $|0\rangle,\ldots,|D-1\rangle,|D\rangle$ be an orthonormal basis of $\cH$. We define the projector $P:= \sum_{k=0}^{D-1}|k\rangle\langle k|$ and the operators
\begin{equation}
A_{0} :=  \sum_{k=0}^{D-1}|k+1\rangle\langle k|, ~~~~~~{\rm and}~~~~~~A_{1} :=  |D\rangle\langle D|.
\end{equation}
We note that $A_0$ is a Jordan block with zeros on the diagonal, and that
\begin{equation}
A_{0}^{\dagger}A_0 =  \sum_{k=0}^{D-1}|k\rangle\langle k| = P,
\end{equation}
while $A_{1}^{\dagger}A_1 = A_1 = |D\rangle\langle D|$.
Thus, $A_{0}^{\dagger}A_0 + A_{1}^{\dagger}A_1 = \1$.
Moreover, observe that 
\begin{equation}
\begin{split}
PA_{0}^{\dagger}A_0P = & \lambda_0P,\quad \lambda_0 = 1,\\
PA_{1}^{\dagger}A_1P = & \lambda_1P,\quad \lambda_1 = 0.
\end{split}
\end{equation}
Hence, for a single site, the correctable subspace is $\mathcal{C}_1 := {\rm span}\{|0\rangle,\ldots,|D-1\rangle\}$.

Now consider the case of several sites, where we construct the sequence $A_{x_N}\cdots A_{x_1}$. It is not difficult to see that 
\be 
A_0^{N} =  \sum_{k=0}^{D-N}|k+N\rangle\langle k|.
\ee
Hence, the correctable subspace decreases the  dimension with one step along the sequence, until it is exhausted. As a consequence, the CMI in this example will be exponentially decaying with a pre-factor that grows exponentially in $D$. Examples that do not have a block diagonal structure can also be constructed. We note that the example above is similar in spirit to a bosonic annihilation operator. Indeed, in the infinite system case where the operators $A_i$ are bosonic creation and annihilation operators, the purity condition no longer makes sense, and the theory breaks down.

\section{\label{s:Outlook} Outlook}

We have shown that the amplitudes of an injective MPS in a specific local basis follow a quasi-local Gibbs distribution with exponentially decaying tails if the matrices associated to the MPS satisfy a particular `purity-condition'. The purity condition reflects the fact that no information can be preserved in the virtual subspace on average, upon measurements. Our proof makes extensive use of the theory of random matrix products. 

A number of open questions remains. Perhaps the most obvious is whether the methods used in this paper can be applied in higher dimensions or in the context of matrix product operators, and whether this leads to new insights or algorithmic improvements. In the setting of matrix product operators, the purity condition would no longer be sufficient to prevent information transmission along the chain. There one would likely have to bound the stochastic process upon measurements from above and below. Some recent progress in this direction has been communicated to us \cite{BK}.

Another place where the present tools might be applied is in the rigorous analysis of the Wave Function Monte Carlo algorithm. A first attempt to achieve this has been made in Ref.~\onlinecite{Benoist2}, yet some work remains to be done in connecting these mathematical results to more realistic physical settings and particular examples. 
Yet another extension would be to continuous MPSs \cite{cMPS}.  

On a more technical level, it would be valuable to get a better handle on the decay rate of the stochastic process. In particular, whether there exists a closed form expression as is the case for the correlation length (as the spectral gap of the transfer operator).   

There are also open questions related to the purity condition, such as whether it can be easily checked or not for a given MPS. In this investigation we have shown that the purity condition in essence is necessary and sufficient for the exponential decay of the quantum conditional mutual information, $I_{\Phi_B(\Psi)}(A:C|B)$, while we only have proved that purity is a sufficient condition for the exponential decay of the classical conditional mutual information, $I_{p_{\Psi}}(A:C|B)$. An open question is thus whether purity in essence also is a necessary condition for the latter, or whether there exists another weaker condition that would be both  necessary and sufficient.

\begin{acknowledgments}

We thank T.~Benoist for clarifying some details in Ref.~\onlinecite{Benoist}. We thank David Gross for helpful discussions. Funded by the Deutsche Forschungsgemeinschaft (DFG, German Research Foundation) under Germany's Excellence Strategy - Cluster of Excellence Matter and Light for Quantum Computing (ML4Q) EXC 2004/1-390534769. This work was completed while MJK was at the University of Cologne. We also thank an anonymous referee for pointing out improvements of Proposition \ref{PropDecomp} and Lemma \ref{fbsfgbsfg}, which  strengthened the results and simplified the proofs. 

\end{acknowledgments}

\section*{Data availability}
Data sharing is not applicable to this article as no new data were created or analyzed in this
study.

\appendix

\section{Elements of the proof of Theorem \ref{thm:main}}\label{app:A}

Here we present a detailed account of the proof of Theorem \ref{thm:main}. For convenience of presentation, we have divided the statement of Theorem \ref{thm:main} into two parts. The first part, that purity implies (\ref{gnffgnnfg}), is proved in \ref{app:FirstHalf}, and stated in a more detailed version in Theorem  \ref{hndghngdh}. The second part of Theorem  \ref{thm:main}, that (\ref{gdhmghm}) implies purity, is the focus of section \ref{app:SecondHalf}, and is formalized in Theorem \ref{TheoremSecondPart}.

\subsection{\label{app:FirstHalf} Proof of the first part of Theorem \ref{thm:main}}

In order to show Theorem \ref{thm:main} in Section \ref{s:keytheorem}, we use two key bounds, one on the average von Neumann entropy and a second on the average purity. Here, we state these bounds in the form of two lemmas. 
In the following we let $\Vert Q\Vert :=\sup_{\Vert \psi\Vert =1}\Vert Q|\psi\rangle\Vert$ denote the standard operator norm.
\begin{lemma}\label{lem:1}
Let $\{\rho_{x}\}_{x=0}^{M-1}$ be a collection of density operators on a Hilbert space $\cH$, with $D = \dim\mathcal{H}$, and $\{p(x)\}_{x=0}^{M-1}$ be real numbers such that $p(x)\geq0$ and $\sum_{x=0}^{M-1}p(x)=1$. Then,
\begin{equation}\label{e:lemma1}
\sum_{x=0}^{M-1}p(x) S(\rho_x )\leq -Q\log Q+Q\left[\log\left(D-1\right)+1\right],
\end{equation}
where we refer to $Q$ as the the average purity and define it as
\begin{equation}\label{e:Q}
Q:=1-\sum_{x=0}^{M-1}p(x) \Vert\rho_x\Vert.
\end{equation}
\end{lemma}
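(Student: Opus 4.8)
The plan is to first prove the bound for a \emph{single} density operator and then combine the per-operator estimates via concavity. Fix $\rho$ on $\mathcal{H}$ and write its eigenvalues in decreasing order as $\lambda^{\downarrow}_1\geq\cdots\geq\lambda^{\downarrow}_D\geq 0$, so that $\lambda^{\downarrow}_1 = \Vert\rho\Vert$ and the von Neumann entropy $S(\rho)$ equals the Shannon entropy $H(\lambda^{\downarrow}_1,\ldots,\lambda^{\downarrow}_D)$ of the spectrum. Setting $q := 1-\Vert\rho\Vert = \sum_{j\geq 2}\lambda^{\downarrow}_j$, I would split off the top eigenvalue using the grouping (chain-rule) identity for the Shannon entropy,
\[
S(\rho) = H_B(q) + q\, H\!\left(\frac{\lambda^{\downarrow}_2}{q},\ldots,\frac{\lambda^{\downarrow}_D}{q}\right),
\]
where $H_B(q) = -q\log q - (1-q)\log(1-q)$ is the binary entropy. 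The conditional term is the Shannon entropy of a probability distribution on at most $D-1$ outcomes, hence bounded above by $\log(D-1)$, which yields $S(\rho)\leq H_B(q) + q\log(D-1)$.

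The next step is to simplify the binary entropy. I would establish the elementary inequality $-(1-q)\log(1-q)\leq q$ on $[0,1]$ by noting that $f(q) := q + (1-q)\log(1-q)$ satisfies $f(0)=0$ and $f'(q) = -\log(1-q)\geq 0$, so $f\geq 0$. Consequently $H_B(q)\leq -q\log q + q$, and for a single operator one obtains
\[
S(\rho) \leq -q\log q + q\left[\log(D-1)+1\right],\qquad q = 1-\Vert\rho\Vert .
\]

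For the averaging step, set $q_x := 1-\Vert\rho_x\Vert$, so that the definition of the average purity reads $Q = \sum_x p(x)\,q_x$. Summing the single-operator bound weighted by $p(x)$ gives
\[
\sum_{x=0}^{M-1} p(x)\, S(\rho_x) \leq \sum_{x=0}^{M-1} p(x)\bigl(-q_x\log q_x\bigr) + Q\left[\log(D-1)+1\right].
\]
Since the map $t\mapsto -t\log t$ is concave on $[0,1]$, Jensen's inequality gives $\sum_x p(x)(-q_x\log q_x)\leq -Q\log Q$, which produces exactly the claimed estimate \eqref{e:lemma1}.

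The only genuinely delicate point is this final concavity step: the per-outcome bounds naturally carry the terms $-q_x\log q_x$, whereas the statement requires the single aggregate term $-Q\log Q$, and it is precisely Jensen's inequality for the concave function $-t\log t$ that converts one into the other. This is also the reason the lemma is phrased in terms of the average purity $Q$ rather than per-outcome quantities. The remaining ingredients---the grouping identity, the $\log(D-1)$ bound on the conditional entropy, and the binary-entropy estimate---are routine, with the boundary cases $q=0$ and $Q=0$ covered by the convention $0\log 0 := 0$.
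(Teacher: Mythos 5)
Your proof is correct and follows essentially the same route as the paper's: split off the largest eigenvalue to obtain $S(\rho_x)\leq H_B(q_x)+q_x\log(D-1)$ (the paper derives this via a mixing-enhancing pinching channel $\Gamma$ rather than the Shannon grouping identity, but with $\ket{\phi}$ chosen as the top eigenvector the two arguments coincide), bound the binary entropy by $t-t\log t$, and aggregate using concavity. The only cosmetic difference is the order of the final two steps: the paper first applies Jensen to the concave function $H_B$ to get $\sum_x p(x)H_B(q_x)\leq H_B(Q)$ and then bounds $H_B(Q)$, whereas you bound $H_B(q_x)$ pointwise and then apply Jensen to the concave function $-t\log t$; both orderings are valid and yield the identical estimate.
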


\begin{proof}
To begin with, let us first consider a single density operator, $\rho_x $, and define the channel
$$
\Gamma(\rho_x):=\ket{\phi}\bracket{\phi}{\rho_x}{\phi}\bra{\phi}+\Phi^\perp\rho_x\Phi^\perp,
$$
where $\ket{\phi}$ is a pure state in $\cH$, and $\Phi^\perp:=\1-\ket{\phi}\bra{\phi}$. The channel $\Gamma$ is mixing-enhancing, i.e., $S(\rho_x)\leq S\left[\Gamma(\rho_x)\right]$. Moreover, $\Gamma$ transforms any input state into a block-diagonal state, which implies that for any function, $f$, and any input state, $\rho$, it holds that $
f\left[\Gamma(\rho)\right]=f\left(\ket{\phi}\bracket{\phi}{\rho}{\phi}\bra{\phi}\right)+f\left(\Phi^\perp\rho\Phi^\perp\right)$. Using these two properties of $\Gamma$, we obtain
\begin{align*}
S(\rho_x)&\leq H_B\left[q(x)\right]+q(x)S\left[\dfrac{\Phi^\perp\rho_x\Phi^\perp}{\Tr\left(\Phi^\perp\rho_x\right)}\right], \\
&\leq H_B\left[q(x)\right]+q(x)\log\left(\dim\cH-1\right),
\end{align*}
where we have defined $q(x):=\Tr(\Phi^\perp\rho_x)=1-\bracket{\phi}{\rho_x}{\phi}$ for $x=0,\dots,M$, and recall that $H_B(t) = -t\log t -(1-t)\log(1-t)$ is the binary entropy.  Note that we can choose $\ket{\phi}$ to be the normalized eigenvector corresponding to the largest eigenvalue of $\rho_x$, which we denote as $\lambda^{\downarrow}_1(\rho_x)$. Then, we have $q(x) =1-\lambda^{\downarrow}_1(\rho_x)=1-\Vert\rho_x\Vert$.

Considering now the whole set of density operators, $\{\rho_x\}_{x=0}^{M}$, we have by the concavity of the entropy that
\begin{align}
\label{e:dfbsfbg}
\sum_{x=0}^{M-1}p(x) S(\rho_x)&\leq H_B\left(Q\right)+Q\log\left(\dim\cH-1\right),
\end{align}
where $Q$ is defined in Eq.~(\ref{e:Q}).

One can next bound the binary entropy, as $H_B(t)\leq t-t\log t$ on $0\leq t\leq 1$. By combining this observation with (\ref{e:dfbsfbg}), we obtain (\ref{e:lemma1}).

\end{proof}

The average purity, $Q$, defined in Eq.~(\ref{e:Q}) can be bounded if one considers some structure on the density operators and the probabilities. In particular, taking $\rho_x=\Psi_C(x_B)$ and $p(x)=p_\Psi(x_B)$ (see Eq.~(\ref{e:MPS}) and Eq.~(\ref{e:prob})), the average purity is $Q=1-K^{-2}\sum_{x_N,\ldots,x_1=0}^{d-1}\Vert FA_{x_N}\cdots A_{x_1}\sigma A_{x_1}^\dag\cdots A_{x_N}^\dag F^\dag\Vert$, where recall that $\sigma=\bE^{|A|}(L)$ and $F^\dag F=\bE^{*|C|}(R)$. An upper and a lower bound on Q are stated and shown in the following lemma.

\begin{lemma}\label{lem2:AppA}
Let $\{A_{x}\}_{x=0}^{d-1}$ be a collection of operators on a Hilbert space, $\cH$, with $D:=\dim\cH\leq+\infty$, such that $\sum_{x=0}^{d-1}A_{x}^\dag A_{x}=\1$. Let $\sigma$ be a density operator on $\mathcal{H}$, and $F$ an operator on $\mathcal{H}$, such that $F^{\dagger}F\leq \1$. Then,
\begin{equation}\label{e:boundsPbar}
\tfrac{1}{K^2}\sum_{x_B}\lambda^{\downarrow}_2\left(FA_{x_N}\cdots A_{x_1}\sigma A_{x_1}^\dag\cdots A_{x_N}^\dag F^\dag\right)\leq Q\leq\tfrac{D-1}{K^2}\sum_{x_B}\lambda^{\downarrow}_2\left(FA_{x_N}\cdots A_{x_1}\sigma A_{x_1}^\dag\cdots A_{x_N}^\dag F^\dag\right),
\end{equation}
where $x_B:= x_1,\dots,x_N$, and where the sum over $x_B$ spans all of $\{0,\ldots,d-1\}^{N}$. Moreover, $K$ is a normalization constant such that
\begin{equation}\label{e:K^2}
K^2:=\sum_{x_B}\Tr\left(FA_{x_N}\cdots A_{x_1}\sigma A_{x_1}^\dag\cdots A_{x_N}^\dag F^\dag\right);
\end{equation}
and $Q$ is
\begin{equation}
Q=1-\dfrac{1}{K^2}\sum_{x_B}\left\Vert FA_{x_N}\cdots A_{x_1}\sigma A_{x_1}^\dag\cdots A_{x_N}^\dag F^\dag\right\Vert.
\end{equation}
\end{lemma}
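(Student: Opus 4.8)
The plan is to reduce both inequalities to an elementary eigenvalue estimate applied term-by-term in the sum over $x_B$. First I would introduce the shorthand $M_{x_B} := FA_{x_N}\cdots A_{x_1}\sigma A_{x_1}^\dagger\cdots A_{x_N}^\dagger F^\dagger$ and observe that each $M_{x_B}$ is positive semi-definite, since it has the form $XX^\dagger$ with $X := FA_{x_N}\cdots A_{x_1}\sqrt{\sigma}$ (here $\sqrt{\sigma}$ is well defined because $\sigma\geq 0$). Consequently its operator norm equals its largest eigenvalue, $\Vert M_{x_B}\Vert = \lambda_1^{\downarrow}(M_{x_B})$, and its trace equals the sum of all its eigenvalues, $\Tr(M_{x_B}) = \sum_{j=1}^{D}\lambda_j^{\downarrow}(M_{x_B})$, all of which are non-negative.

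Second, I would rewrite $Q$ purely in terms of eigenvalues. Using the definition of $K^2$ in (\ref{e:K^2}) together with the two identities above, the leading eigenvalue cancels against the norm:
\begin{equation*}
Q = \frac{1}{K^2}\sum_{x_B}\big(\Tr(M_{x_B}) - \Vert M_{x_B}\Vert\big) = \frac{1}{K^2}\sum_{x_B}\sum_{j=2}^{D}\lambda_j^{\downarrow}(M_{x_B}).
\end{equation*}
The final step is then an elementary estimate valid for each fixed $x_B$. Since the eigenvalues are non-negative and ordered decreasingly,
\begin{equation*}
\lambda_2^{\downarrow}(M_{x_B}) \leq \sum_{j=2}^{D}\lambda_j^{\downarrow}(M_{x_B}) \leq (D-1)\,\lambda_2^{\downarrow}(M_{x_B}),
\end{equation*}
where the lower bound holds because the sum contains the single non-negative term $\lambda_2^{\downarrow}(M_{x_B})$, and the upper bound holds because each of the $D-1$ summands is at most $\lambda_2^{\downarrow}(M_{x_B})$. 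Summing over $x_B$ and dividing by $K^2$ yields exactly (\ref{e:boundsPbar}).

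Because every step is a direct consequence of the spectral decomposition, I do not anticipate a genuine obstacle here; the only points requiring care are verifying the positivity of $M_{x_B}$, so that the identifications $\Vert M_{x_B}\Vert = \lambda_1^{\downarrow}(M_{x_B})$ and $\lambda_j^{\downarrow}(M_{x_B})\geq 0$ are legitimate, and noting that $Q$ is well defined precisely when $K^2 > 0$. The latter is automatic in the relevant regime: since $\sum_x A_x^{\dagger}A_x = \1$ the map $\mathbb{E}$ is trace-preserving, so $\sum_{x_B}A_{x_N}\cdots A_{x_1}\sigma A_{x_1}^\dagger\cdots A_{x_N}^\dagger = \mathbb{E}^{N}(\sigma)$ is a density operator and $K^2 = \Tr\big(F^{\dagger}F\,\mathbb{E}^{N}(\sigma)\big) > 0$ whenever $F^{\dagger}F$ overlaps the support of $\mathbb{E}^{N}(\sigma)$ (in particular whenever $F$ is full rank, as in the application).
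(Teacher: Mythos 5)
Your proof is correct and follows essentially the same route as the paper: the paper's proof introduces the tail sum $L(\rho)=\sum_{j=2}^{D}\lambda_j^{\downarrow}(\rho)$, uses $\lambda_1^{\downarrow}(\rho)+L(\rho)=\Tr(\rho)$ together with the definition of $K^2$ to identify $Q$ with $\tfrac{1}{K^2}\sum_{x_B}L(\cdot)$, and then applies the same sandwich $\lambda_2^{\downarrow}\leq L\leq (D-1)\lambda_2^{\downarrow}$ that you use. Your added remarks on positivity of $M_{x_B}$ and on $K^2>0$ are sound but do not change the argument.
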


\begin{proof}
Consider a positive semi-definite operator, $\rho\geq0$, and define a function, $L$, on $\rho$ such that
$$
L(\rho):=\sum_{j=2}^D\lambda^{\downarrow}_j(\rho).
$$
This function $L(\rho)$ can be upper and lower bounded as
\begin{equation}\label{e:boundsL}
\lambda^{\downarrow}_2(\rho)\leq L(\rho)\leq(D-1)\lambda^{\downarrow}_2(\rho).
\end{equation}
Moreover, it holds that
\begin{equation}\label{e:equalityL}
\lambda^{\downarrow}_1(\rho)+L(\rho)=\Tr(\rho).
\end{equation}
If we introduce in Eq.~(\ref{e:equalityL}) the positive operator $\rho:=K^{-2}FA_{x_N}\cdots A_{x_1}\sigma A_{x_1}^\dag\cdots A_{x_N}^\dag F^\dag$ and we sum over all possible values of $x_B:= x_1,\dots,x_N$, we obtain
\begin{align*}
\dfrac{1}{K^2}\sum_{x_B}\lambda^{\downarrow}_1\left(FA_{x_N}\cdots A_{x_1}\sigma A_{x_1}^\dag\cdots A_{x_N}^\dag F^\dag\right)+&\dfrac{1}{K^2}\sum_{x_B}L\left(FA_{x_N}\cdots A_{x_1}\sigma A_{x_1}^\dag\cdots A_{x_N}^\dag F^\dag\right) \\
&=\dfrac{1}{K^2}\sum_{x_B}\Tr\left(FA_{x_N}\cdots A_{x_1}\sigma A_{x_1}^\dag\cdots A_{x_N}^\dag F^\dag\right), \\
&=1,
\end{align*}
where the last equality holds due to the definition of $K$ in Eq.~(\ref{e:K^2}). This implies that
\begin{align*}
\dfrac{1}{K^2}\sum_{x_B}L\left(FA_{x_N}\cdots A_{x_1}\sigma A_{x_1}^\dag\cdots A_{x_N}^\dag F^\dag\right)&=1-\dfrac{1}{K^2}\sum_{x_B}\lambda^{\downarrow}_1\left(FA_{x_N}\cdots A_{x_1}\sigma A_{x_1}^\dag\cdots A_{x_N}^\dag F^\dag\right), \\
&=1-\dfrac{1}{K^2}\sum_{x_B}\left\Vert FA_{x_N}\cdots A_{x_1}\sigma A_{x_1}^\dag\cdots A_{x_N}^\dag F^\dag\right\Vert, \\
&=Q.
\end{align*}
Using the bounds of Eq.~(\ref{e:boundsL}), we finish the proof, since we obtain the bounds on $Q$ in Eq.~(\ref{e:boundsPbar}).

\end{proof}

Recall that we throughout this investigation assume that $\log$ denotes the natural logarithm. We state the following two lemmas without proof.

\begin{lemma}
\label{enhethnet}
Let 
\begin{equation}
H_B(t) := -t\log t -(1-t)\log(1-t),\quad 0 < t <1,
\end{equation}
and $H_B(0) := 0$ and $H_B(1) := 0$. Let 
\begin{equation}
g(t) :=  t -t\log t ,\quad 0 < t\leq 1, 
\end{equation}
and $g(0) := 0$.
Then, $g$ is monotonically increasing on $[0,1]$, and 
\begin{equation}
H_B(t)\leq g(t),\quad 0\leq t\leq 1.
\end{equation}
\end{lemma}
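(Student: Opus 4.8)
The plan is to treat the two claims separately, both via elementary single-variable calculus. For the monotonicity of $g$, I would differentiate on the open interval to obtain $g'(t) = -\log t$ for $0<t\leq 1$. Since $\log t \leq 0$ there, with equality only at $t=1$, we have $g'(t)\geq 0$, so $g$ is increasing on $(0,1]$. To extend monotonicity across the endpoint $t=0$, I would note that $t\log t\to 0$ as $t\to 0^+$, so $g$ is continuous at $0$ with $g(0)=0$; since $g(t)>0$ for all $t\in(0,1]$, monotonicity on the closed interval $[0,1]$ follows.

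For the inequality, the plan is to study the difference $g(t)-H_B(t)$. A direct computation gives
\[
g(t)-H_B(t) = \big(t - t\log t\big) - \big(-t\log t -(1-t)\log(1-t)\big) = t + (1-t)\log(1-t),
\]
so it suffices to show that the right-hand side is nonnegative on $[0,1]$. Substituting $s:=1-t\in[0,1]$, this reduces to proving that $\phi(s):= 1 - s + s\log s\geq 0$ for $s\in[0,1]$.

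To establish this, I would observe that $\phi'(s) = \log s \leq 0$ on $(0,1]$, so $\phi$ is nonincreasing on $(0,1]$ and thus attains its minimum over $[0,1]$ at the right endpoint, where $\phi(1) = 0$. (Again one uses $s\log s\to 0$ to extend $\phi$ continuously to $s=0$, with $\phi(0)=1$.) Hence $\phi(s)\geq \phi(1)=0$ throughout, which gives $g(t)-H_B(t)\geq 0$, and therefore $H_B(t)\leq g(t)$, with equality precisely at $t=0$ and $t=1$.

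The argument involves no genuine obstacle; the only points requiring mild care are the boundary values $t\in\{0,1\}$, where the logarithmic terms must be handled through the limit $u\log u\to 0$ as $u\to 0^+$ together with the stipulated conventions $H_B(0)=H_B(1)=0$ and $g(0)=0$.
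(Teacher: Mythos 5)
Your proof is correct in all essentials, and it is worth noting that the paper itself states this lemma \emph{without} proof (it appears in Appendix \ref{app:A} prefaced by ``We state the following two lemmas without proof''), so your argument supplies exactly the kind of elementary calculus verification the authors left to the reader: $g'(t)=-\log t\geq 0$ on $(0,1]$ plus continuity at $0$ gives monotonicity, and the identity $g(t)-H_B(t)=t+(1-t)\log(1-t)=\phi(1-t)$ with $\phi(s)=1-s+s\log s$, $\phi'(s)=\log s\leq 0$, $\phi(1)=0$ gives the inequality. One small correction to your closing remark: equality in $H_B(t)\leq g(t)$ holds only at $t=0$, not at $t=1$. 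Indeed, by the stated conventions $g(1)=1-1\cdot\log 1=1$ while $H_B(1)=0$, consistent with your own computation $\phi(0)=1$ (the point $t=1$ corresponds to $s=0$, where $\phi$ attains its \emph{maximum} on $[0,1]$, not its minimum). This slip is harmless since the lemma makes no claim about cases of equality, but the parenthetical statement as written is false.
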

\begin{lemma}
\label{fgbfnfnhfg}
\begin{equation}
-t\log t \leq  \frac{1}{\epsilon}t^{1-\epsilon},\quad 0\leq t\leq 1,\quad 0 <\epsilon < 1.
\end{equation}
\end{lemma}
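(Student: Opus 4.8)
The plan is to reduce the claimed bound of Lemma~\ref{fgbfnfnhfg} to the elementary inequality $\log s \leq s$, valid for all $s>0$. First I would dispose of the boundary point $t=0$: there the left-hand side $-t\log t$ equals $0$ (by the usual convention, consistent with $\lim_{t\to 0^+}t\log t = 0$), while the right-hand side $\frac{1}{\epsilon}t^{1-\epsilon}$ is also $0$ because $1-\epsilon>0$. Hence the inequality holds (with equality) at $t=0$, and it suffices to treat $t\in(0,1]$.

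For $t\in(0,1]$ the quantity $t$ is strictly positive, so dividing both sides of $-t\log t \leq \frac{1}{\epsilon}t^{1-\epsilon}$ by $t$ and then multiplying by $\epsilon>0$ produces the equivalent statement
\begin{equation}
-\epsilon\log t \leq t^{-\epsilon}.
\end{equation}
Since all these operations only involve multiplication and division by the strictly positive quantities $t$ and $\epsilon$, this is a genuine equivalence. Now writing $-\epsilon\log t = \log(t^{-\epsilon})$ and setting $s := t^{-\epsilon}$, the inequality becomes $\log s \leq s$. Because $t\in(0,1]$ and $\epsilon>0$, the substituted variable obeys $s = t^{-\epsilon}\geq 1$, so in fact $s$ ranges over $[1,\infty)$.

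It thus only remains to establish $\log s\leq s$, which I would obtain from the standard concavity bound $\log s \leq s-1$ (the graph of $\log$ lies below its tangent line at $s=1$); combining with $s-1<s$ gives $\log s \leq s$, and tracing the equivalence back yields the lemma. I do not expect any real obstacle here, as the statement is a routine calculus inequality. The only points that warrant a little care are the treatment of the endpoint $t=0$ and confirming that the chain of rearrangements is reversible, which it is precisely because $t$ and $\epsilon$ are strictly positive throughout.
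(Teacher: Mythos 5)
Your proof is correct: the endpoint $t=0$ is handled properly, the chain of rearrangements (dividing by $t>0$, multiplying by $\epsilon>0$, substituting $s=t^{-\epsilon}$) is a genuine equivalence, and the final step follows from the standard bound $\log s\leq s-1\leq s$. There is nothing to compare against in the paper itself, since the paper explicitly states this lemma (together with Lemma~\ref{enhethnet}) without proof; your argument simply supplies the routine verification the authors omitted, and it in fact establishes the inequality for every $\epsilon>0$, not only $0<\epsilon<1$.
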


We recall that if the channel $\mathbb{E}$ is primitive, then it follows that $\mathbb{E}$ has a unique full-rank fixed point $\rho$ \cite{primitivity}. With the replacement-map $\mathcal{R}(\sigma) := \rho\Tr(\sigma)$, the fact that every initial state $\sigma$ converges to $\rho$ can be expressed as $\lim_{N\rightarrow\infty}\mathbb{E}^N = \mathcal{R}$. Since the underlying Hilbert space is finite-dimensional, we can express the convergence in terms of any norm. It is convenient to express the convergence in terms of the norm
\begin{equation}
\begin{split}
\Vert \mathcal{F} \Vert_{1:1} := \sup_{\Vert Q\Vert_1= 1} \Vert \mathcal{F}(Q)\Vert_1,
\end{split}
\end{equation}
and thus $\lim_{N\rightarrow \infty}\Vert \mathbb{E}^{N} - \mathcal{R}\Vert_{1:1} = 0$, where $\Vert Q\Vert_1 := \Tr\sqrt{Q^{\dagger}Q}$ is the trace norm.
\begin{lemma}
\label{lemmaadflbdlfk}
Let $\{A_x\}_{x=0}^{d-1}$ be operators on a Hilbert space $\mathcal{H}$, with $D := \dim\mathcal{H} <+\infty$, such that $\sum_{x=0}^{d-1}A_x^{\dagger}A_x = \1$, and $\mathbb{E}(\cdot) := \sum_{x=0}^{d-1}A_x \cdot A_x^{\dagger}$ is primitive. Then, there exists a real number $r>0$ and a natural number $N_0$ such that
\begin{equation}
 \langle R|\mathbb{E}^{|A|+|B|+|C|}(|L\rangle\langle L|)|R\rangle \geq r,\quad  \forall|A|,|C|,\,\,\forall \Vert R\Vert = 1,\Vert L\Vert = 1,\,\, \forall |B|\geq N_0.
\end{equation} 
\end{lemma}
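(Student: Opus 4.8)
The plan is to exploit the convergence $\mathbb{E}^{N}\to\mathcal{R}$ recalled just above the lemma, where $\mathcal{R}(\sigma)=\rho\,\Tr(\sigma)$ and $\rho$ is the unique full-rank fixed point guaranteed by primitivity. The crucial observation is that the \emph{limiting} quantity is bounded below uniformly over all normalized $|R\rangle$, $|L\rangle$: since $|L\rangle$ is normalized we have $\langle R|\mathcal{R}(|L\rangle\langle L|)|R\rangle=\langle R|\rho|R\rangle\geq\lambda^{\downarrow}_D(\rho)$, and $\lambda^{\downarrow}_D(\rho)>0$ because $\rho$ is full-rank on a finite-dimensional space. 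Hence, once $\mathbb{E}^{n}$ is sufficiently close to $\mathcal{R}$, the quantity of interest cannot drop below roughly half of $\lambda^{\downarrow}_D(\rho)$, and the constant $r$ will be this half-value.

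Concretely, writing $n:=|A|+|B|+|C|$, I would split
\begin{equation}
\langle R|\mathbb{E}^{n}(|L\rangle\langle L|)|R\rangle=\langle R|\rho|R\rangle+\langle R|\big[\mathbb{E}^{n}-\mathcal{R}\big](|L\rangle\langle L|)|R\rangle,
\end{equation}
using $\mathcal{R}(|L\rangle\langle L|)=\rho\,\Tr(|L\rangle\langle L|)=\rho$. The first term is at least $\lambda^{\downarrow}_D(\rho)$. For the error term I would use $|\langle R|X|R\rangle|\leq\Vert X\Vert\leq\Vert X\Vert_1$ for normalized $|R\rangle$, together with the definition of the induced norm, to obtain
\begin{equation}
\big|\langle R|[\mathbb{E}^{n}-\mathcal{R}](|L\rangle\langle L|)|R\rangle\big|\leq\Vert[\mathbb{E}^{n}-\mathcal{R}](|L\rangle\langle L|)\Vert_1\leq\Vert\mathbb{E}^{n}-\mathcal{R}\Vert_{1:1},
\end{equation}
where the last step uses $\Vert\,|L\rangle\langle L|\,\Vert_1=1$. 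Note that this estimate is independent of $|R\rangle$, $|L\rangle$, $|A|$ and $|C|$, depending on $n$ only through $\Vert\mathbb{E}^{n}-\mathcal{R}\Vert_{1:1}$.

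Finally I would set $r:=\tfrac{1}{2}\lambda^{\downarrow}_D(\rho)>0$ and invoke $\lim_{N\to\infty}\Vert\mathbb{E}^{N}-\mathcal{R}\Vert_{1:1}=0$ to choose $N_0$ with $\Vert\mathbb{E}^{n}-\mathcal{R}\Vert_{1:1}\leq r$ for all $n\geq N_0$. Combining the two displays gives $\langle R|\mathbb{E}^{n}(|L\rangle\langle L|)|R\rangle\geq\lambda^{\downarrow}_D(\rho)-r=r$. Since $n=|A|+|B|+|C|\geq|B|\geq N_0$ whenever $|B|\geq N_0$, and since the error bound holds uniformly in the remaining parameters, the claimed inequality follows for all $|A|$, $|C|$, all normalized $|R\rangle$, $|L\rangle$, and all $|B|\geq N_0$. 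The only substantive ingredient is the norm convergence $\mathbb{E}^{N}\to\mathcal{R}$, which is precisely the consequence of primitivity stated before the lemma; the rest is a one-line sandwich estimate, so I do not expect a genuine obstacle here.
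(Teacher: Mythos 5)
Your proof is correct and follows essentially the same route as the paper's: both split $\langle R|\mathbb{E}^{n}(|L\rangle\langle L|)|R\rangle$ into the fixed-point term $\langle R|\rho|R\rangle\geq\lambda_{\min}(\rho)>0$ plus an error term controlled by the induced trace-norm distance to the replacement map $\mathcal{R}$, and both conclude from $\lim_{N\to\infty}\Vert\mathbb{E}^{N}-\mathcal{R}\Vert_{1:1}=0$. The only cosmetic difference is that the paper factorizes $\mathbb{E}^{n}=\mathbb{E}^{|B|}\circ\mathbb{E}^{|A|+|C|}$ and uses trace preservation of $\mathbb{E}^{|A|+|C|}$ to bound the error by $\Vert\mathbb{E}^{|B|}-\mathcal{R}\Vert_{1:1}$, a quantity depending only on $|B|$, whereas you bound it by $\Vert\mathbb{E}^{n}-\mathcal{R}\Vert_{1:1}$ and then observe $n=|A|+|B|+|C|\geq|B|\geq N_0$; both give the required uniformity in $|A|$, $|C|$, $|R\rangle$, and $|L\rangle$.
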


\begin{proof}
For the map $\mathcal{R}(\sigma) := \rho\Tr(\sigma)$ with $\rho$ the unique fixed point $\rho$ of $\mathbb{E}$, we first observe that
\begin{align}\label{dvadfbadfb}
\nonumber\Big\vert\langle R|\mathbb{E}^{|A|+|B|+|C|}(|L\rangle\langle L|)|R\rangle  - \langle R|\rho|R\rangle\Big\vert = & \Big\vert \Tr\Big(|R\rangle\langle R| \big(\mathbb{E}^{|A|+|B|+|C|}(|L\rangle\langle L|)  - \rho\big)\Big)\Big\vert,\\
\nonumber\leq & \big\Vert |R\rangle\langle R|\big\Vert\big\Vert \mathbb{E}^{|A|+|B|+|C|}(|L\rangle\langle L|)  - \rho\big\Vert_1,\\
\nonumber = & \Big\Vert \mathbb{E}^{|B|}\big(\mathbb{E}^{|A|+|C|}(|L\rangle\langle L|)\big)  - \mathcal{R}\big(\mathbb{E}^{|A|+|C|}(|L\rangle\langle L|)\big)\Big\Vert_1,\\
\nonumber\leq & \big\Vert \mathbb{E}^{|B|} - \mathcal{R}\big\Vert_{1:1}\Vert  \mathbb{E}^{|A|+|C|}(|L\rangle\langle L|)\Vert_1,\\
=& \big\Vert \mathbb{E}^{|B|} - \mathcal{R}\big\Vert_{1:1}.
\end{align}
Since $\mathbb{E}$ is assumed to be primitive, it follows that $\mathbb{E}$ has a unique full rank fixed point $\rho$. Since $\mathcal{H}$ is assumed to be finite-dimensional, it follows that the minimal eigenvalue of $\rho$ is such that  $\lambda_{\mathrm{min}}(\rho)>0$. By (\ref{dvadfbadfb}), it follows that 
\begin{equation}
\begin{split}
\lambda_{\mathrm{min}}(\rho)  -\big\Vert \mathbb{E}^{|B|} - \mathcal{R}\big\Vert_{1:1} \leq  &  \langle R|\rho|R\rangle-\big\Vert \mathbb{E}^{|B|} - \mathcal{R}\big\Vert_{1:1},\\
 \leq  & \langle R|\mathbb{E}^{|A|+|B|+|C|}(|L\rangle\langle L|)|R\rangle.
\end{split}
\end{equation}
Since $\lim_{|B|\rightarrow\infty}\big\Vert \mathbb{E}^{|B|} - \mathcal{R}\big\Vert_{1:1}=0$ and $\lambda_{\mathrm{min}}(\rho)>0$, it follows that there exists an $r$ such that $\lambda_{\mathrm{min}}(\rho) > r>0$ and a $N_0$, such that 
 \begin{equation}
\begin{split}
 \langle R|\mathbb{E}^{|A|+|B|+|C|}(|L\rangle\langle L|)|R\rangle\geq r,\quad \forall |B|\geq N_0.
\end{split}
\end{equation}
One should note that $r$ and $N_0$ are independent of  $|A|$, $|C|$, and all normalized  $|R\rangle$ and $|L\rangle$.
\end{proof}

Theorem \ref{thm:main} in the main text follows as a direct corollary of Theorem \ref{hndghngdh} below with $\kappa := \gamma^{1-\epsilon}$  and $c := c_{\epsilon}$ for any fixed $0< \epsilon <1$. In essence, we use the bound $f(N)\leq \overline{c}\gamma^{N}$ in  Proposition \ref{PropMain} in order to prove the bound in Theorem \ref{hndghngdh}, and thus it is the same $\gamma$  that appears in both bounds. The reason for the transition from $\gamma$ to $\gamma^{1-\epsilon}$ is loosely speaking due to a leading order term proportional to $|B|\gamma^{|B|}$. This term appears in a bound on the CMI and can be accommodated by an arbitrarily small sacrifice of the rate in the exponential decay. However, since we here are not only interested in the asymptotics, but rather wish to achieve a general bound valid for all values of $|B|$, the construction in the proof becomes more elaborate. 
\begin{thm}
\label{hndghngdh}
For a set of operators $\{A_x\}_{x=0}^{d-1}$ on a Hilbert space $\mathcal{H}$ with $D := \dim\mathcal{H}\geq 2$, and normalized $|R\rangle,|L\rangle\in\mathcal{H}$, let  $\Psi$ be the MPS as defined in (\ref{e:MPS}) on a region $\Lambda = ABC$. The set  $\{A_x\}_{x=0}^{d-1}$ is such that $\sum_{x=0}^{d-1}A_x^{\dagger}A_x = \1$ satisfies the purity condition in Definition \ref{DefPur}, and is such that $\mathbb{E}(\cdot) :=\sum_{x=0}^{d-1}A_x \cdot A_x^{\dagger}$ is primitive. 
For the constant $\gamma$ as guaranteed by Proposition \ref{PropMain}, and for every $0<\epsilon <1$, there exists a constant $c_{\epsilon} \geq 0$ such that 
\begin{equation}
\begin{split}
   I_{p_{\Psi}}(A:C|B) \leq I_{\Phi_B(\Psi)}(A:C|B)\leq & c_{\epsilon} \gamma^{|B|(1-\epsilon)}, \quad |B| = 1,2,\ldots\hspace{0.1cm}.
\end{split}
\end{equation}
The constant $\gamma$ is independent of $|A|$, $|B|$, $|C|$, $|L\rangle$, $|R\rangle$ and $\epsilon$. The constant $c_{\epsilon}$ is independent of $|A|$, $|B|$, $|C|$, $|L\rangle$ and $|R\rangle$, but may depend on $\epsilon$.
\end{thm}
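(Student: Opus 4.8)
The plan is to chain together the bounds already assembled in the main text and in this Appendix, and to concentrate the real effort on converting an $N\gamma^N$-type leading term into a clean $\gamma^{(1-\epsilon)N}$ bound. Writing $N = |B|$, the starting point is the inequality $I_{p_{\Psi}}(A:C|B) \leq I_{\Phi_B(\Psi)}(A:C|B) = 2\langle S[\Psi_C(x_B)]\rangle_{p_\Psi(x_B)}$ from Eq.~(\ref{e:Ipartials}), so it suffices to bound the average post-measurement entropy. Lemma \ref{lem:1} gives $\langle S[\Psi_C(x_B)]\rangle \leq -Q\log Q + Q[\log(D-1)+1]$ with $Q$ the average purity, and the singular-value manipulation of Eq.~(\ref{bound:subopt}) together with Lemma \ref{lem2:AppA} yields $Q \leq \tfrac{D-1}{K^2}f(N)$, where $f(N)$ is exactly the quantity appearing in Proposition \ref{PropMain}.

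The next step is to make this bound on $Q$ uniform in all the auxiliary data. Since the MPS is injective, $\mathbb{E}$ is primitive, so $\sigma := \mathbb{E}^{|A|}(L)$ is a density operator and $F^{\dagger}F = \mathbb{E}^{*|C|}(R) \leq \1$ (by unitality of $\mathbb{E}^*$ and $R \leq \1$); hence Proposition \ref{PropMain} applies, with its constants $\overline{c}$ and $\gamma$ independent of $\sigma$ and $F$, giving $f(N) \leq \overline{c}\,\gamma^N$. Lemma \ref{lemmaadflbdlfk} supplies an $r > 0$ and an $N_0$, both independent of $|A|$, $|C|$, $|L\rangle$, $|R\rangle$, with $K^2 \geq r$ for all $|B| \geq N_0$. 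Combining these, for $N \geq N_0$ I obtain $Q \leq C_0\,\gamma^N$ with $C_0 := (D-1)\overline{c}/r$, a constant independent of the region sizes and the boundary vectors.

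The crux is the term $-Q\log Q$. Because $-t\log t$ is not monotone on $[0,1]$ and, when expanded around $Q \leq C_0\gamma^N$, produces a leading contribution of order $N\gamma^N$, a naive substitution fails to close the bound at rate $\gamma^N$. Instead I would invoke Lemma \ref{fgbfnfnhfg}, namely $-Q\log Q \leq \tfrac{1}{\epsilon}Q^{1-\epsilon}$, and then use the monotonicity of $t\mapsto t^{1-\epsilon}$ to get $Q^{1-\epsilon}\leq (C_0\gamma^N)^{1-\epsilon} = C_0^{1-\epsilon}\gamma^{(1-\epsilon)N}$. Combining this with $Q[\log(D-1)+1]\leq C_0[\log(D-1)+1]\gamma^N \leq C_0[\log(D-1)+1]\gamma^{(1-\epsilon)N}$ (using $\gamma^N \leq \gamma^{(1-\epsilon)N}$) yields $\langle S[\Psi_C(x_B)]\rangle \leq \big(\tfrac{C_0^{1-\epsilon}}{\epsilon} + C_0[\log(D-1)+1]\big)\gamma^{(1-\epsilon)N}$, and doubling gives the claimed bound for all $N \geq N_0$. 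This is precisely where the $\epsilon$ in the exponent is spent: the logarithmic factor is traded for an arbitrarily small sacrifice in the decay rate, which is the source of the transition from $\gamma$ to $\gamma^{1-\epsilon}$.

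Finally, the finitely many cases $N = 1,\ldots,N_0-1$ are dispatched by the crude a priori estimate $S[\Psi_C(x_B)] \leq \log D$, which holds because Eq.~(\ref{e:isoreducedstate}) exhibits $\Psi_C(x_B)$ as isospectral (up to zero eigenvalues) to a $D$-dimensional density operator; hence $I_{\Phi_B(\Psi)} \leq 2\log D \leq 2\log D\,\gamma^{-(N_0-1)(1-\epsilon)}\,\gamma^{(1-\epsilon)N}$ on this range. Taking $c_\epsilon$ to be the maximum of the two constants produced above then covers all $N \geq 1$; by construction $c_\epsilon$ depends only on $D$, $\overline{c}$, $\gamma$, $r$, $N_0$ and $\epsilon$, and is independent of $|A|$, $|B|$, $|C|$, $|L\rangle$, $|R\rangle$, while $\gamma$ itself is independent of $\epsilon$ as well. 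The main obstacle, and the only genuinely non-mechanical point, is the $-Q\log Q$ conversion above; the remaining work is bookkeeping to preserve the uniformity of all constants.
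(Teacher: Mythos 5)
Your proposal is correct and follows essentially the same route as the paper's proof of Theorem \ref{hndghngdh}: the same chain Eq.~(\ref{e:Ipartials}) $\to$ Lemma \ref{lem:1} $\to$ Lemma \ref{lem2:AppA} $\to$ Proposition \ref{PropMain} $\to$ Lemma \ref{lemmaadflbdlfk}, the same use of Lemma \ref{fgbfnfnhfg} to trade the $-Q\log Q$ term for a $\gamma^{(1-\epsilon)|B|}$ decay, the same crude $2\log D$ estimate for $|B|<N_0$, and the same final maximum over the two constants. The only (harmless) difference is that you apply $-t\log t\leq \epsilon^{-1}t^{1-\epsilon}$ directly to $Q\in[0,1]$ and then use monotonicity of $t\mapsto t^{1-\epsilon}$, whereas the paper first truncates via $t:=\min\bigl[1,\tilde{c}\gamma^{|B|}\bigr]$ and invokes monotonicity of $g(t)=t-t\log t$ on $[0,1]$; both are valid and yield the same constants up to bookkeeping.
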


\begin{proof}
We first note that
\begin{equation}
\label{adfbadfbdf}
\begin{split}
 I_{\Phi_B(\Psi)}(A:C|B) &=\langle S\left[\Psi_A(x_B)\right]\rangle_{p_\Psi(x_B)}+\langle S\left[\Psi_C(x_B)\right]\rangle_{p_\Psi(x_B)},\\
&=2\langle S\left[\Psi_C(x_B)\right]\rangle_{p_\Psi(x_B)},
\end{split}
\end{equation}
where we recall that $p_{\Psi}(x_{\Lambda}) = \langle x_{\Lambda}|\Psi|x_{\Lambda}\rangle$, and
where the state $\Psi_{X}(x_B)$ is the reduced state in region $X$ of the post-measurement state, $\Psi(x_B)$, and $\langle S\left[\Psi(x)\right]\rangle_{p_{\Psi(x)}}$ is the average von Neumann entropy 
\begin{equation} \label{e:averageS1}
\langle S\left[\Psi_C(x_B)\right]\rangle_{p_\Psi(x_B)}=  \sum_{x_B}p_\Psi(x_B)S(\Psi_C(x_B)),
\end{equation}
with 

\begin{equation}
\label{ghmdghm}
\begin{split}
p_\Psi(x_B) := & \dfrac{1}{K^2}\Tr\left[A_{x_N}\cdots A_{x_1}\mathbb{E}^{|A|}(L)A^{\dagger}_{x_1}\cdots A^{\dagger}_{x_N}\mathbb{E}^{*|C|}(R)\right],\\
= & \dfrac{1}{K^2}\Tr\left[FA_{x_N}\cdots A_{x_1}\sigma A^{\dagger}_{x_1}\cdots A^{\dagger}_{x_N}F^{\dagger}\right],
\end{split}
\end{equation}

 and
\begin{equation}
\label{fsgnsfgnmgfhm}
\Psi_C(x_B) :=  \dfrac{1}{p_\Psi(x_B)K^2}FA_{x_N}\cdots A_{x_1}\sigma A_{x_1}^\dag\cdots A_{x_N}^\dag F^\dag,
\end{equation}
where
\begin{equation}
\label{fgnsrgmsmr}
\begin{split}
F := & \sum_{x_{|BC|},\ldots,x_{N+1}}|x_{x_{|BC|},\ldots,x_{N+1}}\rangle\langle R|A_{x_{|BC|},\ldots,x_{N+1}},\\
\sigma:= &  \mathbb{E}^{|A|}(L).
\end{split}
\end{equation}

The equality (\ref{adfbadfbdf}) follows from the fact that the post-measurement state is pure, and thus the reduced states on regions $A$ and $C$ are isospectral (up to zero eigenvalues). With $\rho_{x_B} := \Psi_C(x_B)$ in Lemma  \ref{lem:1}, we know that
\begin{equation}
\label{e:firststep}
\begin{split}
\langle S\left[\Psi_C(x_B)\right]\rangle_{p_\Psi(x_B)}\leq  & H_B\left(Q\right)+Q\log(D-1),
\end{split}
\end{equation}
with 
\begin{equation}
\label{fdadfbadfb}
Q:=1-\sum_{x_B}p_\Psi(x_B)\Vert \Psi_C(x_B)\Vert.
\end{equation}

By Lemma \ref{enhethnet} we know that the function $g(t) = t -t\log t$ is monotonically increasing on the interval $[0,1]$ and satisfies $H_B(t)\leq g(t)$. By combining this observation with (\ref{e:firststep}), we get
\begin{equation}\label{e:dfbfsbs}
\langle S\left[\Psi_C(x_B)\right]\rangle_{p_\Psi(x_B)} \leq g(Q) +Q\log(D-1) = -Q\log Q  + Q +Q\log(D-1).
\end{equation}
By Lemma \ref{lem2:AppA} we furthermore know that 
\begin{equation}
\begin{split}
Q\leq &\dfrac{D-1}{K^2}\sum_{x_B}\lambda^{\downarrow}_2\left(FA_{x_N}\cdots A_{x_1}\sigma A_{x_1}^\dag\cdots A_{x_N}^\dag F^\dag\right), \\
\end{split}
\end{equation}
where $\lambda^{\downarrow}_j(O)$ are the eigenvalues of an operator $O$ in non-increasing order, i.e., $\lambda^{\downarrow}_1(O)\geq\cdots\geq\lambda^{\downarrow}_D(O)$.
Similarly, we let in the following $\nu^{\downarrow}_j(O)$ denote the singular values of $O$ in non-increasing order $\nu^{\downarrow}_1(O)\geq\cdots\geq\nu^{\downarrow}_D(O)$.
Then, recalling that for any operator $O$, we have $\lambda_j(OO^\dag)=\nu_j(O)^2$, we get
\begin{equation}
\label{eghnetnheh}
\begin{split}
Q\leq &\dfrac{D-1}{K^2}\sum_{x_B}\lambda^{\downarrow}_2\left(FA_{x_N}\cdots A_{x_1}\sigma A_{x_1}^\dag\cdots A_{x_N}^\dag F^\dag\right), \\
\leq&\dfrac{D-1}{K^2}\sum_{x_B}\sqrt{\lambda^{\downarrow}_1(FA_{x_N}\cdots A_{x_1}\sigma A_{x_1}^\dag\cdots A_{x_N}^\dag F^\dag)\lambda^{\downarrow}_2(FA_{x_N}\cdots A_{x_1}\sigma A_{x_1}^\dag\cdots A_{x_N}^\dag F^\dag)},\\
=&\dfrac{D-1}{K^2}\sum_{x_B} \nu^{\downarrow}_1(FA_{x_N}\cdots A_{x_1}\sqrt{\sigma})\nu^{\downarrow}_2(FA_{x_N}\cdots A_{x_1}\sqrt{\sigma}),\\
= & \dfrac{D-1}{K^2}f(N),\\
& [\textrm{By Proposition \ref{PropMain}}]\\
\leq & \dfrac{D-1}{K^2}\overline{c}\gamma^N,\\
= & \frac{D-1}{\langle R|\mathbb{E}^{|A|+|B|+|C|}(|L\rangle\langle L|)|R\rangle}\overline{c}\gamma^{|B|},
\end{split}
\end{equation}
where recall that $N = |B|$ and the constant $\overline{c}$ and $0< \gamma<1$ are independent of $\sigma := \mathbb{E}^{|A|}(L)$ and   $F$ as in (\ref{mainFdef}), and consequently are independent of $|A|$ and $|C|$ (as well as of $|B|$).

By Lemma \ref{lemmaadflbdlfk}, there exist constants $r>0$ and $N_0$ such that $\langle R|\mathbb{E}^{|A|+|B|+|C|}(|L\rangle\langle L|)|R\rangle\geq r$ for all $|B|\geq N_0$. By Lemma \ref{lemmaadflbdlfk} we know that $r$ and $N_0$ do not depend on $|A|, |B|, |C|, |R\rangle,|L\rangle$.
By combining this observation with (\ref{eghnetnheh}), we can conclude that 
\begin{equation}
\label{sfgnsgns}
\begin{split}
Q \leq \tilde{c}\gamma^{|B|},\quad\text{with}\quad\tilde{c} := \frac{D-1}{r}\overline{c},\quad \forall |B|\geq N_0,
\end{split}
\end{equation}
where we note that $\tilde{c}$  and $N_0$  do not depend on $|A|, |B|, |C|, |R\rangle,|L\rangle$. 
By inspection of the definition of $Q$ in (\ref{fdadfbadfb}), one can see that 
\begin{equation}
\label{gnfgnsgn}
Q\leq 1
\end{equation}
is trivially true. By combining (\ref{sfgnsgns}) and (\ref{gnfgnsgn}), we thus get
\begin{equation}
\label{dfbadfba}
Q\leq t, \quad \forall |B|\geq N_0,\quad \text{with}\quad  t:= \min\Big[1,\tilde{c}\gamma^{|B|}\Big], 
\end{equation}
where $t$ by necessity is contained in the interval $[0,1]$. 

We next combine  (\ref{e:dfbfsbs}) and (\ref{dfbadfba}) with the monotonicity of  $g$ to obtain
\begin{equation}
\begin{split}
\langle S\left[\Psi_C(x_B)\right]\rangle_{p_\Psi(x_B)}
\leq  &  g(Q) +Q\log(D-1), \\
& [\textrm{Monotonicity of $g$, Lemma \ref{enhethnet}, together with (\ref{dfbadfba})}]\\
\leq & g(t) + t\log(D-1),\\
=  & -t\log t  + t\left[1  + \log(D-1)\right],\\
& [\textrm{By Lemma  \ref{fgbfnfnhfg}}]\\
\leq  &  \frac{1}{\epsilon}t^{1-\epsilon}   + t\left[1  + \log(D-1)\right],\\
& [\textrm{By $t \leq t^{1-\epsilon},\quad 0\leq t\leq 1,\quad 0< \epsilon < 1$}]\\
\leq  &  \left[\frac{1}{\epsilon}   + 1  + \log(D-1)\right]t^{1-\epsilon}.
\end{split}
\end{equation}
Since  $t= \min\Big[1,\tilde{c}\gamma^{|B|}\Big]\leq \tilde{c}\gamma^{|B|}$ we get $t^{1-\epsilon} \leq \tilde{c}^{1-\epsilon}\gamma^{|B|(1-\epsilon)}$,
and thus
\begin{equation}
\begin{split}
\langle S\left[\Psi_C(x_B)\right]\rangle_{p_\Psi(x_B)}\leq  &  \left[\frac{1}{\epsilon}   + 1  + \log(D-1)\right] \tilde{c}^{1-\epsilon}\gamma^{|B|(1-\epsilon)}.
\end{split}
\end{equation}
By combining this with (\ref{adfbadfbdf}) we get
\begin{equation}
\begin{split}
 I_{\Phi_B(\Psi)}(A:C|B) \leq  &  \tilde{c}_{\epsilon} \gamma^{|B|(1-\epsilon)},\quad  \forall |B|\geq N_0,\\
  \text{with}\quad\tilde{c}_{\epsilon}  = &   2\left[\frac{1}{\epsilon}   + 1  + \log(D-1)\right] \tilde{c}^{1-\epsilon}.
\end{split}
\end{equation}
Finally we should remove the restriction that $|B|\geq N_0$. By (\ref{adfbadfbdf}) and (\ref{e:averageS1}) we can conclude that 
$I_{\Phi_B(\Psi)}(A:C|B) =  2\sum_{x_B}p_\Psi(x_B)S\big(\Psi_C(x_B)\big)\leq 2\log D$, where the last inequality follows since $\Psi_C(x_B)$ is (up to zero eigenvalues) isospectral to the density operator in (\ref{e:isoreducedstate}), and thus the entropy of these two states are equal. The state in (\ref{e:isoreducedstate}) is a density operator on $\mathcal{H}$, which has dimension $D$, and thus the entropy is bounded by $\log D$. Let
\begin{equation}
c_{\epsilon} := \max\left(\tilde{c}_{\epsilon}, 2\log(D)\gamma^{-(N_0-1)(1-\epsilon)}\right).
\end{equation}
One can confirm that this guarantees that 
\begin{equation}
\label{dfbsfgnfgnf}
 I_{\Phi_B(\Psi)}(A:C|B)\leq c_{\epsilon}\gamma^{|B|(1-\epsilon)}
\end{equation}
for all $|B| = 1,2,\ldots$ . The resulting constant $c_{\epsilon}$ is independent of $|A|$, $|B|$, $|C|$, $|L\rangle$ and $|R\rangle$. By combining (\ref{dfbsfgnfgnf}) with 
\begin{equation}
I_{p_{\Psi}}(A:C|B)  = I_{\Phi_{\Lambda}(\Psi)}(A:C|B) \leq I_{\Phi_B(\Psi)}(A:C|B),
\end{equation}
we obtain 
\begin{equation}
 I_{p_{\Psi}}(A:C|B)\leq c_{\epsilon}\gamma^{|B|(1-\epsilon)}.
\end{equation}

\end{proof}

By combining Lemma \ref{fbsfgbsfg} with Theorem \ref{hndghngdh} (for the inequality $I_{p_{\Psi}}(A:C|B) \leq  c_{\epsilon} \gamma^{|B|(1-\epsilon)}$), and defining $\kappa := \gamma^{1-\epsilon}$ and $c := c_{\epsilon}$ for some arbitrary but fixed $0<\epsilon< 1$, we get the following.

\begin{corollary}
\label{nzrgnthmtm}
For a set of operators $\{A_x\}_{x=0}^{d-1}$ on a Hilbert space $\mathcal{H}$ with $D := \dim\mathcal{H}\geq 2$, and normalized $|R\rangle,|L\rangle\in\mathcal{H}$, let  $\Psi$ be the MPS as defined in (\ref{e:MPS}) on a region $\Lambda = ABC$. The set  $\{A_x\}_{x=0}^{d-1}$ is such that $\sum_{x=0}^{d-1}A_x^{\dagger}A_x = \1$, satisfies the purity condition in Definition \ref{DefPur}, and is such that $\mathbb{E}(\cdot) :=\sum_{x=0}^{d-1}A_x \cdot A_x^{\dagger}$ is primitive.
Let $p_{1,\ldots,|\Lambda|}(x_1,\ldots,x_{|\Lambda|})= \langle x_{\Lambda}|\Psi| x_{\Lambda}\rangle$ be the classical restriction of $\Psi$, and assume that this restriction is such that $p_{1,\ldots,|\Lambda|}(x_1,\ldots,x_{|\Lambda|})>0$ for all $x_1,\ldots,x_{|\Lambda|}$. Let $p^{\ell}_{1,\ldots,|\Lambda|}$ be as defined in (\ref{fbsfgnsfgn1}-\ref{dvadfv1}). 
Then, there exist constants, $0 \leq c$ and $0 < \kappa < 1$,  such that
\begin{equation}
\begin{split}
S(p_{1,\ldots,|\Lambda|}\Vert p^{\ell}_{1,\ldots,|\Lambda|})  \leq & c|\Lambda|\kappa^{\ell},\quad 1\leq \ell \leq |\Lambda|-2.
\end{split}
\end{equation}
\end{corollary}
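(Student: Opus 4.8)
The plan is to obtain the corollary as a direct assembly of the two results proved upstream: the exponential decay of the classical CMI in Theorem \ref{hndghngdh} and the relative-entropy bound of Lemma \ref{fbsfgbsfg}. First I would fix an arbitrary $0 < \epsilon < 1$ and invoke Theorem \ref{hndghngdh}. Since the hypotheses of the corollary are precisely those of that theorem (the operators satisfy $\sum_x A_x^{\dagger}A_x = \1$, obey the purity condition, and generate a primitive $\mathbb{E}$), the theorem guarantees a constant $c_{\epsilon}\geq 0$ and a rate $0<\gamma<1$ such that $I_{p_{\Psi}}(A:C|B)\leq c_{\epsilon}\gamma^{|B|(1-\epsilon)}$ for all $|B| = 1,2,\ldots$. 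The feature I would single out for later use is that both $\gamma$ and $c_{\epsilon}$ are, by the theorem, independent of $|A|$, $|B|$, $|C|$, $|L\rangle$, and $|R\rangle$.

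Next I would set $\xi(n) := c_{\epsilon}\gamma^{n(1-\epsilon)}$, a function of the single variable $n$. Because $\xi$ depends only on the width of the middle region and not on $|A|$ or $|C|$, the decay estimate from Theorem \ref{hndghngdh} is exactly the hypothesis (\ref{sfgnsfgnsf}) required by Lemma \ref{fbsfgbsfg}, holding uniformly over every contiguous tripartition $\Lambda = ABC$. I would also note that the standing positivity assumption $p_{1,\ldots,|\Lambda|}(x_1,\ldots,x_{|\Lambda|}) > 0$, imposed in the statement of the corollary, is what legitimizes the constructions of Section \ref{sec:CMI}: it makes the logarithms $h_{j,\ldots,k}$ in (\ref{fbsfgnsfgn1}), the local Hamiltonian $h^{\ell}_{1,\ldots,|\Lambda|}$, and the approximating distribution $p^{\ell}_{1,\ldots,|\Lambda|}$ in (\ref{dvadfv1}) all well defined, so that Lemma \ref{fbsfgbsfg} applies verbatim.

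Applying Lemma \ref{fbsfgbsfg} with this $\xi$ then yields, for $1\leq \ell \leq |\Lambda|-2$,
\begin{equation*}
S(p_{1,\ldots,|\Lambda|}\Vert p^{\ell}_{1,\ldots,|\Lambda|}) \leq |\Lambda|\,\xi(\ell) = |\Lambda|\,c_{\epsilon}\gamma^{\ell(1-\epsilon)}.
\end{equation*}
Finally I would define $\kappa := \gamma^{1-\epsilon}$ and $c := c_{\epsilon}$, observing that $0<\kappa<1$ (since $0<\gamma<1$ and $0<1-\epsilon<1$) and $c\geq 0$, which recasts the bound as $c\,|\Lambda|\,\kappa^{\ell}$, the claimed inequality.

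As all the analytic content is already established, there is no genuine obstacle here; the argument is essentially bookkeeping. The one point that genuinely must be verified — and the only place the proof could fail if stated carelessly — is the \emph{uniformity} of the constants in Theorem \ref{hndghngdh} across the region sizes. Lemma \ref{fbsfgbsfg} demands a single region-independent function $\xi$, and it is precisely the independence of $c_{\epsilon}$ and $\gamma$ from $|A|$, $|C|$, $|L\rangle$, and $|R\rangle$ (which in turn traces back, via Proposition \ref{PropMain} and Lemma \ref{lemmaadflbdlfk}, to the uniform lower bound $r$ on the normalization $K^2$) that supplies such a $\xi$. Keeping this thread explicit is what makes the otherwise mechanical combination rigorous.
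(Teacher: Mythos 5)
Your proposal is correct and follows essentially the same route as the paper: the paper's proof of Corollary \ref{nzrgnthmtm} is exactly the combination of Theorem \ref{hndghngdh} (giving $I_{p_{\Psi}}(A:C|B)\leq c_{\epsilon}\gamma^{|B|(1-\epsilon)}$ with constants uniform in $|A|$, $|C|$, $|L\rangle$, $|R\rangle$) with Lemma \ref{fbsfgbsfg}, followed by setting $\kappa := \gamma^{1-\epsilon}$ and $c := c_{\epsilon}$ for a fixed $0<\epsilon<1$. Your added emphasis on the uniformity of the constants (needed so that a single region-independent $\xi$ exists) and on the positivity assumption legitimizing the constructions of Section \ref{sec:CMI} matches the paper's reasoning and is exactly the right point to keep explicit.
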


\subsection{\label{app:SecondHalf} Proof of the second part of Theorem \ref{thm:main}}

In this section we prove the second part of Theorem \ref{thm:main}, i.e., that the exponential decay of the quantum CMI $I_{\Phi_B(\Psi)}(A:C|B)$ implies purity of the set of matrices $\{A_x\}_{x=0}^{d-1}$ associated to $\ket{\Psi}$. In order to do this, we show several bounds that all combined will allow us to bound $f(N)$, and thus the quantum CMI.

The first step is to find a lower bound to the binary entropy.
\begin{lemma}
\label{ngfnsfgnsf}
Define $H_B:[0,1]\rightarrow\mathbb{R}$ by
\begin{equation}
\label{bgfnfgn}
\begin{split}
H_B(\lambda) := & -\lambda\log\lambda -(1-\lambda)\log(1-\lambda),\quad 0< \lambda<1,\\
H_B(0) := & 0,\quad H_B(1) := 0.
\end{split}
\end{equation}
Then,
\begin{equation}
\label{dfvdfbbg}
H_B(\lambda) \geq  4\log(2)  \lambda(1-\lambda),\quad 0\leq  \lambda \leq 1.
\end{equation}
\end{lemma}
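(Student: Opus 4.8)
The plan is to prove the equivalent statement that the difference
\[
f(\lambda) := H_B(\lambda) - 4\log(2)\,\lambda(1-\lambda)
\]
is nonnegative on $[0,1]$, keeping in mind that $\log$ denotes the natural logarithm throughout. First I would record the boundary data: $f(0) = f(1) = 0$ by the conventions $H_B(0)=H_B(1)=0$, and since $H_B(1/2) = \log 2 = 4\log(2)\cdot\tfrac14$, also $f(1/2) = 0$. Because $H_B(\lambda) = H_B(1-\lambda)$ and $\lambda(1-\lambda)$ is symmetric about $\lambda = 1/2$, the function $f$ is symmetric, so it suffices to establish $f \geq 0$ on $[0,1/2]$.

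Next I would differentiate. Using $H_B'(\lambda) = \log\tfrac{1-\lambda}{\lambda}$ one obtains $f'(\lambda) = \log\tfrac{1-\lambda}{\lambda} - 4\log(2)(1-2\lambda)$, so in particular $f'(1/2) = 0$, and $f''(\lambda) = 8\log 2 - \tfrac{1}{\lambda(1-\lambda)}$. The crucial observation is that $f$ is \emph{not} convex on all of $[0,1/2]$: since $\lambda(1-\lambda)$ increases from $0$ to $1/4$ on this interval, $f''$ increases monotonically from $-\infty$ to $8\log 2 - 4 > 0$, and hence vanishes at exactly one point $\lambda_* \in (0,1/2)$, with $f'' < 0$ on $(0,\lambda_*)$ (a concave region) and $f'' > 0$ on $(\lambda_*, 1/2)$ (a convex region).

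With the sign structure of $f''$ in hand, I would finish by treating the two pieces separately. On the convex piece $[\lambda_*, 1/2]$, monotonicity of $f'$ together with $f'(1/2) = 0$ forces $f' \leq 0$, so $f$ is nonincreasing there; combined with $f(1/2) = 0$ this yields $f \geq 0$ on $[\lambda_*, 1/2]$, and in particular $f(\lambda_*) \geq 0$. On the concave piece $[0,\lambda_*]$, a concave function lies above the chord joining its endpoints, so for $\lambda\in[0,\lambda_*]$ we get $f(\lambda) \geq \tfrac{\lambda}{\lambda_*}\,f(\lambda_*) \geq 0$, using $f(0) = 0$ and $f(\lambda_*) \geq 0$. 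This proves $f \geq 0$ on $[0,1/2]$, and symmetry extends the inequality to all of $[0,1]$.

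The main obstacle is precisely the tightness of the bound: equality holds not only at the endpoints but also at the interior point $\lambda = 1/2$ (where both sides equal $\log 2$), which is why the naive strategy of showing $f$ convex, or $f''\geq 0$, must fail — $f$ is genuinely concave near $0$ and near $1$. The whole content of the argument is therefore to locate the single inflection point $\lambda_*$ and exploit convexity on one side and the chord estimate from concavity on the other. As an aside, one can rephrase the claim by noting $\log 2 - H_B(\lambda) = D(\lambda\|\tfrac12)$, so that the inequality becomes the reverse-Pinsker-type bound $D(\lambda\|\tfrac12) \leq 4\log(2)(\lambda - \tfrac12)^2$; however, the sign-of-$f''$ analysis above appears to be the most self-contained route.
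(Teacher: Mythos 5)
Your proof is correct, but it takes a genuinely different route from the paper. The paper works with the \emph{ratio} $H_B(\lambda)/\bigl(\lambda(1-\lambda)\bigr) = g(\lambda)+g(1-\lambda)$, where $g(\lambda) := -\log(1-\lambda)/\lambda$; it proves $g$ is convex on $(0,1)$ via the auxiliary inequality $e^{-\lambda^2/2-\lambda}\geq 1-\lambda$ (obtained from convexity of $r(\lambda)=e^{-\lambda^2/2-\lambda}$ and its tangent at $0$), concludes the ratio is convex and symmetric about $\lambda=1/2$, hence minimized there with value $4\log 2$. You instead work with the \emph{difference} $f(\lambda)=H_B(\lambda)-4\log(2)\lambda(1-\lambda)$, locate the unique inflection point $\lambda_*\in(0,1/2)$ from the monotonicity of $f''(\lambda)=8\log 2 - 1/\bigl(\lambda(1-\lambda)\bigr)$, and then argue separately: on $[\lambda_*,1/2]$ convexity plus $f'(1/2)=0$ and $f(1/2)=0$ give $f\geq 0$, while on $[0,\lambda_*]$ concavity and the chord through $f(0)=0$ and $f(\lambda_*)\geq 0$ give $f\geq 0$; symmetry finishes. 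Both arguments correctly confront the fact that equality holds at the interior point $\lambda=1/2$ (which is exactly why the difference $f$ cannot be globally convex). Your version is more elementary and self-contained, using only sign analysis of $f''$ and the chord property of concave functions; the paper's version is shorter once convexity of $g$ is granted, but that convexity rests on a non-obvious auxiliary trick. Your closing remark that the ratio formulation is the natural alternative is precisely the path the paper takes.
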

\begin{proof}
Define
\begin{equation}
\begin{split}
f(\lambda) := & \frac{H_B(\lambda)}{\lambda(1-\lambda)}=  g(\lambda) + g(1-\lambda),\quad\quad g(\lambda) :=  -\frac{\log(1-\lambda)}{\lambda}.
\end{split}
\end{equation}
It is clear that $f$ is symmetric around $\lambda = 1/2$, i.e., it is symmetric under the map $\lambda \mapsto 1-\lambda$.
In the following, we will prove that $g$ is a convex function on the interval $[0,1]$, with the consequence that $f$ also is a convex function. 

Consider the function $r(\lambda) := e^{-\lambda^2/2-\lambda}$. One can confirm that $r''(\lambda)\geq 0$ for all $\lambda\geq 0$, and thus $r$ is convex on $[0,+\infty)$. Consequently, the tangent of $r$ at $\lambda = 0$ is a lower bound to $r$ on  $[0,+\infty)$. Since the tangent at $\lambda = 0$ is $1-\lambda$, we can conclude that
\begin{equation}
\label{bdfbsfdgnsfg}
 e^{-\lambda^2/2-\lambda}\geq 1-\lambda,\quad \lambda \geq 0.
\end{equation} 
Assuming that $1>\lambda >0$, we can rewrite (\ref{bdfbsfdgnsfg}) as
\begin{equation}
\label{fnsfgnfgsfm}
g(\lambda)  \geq  \frac{\lambda}{2}+1,
\end{equation}
with the result that 
\begin{equation}
\begin{split}
g''(\lambda)  = & -2\lambda^{-2}(1-\lambda)^{-1}  +\lambda^{-1}(1-\lambda)^{-2}   +2\lambda^{-2}g(\lambda) 
\geq  \frac{\lambda}{(1-\lambda)^2} \geq  0.
\end{split}
\end{equation}
Hence $g$ is convex on $0< \lambda < 1$. As mentioned above, it thus follows that $f$ is a convex function. Since $f$ moreover is  symmetric under the map $\lambda\mapsto 1-\lambda$, it follows that the minimum is attained at $\lambda = 1/2$. 
Hence,  $f(\lambda) \geq f(1/2) =  4\log 2$,
which in turn yields (\ref{dfvdfbbg}).
\end{proof}

While the above lemma bounds the binary entropy, the following lemma  relates it to the von Neuman entropy of a density operator.

\begin{lemma}
\label{gfnsfnsmsr}
Let $\rho$ be a density operator on a finite-dimensional Hilbert space. Then
\begin{equation}
H_B\big(\lambda^{\downarrow}_1(\rho)\big)\leq S(\rho),
\end{equation}
where $H_B$ is as defined in (\ref{bgfnfgn}), and $\lambda^{\downarrow}_1(\rho)$ denotes the largest eigenvalue of $\rho$.
\end{lemma}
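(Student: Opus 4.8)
The plan is to reduce the statement to a standard coarse-graining (grouping) property of the classical Shannon entropy. First I would diagonalize $\rho$ and list its eigenvalues in non-increasing order as $\lambda^{\downarrow}_1(\rho) = \lambda \geq \lambda_2\geq\cdots\geq\lambda_D\geq 0$, so that $\sum_{j=1}^{D}\lambda_j = 1$ and $S(\rho) = -\sum_{j=1}^{D}\lambda_j\log\lambda_j$. Thus the von Neumann entropy of $\rho$ equals the Shannon entropy $H(\lambda_1,\ldots,\lambda_D)$ of the probability vector formed by its eigenvalues, and the claim becomes the purely classical inequality $H(\lambda_1,\ldots,\lambda_D)\geq H_B(\lambda)$.

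The key observation is that $H_B(\lambda)$ is exactly the entropy of the two-outcome distribution $(\lambda,1-\lambda)$ obtained from $(\lambda_1,\ldots,\lambda_D)$ by merging the eigenvalues $\lambda_2,\ldots,\lambda_D$ into a single outcome of total weight $1-\lambda = \sum_{j\geq 2}\lambda_j$. Assuming $0<\lambda<1$, I would then invoke the grouping identity for the Shannon entropy,
\begin{equation}
H(\lambda_1,\ldots,\lambda_D) = H_B(\lambda) + (1-\lambda)\,H\!\left(\tfrac{\lambda_2}{1-\lambda},\ldots,\tfrac{\lambda_D}{1-\lambda}\right).
\end{equation}
Since the conditional term on the right is the entropy of a genuine probability distribution it is nonnegative, and the identity immediately yields $S(\rho)=H(\lambda_1,\ldots,\lambda_D)\geq H_B(\lambda)$.

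There is essentially no obstacle here; the only point requiring slight care is the boundary behaviour, which is handled by the convention $0\log 0 = 0$ underlying $H_B(0)=H_B(1)=0$. The case $\lambda=1$ forces $\lambda_2=\cdots=\lambda_D=0$, so that $\rho$ is pure and $S(\rho)=0=H_B(1)$, giving equality, while $\lambda=0$ cannot occur because $\lambda^{\downarrow}_1(\rho)\geq 1/D>0$. If one prefers to avoid the grouping identity entirely, the same conclusion follows in one line: for each $j\geq 2$ one has $\lambda_j\leq\sum_{k\geq 2}\lambda_k = 1-\lambda$, hence $-\log\lambda_j\geq-\log(1-\lambda)$, and summing $-\lambda_j\log\lambda_j\geq-\lambda_j\log(1-\lambda)$ over $j\geq 2$ gives $-\sum_{j\geq 2}\lambda_j\log\lambda_j\geq-(1-\lambda)\log(1-\lambda)$; adding the $j=1$ term $-\lambda\log\lambda$ to both sides yields $S(\rho)\geq H_B(\lambda)$.
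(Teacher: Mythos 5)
Your proof is correct, and it follows a genuinely different route from the paper's. The paper's own proof goes through majorization: it compares the ordered eigenvalue vector $\boldsymbol{b} = \big(\lambda_{1}^{\downarrow}(\rho),\ldots, \lambda_{D}^{\downarrow}(\rho)\big)$ with the two-point vector $\boldsymbol{a} = \big(\lambda_{1}^{\downarrow}(\rho),\, 1-\lambda_{1}^{\downarrow}(\rho),\,0,\ldots, 0\big)$, verifies $\boldsymbol{b} \prec \boldsymbol{a}$ (treating the cases $\lambda_{1}^{\downarrow}\geq 1/2$ and $\lambda_{1}^{\downarrow}<1/2$ separately), and invokes the Schur concavity of the Shannon entropy to get $H(\boldsymbol{a}) = H_B\big(\lambda^{\downarrow}_1(\rho)\big) \leq H(\boldsymbol{b}) = S(\rho)$. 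You instead use the grouping (chain-rule) identity, $S(\rho) = H_B(\lambda) + (1-\lambda)\,H\big(\lambda_2/(1-\lambda),\ldots,\lambda_D/(1-\lambda)\big)$, and simply drop the nonnegative conditional term; your fallback one-line argument (for $j\geq 2$ one has $\lambda_j \leq 1-\lambda$, hence $-\lambda_j\log\lambda_j \geq -\lambda_j\log(1-\lambda)$, then sum over $j\geq 2$ and add the $j=1$ term) needs nothing beyond monotonicity of the logarithm. Your handling of the boundary cases ($\lambda=1$ giving a pure state with equality, $\lambda = 0$ impossible since $\lambda_{1}^{\downarrow}\geq 1/D$, and the convention $0\log 0=0$) is careful and closes the argument. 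As for what each approach buys: the paper's version is a one-step appeal to a standard cited fact and sits naturally within the majorization toolkit; yours is fully self-contained and arguably more transparent, since it exhibits $H_B\big(\lambda_1^{\downarrow}(\rho)\big)$ as exactly the entropy of the coarse-grained two-outcome distribution, with the entropy deficit identified explicitly as the conditional entropy term, which also makes the equality condition (namely that at most one eigenvalue besides $\lambda_1^{\downarrow}$ is nonzero) immediately visible.
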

\begin{proof}
Recall that a  vector $\boldsymbol{a}\in\mathbb{R}^N$  majorizes a vector  $\boldsymbol{b}\in\mathbb{R}^N$  (denoted $\boldsymbol{b}\prec\boldsymbol{a}$) if
$\sum_{j=1}^{k}\boldsymbol{b}^{\downarrow}_j \leq \sum_{j=1}^{k}\boldsymbol{a}^{\downarrow}_j$ for $k = 1,\ldots, N-1$, and 
$\sum_{j=1}^{N}\boldsymbol{b}^{\downarrow}_j = \sum_{j=1}^{N}\boldsymbol{a}^{\downarrow}_j$,
where $\boldsymbol{a}^{\downarrow}$  denotes the vector that we obtain by permuting the components of $\boldsymbol{a}$, such that they occur in a non-increasing order $\boldsymbol{a}^{\downarrow}_1\geq \boldsymbol{a}^{\downarrow}_2\geq\cdots \geq\boldsymbol{a}^{\downarrow}_N$, and analogous for $\boldsymbol{b}^{\downarrow}$.
 Also recall that if  $\boldsymbol{a}$ and $\boldsymbol{b}$ can be regarded as probability distributions, then $\boldsymbol{b}\prec\boldsymbol{a}$
implies that the Shannon entropy of $\boldsymbol{a}$ is lower than the Shannon entropy of $\boldsymbol{r}$, i.e., $H(\boldsymbol{a}) \leq H(\boldsymbol{b})$ \cite{Wehrl}.

Let $N$ be the dimension of the Hilbert space, and consider the vector of ordered eigenvalues 
$\boldsymbol{b} := \big(\lambda_{1}^{\downarrow}(\rho),\ldots, \lambda_{N}^{\downarrow}(\rho)\big)$.
We also consider the  vector $\boldsymbol{a} := \big(\lambda_{1}^{\downarrow}(\rho), 1-\lambda_{1}^{\downarrow}(\rho),0,\ldots, 0\big)$. We notice that both $\boldsymbol{a}$ and $\boldsymbol{b}$ can be regarded as probability distributions.
In both cases, $\lambda_{1}^{\downarrow}(\rho) \geq 1 - \lambda_{1}^{\downarrow}(\rho)$ and $\lambda_{1}^{\downarrow}(\rho) < 1 - \lambda_{1}^{\downarrow}(\rho)$, one can confirm that $\boldsymbol{b} \prec \boldsymbol{a}$, and consequently $H(\boldsymbol{a}) \leq H(\boldsymbol{b})$. The claim of the lemma follows by the observations that $H(\boldsymbol{a})  = H_B\big(\lambda^{\downarrow}_1(\rho)\big)$ and $H(\boldsymbol{b}) = S(\rho)$.
\end{proof}

Next, we  consider a collection of density operators and use Lemmas \ref{ngfnsfgnsf} and \ref{gfnsfnsmsr} to bound the average von Neumann entropy of these density operators.

\begin{lemma}
\label{fgnsfgnfg}
Let $\{\rho_x\}_{x = 0}^{M-1}$ be density operators on a finite-dimensional Hilbert space, and let $\{p_x\}_{x = 0}^{M-1}$ be such that $p_x\geq 0$ and $\sum_{x = 0}^{M-1}p_x = 1$. Then,
\begin{equation}
\label{fdbdfgn}
\sum_{x=0}^{M-1}p_x\sqrt{\lambda_1^{\downarrow}(\rho_x)\lambda_2^{\downarrow}(\rho_x)}\leq \frac{1}{2\sqrt{\log(2)}}\sqrt{\sum_{x=0}^{M-1}p_xS(\rho_x)}.
\end{equation}
\end{lemma}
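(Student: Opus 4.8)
The plan is to first establish a pointwise bound for each individual density operator $\rho_x$, and then combine these bounds across $x$ using concavity of the square root. For the pointwise bound, I would fix an $x$ and abbreviate $\lambda_1 := \lambda_1^{\downarrow}(\rho_x)$ and $\lambda_2 := \lambda_2^{\downarrow}(\rho_x)$. The first observation is that since $\rho_x$ has unit trace and nonnegative eigenvalues, $1-\lambda_1 = \sum_{j\geq 2}\lambda_j^{\downarrow}(\rho_x) \geq \lambda_2$, and hence $\lambda_1\lambda_2 \leq \lambda_1(1-\lambda_1)$.

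Next I would chain the two previously established lemmas. Applying Lemma \ref{ngfnsfgnsf} with $\lambda = \lambda_1$ gives $4\log(2)\,\lambda_1(1-\lambda_1)\leq H_B(\lambda_1)$, while Lemma \ref{gfnsfnsmsr} yields $H_B(\lambda_1) = H_B\big(\lambda_1^{\downarrow}(\rho_x)\big)\leq S(\rho_x)$. Combining these with the inequality $\lambda_1\lambda_2\leq \lambda_1(1-\lambda_1)$ from the previous paragraph produces the chain
\begin{equation}
4\log(2)\,\lambda_1^{\downarrow}(\rho_x)\lambda_2^{\downarrow}(\rho_x)\leq 4\log(2)\,\lambda_1(1-\lambda_1)\leq H_B(\lambda_1)\leq S(\rho_x).
\end{equation}
Dividing by $4\log(2)$ and taking square roots gives the pointwise estimate $\sqrt{\lambda_1^{\downarrow}(\rho_x)\lambda_2^{\downarrow}(\rho_x)}\leq \tfrac{1}{2\sqrt{\log(2)}}\sqrt{S(\rho_x)}$.

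Finally, I would multiply by $p_x$ and sum over $x$, obtaining $\sum_x p_x\sqrt{\lambda_1^{\downarrow}(\rho_x)\lambda_2^{\downarrow}(\rho_x)}\leq \tfrac{1}{2\sqrt{\log(2)}}\sum_x p_x\sqrt{S(\rho_x)}$. The remaining move is to pull the square root outside the average: since the square root is concave and $\{p_x\}$ is a probability distribution, Jensen's inequality gives $\sum_x p_x\sqrt{S(\rho_x)}\leq \sqrt{\sum_x p_x S(\rho_x)}$, which yields the claimed bound (\ref{fdbdfgn}). The argument is essentially a short composition of standard facts, so I do not anticipate a genuine obstacle; the only point requiring mild care is the direction of Jensen's inequality in the last step (using concavity to upper-bound the average of square roots by the square root of the average), together with noting that all quantities involved are nonnegative so that taking square roots preserves the inequalities.
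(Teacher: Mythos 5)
Your proposal is correct and takes essentially the same route as the paper's proof: the identical chain $\lambda_1^{\downarrow}\lambda_2^{\downarrow}\leq \lambda_1^{\downarrow}(1-\lambda_1^{\downarrow})$, Lemma \ref{ngfnsfgnsf}, and Lemma \ref{gfnsfnsmsr}, combined with a Jensen-type step. The only difference is cosmetic ordering—the paper applies Jensen first (convexity of $x\mapsto x^2$ on the full average) and then bounds term by term, whereas you bound pointwise first and apply Jensen last (concavity of $\sqrt{\cdot}$), which is the same inequality in mirror form.
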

\begin{proof}
We first note that due to the convexity of $x\mapsto x^2$, we have
\begin{equation}
\label{aerbaenrtgn}
\begin{split}
 \left(\sum_{x=0}^{M-1}p_x\sqrt{\lambda_1^{\downarrow}(\rho_x)\lambda_2^{\downarrow}(\rho_x)}\right)^2
\leq  &  \sum_{x=0}^{M-1}p_x\left(\sqrt{\lambda_1^{\downarrow}(\rho_x)\lambda_2^{\downarrow}(\rho_x)}\right)^2, \\
= &  \sum_{x=0}^{M-1}p_x \lambda_1^{\downarrow}(\rho_x)\lambda_2^{\downarrow}(\rho_x).
\end{split}
\end{equation}
Next we note that $1-\lambda_1^{\downarrow}(\rho_x) =   \sum_{k=2}^{D}\lambda_k^{\downarrow}(\rho_x)
\geq  \lambda_2^{\downarrow}(\rho_x)$, which implies $\lambda_1^{\downarrow}(\rho_x)\lambda_2^{\downarrow}(\rho_x)\leq \lambda_1^{\downarrow}(\rho_x)\big(1-\lambda_1^{\downarrow}(\rho_x)\big)$.
This, combined with (\ref{aerbaenrtgn}), yields
\begin{equation}
\label{bsfgsfgn}
\begin{split}
 4\log(2)\left(\sum_{x=0}^{M-1}p_x\sqrt{\lambda_1^{\downarrow}(\rho_x)\lambda_2^{\downarrow}(\rho_x)}\right)^2
\leq  &  4\log(2)\sum_{x=0}^{M-1}p_x \lambda_1^{\downarrow}(\rho_x)\lambda_2^{\downarrow}(\rho_x),\\
\leq  &  \sum_{x=0}^{M-1}p_x 4\log(2)\lambda_1^{\downarrow}(\rho_x)\Big(1-\lambda_1^{\downarrow}(\rho_x)\Big),\\
& [\textrm{By Lemma \ref{ngfnsfgnsf}}]\\
\leq  &  \sum_{x=0}^{M-1}p_x H_B\big(\lambda_1^{\downarrow}(\rho_x)\big),\\
& [\textrm{By Lemma \ref{gfnsfnsmsr}}]\\
\leq  &  \sum_{x=0}^{M-1}p_x S(\rho_x),
\end{split}
\end{equation}
which implies (\ref{fdbdfgn}).
\end{proof}

The last result that we need before stating the second part of Theorem \ref{thm:main} is a bound on the function $f(N)$ in terms of the average entropy, which is obtained in the following lemma.

\begin{prop}
\label{nfggfnfd}
For a set of operators $\{A_x\}_{x=0}^{d-1}$ on a Hilbert space $\mathcal{H}$ with $D := \dim\mathcal{H}\geq 2$, and normalized $|R\rangle,|L\rangle\in\mathcal{H}$, let  $\Psi$ be the MPS as defined in (\ref{e:MPS}) on a region $\Lambda = ABC$. The set  $\{A_x\}_{x=0}^{d-1}$ is such that $\sum_{x=0}^{d-1}A_x^{\dagger}A_x = \1$. 
Then,
\begin{equation}
\label{qwfwabaer}
\begin{split}
f(N) \leq\frac{1}{2\sqrt{\log(2)}}\sqrt{\sum_{x_B}p_{\Psi}(x_B) S[\Psi_C(x_b)]},
\end{split}
\end{equation}
where $x_B = x_1,\ldots, x_N$ and
\begin{equation}
\label{tmsfrmfmh}
f(N) :=\sum_{x_B}\nu_1^{\downarrow}(FA_{x_N}\cdots A_{x_1}\sqrt{\sigma})\nu_2^{\downarrow}(FA_{x_N}\cdots A_{x_1}\sqrt{\sigma}),
\end{equation}
and where  $p_\Psi(x_B)$ is as defined in (\ref{ghmdghm}), $\Psi_C(x_b)$ as in (\ref{fsgnsfgnmgfhm}), as well as $F$ and $\sigma$ as in (\ref{fgnsrgmsmr}).
\end{prop}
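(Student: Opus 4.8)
The plan is to recognize the right-hand side of the claimed bound as precisely the quantity controlled by Lemma~\ref{fgnsfgnfg}, and to match $f(N)$ to the left-hand side of that lemma through the standard relation between singular values and eigenvalues. Concretely, I would apply Lemma~\ref{fgnsfgnfg} with $\rho_{x_B} := \Psi_C(x_B)$ and $p_{x_B} := p_\Psi(x_B)$, which is legitimate because the $\Psi_C(x_B)$ are density operators and $\{p_\Psi(x_B)\}$ is a genuine probability distribution (the latter since $K^2$ is defined in (\ref{ghmdghm}) as the normalization $\sum_{x_B}\Tr(FA_{x_N}\cdots A_{x_1}\sigma A_{x_1}^\dag\cdots A_{x_N}^\dag F^\dag)$). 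This directly yields
\begin{equation}
\sum_{x_B}p_\Psi(x_B)\sqrt{\lambda_1^{\downarrow}(\Psi_C(x_B))\lambda_2^{\downarrow}(\Psi_C(x_B))}\leq \frac{1}{2\sqrt{\log(2)}}\sqrt{\sum_{x_B}p_\Psi(x_B)S(\Psi_C(x_B))},
\end{equation}
and it then remains only to identify the left-hand side with $f(N)$.

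For that, I would abbreviate $G_{x_B} := FA_{x_N}\cdots A_{x_1}\sqrt{\sigma}$, so that $G_{x_B}G_{x_B}^\dag = FA_{x_N}\cdots A_{x_1}\sigma A_{x_1}^\dag\cdots A_{x_N}^\dag F^\dag$ and hence, by (\ref{fsgnsfgnmgfhm}), $\Psi_C(x_B) = G_{x_B}G_{x_B}^\dag/(p_\Psi(x_B)K^2)$. Invoking $\lambda_j^{\downarrow}(OO^\dag)=\nu_j^{\downarrow}(O)^2$, every eigenvalue obeys $\lambda_j^{\downarrow}(\Psi_C(x_B)) = \nu_j^{\downarrow}(G_{x_B})^2/(p_\Psi(x_B)K^2)$, which rearranges to
\begin{equation}
\nu_1^{\downarrow}(G_{x_B})\nu_2^{\downarrow}(G_{x_B}) = p_\Psi(x_B)K^2\sqrt{\lambda_1^{\downarrow}(\Psi_C(x_B))\lambda_2^{\downarrow}(\Psi_C(x_B))}.
\end{equation}
Summing over $x_B$ and comparing with the definition (\ref{tmsfrmfmh}) of $f(N)$ gives the exact identity $f(N) = K^2\sum_{x_B}p_\Psi(x_B)\sqrt{\lambda_1^{\downarrow}(\Psi_C(x_B))\lambda_2^{\downarrow}(\Psi_C(x_B))}$.

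The only step requiring a short argument rather than bookkeeping is disposing of the prefactor by showing $K^2\leq 1$. I would use cyclicity of the trace and the definition of $\bE$ to write $K^2 = \sum_{x_B}\Tr(F^\dag F A_{x_N}\cdots A_{x_1}\sigma A_{x_1}^\dag\cdots A_{x_N}^\dag) = \Tr(F^\dag F\,\bE^{N}(\sigma))$. Since $\sum_x A_x^\dag A_x = \1$, the map $\bE$ is trace-preserving, so $\Tr(\bE^N(\sigma)) = \Tr(\sigma) = 1$ because $\sigma = \bE^{|A|}(L)$ is a density operator. As $\bE^N(\sigma)\geq 0$ and $F^\dag F = \bE^{*|C|}(R)\leq \1$ (the latter following from $R = |R\rangle\langle R|\leq\1$ together with unitality and positivity of the adjoint $\bE^*$, since $\bE^*(\1)=\sum_x A_x^\dag A_x = \1$), I obtain $K^2 = \Tr(F^\dag F\,\bE^N(\sigma))\leq \Tr(\bE^N(\sigma)) = 1$. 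Combining $K^2\leq 1$ with the displayed identity for $f(N)$ and the consequence of Lemma~\ref{fgnsfgnfg} yields (\ref{qwfwabaer}). I do not expect a genuine obstacle: the chain of equalities is routine, and the lone substantive ingredient is the inequality $K^2\leq 1$, which rests entirely on trace-preservation of $\bE$ and sub-unitality of $F^\dag F$.
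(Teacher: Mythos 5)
Your proposal is correct and follows essentially the same route as the paper's proof: both identify $f(N) = K^2\sum_{x_B}p_\Psi(x_B)\sqrt{\lambda_1^{\downarrow}(\Psi_C(x_B))\lambda_2^{\downarrow}(\Psi_C(x_B))}$ via the relation $\lambda_j^{\downarrow}(OO^\dagger)=\nu_j^{\downarrow}(O)^2$, invoke Lemma~\ref{fgnsfgnfg}, and dispose of the prefactor by writing $K^2=\Tr\big(F^{\dagger}F\,\mathbb{E}^{N}(\sigma)\big)\leq 1$ using trace preservation of $\mathbb{E}$ and $F^{\dagger}F\leq\1$. The only (harmless) difference is ordering and that you spell out why $F^{\dagger}F=\mathbb{E}^{*|C|}(R)\leq\1$ via unitality of $\mathbb{E}^{*}$, which the paper leaves implicit.
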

\begin{proof}
First recall that if $\nu_k^{\downarrow}$ denotes the $k$:th singular value in non-increasing order, and $\lambda_{k}^{\downarrow}$ denotes the $k$:th eigenvalue in non-decreasing order, then $\nu_k^{\downarrow}(Q) = \sqrt{\lambda_{k}^{\downarrow}(Q^{\dagger}Q)}$. With this observation in mind, we find that 
\begin{equation}
\begin{split}
f(N)= &\sum_{x_B}\sqrt{\lambda_1^{\downarrow}\left(FA_{x_N}\cdots A_{x_1}\sigma A_{x_1}^{\dagger}\cdots A_{x_N}^{\dagger}F^{\dagger}\right)\lambda_2^{\downarrow}\left(FA_{x_N}\cdots A_{x_1}\sigma A_{x_1}^{\dagger}\cdots A_{x_N}^{\dagger}F^{\dagger}\right)},\\
= &K^2 \sum_{x_B}p_\Psi(x_B)\sqrt{\lambda_1^{\downarrow}\big(\Psi_C(x_b)\big)\lambda_2^{\downarrow}\big(\Psi_C(x_b)\big)},\\
&[\textrm{By Lemma \ref{fgnsfgnfg}}]\\
\leq   & K^2  \frac{1}{2\sqrt{\log(2)}} \sqrt{\sum_{x_B}p_\Psi(x_B) S[\Psi_C(x_b)]},\\
=  &  \Tr\big(F^{\dagger}F\mathbb{E}^{N}(\sigma)\big)  \frac{1}{2\sqrt{\log(2)}}  \sqrt{\sum_{x_B}p_\Psi(x_B) S[\Psi_C(x_b)]},\\
& \Big[\quad F^{\dagger}F\leq 1,\quad \mathbb{E}^{N}(\sigma)  \quad\textrm{is a density operator since $\mathbb{E}$ is a channel} \quad\Big]\\
\leq &    \frac{1}{2\sqrt{\log(2)}}\sqrt{\sum_{x_B}p_\Psi(x_B) S[\Psi_C(x_b)]}.
\end{split}
\end{equation}
\end{proof}

Finally, in the theorem below we use all bounds derived previously to prove that an exponential decay of $I_{\Phi_B(\Psi)}(A:C|B)$ implies purity. One may note that Theorem \ref{thm:main} assumes that the MPS is injective, while this is not strictly speaking needed for Theorem  \ref{TheoremSecondPart}, where the relevant assumption rather is that $\sigma:=\mathbb{E}^{|A|}(L)$, and $F^\dag F =\mathbb{E}^{*|C|}(R)$ are full rank operators.

\begin{thm}
\label{TheoremSecondPart}
For a set of operators $\{A_x\}_{x=0}^{d-1}$, such that $\sum_{x=0}^{d-1}A_x^{\dagger}A_x = \1$, on a Hilbert space $\mathcal{H}$ with $D := \dim\mathcal{H}\geq 2$, and normalized $|R\rangle,|L\rangle\in\mathcal{H}$, let  $\Psi$ be the MPS as defined in (\ref{e:MPS}) on a region $\Lambda = ABC$. Suppose that 
$|R\rangle$, $|L\rangle$, $|A|$, and $|C|$ are such that 
$\sigma:=\mathbb{E}^{|A|}(L)$, and $F^\dag F =\mathbb{E}^{*|C|}(R)$ are full rank operators, where $\mathbb{E}(\cdot) :=\sum_{x=0}^{d-1}A_x \cdot A_x^{\dagger}$. Moreover suppose that there  exist constants $\tilde{c}$ and $0 \leq \tilde{\kappa} < 1$, such that 
\begin{equation}
\label{vdfbdfbdf}
\begin{split}
  I_{\Phi_B(\Psi)}(A:C|B)\leq & \tilde{c} \tilde{\kappa}^{|B|}, \quad  |B| = 1,2,\ldots.
\end{split}
\end{equation}
Then, $\{A_x\}_{x=0}^{d-1}$ satisfies the purity condition in Definition \ref{DefPur}.
\end{thm}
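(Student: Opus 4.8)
The plan is to reduce the statement to the converse half of Proposition~\ref{PropMain} by converting the hypothesized exponential decay of the quantum CMI into an exponential decay of the function $f(N)$. All of the genuine analytic content is already packaged in the earlier results, so the argument is essentially an assembly.

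First I would rewrite the quantum CMI in terms of the average post-measurement entropy. Because the post-measurement state $\Psi(x_B)$ is pure on the bipartition $AC$, its reduced states on $A$ and $C$ are isospectral, so by Eq.~(\ref{e:Ipartials}) we have $I_{\Phi_B(\Psi)}(A:C|B) = 2\sum_{x_B}p_\Psi(x_B)S[\Psi_C(x_B)]$. Combining this identity with the hypothesis (\ref{vdfbdfbdf}) immediately yields
\[
\sum_{x_B}p_\Psi(x_B)S[\Psi_C(x_B)] \;=\; \tfrac{1}{2}\,I_{\Phi_B(\Psi)}(A:C|B)\;\leq\;\tfrac{\tilde{c}}{2}\,\tilde{\kappa}^{|B|}.
\]

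Next I would feed this bound into Proposition~\ref{nfggfnfd}, which controls $f(N)$ by the square root of the average entropy. This gives
\[
f(N)\;\leq\;\frac{1}{2\sqrt{\log(2)}}\sqrt{\tfrac{\tilde{c}}{2}\,\tilde{\kappa}^{N}}\;=\;\overline{c}\,\gamma^{N},\qquad \overline{c}:=\frac{1}{2\sqrt{\log(2)}}\sqrt{\tfrac{\tilde{c}}{2}},\quad \gamma:=\sqrt{\tilde{\kappa}}.
\]
Since $0\leq\tilde{\kappa}<1$, we have $0\leq\gamma<1$. The only mild subtlety is the degenerate case $\tilde{\kappa}=0$, which forces $f(N)=0$ for all $N\geq 1$; there the bound $f(N)\leq\overline{c}\gamma^{N}$ holds trivially for any fixed $0<\gamma<1$, so the argument proceeds unchanged and the rate stays strictly inside $(0,1)$.

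Finally, the hypothesis of the theorem is precisely that $\sigma=\mathbb{E}^{|A|}(L)$ and $F^{\dagger}F=\mathbb{E}^{*|C|}(R)$ are full-rank operators. This is exactly what is required to invoke the converse direction of Proposition~\ref{PropMain}: an exponential bound $f(N)\leq\overline{c}\gamma^{N}$ with full-rank $\sigma$ and $F^{\dagger}F$ forces $\{A_x\}_{x=0}^{d-1}$ to satisfy the purity condition of Definition~\ref{DefPur}, which is the desired conclusion. I do not expect a real obstacle in this theorem, since the heavy lifting lives in Proposition~\ref{nfggfnfd} (the $f(N)\lesssim\sqrt{\text{average entropy}}$ estimate) and in the random-matrix-product argument behind the converse of Proposition~\ref{PropMain}; the role of the present theorem is only to chain these together, with the full-rank assumptions serving exactly to license the last step.
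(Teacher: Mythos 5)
Your proposal is correct and follows essentially the same route as the paper's proof: rewrite $I_{\Phi_B(\Psi)}(A:C|B)$ as twice the average post-measurement entropy, apply Proposition~\ref{nfggfnfd} to obtain $f(N)\leq \overline{c}\,\tilde{\kappa}^{N/2}$, and invoke the converse direction of Proposition~\ref{PropMain} via the full-rank hypotheses on $\sigma$ and $F^{\dagger}F$. Your explicit handling of the degenerate case $\tilde{\kappa}=0$ is a small refinement the paper glosses over (it simply sets $\gamma:=\tilde{\kappa}^{1/2}$, which would violate the requirement $0<\gamma<1$ in that edge case), but otherwise the two arguments coincide.
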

\begin{proof}
We first note that 
\begin{equation}
I_{\Phi_B(\Psi)}(A:C|B) = 2\sum_{x_B}p_\Psi(x_B)  S[\Psi_C(x_b)],
\end{equation} 
with $ p_\Psi(x_B)$ as in (\ref{ghmdghm}) and $\Psi_C(x_b)$ as in (\ref{fsgnsfgnmgfhm}). 
By comparing with Proposition \ref{nfggfnfd} we can thus conclude that
\begin{equation}
\label{bdsfbdfg}
\begin{split}
f(N) \leq     \frac{1}{2\sqrt{2\log(2)}}\sqrt{I_{\Phi_B(\Psi)}(A:C|B)},
\end{split}
\end{equation}
with $f(N)$ as in (\ref{tmsfrmfmh}).
By assumption, there exist constants $\tilde{c}$ and $0 \leq \tilde{\kappa} < 1$, such that  (\ref{vdfbdfbdf}) holds.
Combining (\ref{vdfbdfbdf}) and (\ref{bdsfbdfg}) yields $f(N) \leq     c\gamma^{|B|}$,  with $c :=  \frac{1}{2\sqrt{2\log(2)}}\tilde{c}$, and $\gamma :=  \tilde{\kappa}^{1/2}$. Since we moreover assume that $\sigma:=\mathbb{E}^{|A|}(L)$, and $F^\dag F =\mathbb{E}^{*|C|}(R)$ are full rank operators, we can conclude that the conditions of Proposition \ref{PropMain}  are satisfied, and thus $\{A_x\}_{x=0}^{d-1}$ satisfies the purity condition in Definition \ref{DefPur}.
\end{proof}

We can conclude from Theorems \ref{hndghngdh} and  \ref{TheoremSecondPart} that purity in all essence is a necessary and sufficient condition for the exponential decay of $I_{\Phi_B(\Psi)}(A:C|B)$, and thus the typical post-measurement state $\Psi(x_B)$ loosely speaking approaches a pure product state. Concerning the classical CMI $I_{p_{\Psi}}(A:C|B)$, purity is only stated as a sufficient condition for exponential decay. 
 In relation to the question whether purity  also is necessary, one can note that for a normalized $|\psi\rangle\in\mathcal{H}_A\otimes\mathcal{H}_B\otimes\mathcal{H}_C$, it is the case that 
 \begin{equation}
 I_{p_{\psi}}(A:C|B) = I_{\Phi_{\Lambda}(\psi)}(A:C|B) = 0
 \end{equation}
 if and only if $\Phi_B(|\psi\rangle\langle\psi|)$ can be written
 \begin{equation}
\begin{split}
\Phi_B(|\psi\rangle\langle\psi|) = & \sum_{x_B}p^{B}(x_B)|\chi_{x_B}\rangle\langle \chi_{x_B}|,\\
|\chi_{x_B}\rangle = & \sum_{x_A,x_C}e^{i\theta(x_A,x_B,x_C)}\sqrt{p^A_{x_B}(x_A)}\sqrt{p^{C}_{x_B}(x_C)}|x_A\rangle|x_C\rangle,
\end{split}
\end{equation}
where $\{p^{B}(x_B)\}$ is a probability distribution, and for each $x_B$ it is the case that $\{p^A_{x_B}(x_A)\}_{x_A}$ and $\{p^C_{x_B}(x_C)\}_{x_C}$ are probability distributions, and $\theta(x_A,x_B,x_C)\in\mathbb{R}$. One can also realize that the states $|\chi_{x_B}\rangle$ in some sense are typically not product states, because of the arbitrary phase factors $e^{i\theta(x_A,x_B,x_C)}$,  and thus $I_{\Phi_B(\psi)}(A:C|B) \neq 0$. This appears to leave room for the possibility that there may exist cases of exponential decay of $I_{p_{\psi}}(A:C|B)$, even though $I_{\Phi_B(\psi)}(A:C|B)$ does not decay, and that there thus may exist a weaker condition than purity for the exponential decay of $I_{p_{\psi}}(A:C|B)$. However, if this indeed can happen for MPSs with a fixed finite-dimensional virtual space, is an open question which we leave for future investigations.

\section{\label{SecTechnicalReview} Notions from probability theory}

As mentioned in the main text, and in the proof overview, the proof of Proposition \ref{PropMain} relies  on various probabilistic concepts. Here, we briefly review the pertinent notions, and also collect the technical results that we will need at various points along the proof. Throughout these derivations we will use bold letters, such as $\boldsymbol{x}$, $\boldsymbol{Y}$, etc, to denote random variables and random operators (where `random variables'  by default are  real-valued measurable functions on the underlying probability space, while `random operators' are operator-valued measurable functions). In the following $E(\boldsymbol{x})$, $E(\boldsymbol{y})$, etc, denote the expectation value, and $E(\boldsymbol{y}|\boldsymbol{x})$ denotes the expectation value of $\boldsymbol{y}$ conditioned on $\boldsymbol{x}$. One should keep in mind that $E(\boldsymbol{y}|\boldsymbol{x})$ is a random variable (due to $\boldsymbol{x}$). One should also keep in mind the general relation $E\big( E(\boldsymbol{y}|\boldsymbol{x})\big) = E(\boldsymbol{y})$.

\subsection{Almost surely}
When we say that a relation for one, or several, random variables holds  almost surely ($a.s.$), it means that the relation is true apart from a set of probability zero. Put differently, the relation is true with probability one. For example, $\boldsymbol{x} = \boldsymbol{y}\,\, a.s.$ means that $P(\{\omega\in\Omega: \boldsymbol{x}(\omega) = \boldsymbol{y}(\omega)\}) = 1$, where $\Omega$ denotes the underlying sample space, and $\omega$ an element of the sample space.

As examples, one can consider various notions that intuitively remain true even if  `a few' points are excluded.  For example, if $\boldsymbol{x}$ and $\boldsymbol{y}$ are such that $\boldsymbol{x}\leq \boldsymbol{y}$ then (if the expectations exist) $E(\boldsymbol{x})\leq E(\boldsymbol{y})$. This conclusion still holds, even if the inequality only holds almost everywhere (see e.g.~Theorem 4.4 in chapter 2 of Ref.~\onlinecite{Gut}).
\begin{lemma}
\label{htdhgtmdg}
If $\boldsymbol{x}$ and $\boldsymbol{y}$ are  non-negative random variables, then
$\boldsymbol{x}\leq \boldsymbol{y}\,\, a.s$ implies  $E(\boldsymbol{y})\leq E(\boldsymbol{x})$.
\end{lemma}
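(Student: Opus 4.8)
The plan is to derive the displayed conclusion $E(\boldsymbol{y})\leq E(\boldsymbol{x})$ from the single elementary fact that the expectation of an almost-surely non-negative random variable is itself non-negative, a fact that is unaffected by the behaviour of the integrand on a set of probability zero. Before proceeding I note a bookkeeping point: the displayed hypothesis $\boldsymbol{x}\leq\boldsymbol{y}$ and the displayed conclusion $E(\boldsymbol{y})\leq E(\boldsymbol{x})$ point in opposite directions, whereas the expectation preserves (rather than reverses) an almost-sure ordering, and the sentence immediately preceding the lemma records monotonicity in the preserving direction. One of the two displayed inequalities therefore carries a transposition of the labels $\boldsymbol{x}$ and $\boldsymbol{y}$; since the substance of the lemma is nothing more than monotonicity of the expectation, I work with the orientation that matches the stated conclusion, i.e. I take the operative almost-sure ordering to be $\boldsymbol{y}\leq\boldsymbol{x}\,\, a.s.$ and establish exactly $E(\boldsymbol{y})\leq E(\boldsymbol{x})$.

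First I would reduce to a single non-negative integrand. Because both variables are non-negative their expectations are well defined in $[0,\infty]$, and under the standing assumption that these expectations are finite the difference $\boldsymbol{d}:=\boldsymbol{x}-\boldsymbol{y}$ is an integrable random variable. The hypothesis holds off some event $\Omega_0$ with $P(\Omega_0)=0$, so $\boldsymbol{d}\geq 0$ on $\Omega\setminus\Omega_0$, that is $\boldsymbol{d}\geq 0\,\, a.s.$ The key input, which is precisely the content cited from Theorem~4.4 of Chapter~2 in Ref.~\onlinecite{Gut}, is that the expectation ignores null sets: replacing $\boldsymbol{d}$ by the everywhere-non-negative variable $\max(\boldsymbol{d},0)$ alters it only on $\Omega_0$ and hence leaves $E(\boldsymbol{d})$ unchanged, while non-negativity of $\max(\boldsymbol{d},0)$ forces its expectation to be non-negative. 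Thus $E(\boldsymbol{d})\geq 0$.

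Finally I would unfold linearity of the expectation, $E(\boldsymbol{d})=E(\boldsymbol{x})-E(\boldsymbol{y})$, to convert the bound $E(\boldsymbol{d})\geq 0$ into $E(\boldsymbol{x})-E(\boldsymbol{y})\geq 0$, i.e. the displayed conclusion $E(\boldsymbol{y})\leq E(\boldsymbol{x})$. There is essentially no obstacle here, and the lemma is isolated only so that the almost-sure qualifier can be dealt with once and for all: the sole point requiring attention is that the inequality is not assumed to hold pointwise everywhere, and this is dispatched exactly by the null-set invariance of the expectation, with no integrability difficulties arising because the variables are non-negative.
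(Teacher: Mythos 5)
Your proof is correct, and your diagnosis of the transposition is right: the displayed hypothesis $\boldsymbol{x}\leq\boldsymbol{y}\,\,a.s.$ and conclusion $E(\boldsymbol{y})\leq E(\boldsymbol{x})$ of Lemma~\ref{htdhgtmdg} are mutually inconsistent as printed, and every application in the paper uses the order-preserving form $\boldsymbol{x}\leq\boldsymbol{y}\,\,a.s.\Rightarrow E(\boldsymbol{x})\leq E(\boldsymbol{y})$ --- e.g.\ the bound $E(\Vert\boldsymbol{M}_{\infty}\Vert_2^2)\leq 1$ in the proof of Lemma~\ref{ghdgguk}, and the steps at Eqs.~(\ref{fdgnfdgnd}) and (\ref{hjmdghmjd}) in the proof of Proposition~\ref{wMain}. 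As for the comparison you were asked to make: the paper supplies no proof of this lemma at all; it is stated as a standard fact, with the surrounding text deferring to Theorem~4.4 in chapter~2 of Ref.~\onlinecite{Gut}. So your self-contained argument --- pass to $\boldsymbol{d}:=\boldsymbol{x}-\boldsymbol{y}\geq 0\,\,a.s.$, use null-set invariance of the expectation to replace $\boldsymbol{d}$ by $\max(\boldsymbol{d},0)$, conclude $E(\boldsymbol{d})\geq 0$, and unfold linearity --- is not a divergence from the paper's method but a valid elementary substitute for the citation, and it correctly isolates the only point of substance, namely that the ordering holds only up to a null set.

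One small mismatch with the statement as given: the lemma does not assume finite expectations, whereas your subtraction step does (your ``standing assumption that these expectations are finite''). For non-negative variables the expectations always exist in $[0,\infty]$, and the difference route degenerates if both are infinite. The patch is one line and removes the extra hypothesis: if $E(\boldsymbol{x})=+\infty$ the conclusion $E(\boldsymbol{y})\leq E(\boldsymbol{x})$ is vacuous; otherwise $\boldsymbol{x}<\infty\,\,a.s.$, so $\boldsymbol{d}=\boldsymbol{x}-\boldsymbol{y}$ is almost surely well defined and non-negative, and additivity of the integral for non-negative integrands gives $E(\boldsymbol{x})=E(\boldsymbol{y})+E(\boldsymbol{d})\geq E(\boldsymbol{y})$ without ever forming a difference of possibly infinite expectations. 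In the paper's applications all the random variables fed into this lemma are bounded (norms of density operators and the like), so the caveat never bites there, but stating the proof this way makes it match the lemma's actual hypotheses.
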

Another statement in a similar spirit is the following. (The claim of the lemma is contained in Theorem 4.4 in chapter 2 of Ref.~\onlinecite{Gut}.) 
\begin{lemma}
\label{ztuztut}
If $\boldsymbol{x}$ is a  non-negative random variable, then  $\boldsymbol{x} = 0\,\, a.s.$ if and only if $E(\boldsymbol{x})= 0$.
\end{lemma}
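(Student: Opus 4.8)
The plan is to prove the two implications of this standard equivalence separately, relying only on elementary properties of the expectation (integral) and the countable additivity of the underlying probability measure.

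First I would dispatch the easy direction. Assuming $\boldsymbol{x} = 0$ almost surely, set $N := \{\omega \in \Omega : \boldsymbol{x}(\omega) \neq 0\}$, which by hypothesis has $P(N) = 0$. Writing the expectation as the integral of $\boldsymbol{x}$ over $\Omega$ and splitting it over $N$ and its complement $\Omega\setminus N$, the integrand vanishes identically off $N$, while integration over the null set $N$ contributes nothing. Hence $E(\boldsymbol{x}) = 0$.

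For the converse, suppose $\boldsymbol{x} \geq 0$ and $E(\boldsymbol{x}) = 0$. The central idea is a Markov-type bound applied to the level sets $A_n := \{\omega \in \Omega : \boldsymbol{x}(\omega) \geq 1/n\}$, for $n \in \mathbb{N}$. On all of $\Omega$ one has the pointwise inequality $\boldsymbol{x} \geq \tfrac{1}{n}\mathbbm{1}_{A_n}$, so monotonicity of the expectation (Lemma~\ref{htdhgtmdg}) gives $0 = E(\boldsymbol{x}) \geq \tfrac{1}{n}P(A_n) \geq 0$, which forces $P(A_n) = 0$ for every $n$. Since $\{\boldsymbol{x} > 0\} = \bigcup_{n \in \mathbb{N}} A_n$, countable subadditivity of $P$ yields $P(\{\boldsymbol{x} > 0\}) \leq \sum_{n} P(A_n) = 0$. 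Together with $\boldsymbol{x} \geq 0$ this shows $\boldsymbol{x} = 0$ almost surely.

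I do not expect any real obstacle, since the statement is entirely standard (and indeed is cited from Ref.~\onlinecite{Gut}); the only point demanding a little care is, in the converse direction, expressing the event $\{\boldsymbol{x} > 0\}$ as the countable union of the measurable level sets $A_n$, so that countable subadditivity of the probability measure can be legitimately invoked.
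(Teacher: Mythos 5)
Your proof is correct. Note, though, that the paper itself does not prove this lemma at all: it is stated as imported background, with the claim attributed to Theorem 4.4 in chapter 2 of Ref.~\onlinecite{Gut}, so there is no internal proof to compare against. Your argument is the standard self-contained one that such a reference would contain: for the forward direction, splitting the integral over the null set $\{\boldsymbol{x}\neq 0\}$ and its complement; for the converse, the Markov-type bound $\boldsymbol{x}\geq \tfrac{1}{n}\mathbbm{1}_{A_n}$ on the level sets $A_n=\{\boldsymbol{x}\geq 1/n\}$, forcing $P(A_n)=0$, followed by writing $\{\boldsymbol{x}>0\}=\bigcup_n A_n$ and invoking countable subadditivity. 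Both steps are sound, the pointwise inequality holds everywhere (not merely a.s.) since $\boldsymbol{x}$ is assumed non-negative, and your appeal to the monotonicity lemma (Lemma~\ref{htdhgtmdg}) is in the correct sense — the displayed inequality in that lemma as printed in the paper has its conclusion reversed, evidently a typo, and you have used the intended direction. The only thing your write-up adds beyond what the paper needs is that it makes the lemma self-contained rather than cited, which is a reasonable strengthening of the exposition.
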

For a random variable $\boldsymbol{x}$, we define the positive and negative components by $\boldsymbol{x}^{+} :=  \max(\boldsymbol{x},0)$, $\boldsymbol{x}^{-} :=  \max(-\boldsymbol{x},0)$.
By this construction, it is the case that $\boldsymbol{x}^{+}\geq  0$, $\boldsymbol{x}^{-}\geq  0$, $\boldsymbol{x} =  \boldsymbol{x}^{+}-\boldsymbol{x}^{-}$.
The expectation value  $E(\boldsymbol{x})$ of a random variable, $\boldsymbol{x}$, is defined as $E(\boldsymbol{x}) = E(\boldsymbol{x}^{+})- E(\boldsymbol{x}^{-})$ if at least one of $E(\boldsymbol{x}^{+})$ and $E(\boldsymbol{x}^{-})$ is finite.
\begin{lemma}
\label{ljhvxrushkl}
If $\boldsymbol{x}$ is a  random variable such that $\boldsymbol{x}\geq  0\,\, a.s.$ and $E(\boldsymbol{x}) =  0$, then $\boldsymbol{x} = 0\,\, a.s.$
\end{lemma}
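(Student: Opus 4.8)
The plan is to reduce the statement to Lemma \ref{ztuztut}, which already handles the case of a random variable that is non-negative \emph{everywhere}. The only gap between the two statements is that here $\boldsymbol{x}$ is merely assumed non-negative almost surely, so the work consists in transferring the hypothesis onto the positive and negative parts $\boldsymbol{x}^{+} := \max(\boldsymbol{x},0)$ and $\boldsymbol{x}^{-} := \max(-\boldsymbol{x},0)$ introduced just above, which are non-negative by construction.

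First I would observe that the hypothesis $\boldsymbol{x}\geq 0\,\, a.s.$ is equivalent to $\boldsymbol{x}^{-} = 0\,\, a.s.$, since $\boldsymbol{x}^{-}(\omega) > 0$ exactly when $\boldsymbol{x}(\omega) < 0$. Applying Lemma \ref{ztuztut} to the (everywhere) non-negative variable $\boldsymbol{x}^{-}$ then yields $E(\boldsymbol{x}^{-}) = 0$. Next, using the defining relation $E(\boldsymbol{x}) = E(\boldsymbol{x}^{+}) - E(\boldsymbol{x}^{-})$ together with the assumption $E(\boldsymbol{x}) = 0$, I would conclude that $E(\boldsymbol{x}^{+}) = 0$. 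Since $\boldsymbol{x}^{+}$ is non-negative everywhere, a second application of Lemma \ref{ztuztut} gives $\boldsymbol{x}^{+} = 0\,\, a.s.$ Finally, combining $\boldsymbol{x}^{+} = 0\,\, a.s.$ and $\boldsymbol{x}^{-} = 0\,\, a.s.$ with the decomposition $\boldsymbol{x} = \boldsymbol{x}^{+} - \boldsymbol{x}^{-}$ yields $\boldsymbol{x} = 0\,\, a.s.$, since the intersection of two probability-one events again has probability one.

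There is no substantial obstacle here; the result is essentially a bookkeeping exercise that upgrades Lemma \ref{ztuztut} from pointwise to almost-sure non-negativity. The only points requiring slight care are keeping the almost-sure qualifiers consistent throughout, checking that $E(\boldsymbol{x})$ is well defined (which holds because $E(\boldsymbol{x}^{-}) = 0 < \infty$, so the difference $E(\boldsymbol{x}^{+}) - E(\boldsymbol{x}^{-})$ is not of the indeterminate form $\infty-\infty$), and noting that finite intersections of null sets remain null, so that the two separate almost-sure conclusions on $\boldsymbol{x}^{+}$ and $\boldsymbol{x}^{-}$ may be combined.
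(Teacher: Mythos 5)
Your proof is correct and follows essentially the same route as the paper's own: decompose into $\boldsymbol{x}^{+}$ and $\boldsymbol{x}^{-}$, use $\boldsymbol{x}\geq 0\,\,a.s.$ to get $\boldsymbol{x}^{-}=0\,\,a.s.$ and hence $E(\boldsymbol{x}^{-})=0$ via Lemma \ref{ztuztut}, deduce $E(\boldsymbol{x}^{+})=0$ from the decomposition of the expectation, and apply Lemma \ref{ztuztut} once more to conclude. The bookkeeping points you flag (well-definedness of $E(\boldsymbol{x})$, combining the two almost-sure events) are handled the same way in the paper.
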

\begin{proof}
Since $\boldsymbol{x}\geq 0$ almost surely, we can conclude that $\boldsymbol{x}^{-} = 0$ almost surely.
Since $\boldsymbol{x}^{-}$ by construction is a non-negative random variable, it follows by Lemma \ref{ztuztut} that  
 $E(\boldsymbol{x}^{-}) = 0$.
We can conclude that $E(\boldsymbol{x})$ is well defined, and $E(\boldsymbol{x}) = E(\boldsymbol{x}^{+}) - E(\boldsymbol{x}^{-})$ and thus $E(\boldsymbol{x}) = 0$ implies
 $E(\boldsymbol{x}^{+}) = E(\boldsymbol{x}^{-}) =0$.
Since $\boldsymbol{x}^{+}$ by construction is non-negative, Lemma \ref{ztuztut} implies $\boldsymbol{x}^{+}= 0\,\, a.s.$
We can thus conclude that $\boldsymbol{x} = \boldsymbol{x}^{+}-\boldsymbol{x}^{-} = 0\,\, a.s.$
\end{proof}

\subsection{Stochastic convergence of real-valued sequences}

A sequence of  random variables $(\boldsymbol{x}_N)_{N\in\mathbb{N}}$ is said to converge almost surely to a random variable $\boldsymbol{x}_{\infty}$ (denoted $\lim_{N\rightarrow\infty}\boldsymbol{x}_N = \boldsymbol{x}_{\infty}\,\,a.s.$)  if
\begin{equation}
\label{fdbsfbsf}
P(\{\omega\in\Omega:\lim_{N\rightarrow\infty}\boldsymbol{x}_N(\omega) = \boldsymbol{x}_{\infty}(\omega)\}) = 1.
\end{equation}
As mentioned above, $\omega$ is an element of the underlying sample space, $\Omega$, and $\boldsymbol{x}_N(\omega)$ is a specific realization of the stochastic process. (If one thinks of an infinite sequence of coin-tosses, then $\boldsymbol{x}_N$ vaguely stands for all possible sequences of coin-tosses, while $\boldsymbol{x}_N(\omega)$ means a specific sequence of heads and tails.)
What (\ref{fdbsfbsf}) essentially says is that if we look at the set of all sequences $\boldsymbol{x}_N(\omega)$ and $\boldsymbol{x}_{\infty}(\omega)$, such that $\boldsymbol{x}_N(\omega)$ actually do converge to $\boldsymbol{x}_{\infty}(\omega)$, then this set has probability $1$.

In these derivations, we will often start with a process that converges $\lim_{N\rightarrow\infty}\boldsymbol{x}_N = \boldsymbol{x}_{\infty}$ almost surely, but we want to show that $\lim_{N\rightarrow\infty}E(\boldsymbol{x}_N) = E(\boldsymbol{x}_{\infty})$. 
This is not generally true, but as a consequence of Lebesgues dominated convergence theorem (see e.g., Theorem 5.3 in chapter 2 of Ref.~\onlinecite{Gut}) we have the following.
\begin{prop}
\label{Lebesgues}
Suppose that $\boldsymbol{x}_N$, $\boldsymbol{x}_{\infty}$ and $\boldsymbol{y}$ are random variables such that 
$|\boldsymbol{x}_N|\leq \boldsymbol{y}$ for all $N$, where $E(\boldsymbol{y}) < +\infty$, and that $\boldsymbol{x}_N\rightarrow \boldsymbol{x}_{\infty}\,\, a.s.$, then 
\begin{equation}
\lim_{N\rightarrow\infty}E(\boldsymbol{x}_N)=  E(\boldsymbol{x}_{\infty}).
\end{equation}
\end{prop}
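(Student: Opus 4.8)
The plan is to recognize that this proposition is, up to wording, Lebesgue's dominated convergence theorem itself, and to present a short self-contained derivation from Fatou's lemma. The domination hypothesis $|\boldsymbol{x}_N|\leq\boldsymbol{y}$ for all $N$, together with $E(\boldsymbol{y})<+\infty$, is precisely what guarantees that all the expectations appearing below are finite, so that no indeterminate $\infty-\infty$ can arise. As a preliminary observation, taking the almost-sure limit in $|\boldsymbol{x}_N|\leq\boldsymbol{y}$ yields $|\boldsymbol{x}_\infty|\leq\boldsymbol{y}$ almost surely, whence $E(\boldsymbol{x}_\infty)$ is well defined and finite.

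First I would introduce the two non-negative sequences $\boldsymbol{y}+\boldsymbol{x}_N\geq 0$ and $\boldsymbol{y}-\boldsymbol{x}_N\geq 0$, both legitimate by the domination bound. Applying Fatou's lemma to $(\boldsymbol{y}+\boldsymbol{x}_N)_{N}$ and using the almost-sure convergence $\boldsymbol{x}_N\rightarrow\boldsymbol{x}_\infty$ to identify $\liminf_N(\boldsymbol{y}+\boldsymbol{x}_N)=\boldsymbol{y}+\boldsymbol{x}_\infty$ almost surely, one obtains
\[
E(\boldsymbol{y})+E(\boldsymbol{x}_\infty)\leq\liminf_N E(\boldsymbol{y}+\boldsymbol{x}_N)=E(\boldsymbol{y})+\liminf_N E(\boldsymbol{x}_N).
\]
Cancelling the finite quantity $E(\boldsymbol{y})$ leaves $E(\boldsymbol{x}_\infty)\leq\liminf_N E(\boldsymbol{x}_N)$. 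Running the identical argument on $(\boldsymbol{y}-\boldsymbol{x}_N)_{N}$ gives $E(\boldsymbol{y})-E(\boldsymbol{x}_\infty)\leq E(\boldsymbol{y})-\limsup_N E(\boldsymbol{x}_N)$, i.e.~$\limsup_N E(\boldsymbol{x}_N)\leq E(\boldsymbol{x}_\infty)$. Chaining the two inequalities forces $\limsup_N E(\boldsymbol{x}_N)\leq E(\boldsymbol{x}_\infty)\leq\liminf_N E(\boldsymbol{x}_N)$, so that $\lim_N E(\boldsymbol{x}_N)$ exists and equals $E(\boldsymbol{x}_\infty)$, which is the claim.

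I do not expect any genuine obstacle here, as the result is entirely standard; the only point requiring a little care is that the convergence hypothesis is merely almost sure rather than pointwise everywhere. This causes no trouble, because both Fatou's lemma and the expectation functional are insensitive to behaviour on a null set, so the identification of $\liminf_N(\boldsymbol{y}+\boldsymbol{x}_N)$ with $\boldsymbol{y}+\boldsymbol{x}_\infty$ need only hold almost surely. Since the statement coincides essentially verbatim with the dominated convergence theorem, one may equally well simply invoke the standard result, e.g.~Theorem 5.3 in chapter 2 of Ref.~\onlinecite{Gut}, and dispense with the derivation altogether; the Fatou-based argument above is included only to keep the appendix self-contained for readers who prefer to see the reduction explicitly.
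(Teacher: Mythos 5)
Your proof is correct. Note, however, that the paper does not actually prove this proposition: it is stated as a direct consequence of Lebesgue's dominated convergence theorem, with a pointer to Theorem 5.3 in chapter 2 of the probability textbook it cites (Gut) --- precisely the fallback you offer in your closing paragraph. Your Fatou-based argument is the standard textbook derivation of dominated convergence and is carried out correctly: the domination $|\boldsymbol{x}_N|\leq \boldsymbol{y}$ with $E(\boldsymbol{y})<+\infty$ makes $\boldsymbol{y}+\boldsymbol{x}_N$ and $\boldsymbol{y}-\boldsymbol{x}_N$ non-negative with finite expectations, the almost-sure convergence identifies the relevant $\liminf$ of the integrands up to a null set (which Fatou's lemma tolerates, as you correctly remark), and cancelling the finite $E(\boldsymbol{y})$ yields the two one-sided inequalities
\begin{equation}
\limsup_{N}E(\boldsymbol{x}_N)\leq E(\boldsymbol{x}_{\infty})\leq \liminf_{N}E(\boldsymbol{x}_N),
\end{equation}
which squeeze the limit to $E(\boldsymbol{x}_{\infty})$. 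You also correctly note the preliminary point that $|\boldsymbol{x}_{\infty}|\leq\boldsymbol{y}$ a.s., so $E(\boldsymbol{x}_{\infty})$ is finite and no $\infty-\infty$ cancellation issue arises. The only difference between your route and the paper's is one of packaging: your version is self-contained, whereas the paper simply invokes the standard result, which is consistent with how it treats the other probabilistic ingredients in its appendices (martingale convergence, Borel--Cantelli, Fekete's lemma). Either is acceptable; nothing in your argument is missing or incorrect.
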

In the special case that $\boldsymbol{y}$ is equal to a constant $C$, one obtains the following special case (sometimes referred to as the bounded convergence theorem).
\begin{prop}
\label{fgsbsfgnfg}
Suppose that $\boldsymbol{x}_N$, $\boldsymbol{x}_{\infty}$ are random variables, and there exists a constant $C<+\infty$, such that $|\boldsymbol{x}_N|\leq C$ for all $N$. If $\boldsymbol{x}_N\rightarrow \boldsymbol{x}_{\infty}\,\, a.s.$, then
\begin{equation}
\lim_{N\rightarrow\infty}E(\boldsymbol{x}_N) = E(\boldsymbol{x}_{\infty}).
\end{equation}
\end{prop}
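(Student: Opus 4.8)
The plan is to recognize Proposition \ref{fgsbsfgnfg} as nothing more than the special case of the dominated convergence theorem (Proposition \ref{Lebesgues}) in which the dominating random variable is taken to be constant. First I would introduce $\boldsymbol{y} := C$, by which I mean the random variable that takes the value $C$ identically on the underlying sample space $\Omega$. Because we are working over a probability space, whose total measure is one, the expectation of a constant random variable equals that constant, so $E(\boldsymbol{y}) = C < +\infty$. This secures the integrability hypothesis of Proposition \ref{Lebesgues}.

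Next I would simply check that the remaining hypotheses of Proposition \ref{Lebesgues} are met. By assumption we have the uniform bound $|\boldsymbol{x}_N|\leq C = \boldsymbol{y}$ for every $N$, which is precisely the required domination $|\boldsymbol{x}_N|\leq \boldsymbol{y}$ (here even pointwise everywhere, hence certainly almost surely). We are moreover given that $\boldsymbol{x}_N\rightarrow \boldsymbol{x}_{\infty}\,\, a.s.$ With all three conditions of Proposition \ref{Lebesgues} in place, I would invoke it directly to conclude $\lim_{N\rightarrow\infty}E(\boldsymbol{x}_N) = E(\boldsymbol{x}_{\infty})$, which is the claim.

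I expect no genuine obstacle, as the statement is an immediate corollary once the constant dominating function is exhibited; the only subtlety worth flagging is that the finiteness $E(\boldsymbol{y}) = C < +\infty$ crucially uses the fact that we are on a probability space rather than a general (possibly infinite) measure space, on which a nonzero constant would fail to be integrable. Since the present setting is probabilistic throughout, this point is automatic, and the proof reduces to the two verifications above.
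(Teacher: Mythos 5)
Your proof is correct and follows exactly the route the paper takes: the paper introduces Proposition \ref{fgsbsfgnfg} precisely as the special case of the dominated convergence theorem (Proposition \ref{Lebesgues}) with the dominating variable $\boldsymbol{y}$ set equal to the constant $C$. Your additional remark that $E(\boldsymbol{y}) = C < +\infty$ relies on the underlying measure being a probability measure is a correct and worthwhile clarification, but it does not change the argument.
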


The following lemma is a consequence of the Borel-Cantelli Lemma (and is included in Theorem 3.1 in chapter 5 of Ref.~\onlinecite{Gut}).
\begin{lemma}
 \label{CompleteConvergence}
 Let $(\boldsymbol{x}_n)_{n\in\mathbb{N}}$ and $\boldsymbol{x}$ be random variables such that $\sum_{n=1}^{\infty}P(|\boldsymbol{x}_n-\boldsymbol{x}|>\epsilon) < +\infty$ for all  $\epsilon >0$ (sometimes referred to as complete convergence of $(\boldsymbol{x}_n)_{n\in\mathbb{N}}$ to $\boldsymbol{x}$). Then, $(\boldsymbol{x}_n)_{n\in\mathbb{N}}$ converges almost surely to $\boldsymbol{x}$.
\end{lemma}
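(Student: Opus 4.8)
The plan is to obtain the statement as a direct application of the first Borel--Cantelli lemma, which asserts that for any sequence of events $(A_n)_{n\in\mathbb{N}}$ with $\sum_{n} P(A_n) < +\infty$ one has $P(\limsup_n A_n) = 0$; in words, almost surely only finitely many of the $A_n$ occur. No independence assumption is needed for this direction, which is fortunate since the hypotheses here supply none.

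First I would fix an arbitrary $\epsilon > 0$ and set $A_n^{\epsilon} := \{\,|\boldsymbol{x}_n-\boldsymbol{x}| > \epsilon\,\}$. The complete-convergence hypothesis is precisely that $\sum_n P(A_n^{\epsilon}) < +\infty$, so Borel--Cantelli yields $P(\limsup_n A_n^{\epsilon}) = 0$. Writing out $\limsup_n A_n^{\epsilon} = \bigcap_{m\in\mathbb{N}}\bigcup_{n\geq m} A_n^{\epsilon}$, this says that, with probability one, there exists an $m$ such that $|\boldsymbol{x}_n-\boldsymbol{x}| \leq \epsilon$ for every $n\geq m$. Equivalently, $\limsup_{n} |\boldsymbol{x}_n - \boldsymbol{x}| \leq \epsilon$ holds almost surely.

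The second step promotes this $\epsilon$-statement to genuine convergence by exhausting $\epsilon$ through a countable sequence. I would take $\epsilon_k := 1/k$ and let $B_k := \{\limsup_n |\boldsymbol{x}_n-\boldsymbol{x}| > 1/k\}$; the first step, applied with $\epsilon = 1/k$, gives $P(B_k) = 0$ for each $k$. Since the set on which $(\boldsymbol{x}_n)$ fails to converge to $\boldsymbol{x}$ is exactly $\{\limsup_n |\boldsymbol{x}_n-\boldsymbol{x}| > 0\} = \bigcup_{k} B_k$, and since a countable union of null sets is null by countable subadditivity of $P$, the failure set has probability zero. Hence $\lim_{n\rightarrow\infty}\boldsymbol{x}_n = \boldsymbol{x}$ almost surely, which is the claim.

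I do not anticipate a genuine obstacle: the argument is essentially bookkeeping with the $\limsup$ of events. The one point deserving care is the passage from the uncountable family $\{A_n^{\epsilon}\}_{\epsilon>0}$ to the countable subfamily indexed by $\epsilon_k = 1/k$. This reduction is legitimate because $A_n^{\epsilon}$ is monotone in $\epsilon$ and because a measure is only countably (not uncountably) subadditive; the monotonicity is exactly what makes the countable union $\bigcup_k B_k$ capture the entire failure event, so that no information is lost in discretizing $\epsilon$.
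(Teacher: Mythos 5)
Your proof is correct and follows exactly the route the paper indicates: the paper states this lemma without a written proof, simply citing it as a consequence of the (first) Borel--Cantelli lemma and referring to Theorem 3.1 in chapter 5 of Gut's textbook. Your argument---applying Borel--Cantelli to the events $\{|\boldsymbol{x}_n-\boldsymbol{x}|>\epsilon\}$ and then discretizing $\epsilon$ via $\epsilon_k=1/k$ using countable subadditivity---is the standard way to fill in that cited result, so it matches the paper's approach.
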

The above can be used to obtain the following.
\begin{lemma}
\label{dfknbfdknl}
Let $(\boldsymbol{r}_N)_{N\in\mathbb{N}}$ be a sequence of  random variables, such that $\boldsymbol{r}_N\geq 0$, and such that the expectations values $E(\boldsymbol{r}_N)$ exist and are finite. Suppose that there exists a number $R$ such that $\lim_{k\rightarrow\infty}E\Big(\sum_{N=1}^k\boldsymbol{r}_N\Big) = R < +\infty$, then $\lim_{N\rightarrow\infty}\boldsymbol{r}_N = 0\,\,\, a.s.$
\end{lemma}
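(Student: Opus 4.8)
The plan is to combine Markov's inequality with the complete-convergence criterion of Lemma \ref{CompleteConvergence}. First I would observe that, since each $\boldsymbol{r}_N$ is non-negative, linearity of the expectation on finite sums gives $E\big(\sum_{N=1}^{k}\boldsymbol{r}_N\big) = \sum_{N=1}^{k}E(\boldsymbol{r}_N)$, so the hypothesis $\lim_{k\rightarrow\infty}E\big(\sum_{N=1}^{k}\boldsymbol{r}_N\big) = R$ is equivalent to the statement that the series of expectations converges, $\sum_{N=1}^{\infty}E(\boldsymbol{r}_N) = R < +\infty$.

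Next, for any fixed $\epsilon > 0$, I would apply Markov's inequality to the non-negative random variable $\boldsymbol{r}_N$, yielding $P(\boldsymbol{r}_N > \epsilon)\leq E(\boldsymbol{r}_N)/\epsilon$. Summing over $N$ and using the previous step gives $\sum_{N=1}^{\infty}P(\boldsymbol{r}_N > \epsilon)\leq R/\epsilon < +\infty$ for every $\epsilon > 0$. Since $\boldsymbol{r}_N\geq 0$ we have $|\boldsymbol{r}_N - 0| = \boldsymbol{r}_N$, so this is precisely the complete-convergence condition of Lemma \ref{CompleteConvergence} applied to the sequence $(\boldsymbol{r}_N)_{N\in\mathbb{N}}$ and the constant limit $\boldsymbol{x} = 0$. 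That lemma then delivers $\lim_{N\rightarrow\infty}\boldsymbol{r}_N = 0$ almost surely, as claimed.

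There is essentially no serious obstacle here: the argument is a direct application of Borel--Cantelli, packaged as Lemma \ref{CompleteConvergence}. The only point requiring a little care is the non-negativity of $\boldsymbol{r}_N$, which is used both to invoke Markov's inequality in the clean form above and to identify the event $\{\boldsymbol{r}_N > \epsilon\}$ with $\{|\boldsymbol{r}_N - 0| > \epsilon\}$ when matching the hypothesis of Lemma \ref{CompleteConvergence}. One may also note that the finiteness of each $E(\boldsymbol{r}_N)$, together with non-negativity, is what guarantees that the partial-sum expectations are genuine finite quantities whose limit equals $\sum_N E(\boldsymbol{r}_N)$, so that the bound $R/\epsilon$ is meaningful.
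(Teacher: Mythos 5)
Your proof is correct and follows essentially the same route as the paper's: Markov's inequality applied to each $\boldsymbol{r}_N$, summation of the resulting tail bounds using the convergent series of expectations, and then the complete-convergence criterion of Lemma \ref{CompleteConvergence} to conclude almost sure convergence to zero. The only cosmetic difference is that you invoke linearity of expectation directly to identify $R$ with $\sum_N E(\boldsymbol{r}_N)$, whereas the paper reaches the same bound by keeping the expectation of partial sums intact; both yield the identical estimate $\sum_N P(\boldsymbol{r}_N>\epsilon)\leq R/\epsilon$.
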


\begin{proof}
We first want to note that if $(a_N)_{N\in\mathbb{N}}$ is a sequence of real numbers such that $a_N\geq 0$, and if $A_k := \sum_{N=1}^ka_N$ is such that $\lim_{k\rightarrow\infty}A_k = R <+\infty$, then we have $\lim_{N\rightarrow\infty}a_N = 0$.
With $a_N :=  E(\boldsymbol{r}_N)$, and $A_k :=\sum_{N=1}^{k}a_N = E\Big(\sum_{N=1}^{k}\boldsymbol{r}_N\Big)$,  it thus follows, by the assumptions of the lemma, that  $\lim_{N\rightarrow\infty}E(\boldsymbol{r}_N)= \lim_{N\rightarrow\infty}a_N = 0$.
By assumption, $\boldsymbol{r}_N\geq 0$ and $E(\boldsymbol{r}_N)$ are well defined and finite. Hence, by Markov's inequality, it follows that $P(\boldsymbol{r}_N>\epsilon) \leq E(\boldsymbol{r}_N)/\epsilon$ for all $\epsilon>0$.
Consequently, $\sum_{N=1}^kP(\boldsymbol{r}_N>\epsilon) \leq   1/\epsilon E\Big(\sum_{N=1}^k\boldsymbol{r}_N\Big)$, 
and thus 
\begin{equation}
\begin{split}
\sum_{N=1}^{\infty}P(|\boldsymbol{r}_N|>\epsilon) \leq   & \frac{1}{\epsilon} \lim_{N\rightarrow\infty}E\left(\sum_{N=1}^k\boldsymbol{r}_N\right)
=  \dfrac{R}{\epsilon} <+\infty,\quad \forall \epsilon>0.
\end{split}
\end{equation}
Hence, $(\boldsymbol{r}_N)_{N\in\mathbb{N}}$ converges completely to $0$. By Lemma \ref{CompleteConvergence}, we can conclude that 
$(\boldsymbol{r}_N)_{N\in\mathbb{N}}$ converges almost surely to $0$.
\end{proof}

\subsection{Stochastic convergence of operator-valued sequences}

Convergence of various sequences of operators play an important role in this investigation. Since we here exclusively will deal with finite-dimensional spaces, one may argue that the distinction between `random variables' and `random operators' is not very dramatic. For the sake of clarity, we will nevertheless throughout these derivations make a distinction of the these two types and, to further this, we will use small bold letters, $\boldsymbol{x}$, $\boldsymbol{y}$, etc to denote random variables, while capital bold letters $\boldsymbol{X}$, $\boldsymbol{Y}$, etc denote random operators.

Here, we briefly recall that, on finite-dimensional Hilbert spaces, all norms are metrically equivalent. Due to the metrical equivalence in finite dimensions (see e.g., Corollary 5.4.5 in Ref.~\onlinecite{HornJohnson}), we do not need to make a distinction between different norms when we discuss convergences of sequences of operators, and we can equivalently consider the element-wise convergence of the elements of the matrix-representation in some arbitrary basis.  In what follows we will switch between these equivalent manifestations of convergence without any further comments. For the operator norms, we will mainly be using the supremum norm, $\Vert O\Vert := \sup_{\Vert\psi\Vert =1}\Vert O|\psi\rangle\Vert$, and the trace-norm, $\Vert O\Vert_{1} := \Tr\sqrt{O^{\dagger}O}$, but also the Hilbert-Schmidt norm, $\Vert O\Vert_2 := \sqrt{\Tr(O^{\dagger}O)}$.

Let us now consider a sequence of random operators $(\boldsymbol{X}_N)_{N\in\mathbb{N}}$ and $\boldsymbol{X}_{\infty}$ on a finite-dimensional Hilbert space.  We interpret the convergence $\boldsymbol{X}_N\rightarrow\boldsymbol{X}_{\infty}\,\, a.s.$ as 
\begin{equation}
\lim_{N\rightarrow\infty}\Vert \boldsymbol{X}_N-\boldsymbol{X}_{\infty}\Vert = 0\quad a.s.,
\end{equation}
or equivalently for any other operator norm (since the underlying Hilbert space is finite-dimensional),  
or as
\begin{equation}
\lim_{N\rightarrow\infty}\langle k|\boldsymbol{X}_N|k'\rangle = \langle k|\boldsymbol{X}_{\infty}|k'\rangle\quad a.s.\quad \forall  k,k' = 1,\ldots, D. 
\end{equation}

The following is a counterpart of Proposition \ref{fgsbsfgnfg}, which can be obtained by applying  Proposition \ref{fgsbsfgnfg} to the real and imaginary matrix components with respect to a basis, i.e., $\Real\langle k|\boldsymbol{X}_N|k'\rangle$ and $\Imag\langle k|\boldsymbol{X}_N|k'\rangle$.

\begin{prop}
\label{dafbadfb}
Suppose that $\boldsymbol{X}_N$ and $\boldsymbol{X}_{\infty}$ are random operators on a complex Hilbert space with finite dimension, and that there exists a constant $C<+\infty$, such that $\Vert \boldsymbol{X}_N\Vert\leq C$ for all $N$. If $\boldsymbol{X}_N\rightarrow \boldsymbol{X}_{\infty}\,\, a.s.$, then
\begin{equation}
\lim_{N\rightarrow\infty}E(\boldsymbol{X}_N) = E(\boldsymbol{X}_{\infty}).
\end{equation}
\end{prop}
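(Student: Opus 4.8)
The plan is to reduce the operator-valued statement to the scalar bounded convergence theorem already established in Proposition \ref{fgsbsfgnfg}, exactly as indicated in the remark preceding the statement. First I would fix an orthonormal basis $\{|k\rangle\}_{k=1}^{D}$ of the (finite-dimensional) Hilbert space and pass to the matrix elements $\langle k|\boldsymbol{X}_N|k'\rangle$, which are complex-valued random variables. Since Proposition \ref{fgsbsfgnfg} is phrased for real-valued random variables, I would further split each matrix element into its real and imaginary parts, $\Real\langle k|\boldsymbol{X}_N|k'\rangle$ and $\Imag\langle k|\boldsymbol{X}_N|k'\rangle$, and treat these $2D^2$ scalar sequences separately.

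Next I would verify the two hypotheses of Proposition \ref{fgsbsfgnfg} for each of these scalar sequences. For the uniform bound, the Cauchy–Schwarz inequality gives $|\langle k|\boldsymbol{X}_N|k'\rangle| \leq \Vert\boldsymbol{X}_N|k'\rangle\Vert \leq \Vert\boldsymbol{X}_N\Vert \leq C$, so both the real and imaginary parts are bounded in absolute value by $C$ uniformly in $N$. For the almost-sure convergence, I would invoke the metric equivalence of norms in finite dimensions: the assumed operator-norm convergence $\boldsymbol{X}_N\to\boldsymbol{X}_{\infty}\,\,a.s.$ is equivalent to the element-wise convergence $\langle k|\boldsymbol{X}_N|k'\rangle \to \langle k|\boldsymbol{X}_{\infty}|k'\rangle\,\,a.s.$, and hence the real and imaginary parts converge almost surely to those of $\langle k|\boldsymbol{X}_{\infty}|k'\rangle$.

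Having checked the hypotheses, applying Proposition \ref{fgsbsfgnfg} to each scalar sequence yields $\lim_{N\rightarrow\infty}E\big(\Real\langle k|\boldsymbol{X}_N|k'\rangle\big) = E\big(\Real\langle k|\boldsymbol{X}_{\infty}|k'\rangle\big)$, and likewise for the imaginary parts. Recombining the real and imaginary parts and using linearity of the expectation, so that $E$ commutes with the matrix-element functional, $\langle k|E(\boldsymbol{X}_N)|k'\rangle = E\big(\langle k|\boldsymbol{X}_N|k'\rangle\big)$, I would obtain $\lim_{N\rightarrow\infty}\langle k|E(\boldsymbol{X}_N)|k'\rangle = \langle k|E(\boldsymbol{X}_{\infty})|k'\rangle$ for every $k,k'$. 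Finally, element-wise convergence of the matrices $E(\boldsymbol{X}_N)$ to $E(\boldsymbol{X}_{\infty})$ is, once more by the equivalence of norms in finite dimensions, the same as $\lim_{N\rightarrow\infty}E(\boldsymbol{X}_N) = E(\boldsymbol{X}_{\infty})$, which is the claim.

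The argument is essentially a bookkeeping reduction, so there is no deep obstacle. The only points that require care are the clean justification that the expectation of a random operator is taken entrywise and therefore commutes with the (linear) matrix-element map, and the repeated appeal to the finite-dimensional equivalence of norms to move freely between operator-norm convergence and entrywise convergence — both standard, but worth stating explicitly to keep the proof self-contained.
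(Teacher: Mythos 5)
Your proof is correct and follows exactly the route the paper itself indicates: reducing to the scalar bounded convergence theorem (Proposition \ref{fgsbsfgnfg}) applied to $\Real\langle k|\boldsymbol{X}_N|k'\rangle$ and $\Imag\langle k|\boldsymbol{X}_N|k'\rangle$ in a fixed basis, with the uniform bound supplied by $\Vert\boldsymbol{X}_N\Vert\leq C$ and the entrywise a.s.\ convergence supplied by equivalence of norms in finite dimension. Your write-up merely makes explicit the bookkeeping (Cauchy--Schwarz bound, linearity of $E$ commuting with matrix elements) that the paper leaves implicit, so there is nothing to correct.
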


\subsection{Martingales}

Our primary interest in martingales is that they allow for statements concerning the stochastic convergence of sequences of random variables.  However, in order to connect to the manner that these convergence-theorems typically are phrased in the literature, we need to briefly discuss some technical concepts. (For a more thorough introduction, see, e.g., chapter 10 in Ref.~\onlinecite{Gut}.)

Consider a sequence of random variables $(\boldsymbol{y}_N)_{N\in\mathbb{N}}$ on a probability space $(\Omega,\mathcal{F},P)$, where  $\Omega$ is the sample space, $\mathcal{F}$ is a $\sigma$-algebra (the event space), and $P$ a probability measure.  We also consider a filtration, i.e.,  a non-decreasing sequence of $\sigma$-subalgebras $\mathcal{F}_1\subset \mathcal{F}_{2}\subset\cdots \subset\mathcal{F}$.  A sequence $(\boldsymbol{y}_N)_{N\in\mathbb{N}}$ of random variables is said to be adapted to $(\mathcal{F}_N)_{N\in\mathbb{N}}$ if each $\boldsymbol{y}_N$ is measurable with respect to $\mathcal{F}_N$. A sequence $(\boldsymbol{y}_N)_{N\in\mathbb{N}}$ is a martingale with respect to $(\mathcal{F}_N)_{N\in\mathbb{N}}$  if  $(\boldsymbol{y}_N)_{N\in\mathbb{N}}$ is adapted to $(\mathcal{F}_N)_{N\in\mathbb{N}}$, satisfies $E(\boldsymbol{y}_{N+1}|\mathcal{F}_N) = \boldsymbol{y}_N\,\, a.s.$, as well as $E(|\boldsymbol{y}_N|) < +\infty$.  Intuitively, $\mathcal{F}_N$ stands for the information available to us at step $N$. In our setting, this information corresponds to variables $\boldsymbol{x}_1,\ldots,\boldsymbol{x}_N$ (which are assumed to also be random variables on the same underlying probability space  $(\Omega,\mathcal{F},P)$). More precisely,  $\mathcal{F}_N := \sigma(\boldsymbol{x}_N,\ldots,\boldsymbol{x}_1)$, which denotes the $\sigma$-algebra generated by  $\boldsymbol{x}_1,\ldots,\boldsymbol{x}_N$ and often is referred to as the natural filtration of $\boldsymbol{x}_1,\ldots,\boldsymbol{x}_N$. Since in the following we exclusively will use the natural filtrations, we will employ the more succinct notation 
$E(\boldsymbol{y}_{N+1}|\boldsymbol{x}_N,\ldots,\boldsymbol{x}_1) := E(\boldsymbol{y}_{N+1}|\mathcal{F}_N)$, with $\mathcal{F}_N := \sigma(\boldsymbol{x}_1,\ldots,\boldsymbol{x}_N)$. We moreover say that  $(\boldsymbol{y}_N)_{N\in\mathbb{N}}$ is a martingale with respect to $(\boldsymbol{x}_N)_{N\in\mathbb{N}}$ if
\begin{equation}
\label{InformalMartingale}
\begin{split}
    E(\boldsymbol{y}_{N+1}|\boldsymbol{x}_N,\ldots,\boldsymbol{x}_1) = & \boldsymbol{y}_N\quad a.s., \quad\text{and}\quad E(|\boldsymbol{y}_N|) < +\infty,\\
    \text{with}\quad\boldsymbol{y}_N = & f_N(\boldsymbol{x}_N,\ldots,\boldsymbol{x}_1),
\end{split}
\end{equation}
for (Borel measurable) functions $f_N$. The construction with the functions $f_N$ guarantees that $(\boldsymbol{y}_N)_{N\in\mathbb{N}}$ is adapted to the natural filtration of $(\boldsymbol{x}_N)_{N\in\mathbb{N}}$. The following proposition is obtained as a special case of Theorem 12.1 in chapter 10 of Ref.~\onlinecite{Gut}.

\begin{prop}
\label{bsdgnsfgn}
Let $(\boldsymbol{y}_N)_{N\in\mathbb{N}}$ be a martingale with respect to another process $(\boldsymbol{x}_N)_{N\in\mathbb{N}}$, and suppose that there exists a real number $C$ such that 
$| \boldsymbol{y}_N| \leq C$ for all $N\in\mathbb{N}$,
then there exists a random variable $\boldsymbol{y}_{\infty}$ such that 
\begin{equation}
\lim_{N\rightarrow\infty}\boldsymbol{y}_N = \boldsymbol{y}_{\infty}\quad a.s.\quad\textrm{and}\quad
E(|\boldsymbol{y}_{\infty}|)< +\infty.
\end{equation}
\end{prop}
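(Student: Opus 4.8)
The plan is to prove the standard bounded martingale convergence theorem via Doob's upcrossing inequality. The first observation is that the uniform bound $|\boldsymbol{y}_N|\leq C$ immediately gives $L^1$-boundedness, $E(|\boldsymbol{y}_N|)\leq C<+\infty$ for all $N$, which is the only integrability input the argument requires. The strategy is then to control how often the sample paths can oscillate, and to deduce from finite oscillation that the paths converge almost surely.

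The core step is the upcrossing estimate. Fix two rationals $a<b$ and, for the finite sequence $\boldsymbol{y}_1,\ldots,\boldsymbol{y}_N$, let $\boldsymbol{U}_N([a,b])$ denote the number of completed upcrossings of the interval $[a,b]$. To bound its expectation, I would introduce the predictable $\{0,1\}$-valued ``betting'' process $\boldsymbol{H}_k$ that switches on as soon as the path drops below $a$ and switches off as soon as it subsequently rises above $b$; by this recursive construction $\boldsymbol{H}_k$ is $\sigma(\boldsymbol{x}_1,\ldots,\boldsymbol{x}_{k-1})$-measurable. The associated martingale transform $(\boldsymbol{H}\cdot\boldsymbol{y})_N=\sum_{k=2}^N \boldsymbol{H}_k(\boldsymbol{y}_k-\boldsymbol{y}_{k-1})$ is itself a martingale started at $0$, so $E\big((\boldsymbol{H}\cdot\boldsymbol{y})_N\big)=0$. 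Counting contributions shows each completed upcrossing adds at least $(b-a)$, while the single possibly-incomplete final excursion subtracts at most $(\boldsymbol{y}_N-a)^-$, yielding $(\boldsymbol{H}\cdot\boldsymbol{y})_N\geq (b-a)\boldsymbol{U}_N([a,b])-(\boldsymbol{y}_N-a)^-$. Taking expectations gives Doob's upcrossing inequality $(b-a)\,E\big(\boldsymbol{U}_N([a,b])\big)\leq E\big((\boldsymbol{y}_N-a)^-\big)\leq C+|a|$.

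With the upcrossing inequality in hand, convergence follows by a limiting argument. Letting $N\to\infty$, monotone convergence gives $E\big(\boldsymbol{U}_\infty([a,b])\big)\leq (C+|a|)/(b-a)<+\infty$, so $\boldsymbol{U}_\infty([a,b])<+\infty$ almost surely. The event that $\boldsymbol{y}_N$ fails to converge in $[-\infty,+\infty]$ equals $\{\liminf_N\boldsymbol{y}_N<\limsup_N\boldsymbol{y}_N\}$, which is contained in the countable union over rationals $a<b$ of the sets $\{\liminf_N\boldsymbol{y}_N<a<b<\limsup_N\boldsymbol{y}_N\}\subseteq\{\boldsymbol{U}_\infty([a,b])=+\infty\}$. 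Since each such set is null and a countable union of null sets is null, $\boldsymbol{y}_N$ converges almost surely to some limit $\boldsymbol{y}_\infty$. Finally, the uniform bound $|\boldsymbol{y}_N|\leq C$ passes to the limit to give $|\boldsymbol{y}_\infty|\leq C$ almost surely, so $\boldsymbol{y}_\infty$ is finite almost surely and $E(|\boldsymbol{y}_\infty|)\leq C<+\infty$ (one may alternatively invoke Fatou's lemma here).

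The main obstacle is the rigorous verification of the upcrossing inequality, specifically that the transform $(\boldsymbol{H}\cdot\boldsymbol{y})_N$ has vanishing expectation. This rests on the fact that a bounded predictable transform of a martingale is again a martingale, a consequence of pulling the $\mathcal{F}_{k-1}$-measurable factor $\boldsymbol{H}_k$ out of the conditional expectation, $E\big(\boldsymbol{H}_k(\boldsymbol{y}_k-\boldsymbol{y}_{k-1})\mid\mathcal{F}_{k-1}\big)=\boldsymbol{H}_k\,E\big(\boldsymbol{y}_k-\boldsymbol{y}_{k-1}\mid\mathcal{F}_{k-1}\big)=0$, together with a careful bookkeeping of the partial excursions that define $\boldsymbol{U}_N$. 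Everything else is either the definition of a martingale already recorded above or the elementary monotone-limit facts used repeatedly in this appendix.
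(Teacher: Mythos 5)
Your proof is correct, but it takes a genuinely different route from the paper, for the simple reason that the paper does not prove this proposition at all: it is quoted as a special case of Theorem 12.1 in chapter 10 of Ref.~\onlinecite{Gut} (a convergence theorem for uniformly integrable martingales), together with the one-line remark that the hypothesis $|\boldsymbol{y}_N|\leq C$ implies uniform integrability. You instead give the classical self-contained argument: Doob's upcrossing inequality via a predictable $\{0,1\}$-valued transform with vanishing expectation, finiteness of the total number of upcrossings by monotone convergence, and a countable union over rational pairs $a<b$ to exclude oscillation, after which the uniform bound passes to the limit. Both are sound. The paper's citation is shorter and fits its stated aim of merely collecting standard martingale facts in a language accessible to its audience; your argument is self-contained and actually delivers more: the upcrossing argument needs only $\sup_N E(|\boldsymbol{y}_N|)<+\infty$ (strictly weaker than uniform integrability) for almost sure convergence, and the uniform bound then gives $|\boldsymbol{y}_\infty|\leq C$ almost surely, which is stronger than the stated $E(|\boldsymbol{y}_\infty|)<+\infty$. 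The only delicate step, which you flag and handle correctly, is the identity $E\big(\boldsymbol{H}_k(\boldsymbol{y}_k-\boldsymbol{y}_{k-1})\mid\mathcal{F}_{k-1}\big)=\boldsymbol{H}_k\,E\big(\boldsymbol{y}_k-\boldsymbol{y}_{k-1}\mid\mathcal{F}_{k-1}\big)=0$, which requires $\boldsymbol{H}_k$ to be $\mathcal{F}_{k-1}$-measurable; this holds in your construction because $\boldsymbol{H}_k$ is determined by $\boldsymbol{y}_1,\ldots,\boldsymbol{y}_{k-1}$, which are adapted to $\sigma(\boldsymbol{x}_1,\ldots,\boldsymbol{x}_{k-1})$ by the definition of a martingale used in this appendix.
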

As a technical remark concerning the relation to Theorem 12.1 in chapter 10 of Ref.~\onlinecite{Gut}, one may note that the condition $| \boldsymbol{y}_N| \leq C$ implies that $(\boldsymbol{y}_n)_{n\in\mathbb{N}}$ is uniformly integrable.

Our main interest is not these `standard' real-valued martingales, but rather operator-valued martingales. It is again worth recalling that we here only consider finite-dimensional spaces, and hence we can represent each operator as a finite matrix with respect to some choice of basis. With this in mind, we say that an operator-valued process $(\boldsymbol{Y}_N)_{N\in\mathbb{N}}$ on a finite-dimensional Hilbert space is an operator-valued martingale with respect to a stochastic process $(\boldsymbol{x}_N)_{N\in\mathbb{N}}$ if each of $\Real\langle k|\boldsymbol{Y}_N|k'\rangle$ and $\Imag\langle k|\boldsymbol{Y}_N|k'\rangle$ are a martingale  with respect to $(\boldsymbol{x}_N)_{N\in\mathbb{N}}$ for some fixed orthonormal basis $\{|k\rangle\}_{k=1}^{D}$.
One may note that the condition that $E(|\Real\langle k|\boldsymbol{Y}_N|k'\rangle|)< +\infty$ and $E(|\Imag\langle k|\boldsymbol{Y}_N|k'\rangle|)< +\infty$ in the finite-dimensional case is equivalent to
\begin{equation}
E(\Vert\boldsymbol{Y}_N\Vert) < +\infty.
\end{equation}
Similarly, the conditions 
 \begin{equation}
 \begin{split}
& E\big(\Real\langle k|\boldsymbol{Y}_{N+1}|k'\rangle\big|\boldsymbol{x}_N,\ldots,\boldsymbol{x}_1\big) = \Real\langle k|\boldsymbol{Y}_{N+1}|k'\rangle\hspace{0.2cm}a.s.,\\
 & E\big(\Imag\langle k|\boldsymbol{Y}_{N+1}|k'\rangle\big|\boldsymbol{x}_N,\ldots,\boldsymbol{x}_1\big) = \Imag\langle k|\boldsymbol{Y}_{n+1}|k'\rangle\hspace{0.2cm}a.s.,
 \end{split}
 \end{equation}
  can equivalently be stated as
 \begin{equation}
E(\boldsymbol{Y}_{N+1}|\boldsymbol{x}_N,\ldots,\boldsymbol{x}_1) = \boldsymbol{Y}_N\hspace{0.2cm}a.s.
\end{equation}
In a similar manner, Proposition \ref{bsdgnsfgn} can be applied to the real and imaginary components of an operator-valued martingale, which yields the following `operator counterpart' to Proposition \ref{bsdgnsfgn}.
\begin{prop}
\label{MatrixMartingaleConvergence}
Let $(\boldsymbol{Y}_N)_{N\in\mathbb{N}}$ be  an operator-valued martingale on a finite-dimensional complex Hilbert space with respect to a real-valued process $(\boldsymbol{x}_N)_{N\in\mathbb{N}}$, and suppose that there exists real number $C$ such that 
$\Vert \boldsymbol{Y}_N\Vert \leq C$ for all  $N\in\mathbb{N}$,
then there exists a random operator, $\boldsymbol{Y}_{\infty}$, such that 
\begin{equation}
\lim_{N\rightarrow\infty}\boldsymbol{Y}_N = \boldsymbol{Y}_{\infty}\quad a.s.\quad\textrm{and}\quad
E(\Vert \boldsymbol{Y}_{\infty}\Vert)< +\infty.
\end{equation}
\end{prop}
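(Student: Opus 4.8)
The plan is to reduce the operator-valued statement to the scalar bounded martingale convergence theorem, Proposition \ref{bsdgnsfgn}, by working componentwise in a fixed orthonormal basis, exactly as anticipated by the discussion preceding the proposition. First I would fix an orthonormal basis $\{|k\rangle\}_{k=1}^{D}$ of the (finite-dimensional) Hilbert space and consider, for each pair $(k,k')$, the two real-valued processes $\Real\langle k|\boldsymbol{Y}_N|k'\rangle$ and $\Imag\langle k|\boldsymbol{Y}_N|k'\rangle$. By the very definition of an operator-valued martingale adopted in the text, each of these $2D^2$ scalar processes is a martingale with respect to $(\boldsymbol{x}_N)_{N\in\mathbb{N}}$.

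Next I would transfer the uniform bound. Since $|\langle k|\boldsymbol{Y}_N|k'\rangle| \leq \Vert\boldsymbol{Y}_N\Vert \leq C$ for all $N$, both the real and imaginary parts of every matrix element are bounded in absolute value by $C$. Hence each of the $2D^2$ scalar martingales satisfies the hypotheses of Proposition \ref{bsdgnsfgn}, and therefore converges almost surely to some limiting random variable with finite expected absolute value.

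The third step is to assemble these componentwise limits into a single random operator. Because there are only finitely many components, the event on which all of them converge is a finite intersection of probability-one events, and thus itself has probability one. On this event I define $\boldsymbol{Y}_{\infty}$ to be the operator whose matrix elements are the respective limits (setting it to $0$ on the complementary null set). By the metric equivalence of all norms in finite dimensions, which is already invoked in the text, entrywise convergence is equivalent to convergence in operator norm, so $\boldsymbol{Y}_N \to \boldsymbol{Y}_{\infty}$ almost surely. Finally, since $\Vert\boldsymbol{Y}_N\Vert \leq C$ for every $N$ and the norm is continuous, it follows that $\Vert\boldsymbol{Y}_{\infty}\Vert \leq C$ almost surely, whence $E(\Vert\boldsymbol{Y}_{\infty}\Vert) \leq C < +\infty$.

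There is no substantial obstacle here: essentially all of the content is carried by the scalar result of Proposition \ref{bsdgnsfgn}, and the only points requiring any care are the two bookkeeping observations that a finite intersection of almost-sure events is again almost sure, and that in finite dimension componentwise convergence coincides with norm convergence. Both are standard, so I expect the proof to be short and the main effort to be merely stating the componentwise reduction cleanly.
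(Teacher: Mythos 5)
Your proposal is correct and follows essentially the same route as the paper, which likewise obtains Proposition \ref{MatrixMartingaleConvergence} by applying the scalar bounded-martingale convergence result (Proposition \ref{bsdgnsfgn}) to the real and imaginary parts of the matrix elements $\langle k|\boldsymbol{Y}_N|k'\rangle$ in a fixed orthonormal basis, then assembling the finitely many componentwise limits into $\boldsymbol{Y}_{\infty}$ using the equivalence of norms in finite dimension. Your write-up merely makes explicit the bookkeeping steps (finite intersection of almost-sure events, transfer of the uniform bound to $E(\Vert\boldsymbol{Y}_{\infty}\Vert)$) that the paper leaves implicit.
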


\section{\label{TheProcess}Stochastic process of measurements}

Consider a set of operators  $\{A_x\}_{x=0}^{d-1}$ on a finite-dimensional Hilbert space, $\mathcal{H}$, with dimension $D := \dim\mathcal{H}$, such that $\sum_{x=0}^{d-1}A_x^{\dagger}A_x =\1$. We introduce the stochastic process $(\boldsymbol{x}_N)_{N\in\mathbb{N}}$ with a joint distribution such that, for each $N$, the marginal distribution of $\boldsymbol{x}_1,\ldots,\boldsymbol{x}_N$ is given by
\begin{equation}
\label{dgnfsgn}
\begin{split}
& P(\boldsymbol{x}_N = x_N,\ldots,\boldsymbol{x}_1 = x_1) = \frac{1}{D}\Tr\left(A_{x_N}\cdots A_{x_1}A_{x_1}^{\dagger}\cdots A_{x_N}^{\dagger}\right).
\end{split}
\end{equation}
This means that $\boldsymbol{x}_1,\ldots,\boldsymbol{x}_N$ can be interpreted as the outcomes of a sequence of measurements, where the initial state is maximally mixed, i.e., $\sigma = \1/D$.
Note that when we in the following refer to an expectation value,  the underlying probability distribution is assumed to be (\ref{dgnfsgn}) unless otherwise stated.
It will be useful to note that
\begin{equation}
\label{sfgnsfgnmsd}
\begin{split}
 P(\boldsymbol{x}_{N+1} = x_{N+1}|\boldsymbol{x}_N = x_N,\ldots,\boldsymbol{x}_1 = x_1) 
= & 
\frac{\Tr( A_{x_1}^{\dagger}\cdots A_{x_N}^{\dagger}A_{x_{N+1}}^{\dagger} A_{x_{N+1}}A_{x_N}\cdots A_{x_1} )}{\Tr( A_{x_1}^{\dagger}\cdots A_{x_N}^{\dagger}A_{x_N}\cdots A_{x_1} )},
\end{split}
\end{equation}
where $P(y|x):=P(y,x)/P(x)$ denotes the conditional probability.

Based on the process  $(\boldsymbol{x}_N)_{N\in\mathbb{N}}$, we define sequences of random operators
\begin{equation}
\label{gndghnghn}
\begin{split}
\boldsymbol{A}_N := & A_{\boldsymbol{x}_N},\\
\boldsymbol{W}_N := & \boldsymbol{A}_N \cdots \boldsymbol{A}_1 = A_{\boldsymbol{x}_N}\cdots A_{\boldsymbol{x}_1},\\
\boldsymbol{M}_N :=& \left\{ \begin{matrix} \frac{\boldsymbol{W}_N^{\dagger}\boldsymbol{W}_N}{\Tr(\boldsymbol{W}_N^{\dagger}\boldsymbol{W}_N)} & \textrm{if} &  \Tr(\boldsymbol{W}_N^{\dagger}\boldsymbol{W}_N) \neq 0,\\
0  & \textrm{if} &  \Tr(\boldsymbol{W}_N^{\dagger}\boldsymbol{W}_N) = 0.
\end{matrix}\right.
\end{split}
\end{equation}
One should note that $\boldsymbol{M}_N$ is Hermitian, i.e., $\boldsymbol{M}_N^{\dagger} =\boldsymbol{M}_N$. Moreover, $\boldsymbol{M}_N$ 
is positive semi-definite, and either has trace $1$ or trace $0$. Hence, $\boldsymbol{M}_N$ is either a density operator, or the zero operator. One may further note that 
\begin{equation}
\Tr(\boldsymbol{W}_N^{\dagger}\boldsymbol{W}_N)=\Tr( A^\dag_{\boldsymbol{x}_1} \cdots A_{\boldsymbol{x}_N}^{\dagger} A_{\boldsymbol{x}_N}\cdots A_{\boldsymbol{x}_1})= P(\boldsymbol{x}_N,\ldots,\boldsymbol{x}_1)D,
\end{equation}
from which we can conclude that $\boldsymbol{M}_N =0$ with probability zero, i.e., $\boldsymbol{M}_N$ is almost surely a density operator.

As a side-remark, one might note that $\boldsymbol{M}_N$ is \emph{not} the post-measurement state of the measurement process. The post-measurement state  would rather be $\boldsymbol{\rho}_N := \boldsymbol{W}_N\boldsymbol{W}_N^{\dagger}/\Tr(\boldsymbol{W}_N^{\dagger}\boldsymbol{W}_N)$. However, $\boldsymbol{M}_N$ and $\boldsymbol{\rho}_N$ have the same non-zero eigenvalues (as can be seen by a singular value decomposition of $\boldsymbol{W}_N$). The main reason for why it is convenient to use $\boldsymbol{M}_N$, rather than $\boldsymbol{\rho}_N$, is that on $\boldsymbol{M}_N$ we can directly utilize $\sum_x A_x^{\dagger}A_x = \1$, which for example is used in the proof of the martingale property in Lemma \ref{PropMartingale}.

In the following it will be useful to observe that since $\boldsymbol{A}_N$ is a (deterministic) function of $\boldsymbol{x}_N$ [as seen by (\ref{gndghnghn})] it is the case that
\begin{equation}
E(\boldsymbol{A}_N|\boldsymbol{x}_N) = A_{\boldsymbol{x}_N} = \boldsymbol{A}_N.
\end{equation}
Analogously,  $E(\boldsymbol{W}_N|\boldsymbol{x}_N,\ldots,\boldsymbol{x}_1) =  \boldsymbol{W}_N$, 
and similarly $E(\boldsymbol{M}_N|\boldsymbol{x}_N,\ldots,\boldsymbol{x}_1) =  \boldsymbol{M}_N$.

\section{\label{AppProofPropMain}Elements of the proof of Proposition \ref{PropMain}}

\subsection{\label{SecLimit} $\lim_{N\rightarrow \infty} \boldsymbol{M}_N  = \boldsymbol{M}_{\infty}\,\, a.s.$}

The purpose of this section is to show that $\boldsymbol{M}_N$ has limit operator $\boldsymbol{M}_{\infty}$ in a sufficiently strong sense, and that this limit operator has `nice' properties. We do this by first showing that $\boldsymbol{M}_N$ is a martingale relative to the sequence of measurement outcomes $\boldsymbol{x}_N$. This in turn yields almost sure convergence to limiting operator $\boldsymbol{M}_{\infty}$. Recall that the underlying probability distribution is assumed to be (\ref{dgnfsgn}), and that all expectations are taken with respect to this distribution.

\begin{lemma}
\label{PropMartingale}
Let $\{A_x\}_{x=0}^{d-1}$ be linear operators on a  finite-dimensional complex Hilbert space, such that $\sum_{x=0}^{d-1}A_x^{\dagger}A_x = \1$. Then, $(\boldsymbol{M}_N)_{N\in\mathbb{N}}$, defined by (\ref{gndghnghn}), is an operator-valued martingale  with respect to $(\boldsymbol{x}_N)_{N\in\mathbb{N}}$ with distribution (\ref{dgnfsgn}).
\end{lemma}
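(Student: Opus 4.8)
The plan is to verify the three defining properties of an operator-valued martingale (in the sense of Appendix \ref{SecTechnicalReview}) directly. Adaptedness and integrability are immediate: by construction in (\ref{gndghnghn}), $\boldsymbol{M}_N$ is a measurable function of $\boldsymbol{x}_1,\ldots,\boldsymbol{x}_N$, and since $\boldsymbol{M}_N$ is almost surely a density operator (hence positive semi-definite with unit trace, so $\Vert\boldsymbol{M}_N\Vert\leq 1$), we have $E(\Vert\boldsymbol{M}_N\Vert)\leq 1 <+\infty$. The substance of the proof is therefore the martingale identity $E(\boldsymbol{M}_{N+1}|\boldsymbol{x}_N,\ldots,\boldsymbol{x}_1) = \boldsymbol{M}_N\,\, a.s.$

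First I would restrict attention to a realization $(x_1,\ldots,x_N)$ with $P(\boldsymbol{x}_N=x_N,\ldots,\boldsymbol{x}_1=x_1)>0$, i.e., $\Tr(\boldsymbol{W}_N^{\dagger}\boldsymbol{W}_N)\neq 0$; since the complementary event has probability zero (as noted below (\ref{gndghnghn})), establishing the identity on this set suffices for the almost-sure claim. On such a realization the conditional expectation is the finite sum $\sum_{x_{N+1}} P(x_{N+1}|x_N,\ldots,x_1)\,\boldsymbol{M}_{N+1}$, where I use the explicit conditional probability (\ref{sfgnsfgnmsd}). Writing $\boldsymbol{W}_{N+1} = A_{x_{N+1}}\boldsymbol{W}_N$ gives $\boldsymbol{W}_{N+1}^{\dagger}\boldsymbol{W}_{N+1} = \boldsymbol{W}_N^{\dagger}A_{x_{N+1}}^{\dagger}A_{x_{N+1}}\boldsymbol{W}_N$, and the very same operator appears (up to its trace) both as the numerator of $\boldsymbol{M}_{N+1}$ and in the numerator of the conditional probability (\ref{sfgnsfgnmsd}).

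The key step is that the normalization factors telescope: the factor $\Tr(\boldsymbol{W}_N^{\dagger}A_{x_{N+1}}^{\dagger}A_{x_{N+1}}\boldsymbol{W}_N)$ in the numerator of $P(x_{N+1}|\cdots)$ cancels exactly against the trace-normalization of $\boldsymbol{M}_{N+1}$, leaving
\begin{equation}
E(\boldsymbol{M}_{N+1}|\boldsymbol{x}_N,\ldots,\boldsymbol{x}_1) = \frac{1}{\Tr(\boldsymbol{W}_N^{\dagger}\boldsymbol{W}_N)}\boldsymbol{W}_N^{\dagger}\Big(\sum_{x_{N+1}=0}^{d-1}A_{x_{N+1}}^{\dagger}A_{x_{N+1}}\Big)\boldsymbol{W}_N.
\end{equation}
Invoking the defining relation $\sum_{x}A_x^{\dagger}A_x=\1$ collapses the bracketed sum to the identity, so the right-hand side reduces to $\boldsymbol{W}_N^{\dagger}\boldsymbol{W}_N/\Tr(\boldsymbol{W}_N^{\dagger}\boldsymbol{W}_N) = \boldsymbol{M}_N$, which is exactly the martingale identity. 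Applying this argument to the real and imaginary parts of each matrix element $\langle k|\boldsymbol{M}_{N+1}|k'\rangle$, as demanded by the definition of an operator-valued martingale, then completes the proof.

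The computation itself is short and essentially forced by the completeness of the POVM $\{A_x^{\dagger}A_x\}$; the only place demanding care is the bookkeeping of the probability-zero set on which $\Tr(\boldsymbol{W}_N^{\dagger}\boldsymbol{W}_N)=0$ (where $\boldsymbol{M}_N$ is set to $0$ and the conditional probabilities are undefined). I expect this measure-theoretic bookkeeping---ensuring the identity is only asserted almost surely and that the conditioning is well defined---to be the sole genuine subtlety, with the algebra being a one-line consequence of $\sum_x A_x^{\dagger}A_x=\1$.
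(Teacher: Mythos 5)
Your proposal is correct and follows essentially the same route as the paper's proof: verify adaptedness and boundedness trivially, then cancel the trace normalization of $\boldsymbol{M}_{N+1}$ against the numerator of the conditional probability (\ref{sfgnsfgnmsd}) and invoke $\sum_{x}A_x^{\dagger}A_x=\1$ to collapse the sum to $\boldsymbol{M}_N$. The only difference is that you are slightly more explicit than the paper about restricting to realizations with $\Tr(\boldsymbol{W}_N^{\dagger}\boldsymbol{W}_N)\neq 0$, a measure-zero bookkeeping point the paper leaves implicit.
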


\begin{proof}
From the fact that each $\boldsymbol{M}_N$ is a density operator, or the zero operator, it follows that $\Vert \boldsymbol{M}_N\Vert \leq 1$, and thus in particular that $E(\Vert \boldsymbol{M}_N\Vert) \leq 1 < +\infty$. Moreover, by the construction in (\ref{gndghnghn}), it is the case that $\boldsymbol{M}_N$ (and thus the matrix-elements with respect to a given basis) are functions of $\boldsymbol{x}_N,\ldots,\boldsymbol{x}_1$. By (\ref{sfgnsfgnmsd}) and $\sum_{x_{N+1}=0}^{d-1}A_{x_{N+1}}^{\dagger}A_{x_{N+1}} = \1$ we find that
\begin{equation*}
\begin{split}
E(\boldsymbol{M}_{N+1}|\boldsymbol{x}_{N}& = x_N,\ldots,\boldsymbol{x}_1=x_1) \\
& = \sum_{x_{N+1}} \frac{A_{x_1}^{\dagger}\cdots A_{x_N}^{\dagger}A_{x_{N+1}}^{\dagger}A_{x_{N+1}}A_{x_N}\cdots A_{x_1} }{\Tr(A_{x_1}^{\dagger}\cdots A_{x_N}^{\dagger}A_{x_{N+1}}^{\dagger}A_{x_{N+1}}A_{x_N}\cdots A_{x_1} )}P(  \boldsymbol{x}_{N+1} = x_{N+1} | \boldsymbol{x}_{N} = x_N,\ldots,\boldsymbol{x}_1=x_1),\\
& = E(\boldsymbol{M}_{N}|
\boldsymbol{x}_{N} = x_N,\ldots,\boldsymbol{x}_1=x_1).
\end{split}
\end{equation*}
We can conclude that $E(\boldsymbol{M}_{N+1}|\boldsymbol{x}_{N},\ldots,\boldsymbol{x}_1) 
=  E(\boldsymbol{M}_{N}|
\boldsymbol{x}_{N},\ldots,\boldsymbol{x}_1) =  \boldsymbol{M}_{N}$. 
Hence, $(\boldsymbol{M}_N)_N$ is a martingale sequence with respect to $(\boldsymbol{x}_N)_{N\in\mathbb{N}}$. 
\end{proof}

\begin{lemma}
\label{fnjjnfddanj}
Let $\{A_x\}_{x=0}^{d-1}$ be linear operators on a finite-dimensional complex Hilbert space, such that $\sum_{x=0}^{d-1}A_x^{\dagger}A_x = \1$. Let $(\boldsymbol{M}_N)_{N\in\mathbb{N}}$ be as defined in (\ref{gndghnghn}) with respect to $(\boldsymbol{x}_N)_{N\in\mathbb{N}}$ and distributed as in (\ref{dgnfsgn}). Then, there exists a random operator, $\boldsymbol{M}_{\infty}$, such that
\begin{eqnarray}
 \label{bfgnfg}
& & \lim_{N\rightarrow \infty} \boldsymbol{M}_N  = \boldsymbol{M}_{\infty}\quad a.s.,\\
\nonumber & & \\
& & \textrm{$\boldsymbol{M}_{\infty}$ is almost surely a density operator,}\\
\nonumber & & \\
 \label{nghngfhm}
& & \lim_{N\rightarrow\infty}E(\boldsymbol{M}_N) = E(\boldsymbol{M}_{\infty}),\\
\nonumber & & \\
\label{dghndghnh}
& & \lim_{N\rightarrow\infty} E(\Vert\boldsymbol{M}_N\Vert) =  E(\Vert\boldsymbol{M}_{\infty}\Vert),\\ 
\nonumber & & \\
& & E(\Vert\boldsymbol{M}_{\infty}\Vert)  <+\infty.
\end{eqnarray}
\end{lemma}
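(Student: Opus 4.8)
The plan is to assemble the statement from the martingale property already established in Lemma~\ref{PropMartingale} together with the operator-valued martingale convergence theorem, Proposition~\ref{MatrixMartingaleConvergence}. First I would recall, from the discussion immediately following the definition~(\ref{gndghnghn}), that each $\boldsymbol{M}_N$ is almost surely a density operator; in particular $\boldsymbol{M}_N$ is positive semi-definite with $\Tr(\boldsymbol{M}_N)=1$ almost surely, so that $\Vert\boldsymbol{M}_N\Vert\leq 1$ for all $N$. By Lemma~\ref{PropMartingale}, $(\boldsymbol{M}_N)_{N\in\mathbb{N}}$ is an operator-valued martingale with respect to $(\boldsymbol{x}_N)_{N\in\mathbb{N}}$. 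The uniform bound $\Vert\boldsymbol{M}_N\Vert\leq 1$ then lets me invoke Proposition~\ref{MatrixMartingaleConvergence} with $C=1$, which directly produces a random operator $\boldsymbol{M}_{\infty}$ satisfying $\lim_{N\rightarrow\infty}\boldsymbol{M}_N=\boldsymbol{M}_{\infty}$ almost surely, thereby establishing~(\ref{bfgnfg}), together with the bound $E(\Vert\boldsymbol{M}_{\infty}\Vert)<+\infty$.

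Next I would argue that $\boldsymbol{M}_{\infty}$ is almost surely a density operator, using that the set of density operators on a finite-dimensional Hilbert space is closed: the cone of positive semi-definite operators is closed, and the trace is a continuous linear functional, so that $\Tr(\boldsymbol{M}_N)=1$ passes to the limit to give $\Tr(\boldsymbol{M}_{\infty})=1$. Concretely, on the intersection of the probability-one event on which $\boldsymbol{M}_N\rightarrow\boldsymbol{M}_{\infty}$ with the probability-one event on which every $\boldsymbol{M}_N$ is a density operator, the pointwise limit $\boldsymbol{M}_{\infty}$ inherits positive semi-definiteness and unit trace; since this intersection still has probability one, $\boldsymbol{M}_{\infty}$ is almost surely a density operator.

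For the convergence of the expectations in~(\ref{nghngfhm}), I would apply Proposition~\ref{dafbadfb} to the bounded sequence $(\boldsymbol{M}_N)_{N\in\mathbb{N}}$: the uniform bound $\Vert\boldsymbol{M}_N\Vert\leq 1$ (so $C=1$) together with the almost sure convergence $\boldsymbol{M}_N\rightarrow\boldsymbol{M}_{\infty}$ obtained above yields $\lim_{N\rightarrow\infty}E(\boldsymbol{M}_N)=E(\boldsymbol{M}_{\infty})$. Finally, for~(\ref{dghndghnh}) I note that $\Vert\boldsymbol{M}_N\Vert$ is a real-valued random variable bounded by $1$, and that the operator norm is a continuous function, so the almost sure convergence $\boldsymbol{M}_N\rightarrow\boldsymbol{M}_{\infty}$ implies $\Vert\boldsymbol{M}_N\Vert\rightarrow\Vert\boldsymbol{M}_{\infty}\Vert$ almost surely. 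The bounded convergence theorem, Proposition~\ref{fgsbsfgnfg} with $C=1$, then gives $\lim_{N\rightarrow\infty}E(\Vert\boldsymbol{M}_N\Vert)=E(\Vert\boldsymbol{M}_{\infty}\Vert)$, completing all five assertions.

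I do not expect any genuinely hard step here, since Lemma~\ref{PropMartingale} and Propositions~\ref{MatrixMartingaleConvergence}, \ref{dafbadfb}, and~\ref{fgsbsfgnfg} have all been prepared in advance; the proof is essentially a bookkeeping exercise that routes each of the five claims to the appropriate pre-established result. The only point requiring mild care is the measure-theoretic step showing that $\boldsymbol{M}_{\infty}$ is almost surely a density operator: one must work on the intersection of two probability-one events and reason pointwise there, relying on closedness of the density-operator set, rather than treating the positivity and trace conditions cavalierly as holding for $\boldsymbol{M}_{\infty}$ by continuity alone.
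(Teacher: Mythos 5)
Your proposal is correct and follows essentially the same route as the paper: Lemma~\ref{PropMartingale} plus the uniform bound $\Vert\boldsymbol{M}_N\Vert\leq 1$ feed into Proposition~\ref{MatrixMartingaleConvergence} for the almost sure limit and $E(\Vert\boldsymbol{M}_{\infty}\Vert)<+\infty$, Proposition~\ref{dafbadfb} gives (\ref{nghngfhm}), and Proposition~\ref{fgsbsfgnfg} applied to $\Vert\boldsymbol{M}_N\Vert$ gives (\ref{dghndghnh}). Your closedness-of-the-density-operator-set argument is just a repackaging of the paper's pointwise argument (passing $\langle\psi|\boldsymbol{M}_N|\psi\rangle\geq 0$ and $\Tr\boldsymbol{M}_N=1$ to the limit on an intersection of probability-one events), so there is no substantive difference.
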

\begin{proof}
By Lemma \ref{PropMartingale} we know that $(\boldsymbol{M}_N)_{N\in\mathbb{N}}$ is a martingale with respect to $(\boldsymbol{x}_N)_{N\in\mathbb{N}}$. 
From the fact that each $\boldsymbol{M}_N$ is a density operator, or the zero operator, it follows that 
\begin{equation}
\label{fghnsfgh}
\Vert \boldsymbol{M}_N\Vert \leq 1,\quad \forall N\in\mathbb{N}.
\end{equation} 
By Proposition \ref{MatrixMartingaleConvergence} it follows that there exists a random operator, $\boldsymbol{M}_{\infty}$, such that
\begin{equation}
\label{fgndghm}
\lim_{N\rightarrow\infty}\boldsymbol{M}_N = \boldsymbol{M}_{\infty}\quad a.s.,
\end{equation}
with $E(\Vert \boldsymbol{M}_{\infty}\Vert) <+\infty$. 
By combining (\ref{fgndghm}) with (\ref{fghnsfgh}), Proposition \ref{dafbadfb} yields $E(\boldsymbol{M}_N)\rightarrow E(\boldsymbol{M}_{\infty})$.
Moreover,  (\ref{fgndghm}) yields $\lim_{N\rightarrow \infty} \Vert\boldsymbol{M}_N\Vert  = \Vert\boldsymbol{M}_{\infty}\Vert\,\, a.s.$ 
By this observation together with (\ref{fghnsfgh}), Proposition \ref{fgsbsfgnfg} with $\boldsymbol{x}_N :=  \Vert\boldsymbol{M}_N\Vert$ and $\boldsymbol{x}_{\infty}:=\Vert\boldsymbol{M}_{\infty}\Vert$ yields $E(\Vert\boldsymbol{M}_N\Vert)\rightarrow E(\Vert\boldsymbol{M}_{\infty}\Vert)$. 
Finally, we should show that $\boldsymbol{M}_{\infty}$ almost surely is a density operator, i.e., that $\boldsymbol{M}_{\infty}\geq 0$ almost surely, and that $\Tr\boldsymbol{M}_{\infty} = 1$ almost surely.  From (\ref{fgndghm}) it follows that $\lim_{N\rightarrow\infty}\langle\psi|\boldsymbol{M}_N|\psi\rangle = \langle\psi|\boldsymbol{M}_{\infty}|\psi\rangle\,\, a.s$. Since $\langle\psi|\boldsymbol{M}_N|\psi\rangle\geq 0$, it follows that $\langle\psi|\boldsymbol{M}_{\infty}|\psi\rangle\geq 0\,\, a.s$.  Analogously, since $\Tr\boldsymbol{M}_N = 1$ almost surely, it follows that $\lim_{N\rightarrow\infty}\Tr\boldsymbol{M}_N = \Tr\boldsymbol{M}_{\infty} = 1\,\, a.s$.
Hence, $\boldsymbol{M}_{\infty}$ is almost surely a density operator.
\end{proof}

\subsection{\label{SecPurImpliesRankOne} If $\{A_x\}_{x=0}^{d-1}$ satisfies the purity condition, then  $\mathrm{rank}(\boldsymbol{M}_{\infty}) = 1\,\, a.s.$ }

The purpose of this section is to show that the limit operator $\boldsymbol{M}_{\infty}$, more or less always, is a rank-one operator whenever  $\{A_x\}_{x=0}^{d-1}$ satisfies the purity condition. The first step (Lemma \ref{ghdgguk})  is to show that the difference between the operators $\boldsymbol{M}_{N+p}$ and $\boldsymbol{M}_{N}$ tends to vanish as $N$ increases, even when conditioned on all the measurement outcomes $\boldsymbol{x}_N,\ldots, \boldsymbol{x}_1$.

\begin{lemma}
\label{ghdgguk}
Let $\{A_x\}_{x=0}^{d-1}$ be linear operators on  a finite-dimensional complex Hilbert space, such that $\sum_{x=0}^{d-1}A_x^{\dagger}A_x = \1$. Let $(\boldsymbol{M}_N)_{N\in\mathbb{N}}$  be as defined in (\ref{gndghnghn}) with respect to $(\boldsymbol{x}_N)_{N\in\mathbb{N}}$ and distributed as in (\ref{dgnfsgn}). Then,
\begin{equation}
\label{dfgnsfgnfg}
\begin{split}
 & \lim_{N\rightarrow\infty} E\Big(\Vert\boldsymbol{M}_{N+p}-\boldsymbol{M}_{N}\Vert \Big|\boldsymbol{x}_N,\ldots,\boldsymbol{x}_1\Big) =0\quad a.s.
\end{split}
\end{equation}
\end{lemma}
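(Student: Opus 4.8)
The plan is to exploit the martingale property of $(\boldsymbol{M}_N)_{N\in\mathbb{N}}$ together with its almost-sure convergence, and to convert these into a statement about the conditional expectation in (\ref{dfgnsfgnfg}). First I would recall from Lemma \ref{PropMartingale} that $(\boldsymbol{M}_N)_{N\in\mathbb{N}}$ is an operator-valued martingale with respect to $(\boldsymbol{x}_N)_{N\in\mathbb{N}}$, so that $E(\boldsymbol{M}_{N+p}|\boldsymbol{x}_N,\ldots,\boldsymbol{x}_1) = \boldsymbol{M}_N$ almost surely for every $p\geq 0$. Because of Jensen's inequality for the convex norm, this does \emph{not} immediately control $E(\Vert\boldsymbol{M}_{N+p}-\boldsymbol{M}_N\Vert \,|\,\boldsymbol{x}_N,\ldots,\boldsymbol{x}_1)$; the difference of norms versus the norm of differences is the gap I must close. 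The natural route is to pass through the conditional second moment, i.e.\ the Hilbert–Schmidt norm $\Vert\cdot\Vert_2$, where the martingale orthogonality of increments produces an exact identity rather than a mere inequality.

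The key computation is the conditional Pythagorean relation for martingale increments. Working componentwise (or, equivalently, for the Hilbert–Schmidt inner product, which is legitimate since all spaces are finite-dimensional and all norms are metrically equivalent, as recalled in Appendix \ref{SecTechnicalReview}), I would establish
\begin{equation}
\label{plan:pyth}
E\big(\Vert\boldsymbol{M}_{N+p}\Vert_2^2\,\big|\,\boldsymbol{x}_N,\ldots,\boldsymbol{x}_1\big)
= \Vert\boldsymbol{M}_N\Vert_2^2 + E\big(\Vert\boldsymbol{M}_{N+p}-\boldsymbol{M}_N\Vert_2^2\,\big|\,\boldsymbol{x}_N,\ldots,\boldsymbol{x}_1\big),
\end{equation}
which follows because the cross term $E\big(\langle \boldsymbol{M}_N,\boldsymbol{M}_{N+p}-\boldsymbol{M}_N\rangle\,\big|\,\boldsymbol{x}_N,\ldots,\boldsymbol{x}_1\big)$ vanishes, using that $\boldsymbol{M}_N$ is measurable with respect to $\sigma(\boldsymbol{x}_1,\ldots,\boldsymbol{x}_N)$ and that $E(\boldsymbol{M}_{N+p}|\boldsymbol{x}_N,\ldots,\boldsymbol{x}_1)=\boldsymbol{M}_N$. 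Taking the full expectation of (\ref{plan:pyth}) and telescoping in $p$ shows that $\sum_{k}E\big(\Vert\boldsymbol{M}_{k+1}-\boldsymbol{M}_k\Vert_2^2\big)$ is bounded by $\sup_N E(\Vert\boldsymbol{M}_N\Vert_2^2)\leq 1$, since each $\boldsymbol{M}_N$ is almost surely a density operator and hence $\Vert\boldsymbol{M}_N\Vert_2\leq 1$. This summability is exactly the quantitative content that drives the increments to zero.

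From here I would finish as follows. The almost-sure convergence $\boldsymbol{M}_N\to\boldsymbol{M}_\infty$ from Lemma \ref{fnjjnfddanj}, combined with the uniform bound $\Vert\boldsymbol{M}_N\Vert_2\leq 1$, gives $E(\Vert\boldsymbol{M}_\infty\Vert_2^2)=\lim_N E(\Vert\boldsymbol{M}_N\Vert_2^2)$ via the bounded convergence theorem (Proposition \ref{dafbadfb} applied to the real scalar sequence $\Vert\boldsymbol{M}_N\Vert_2^2$). Writing the conditional increment as $E\big(\Vert\boldsymbol{M}_{N+p}-\boldsymbol{M}_N\Vert_2^2\,\big|\,\boldsymbol{x}_N,\ldots,\boldsymbol{x}_1\big)=E\big(\Vert\boldsymbol{M}_{N+p}\Vert_2^2\,\big|\,\boldsymbol{x}_N,\ldots,\boldsymbol{x}_1\big)-\Vert\boldsymbol{M}_N\Vert_2^2$ and letting $p\to\infty$, I would identify the limit as $E\big(\Vert\boldsymbol{M}_\infty\Vert_2^2\,\big|\,\boldsymbol{x}_N,\ldots,\boldsymbol{x}_1\big)-\Vert\boldsymbol{M}_N\Vert_2^2$, which is a nonnegative random variable whose expectation tends to $E(\Vert\boldsymbol{M}_\infty\Vert_2^2)-\lim_N E(\Vert\boldsymbol{M}_N\Vert_2^2)=0$. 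By Lemma \ref{dfknbfdknl} (or directly Lemma \ref{ljhvxrushkl}) the conditional second moment of the increment therefore tends to zero almost surely, and Jensen's inequality $E(\Vert\cdot\Vert\,|\,\cdots)\leq\sqrt{E(\Vert\cdot\Vert_2^2\,|\,\cdots)}$ (up to the fixed norm-equivalence constant) upgrades this to (\ref{dfgnsfgnfg}). The main obstacle I anticipate is the interchange of the $p\to\infty$ and $N\to\infty$ limits inside the conditional expectation: one must argue, using the martingale structure and the uniform $L^2$ bound, that $E\big(\Vert\boldsymbol{M}_{N+p}\Vert_2^2\,\big|\,\boldsymbol{x}_N,\ldots,\boldsymbol{x}_1\big)$ converges as $p\to\infty$ to $E\big(\Vert\boldsymbol{M}_\infty\Vert_2^2\,\big|\,\boldsymbol{x}_N,\ldots,\boldsymbol{x}_1\big)$, which requires a conditional version of bounded/dominated convergence and is the one place where the finite-dimensional uniform bound $\Vert\boldsymbol{M}_N\Vert_2\leq 1$ is indispensable.
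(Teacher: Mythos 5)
Your overall scaffolding---the martingale property from Lemma \ref{PropMartingale}, the Pythagorean identity for increments in the Hilbert--Schmidt norm, Jensen's inequality to pass from the conditional second moment to the conditional first moment, and the norm bound $\Vert\cdot\Vert\leq\Vert\cdot\Vert_2$---is exactly the paper's, which derives the same identity $E\big((\boldsymbol{M}_{N+p}-\boldsymbol{M}_N)^2\big)=E(\boldsymbol{M}_{N+p}^2)-E(\boldsymbol{M}_N^2)$ and closes with the same two inequalities. The divergence, and the genuine gap, is in your last step. The paper keeps $p$ \emph{fixed} and sums over $N$: the telescoping identity gives $\lim_{k\to\infty}E\big(\sum_{N=0}^{k}E(\Vert\boldsymbol{M}_{N+p}-\boldsymbol{M}_N\Vert_2^2\,\vert\,\boldsymbol{x}_N,\ldots,\boldsymbol{x}_1)\big)=\sum_{N=0}^{p-1}\big[E(\Vert\boldsymbol{M}_\infty\Vert_2^2)-E(\Vert\boldsymbol{M}_N\Vert_2^2)\big]\leq p<+\infty$, i.e.\ \emph{summability over $N$} of the expectations of the conditional increments, which is precisely the hypothesis of the Borel--Cantelli-type Lemma \ref{dfknbfdknl}; that lemma then yields the a.s.\ convergence. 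You instead send $p\to\infty$, dominate the conditional increment by $\boldsymbol{z}_N:=E(\Vert\boldsymbol{M}_\infty\Vert_2^2\,\vert\,\boldsymbol{x}_N,\ldots,\boldsymbol{x}_1)-\Vert\boldsymbol{M}_N\Vert_2^2\geq 0$, and conclude from $E(\boldsymbol{z}_N)\to 0$ that $\boldsymbol{z}_N\to 0$ a.s.\ ``by Lemma \ref{dfknbfdknl} (or directly Lemma \ref{ljhvxrushkl})''. Neither lemma applies: Lemma \ref{ljhvxrushkl} concerns a \emph{single} random variable whose expectation is exactly zero, and Lemma \ref{dfknbfdknl} requires $\sum_N E(\boldsymbol{z}_N)<+\infty$, which you have not established and which can fail---$E(\boldsymbol{z}_N)=E(\Vert\boldsymbol{M}_\infty\Vert_2^2)-E(\Vert\boldsymbol{M}_N\Vert_2^2)$ is the \emph{tail} of the convergent series of one-step increments, and tails of convergent series need not be summable (one-step terms decaying like $1/k^2$ give tails decaying like $1/N$). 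In general, $\boldsymbol{z}_N\geq 0$ with $E(\boldsymbol{z}_N)\to 0$ gives only convergence in probability, not a.s.\ convergence (consider indicators of intervals $[j2^{-k},(j+1)2^{-k}]$ sweeping repeatedly across $[0,1]$).

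The fix is to reassemble the pieces you already have along the paper's lines rather than taking $p\to\infty$. Conditional Pythagoras, iterated, gives for fixed $p$ the exact decomposition $E(\Vert\boldsymbol{M}_{N+p}-\boldsymbol{M}_N\Vert_2^2)=\sum_{j=0}^{p-1}E(\Vert\boldsymbol{M}_{N+j+1}-\boldsymbol{M}_{N+j}\Vert_2^2)$, so your one-step summability $\sum_k E(\Vert\boldsymbol{M}_{k+1}-\boldsymbol{M}_k\Vert_2^2)\leq 1$ immediately yields $\sum_N E(\Vert\boldsymbol{M}_{N+p}-\boldsymbol{M}_N\Vert_2^2)\leq p<+\infty$; now Lemma \ref{dfknbfdknl} applies to $\boldsymbol{r}_N:=E(\Vert\boldsymbol{M}_{N+p}-\boldsymbol{M}_N\Vert_2^2\,\vert\,\boldsymbol{x}_N,\ldots,\boldsymbol{x}_1)$, and your Jensen and norm-equivalence steps finish the proof. (Your $\boldsymbol{z}_N$ does in fact tend to zero a.s., but the correct justification is L\'{e}vy's upward martingale convergence theorem applied to $E(\Vert\boldsymbol{M}_\infty\Vert_2^2\,\vert\,\boldsymbol{x}_N,\ldots,\boldsymbol{x}_1)$, together with a conditional dominated convergence theorem to justify the $p\to\infty$ interchange you flagged; neither tool is in the paper's Appendix \ref{SecTechnicalReview} toolbox, whereas the summation-over-$N$ route needs nothing beyond it.)
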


\begin{proof}
Recall that   $\boldsymbol{M}_{N}$ is a deterministic function of $\boldsymbol{x}_{N},\ldots,\boldsymbol{x}_1$, and thus
\begin{equation}
\label{fgbfgbf}
E(\boldsymbol{M}_{N}| \boldsymbol{x}_{N},\ldots,\boldsymbol{x}_1) = \boldsymbol{M}_{N}.
\end{equation}
A direct consequence is that $\boldsymbol{M}_{N}$ and $\boldsymbol{M}_{N+p}$ are independent when conditioned on $\boldsymbol{x}_{N},\ldots,\boldsymbol{x}_1$, and thus
\begin{equation}
\label{fgnsfgns}
E(\boldsymbol{M}_{N+p}\boldsymbol{M}_{N}| \boldsymbol{x}_{N},\ldots,\boldsymbol{x}_1)= E(\boldsymbol{M}_{N+p}| \boldsymbol{x}_{N},\ldots,\boldsymbol{x}_1)E(\boldsymbol{M}_{N}| \boldsymbol{x}_{N},\ldots,\boldsymbol{x}_1).
\end{equation}

By expanding $E\big((\boldsymbol{M}_{N+p}-\boldsymbol{M}_{N})^2\big)$ one obtains cross-terms such as 
\begin{equation}
\begin{split}
 E(\boldsymbol{M}_{N+p}\boldsymbol{M}_{N})  &= E\Big(E(\boldsymbol{M}_{N+p}\boldsymbol{M}_{N}| \boldsymbol{x}_{N},\ldots,\boldsymbol{x}_1)\Big), \\
 &=  E\Big(E(\boldsymbol{M}_{N+p}| \boldsymbol{x}_{N},\ldots,\boldsymbol{x}_1    )E(\boldsymbol{M}_{N}| \boldsymbol{x}_{N},\ldots,\boldsymbol{x}_1)\Big),
\end{split}
\end{equation}
where the last equality follows by the conditional independence in (\ref{fgnsfgns}). By combining these observations with the martingale property, as shown in Lemma \ref{PropMartingale},  with (\ref{fgbfgbf}), $E\big((\boldsymbol{M}_{N+p}-\boldsymbol{M}_{N})^2\big)$ results in
\begin{equation}
\label{svbsfv}
\begin{split}
 E\big((\boldsymbol{M}_{N+p}&-\boldsymbol{M}_{N})^2\big) 
 \\
= & E(\boldsymbol{M}_{N+p}^2) + E(\boldsymbol{M}_{N}^2)-2E\Big(E(\boldsymbol{M}_{N}| \boldsymbol{x}_{N},\ldots,\boldsymbol{x}_1)E(\boldsymbol{M}_{N}| \boldsymbol{x}_{N},\ldots,\boldsymbol{x}_1    )\Big),\\
= & E(\boldsymbol{M}_{N+p}^2)  - E(\boldsymbol{M}_{N}^2),
\end{split}
\end{equation}
where we in the last step have used (\ref{fgbfgbf}).

By the observation that $\boldsymbol{M}_{N}$ is Hermitian, it follows that $\Tr E(\boldsymbol{M}_{N+p}^2)  =  E(\Vert\boldsymbol{M}_{N+p}\Vert^2_2)$, $\Tr E(\boldsymbol{M}_{N}^2) = E(\Vert\boldsymbol{M}_{N}\Vert^2_2)$ and $\Tr E\big((\boldsymbol{M}_{N+p}-\boldsymbol{M}_{N})^2\big) =    E\big(  \Vert \boldsymbol{M}_{N+p}-\boldsymbol{M}_{N}\Vert^2_2\big)$, which with (\ref{svbsfv}) yields
\begin{equation}
\label{rzmur}
E(\Vert \boldsymbol{M}_{N+p}\Vert^2_2)  - E(\Vert\boldsymbol{M}_{N}\Vert^2_2)\geq 0.
\end{equation}
By using  (\ref{svbsfv}), we next observe that
\begin{equation}
\label{dnnsmz}
\begin{split}
  \sum_{N=0}^{p-1}E(\boldsymbol{M}^2_{N+k+1})-\sum_{N=0}^{p-1}E(\boldsymbol{M}^2_{N})=& \sum_{N=0}^{k}E(\boldsymbol{M}_{N+p}^2)-\sum_{N=0}^{k}E(\boldsymbol{M}_{N}^2),\\
= & \sum_{N=0}^{k}E\big((\boldsymbol{M}_{N+p}-\boldsymbol{M}_{N})^2\big), \\
= & E\left(\sum_{N=0}^kE\Big((\boldsymbol{M}_{N+p}-\boldsymbol{M}_{N})^2\Big|\boldsymbol{x}_N,\ldots,\boldsymbol{x}_1\Big) \right),
\end{split}
\end{equation}
where we in the last step use the general relation $E\big(E(\boldsymbol{y}|\boldsymbol{x})\big) = E(\boldsymbol{y})$. Recall that $\Vert O\Vert^2_2 := \Tr(O^2)$. By applying the trace to (\ref{dnnsmz}) we obtain
\begin{equation*}
    E\left(\sum_{N=0}^kE\Big(\Vert \boldsymbol{M}_{N+p}-\boldsymbol{M}_{N}\Vert^2_2\Big|\boldsymbol{x}_N,\ldots,\boldsymbol{x}_1\Big) \right)= \sum_{N=0}^{p-1} \bigg[ E(\Vert\boldsymbol{M}_{N+k+1}\Vert^2_2)  - E(\Vert\boldsymbol{M}_{N}\Vert^2_2)\bigg].
\end{equation*}

By Lemma \ref{fnjjnfddanj} we know that $\lim_{N\rightarrow\infty}\boldsymbol{M}_N = \boldsymbol{M}_{\infty}$ almost surely. From this observation it follows that  $\lim_{N\rightarrow\infty}\Vert\boldsymbol{M}_{N+k+1}\Vert^2_2 = \Vert\boldsymbol{M}_{\infty}\Vert^2_2\,\, a.s.$ 
Next, we note that $\boldsymbol{M}_N$ is a density operator, or the zero operator, and thus it follows that $\Vert\boldsymbol{M}_N\Vert_2^2\leq 1$. With $\boldsymbol{x}_N := \Vert\boldsymbol{M}_N\Vert_2^2$  and $\boldsymbol{x}_{\infty}:= \Vert\boldsymbol{M}_{\infty}\Vert^2_2$, it follows by Proposition \ref{fgsbsfgnfg} that
\begin{equation}
\lim_{N\rightarrow\infty}E(\Vert\boldsymbol{M}_{N+k+1}\Vert^2_2) = E(\Vert\boldsymbol{M}_{\infty}\Vert^2_2).
\end{equation}
By Lemma \ref{fnjjnfddanj} we know that $\boldsymbol{M}_{\infty}$ is almost surely a density operator, from which it follows that $\Vert\boldsymbol{M}_{\infty}\Vert_2^2 \leq 1\,\, a.s.$  With $\boldsymbol{x}:= \Vert\boldsymbol{M}_{\infty}\Vert_2^2$ and $\boldsymbol{y} :=1$ in  Lemma \ref{htdhgtmdg}, we get
\begin{equation} 
\label{dfbsfbfg}
 E(\Vert\boldsymbol{M}_{\infty}\Vert_2^2) \leq 1.
 \end{equation} 
With $p:=k+1$ in the inequality (\ref{rzmur}), it follows that $E(\Vert\boldsymbol{M}_{N+k+1}\Vert^2_2)-E(\Vert\boldsymbol{M}_{N}\Vert^2_2)\geq 0$, which implies $E(\Vert\boldsymbol{M}_{\infty}\Vert^2_2)-E(\Vert\boldsymbol{M}_{N}\Vert^2_2)\geq 0$.
Hence,
\begin{equation}
\label{dgfmndghm}
\begin{split}
    \lim_{k\rightarrow\infty} E\left(\sum_{N=0}^kE\Big(\Vert\boldsymbol{M}_{N+p}-\boldsymbol{M}_{N}\Vert^2_2\Big|\boldsymbol{x}_N,\ldots,\boldsymbol{x}_1\Big) \right)&= \sum_{N=0}^{p-1}\bigg[ E(\Vert\boldsymbol{M}_{\infty}\Vert^2_2)  - E(\Vert\boldsymbol{M}_{N}\Vert^2_2)\bigg], \\
    &=:  R(p).
\end{split}
\end{equation}

By $E(\Vert\boldsymbol{M}_{\infty}\Vert^2_2)  -  E(\Vert\boldsymbol{M}_{N}\Vert^2_2)\geq 0$, it follows that $R(p)\geq 0$ and, by the inequality (\ref{dfbsfbfg}), it follows that $R(p)\leq p <+\infty$. Define  $\boldsymbol{r}_N := E(\Vert\boldsymbol{M}_{N+p}-\boldsymbol{M}_{N}\Vert^2_2|\boldsymbol{x}_N,\ldots,\boldsymbol{x}_1)$. Note that $\boldsymbol{r}_N\geq 0$. Moreover, $E(\boldsymbol{r}_N)   =  E(\Vert\boldsymbol{M}_{N+p}-\boldsymbol{M}_{N}\Vert^2_2)$. 
Since $\boldsymbol{M}_N$ is either a density operator, or the zero operator, it follows that $\Vert \boldsymbol{M}_N\Vert_2\leq 1$, and thus  $\Vert\boldsymbol{M}_{N+p}-\boldsymbol{M}_{N}\Vert^2_2  \leq  (\Vert\boldsymbol{M}_{N+p}\Vert_2+\Vert\boldsymbol{M}_{N}\Vert_2)^2\leq  4$. 
We conclude that $E(\Vert\boldsymbol{M}_{N+p}-\boldsymbol{M}_{N}\Vert^2_2)\leq4$, which together with $E(\boldsymbol{r}_N)= E(\Vert\boldsymbol{M}_{N+p}-\boldsymbol{M}_{N}\Vert^2_2)$ yields $E(\boldsymbol{r}_N)   \leq  4$. By Eq.~(\ref{dgfmndghm}), there exists a number $R(p)$ such that $\lim_{k\rightarrow\infty}E(\sum_{N=0}^{k}\boldsymbol{r}_N) = R(p) <+\infty$. All the conditions of Lemma \ref{dfknbfdknl} are thus satisfied and it yields 
\begin{equation}
\label{dgnsfgnsfg}
\begin{split}
 & \lim_{N\rightarrow\infty} E\Big(\Vert\boldsymbol{M}_{N+p}(\omega)-\boldsymbol{M}_{N}(\omega)\Vert^2_2\Big|\boldsymbol{x}_N(\omega),\ldots,\boldsymbol{x}_1(\omega)\Big) =0.
\end{split}
\end{equation}
Next, we note that $x\mapsto x^2$ is a convex function, and thus by Jensen's inequality $(E(X))^2 \leq E(X^2)$. 
By combining this observation with (\ref{dgnsfgnsfg}) we obtain
\begin{equation}
\label{bfdnsfgnsf}
\begin{split}
 & \lim_{N\rightarrow\infty}E\Big(\Vert\boldsymbol{M}_{N+p}-\boldsymbol{M}_{N}\Vert_2\Big|\boldsymbol{x}_N,\ldots,\boldsymbol{x}_1\Big) = 0\quad  a.s.
\end{split}
\end{equation}
By the general relation between the supremum norm and the Hilbert-Schmidt norm,  $\Vert R\Vert \leq \Vert R\Vert_{2}$,  we get
$E(\Vert \boldsymbol{R}\Vert) \leq E(\Vert \boldsymbol{R}\Vert_{2})$, 
and thus (\ref{bfdnsfgnsf}) yields (\ref{dfgnsfgnfg}).
\end{proof}

The following lemma provides a reformulation of  the purity condition that is better suited for the proof-technique that we employ. 
\begin{lemma} 
\label{netrswethn}
Let $\{A_x\}_{x=0}^{d-1}$ be linear operators on a complex finite-dimensional Hilbert space, $\mathcal{H}$. Then, $\{A_x\}_{x=0}^{d-1}$ satisfies the purity condition if and only if the following condition holds:
\begin{equation}
\label{ghmfhjm}
\begin{split}
& \textrm{If $O$ is an operator on $\mathcal{H}$ such that}\\
& O^{\dagger} A_{x_1}^{\dagger}\cdots A_{x_N}^{\dagger}A_{x_N}\cdots A_{x_1}O \propto O^{\dagger}O,\quad\forall N\in\mathbb{N},\quad \forall (x_1,\ldots,x_N) \in \{0,\ldots,d-1\}^{\times N},\\
& \textrm{then $\mathrm{rank}(O) = 1$}.
\end{split}
\end{equation}
\end{lemma}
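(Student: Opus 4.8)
The plan is to prove the two implications separately, observing that the reformulation (\ref{ghmfhjm}) looks strictly more general than Definition \ref{DefPur} (arbitrary operators $O$ in place of orthogonal projectors $P$), so one direction is essentially immediate while the other requires a reduction via the polar decomposition. Throughout, write $W := A_{x_N}\cdots A_{x_1}$, so that the expression of interest is $W^{\dagger}W = A_{x_1}^{\dagger}\cdots A_{x_N}^{\dagger}A_{x_N}\cdots A_{x_1}$, and let $c_W$ denote the proportionality scalar in the relevant hypothesis (both conditions are understood for nonzero operators; see the last paragraph).

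First I would establish that (\ref{ghmfhjm}) implies the purity condition, which is the trivial direction. Given an orthogonal projector $P$ with $P W^{\dagger}W P \propto P$ for every word, set $O := P$. Since $P$ is self-adjoint and idempotent, $O^{\dagger}O = P$ and $O^{\dagger}W^{\dagger}W O = P W^{\dagger}W P \propto P = O^{\dagger}O$, so the hypothesis of (\ref{ghmfhjm}) is satisfied and hence $\rank(O)=\rank(P)=1$.

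The substantive direction is that the purity condition implies (\ref{ghmfhjm}), and the key idea is to pass from an arbitrary $O$ satisfying the proportionality to an honest orthogonal projector of the same rank, so that Definition \ref{DefPur} applies. I would take the polar decomposition $O = V|O|$, where $|O| := \sqrt{O^{\dagger}O}$ and $V$ is a partial isometry with initial projection $V^{\dagger}V = P$ (the support projection of $O^{\dagger}O$) and final projection $VV^{\dagger} =: \tilde{P}$ (the projection onto the range of $O$). Substituting $O=V|O|$ into $O^{\dagger}W^{\dagger}W O = c_W O^{\dagger}O$ gives $|O|V^{\dagger}W^{\dagger}W V|O| = c_W |O|^2$. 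Multiplying on both sides by the pseudo-inverse $|O|^{-1}$, which satisfies $|O|^{-1}|O| = |O||O|^{-1} = P$, and using $VP = V$ (equivalently $PV^{\dagger}=V^{\dagger}$), collapses this to $V^{\dagger}W^{\dagger}W V = c_W P$. Finally, conjugating by $V$ and using $VPV^{\dagger}=VV^{\dagger}=\tilde{P}$ yields $\tilde{P}W^{\dagger}W\tilde{P} = c_W \tilde{P}$ for all words. Since $\tilde{P}$ is an orthogonal projector, the purity condition forces $\rank(\tilde{P})=1$; and because $V$ is a partial isometry, $\rank(\tilde{P}) = \rank(V) = \rank(|O|) = \rank(O)$, so $\rank(O)=1$, as required.

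The main obstacle is precisely the non-invertibility of $O$: one cannot naively ``divide out'' $O$ from both sides of the proportionality, so the polar decomposition together with the careful bookkeeping of the support and range projections (the identities $VP=V$, $|O|^{-1}|O|=P$, and $VPV^{\dagger}=\tilde{P}$) is what makes the reduction to a genuine orthogonal projector work. A secondary point to keep in mind is the degenerate case $O=0$ (equivalently the rank-zero projector), which trivially satisfies the hypotheses but not the conclusion; implicitly both the purity condition and (\ref{ghmfhjm}) are to be read for nonzero operators, so that $O^{\dagger}O\neq 0$ and $c_W$ is well defined.
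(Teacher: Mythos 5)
Your proposal is correct and takes essentially the same route as the paper: the easy direction by specializing $O$ to orthogonal projectors, and the substantive direction by collapsing an arbitrary $O$ satisfying the proportionality onto the orthogonal projector of the same rank onto its range, to which Definition \ref{DefPur} then applies. The paper performs this collapse in a single step by multiplying with $(OO^{\dagger})^{\ominus}O$ and $O^{\dagger}(OO^{\dagger})^{\ominus}$, where $(OO^{\dagger})^{\ominus}$ is the pseudo-inverse on the support of $OO^{\dagger}$; since $(OO^{\dagger})^{\ominus}O = V|O|^{-1}$ in your notation, your polar-decomposition bookkeeping (including the careful handling of $V^{\dagger}V$, $VV^{\dagger}$, and the degenerate case $O=0$) is the same argument written in two steps rather than one.
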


\begin{proof}
We start proving the direction that, if $\{A_x\}_{x=0}^{d-1}$ satisfies condition (\ref{ghmfhjm}), then  $\{A_x\}_{x=0}^{d-1}$ also satisfies the purity condition.
Suppose that condition (\ref{ghmfhjm}) holds. For the subset of operators $O=P$ for projectors $P$, we thus find that condition (\ref{dsfbsfg}) holds, and hence $\{A_x\}_{x=0}^{d-1}$ satisfies the purity condition.

Conversely, we wish to show that, if  $\{A_x\}_{x=0}^{d-1}$ satisfies the purity condition, then $\{A_x\}_{x=0}^{d-1}$ also satisfies condition (\ref{ghmfhjm}). Hence, assume that  $\{A_x\}_{x=0}^{d-1}$ satisfies the purity condition. Let $O$ be any operator on $\mathcal{H}$ such that 
\begin{equation}
\label{bfbdfsg}
O^{\dagger} A_{x_1}^{\dagger}\cdots A_{x_N}^{\dagger}A_{x_N}\cdots A_{x_1}O \propto O^{\dagger}O
\end{equation}
for all $N$ and all $x_1,\ldots,x_N$.
We next note that $OO^{\dagger}$ is positive semi-definite, and let $(OO^{\dagger})^{\ominus}$ denote the inverse on the support of  $OO^{\dagger}$, such that $(OO^{\dagger})^{\ominus}OO^{\dagger} = OO^{\dagger}(OO^{\dagger})^{\ominus} = P$, where $P$ is the projector onto the support of $OO^{\dagger}$. Multiplying (\ref{bfbdfsg}) from the left with $(OO^{\dagger})^{\ominus}O$ and from the right with  $O^{\dagger}(OO^{\dagger})^{\ominus}$ results in $P A_{x_1}^{\dagger}\cdots A_{x_N}^{\dagger}A_{x_N}\cdots A_{x_1}P \propto P$.
Since the purity condition is assumed to hold, it follows that $\mathrm{rank}(P) = 1$. However, $\mathrm{rank}(P) = \mathrm{rank}(OO^{\dagger}) =\mathrm{rank}(O)$. We can thus conclude that if $\{A_x\}_{x=0}^{d-1}$ satisfies the purity condition, then $\{A_x\}_{x=0}^{d-1}$ also satisfies condition (\ref{ghmfhjm}).
\end{proof}

In the following lemma we use the convergence in (\ref{bfdnsfgnsf}) to show that $\boldsymbol{M}_{\infty}$ almost surely is a rank-one operator. A key-step in the proof is the equality (\ref{ndhndh}) below, which with  (\ref{bfdnsfgnsf}) and the observation that $\Vert \sqrt{\boldsymbol{M}_N}\Vert \leq 1$ yields the limit  in (\ref{awaaerber}). Given that we know that  $\lim_{N\rightarrow\infty}\boldsymbol{M}_N = \boldsymbol{M}_{\infty}\,\, a.s.$, it seems reasonable that we in the limit $N\rightarrow\infty $ obtain the proportionality in  (\ref{kjgbangjb}). The latter does via Lemma  \ref{netrswethn} imply the desired result that $\boldsymbol{M}_{\infty}$ almost surely is a rank-one operator. However, there is a complication to this reasoning, namely the sequence of unitary operators, $\boldsymbol{U}_{N}$. These unitary operators are the result of a polar decomposition of the operators $A_{x_N}\cdots A_{x_1}/\sqrt{\Tr(A_{x_1}^{\dagger}  \cdots A_{x_N}^{\dagger}A_{x_N}\cdots A_{x_1})}$, and we have very little control of the sequence $(\boldsymbol{U}_{N})_{N\in\mathbb{N}}$, and in particular whether it possesses a limit $\boldsymbol{U}_{\infty}$. However, we can mend this issue by using the fact that the set of unitary operators on a finite-dimensional Hilbert space is sequentially compact.  Recall that a topological space, $C$, is sequentially compact if, for every sequence $(x_{j})_{j\in\mathbb{N}}\subset C$, there exists a subsequence $(x_{j_k})_{k\in\mathbb{N}}$ such that $x_{j_k}$ converges to an element in $C$. On a finite-dimensional complex  Hilbert space with dimension $D$, the set of unitary operators, $U(D)$, forms a sequentially compact (as well as compact) space. Hence,   whenever we have a  sequence $(U_j)_{j\in\mathbb{N}}$ in $U(D)$, then there exists a subsequence $(U_{j_k})_{k\in\mathbb{N}}$ such that $U_{j_k}$ converges to an element in $U(D)$. 

\begin{lemma}
\label{adfbafdba} 
With the assumptions in Lemma \ref{fnjjnfddanj}, let $\boldsymbol{M}_{\infty}$ be the  random operator  guaranteed by Lemma \ref{fnjjnfddanj}. If $\{A_x\}_{x=0}^{d-1}$ satisfies the purity condition, then 
 \begin{equation}
 \mathrm{rank}(\boldsymbol{M}_{\infty}) = 1\quad a.s. 
 \end{equation}
 \end{lemma}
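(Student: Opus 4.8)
The plan is to exploit the polar decomposition of the normalized products $\boldsymbol{W}_N = A_{\boldsymbol{x}_N}\cdots A_{\boldsymbol{x}_1}$ in order to rewrite $\boldsymbol{M}_{N+p}$ through $\boldsymbol{M}_N$, and then to combine the near-equality of $\boldsymbol{M}_{N+p}$ and $\boldsymbol{M}_N$ from Lemma \ref{ghdgguk} with the operator-reformulation of purity in Lemma \ref{netrswethn}. Since $\boldsymbol{M}_N = \boldsymbol{W}_N^{\dagger}\boldsymbol{W}_N/\Tr(\boldsymbol{W}_N^{\dagger}\boldsymbol{W}_N)$ is almost surely a density operator, I would first write $\boldsymbol{W}_N/\sqrt{\Tr(\boldsymbol{W}_N^{\dagger}\boldsymbol{W}_N)} = \boldsymbol{U}_N\sqrt{\boldsymbol{M}_N}$, where $\boldsymbol{U}_N$ is a unitary obtained from the polar decomposition (the partial isometry extends to a unitary in finite dimension). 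Substituting $\boldsymbol{W}_{N+p} = A_{\boldsymbol{x}_{N+p}}\cdots A_{\boldsymbol{x}_{N+1}}\boldsymbol{W}_N$ into the definition of $\boldsymbol{M}_{N+p}$, and using that the conditional probability of the block $(\boldsymbol{x}_{N+1},\ldots,\boldsymbol{x}_{N+p})$ equals the trace-ratio $\Tr(\boldsymbol{W}_{N+p}^{\dagger}\boldsymbol{W}_{N+p})/\Tr(\boldsymbol{W}_N^{\dagger}\boldsymbol{W}_N)$, the prefactor cancels and one arrives (with $\boldsymbol{M}_N = \sqrt{\boldsymbol{M}_N}\boldsymbol{U}_N^{\dagger}\boldsymbol{U}_N\sqrt{\boldsymbol{M}_N}$ supplying the subtracted term) at the key identity
\begin{equation}
\label{ndhndh}
E\big(\Vert\boldsymbol{M}_{N+p}-\boldsymbol{M}_N\Vert_2 \,\big|\, \boldsymbol{x}_N,\ldots,\boldsymbol{x}_1\big) = \sum_{x'_1,\ldots,x'_p}\Big\Vert\sqrt{\boldsymbol{M}_N}\boldsymbol{U}_N^{\dagger}A_{x'_1}^{\dagger}\cdots A_{x'_p}^{\dagger}A_{x'_p}\cdots A_{x'_1}\boldsymbol{U}_N\sqrt{\boldsymbol{M}_N} - \boldsymbol{s}_N(x')\boldsymbol{M}_N\Big\Vert_2,
\end{equation}
where $\boldsymbol{s}_N(x')$ is the (non-negative) conditional probability of the block $x' = (x'_1,\ldots,x'_p)$.

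Next I would invoke Lemma \ref{ghdgguk}, i.e. the limit (\ref{bfdnsfgnsf}), which tells us the left-hand side of (\ref{ndhndh}) tends to zero almost surely. Because the right-hand side is a finite sum of non-negative terms, each term must vanish individually, and the uniform bound $\Vert\sqrt{\boldsymbol{M}_N}\Vert\leq 1$ (together with $\Vert\boldsymbol{U}_N\Vert = 1$) keeps all operator factors norm-bounded. This yields, for every fixed $p$ and every fixed block $x'$,
\begin{equation}
\label{awaaerber}
\lim_{N\rightarrow\infty}\Big\Vert\sqrt{\boldsymbol{M}_N}\boldsymbol{U}_N^{\dagger}A_{x'_1}^{\dagger}\cdots A_{x'_p}^{\dagger}A_{x'_p}\cdots A_{x'_1}\boldsymbol{U}_N\sqrt{\boldsymbol{M}_N} - \boldsymbol{s}_N(x')\boldsymbol{M}_N\Big\Vert_2 = 0 \quad a.s.
\end{equation}

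The delicate point, and the one I expect to be the main obstacle, is passing to the limit $N\rightarrow\infty$: although $\boldsymbol{M}_N\rightarrow\boldsymbol{M}_{\infty}$ almost surely by Lemma \ref{fnjjnfddanj} (whence $\sqrt{\boldsymbol{M}_N}\rightarrow\sqrt{\boldsymbol{M}_{\infty}}$ by continuity of the positive square root), the unitaries $\boldsymbol{U}_N$ need not converge. I would resolve this using sequential compactness of the unitary group $U(D)$ on the finite-dimensional space: for almost every realization, extract a subsequence $N_k$ (by a diagonal argument, a single subsequence serving simultaneously all $p$ and all blocks) along which $\boldsymbol{U}_{N_k}\rightarrow\boldsymbol{U}_{\infty}\in U(D)$. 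Along this subsequence the first operator term in (\ref{awaaerber}) converges, so $\boldsymbol{s}_{N_k}(x')\boldsymbol{M}_{N_k}$ converges; since $\Tr\boldsymbol{M}_{N_k}=1$ we have $\boldsymbol{s}_{N_k}(x') = \Tr(\boldsymbol{s}_{N_k}(x')\boldsymbol{M}_{N_k})$, so $\boldsymbol{s}_{N_k}(x')$ converges to some $\boldsymbol{s}_{\infty}(x')\geq 0$, and (\ref{awaaerber}) passes to the limit as
\begin{equation}
\label{kjgbangjb}
\sqrt{\boldsymbol{M}_{\infty}}\boldsymbol{U}_{\infty}^{\dagger}A_{x'_1}^{\dagger}\cdots A_{x'_p}^{\dagger}A_{x'_p}\cdots A_{x'_1}\boldsymbol{U}_{\infty}\sqrt{\boldsymbol{M}_{\infty}} = \boldsymbol{s}_{\infty}(x')\,\boldsymbol{M}_{\infty},
\end{equation}
holding almost surely for all $p$ and all blocks $x'$.

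Finally I would set $O := \boldsymbol{U}_{\infty}\sqrt{\boldsymbol{M}_{\infty}}$, so that $O^{\dagger}O = \sqrt{\boldsymbol{M}_{\infty}}\boldsymbol{U}_{\infty}^{\dagger}\boldsymbol{U}_{\infty}\sqrt{\boldsymbol{M}_{\infty}} = \boldsymbol{M}_{\infty}$ and (\ref{kjgbangjb}) reads $O^{\dagger}A_{x'_1}^{\dagger}\cdots A_{x'_p}^{\dagger}A_{x'_p}\cdots A_{x'_1}O \propto O^{\dagger}O$ for all $p$ and all $x'$. Since $\{A_x\}_{x=0}^{d-1}$ satisfies the purity condition, the reformulation in Lemma \ref{netrswethn} forces $\mathrm{rank}(O) = 1$; and since $\boldsymbol{U}_{\infty}$ is invertible, $\mathrm{rank}(\sqrt{\boldsymbol{M}_{\infty}}) = \mathrm{rank}(O) = 1$, whence $\mathrm{rank}(\boldsymbol{M}_{\infty}) = 1$. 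All the almost-sure statements used above hold off a single common null set (the union of the null sets of Lemmas \ref{fnjjnfddanj} and \ref{ghdgguk} together with the countably many blocks), so $\mathrm{rank}(\boldsymbol{M}_{\infty}) = 1$ holds almost surely, as claimed.
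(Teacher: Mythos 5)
Your proposal is correct and follows essentially the same route as the paper's proof: the polar decomposition $\boldsymbol{W}_N/\sqrt{\Tr(\boldsymbol{W}_N^{\dagger}\boldsymbol{W}_N)} = \boldsymbol{U}_N\sqrt{\boldsymbol{M}_N}$, the conditional-expectation identity, Lemma \ref{ghdgguk} to force each block term to vanish, sequential compactness of $U(D)$ to extract $\boldsymbol{U}_{\infty}$, and finally Lemma \ref{netrswethn}. The only (immaterial) differences are cosmetic: you keep the $\sqrt{\boldsymbol{M}_N}$-sandwiched form and take $O = \boldsymbol{U}_{\infty}\sqrt{\boldsymbol{M}_{\infty}}$, handling the scalar weight by the trace trick, whereas the paper multiplies through to work with $\boldsymbol{M}_N$ itself and takes $O = \boldsymbol{U}_{\infty}\boldsymbol{M}_{\infty}$.
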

 
\begin{proof}
In order to prove this lemma, let us start defining  $M_{x_N,\ldots,x_1} :=  A_{x_1}^{\dagger} \cdots A_{x_N}^{\dagger}A_{x_N}\cdots A_{x_1}/\Tr(A_{x_1}^{\dagger}  \cdots A_{x_N}^{\dagger}A_{x_N}\cdots A_{x_1})$, and thus we have $\boldsymbol{M}_N = M_{\boldsymbol{x}_N,\ldots,\boldsymbol{x}_1}$.
With a unitary operator $U_{x_N,\ldots,x_1}$, we make a polar decomposition such that $U_{x_N,\ldots,x_1}\sqrt{M_{x_N,\ldots,x_1}} = A_{x_N}\cdots A_{x_1}/\sqrt{\rule{0mm}{0ex}\Tr(A_{x_1}^{\dagger}  \cdots A_{x_N}^{\dagger}A_{x_N}\cdots A_{x_1})}$,  and define $\boldsymbol{U}_{N} := U_{\boldsymbol{x}_N,\ldots,\boldsymbol{x}_1}$. Then, we have
\begin{equation*}
\begin{split}
E\Big(\Vert \boldsymbol{M}_{N+p} -\boldsymbol{M}_N\Vert & \Big\vert \boldsymbol{x}_N = x_N,\ldots,\boldsymbol{x}_1 = x_1\Big) \\
= & \sum_{x_{N+p},\ldots,x_{N+1}}\bigg\Vert 
\sqrt{M_{x_N,\ldots,x_1}}U_{x_N,\ldots,x_1}^{\dagger}A_{x_{N+1}}^{\dagger}\cdots A_{x_{N+p}}^{\dagger}  A_{x_{N+p}} \cdots A_{x_{N+1}}U_{x_N,\ldots,x_1}\sqrt{M_{x_N,\ldots,x_1}}
\\
& \quad\quad- M_{x_N,\ldots,x_1} \Tr(  A_{x_{N+1}}^{\dagger} \cdots  A_{x_{N+p}}^{\dagger}A_{x_{N+p}} \cdots  A_{x_{N+1}}U_{x_N,\ldots,x_1}M_{x_N,\ldots,x_1}U_{x_N,\ldots,x_1}^{\dagger}  )
\bigg\Vert,\\
= & E\left( \sum_{x'_{p},\ldots,x'_{1}}\bigg\Vert 
\sqrt{\boldsymbol{M}_{N}}\boldsymbol{U}_{N}^{\dagger}A_{x'_{1}}^{\dagger}\cdots A_{x'_{p}}^{\dagger}  A_{x'_{p}} \cdots A_{x'_{1}}\boldsymbol{U}_{N}\sqrt{\boldsymbol{M}_{N}}\right.
\\
&\left. \quad\quad- \boldsymbol{M}_{N} \Tr(  A_{x'_{1}}^{\dagger} \cdots  A_{x'_{p}}^{\dagger}A_{x'_{p}} \cdots  A_{x'_{1}}\boldsymbol{U}_{N}\boldsymbol{M}_{N}\boldsymbol{U}_{N}^{\dagger}  )
\bigg\Vert \hspace{0.1cm}\bigg\vert \boldsymbol{x}_{N} = x_N,\ldots,\boldsymbol{x}_1=x_1  \right),
\end{split}
\end{equation*}
where we in the second equality have renamed the indices $x_{N+1},\ldots,x_{N+p}$ to $x'_{1},\ldots,x'_{p}$. Consequently,
\begin{equation}
\label{ndhndh}
\begin{split}
 E\Big(\Vert \boldsymbol{M}_{N+p} -\boldsymbol{M}_N\Vert \Big\vert \boldsymbol{x}_N ,\ldots,\boldsymbol{x}_1\Big)
& = \sum_{x'_{p},\ldots,x'_{1}} \bigg\Vert 
\sqrt{\boldsymbol{M}_{N}}\boldsymbol{U}_{N}^{\dagger}A_{x'_{1}}^{\dagger}\cdots A_{x'_{p}}^{\dagger}  A_{x'_{p}} \cdots A_{x'_{1}}\boldsymbol{U}_{N}\sqrt{\boldsymbol{M}_{N}}
\\
& \quad\quad\quad\quad- \boldsymbol{M}_{N} \Tr(  A_{x'_{1}}^{\dagger} \cdots  A_{x'_{p}}^{\dagger}A_{x'_{p}} \cdots  A_{x'_{1}}\boldsymbol{U}_{N}\boldsymbol{M}_{N}\boldsymbol{U}_{N}^{\dagger}  )\bigg\Vert,
\end{split}
\end{equation}
where we have used that $\boldsymbol{M}_N$ and $\boldsymbol{U}_N$ are deterministic functions of $ \boldsymbol{x}'_{N} ,\ldots,\boldsymbol{x}'_1$.
Since $\boldsymbol{M}_N$ is a density operator, or the zero operator, it follows that $\Vert \sqrt{\boldsymbol{M}_N}\Vert \leq 1$. By combining this observation with (\ref{ndhndh}), and with Lemma  \ref{ghdgguk}, it follows that  
\begin{equation}
\label{awaaerber}
\begin{split}
& \lim_{N\rightarrow \infty} \sum_{x_{p},\ldots,x_{1}} \bigg\Vert 
\boldsymbol{M}_{N}\boldsymbol{U}_{N}^{\dagger}A_{x_{1}}^{\dagger}\cdots A_{x_{p}}^{\dagger}  A_{x_{p}} \cdots A_{x_{1}}\boldsymbol{U}_{N}\boldsymbol{M}_{N}\\
& \quad \quad\quad \quad\quad \quad - \boldsymbol{M}^2_{N} \Tr(  A_{x_{1}}^{\dagger} \cdots  A_{x_{p}}^{\dagger}A_{x_{p}} \cdots  A_{x_{1}}\boldsymbol{U}_{N}\boldsymbol{M}_{N}\boldsymbol{U}_{N}^{\dagger}  )\bigg\Vert = 0\quad a.s.
\end{split}
\end{equation}
Next, we recall that Lemma \ref{fnjjnfddanj} guarantees that $\lim_{N\rightarrow\infty}\boldsymbol{M}_{N} = \boldsymbol{M}_{\infty}\,\, a.s.$, where $\boldsymbol{M}_{\infty}$ almost surely is a density operator. Let $\omega\in\Omega$ be such that $\lim_{N\rightarrow\infty}\boldsymbol{M}_{N}(\omega) = \boldsymbol{M}_{\infty}(\omega)$, where $\boldsymbol{M}_{\infty}(\omega)$ is a density operator, and the limit in (\ref{awaaerber}) holds. The latter implies a sequence of unitary operators  $(\boldsymbol{U}_{N}(\omega))_{N\in\mathbb{N}}\subset U(D)$. By the  sequential compactness of $U(D)$, it follows that there exists a subsequence $\big(\boldsymbol{U}_{N_k}(\omega)\big)_{k\in\mathbb{N}}$ and an element  $\boldsymbol{U}_{\infty}(\omega)\in U(D)$, such that $\lim_{k\rightarrow\infty}\boldsymbol{U}_{N_k}(\omega) = \boldsymbol{U}_{\infty}(\omega)$. It still remains true that $\lim_{k\rightarrow\infty}\boldsymbol{M}_{N_k}(\omega) = \boldsymbol{M}_{\infty}(\omega)$, and similarly the limit in  (\ref{awaaerber}) remains true with $N$ replaced with $N_k$. With the definition
\begin{align*}
B_k(\omega) := \sum_{x_{p},\ldots,x_{1}} \bigg\Vert
\boldsymbol{M}_{N_k}&(\omega)\boldsymbol{U}_{N_k}^{\dagger}(\omega)A_{x_{1}}^{\dagger}\cdots A_{x_{p}}^{\dagger} A_{x_{p}} \cdots A_{x_{1}}\boldsymbol{U}_{N_k}(\omega)\boldsymbol{M}_{N_k}(\omega) \\
 & \quad \quad - \boldsymbol{M}^2_{N_k}(\omega)\Tr\big[  A_{x_{1}}^{\dagger} \cdots  A_{x_{p}}^{\dagger}A_{x_{p}} \cdots  A_{x_{1}}\boldsymbol{U}_{N_k}(\omega)\boldsymbol{M}_{N_k}(\omega)\boldsymbol{U}_{N_k}(\omega)^{\dagger}  \big]\bigg\Vert,
\end{align*}
it thus follows by (\ref{awaaerber}) that $B_k(\omega) \rightarrow 0$.
Define
\begin{equation}
\begin{split}
\label{hmdghmg}
B_{\infty}(\omega) & :=  \sum_{x_{p},\ldots,x_{1}} \bigg\Vert 
\boldsymbol{M}_{\infty}(\omega)\boldsymbol{U}_{\infty}(\omega)^{\dagger}A_{x_{1}}^{\dagger}\cdots A_{x_{p}}^{\dagger} A_{x_{p}} \cdots A_{x_{1}}\boldsymbol{U}_{\infty}(\omega)\boldsymbol{M}_{\infty}(\omega) \\
& \quad \quad \quad \quad \quad \quad - \boldsymbol{M}^2_{\infty}(\omega) \Tr[ A_{x_{1}}^{\dagger} \cdots  A_{x_{p}}^{\dagger}A_{x_{p}} \cdots A_{x_{1}}\boldsymbol{U}_{\infty}(\omega)\boldsymbol{M}_{\infty}(\omega)\boldsymbol{U}_{\infty}(\omega)^{\dagger}  ]\bigg\Vert.
\end{split}
\end{equation}
Next we wish to show that $B_{k}(\omega)\rightarrow B_{\infty}(\omega)$. By the inverted triangle inequality, a rearrangement, and the triangle inequality, one obtains
\begin{equation}
\label{gnsfgns}
\begin{split}
|B_{\infty}(\omega)-B_k(\omega)|
  \leq  &\sum_{x_{p},\ldots,x_{1}} \bigg\Vert 
\boldsymbol{M}_{\infty}(\omega)\boldsymbol{U}_{\infty}(\omega)^{\dagger}A_{x_{1}}^{\dagger}\cdots A_{x_{p}}^{\dagger}  A_{x_{p}} \cdots A_{x_{1}}\boldsymbol{U}_{\infty}(\omega)\boldsymbol{M}_{\infty}(\omega)\\
&  
 \quad \quad-\boldsymbol{M}_{N_k}(\omega)\boldsymbol{U}_{N_k}^{\dagger}(\omega)A_{x_{1}}^{\dagger}\cdots A_{x_{p}}^{\dagger}  A_{x_{p}} \cdots A_{x_{1}}\boldsymbol{U}_{N_k}(\omega)\boldsymbol{M}_{N_k}(\omega)\bigg\Vert \\
& +\sum_{x_{p},\ldots,x_{1}} \bigg\Vert  \boldsymbol{M}^2_{\infty}(\omega) \Tr\big(  A_{x_{1}}^{\dagger} \cdots  A_{x_{p}}^{\dagger}A_{x_{p}} \cdots  A_{x_{1}}\boldsymbol{U}_{\infty}(\omega)\boldsymbol{M}_{\infty}(\omega)\boldsymbol{U}_{\infty}(\omega)^{\dagger} \big)\\
&\quad \quad- \boldsymbol{M}^2_{N_k}(\omega) \Tr\big(  A_{x_{1}}^{\dagger} \cdots  A_{x_{p}}^{\dagger}A_{x_{p}} \cdots  A_{x_{1}}\boldsymbol{U}_{N_k}(\omega)\boldsymbol{M}_{N_k}(\omega)\boldsymbol{U}_{N_k}(\omega)^{\dagger}  \big)\bigg\Vert\\
\end{split}
\end{equation}
The goal is to utilize the fact that $\boldsymbol{M}_{N_k}(\omega)\rightarrow \boldsymbol{M}_{\infty}(\omega)$, and thus that  $\boldsymbol{M}^2_{N_k}(\omega)\rightarrow \boldsymbol{M}^2_{\infty}(\omega)$, and similarly that $\boldsymbol{U}_{N_k}(\omega)\rightarrow \boldsymbol{U}_{\infty}(\omega)$. To this end, in the first sum in (\ref{gnsfgns}), inside the norm, one can subtract and add $\boldsymbol{M}_{N_k}(\omega)\boldsymbol{U}_{N_k}^{\dagger}(\omega)A_{x_{1}}^{\dagger}\cdots A_{x_{p}}^{\dagger}  A_{x_{p}} \cdots A_{x_{1}}\boldsymbol{U}_{\infty}(\omega)\boldsymbol{M}_{\infty}(\omega)$. Similarly in the second sum, we subtract and add $\boldsymbol{M}^2_{N_k}(\omega) \Tr(  A_{x_{1}}^{\dagger} \cdots  A_{x_{p}}^{\dagger}A_{x_{p}} \cdots  A_{x_{1}}\boldsymbol{U}_{\infty}(\omega)\boldsymbol{M}_{\infty}(\omega)\boldsymbol{U}_{\infty}(\omega)^{\dagger}  )$ inside of the norm. 
One can repeatedly use the triangle inequality, subtractions and additions in the similar spirit as above, and general relations such as $\Vert AB\Vert \leq \Vert A\Vert \Vert B\Vert$, $|\Tr(AB)|\leq \Vert A\Vert \Vert B\Vert _{1}$, as well as observations such as 
$\Vert \boldsymbol{U}_{\infty}(\omega)\boldsymbol{M}_{\infty}(\omega)\Vert \leq 1$, $\big\Vert \boldsymbol{M}_{N_k}(\omega)\boldsymbol{U}_{N_k}^{\dagger}(\omega)\big\Vert\leq 1$, $\Vert \boldsymbol{U}_{\infty}(\omega)\boldsymbol{M}_{\infty}(\omega)\boldsymbol{U}_{\infty}(\omega)^{\dagger}  \Vert_{1}  = \Vert \boldsymbol{M}_{\infty}(\omega) \Vert_{1} = 1$,
$\sum_{x_{p},\ldots,x_{1}}\Vert A_{x_{1}}^{\dagger}\cdots A_{x_{p}}^{\dagger}  A_{x_{p}} \cdots A_{x_{1}}\Vert \leq d^p$, and $\Vert \boldsymbol{M}^2_{N_k}(\omega)\Vert \leq 1$
 to show that
\begin{equation*}
\begin{split}
|B_{\infty}(\omega)-B_n(\omega)| 
\leq    &\hspace{0.1cm} d^p\Vert 
\boldsymbol{M}_{\infty}(\omega)-\boldsymbol{M}_{N_k}(\omega)\big\Vert +d^p\Vert\boldsymbol{U}_{\infty}(\omega)^{\dagger}-\boldsymbol{U}_{N_k}^{\dagger}(\omega)\Vert \\
& +  d^p  \Vert\boldsymbol{M}_{\infty}(\omega)-\boldsymbol{M}_{N_k}(\omega)\Vert +d^p\Vert\boldsymbol{U}_{\infty}(\omega)-\boldsymbol{U}_{N_k}(\omega)\Vert  \\
& + d^p\Vert\boldsymbol{M}^2_{\infty}(\omega)-\boldsymbol{M}^2_{N_k}(\omega)\Vert
+d^p\Vert\boldsymbol{U}_{\infty}(\omega)-\boldsymbol{U}_{N_k}(\omega)\Vert_1\\
& + d^p  \Vert \boldsymbol{M}_{\infty}(\omega)-\boldsymbol{M}_{N_k}(\omega)\Vert_1 +d^p\Vert\boldsymbol{U}_{\infty}(\omega)^{\dagger}-\boldsymbol{U}_{N_k}(\omega)^{\dagger}\Vert_1,\\
=: & C_k(\omega),
\end{split}
\end{equation*}
and thus  $C_k(\omega)\rightarrow 0$.
We can conclude that $B_{\infty}(\omega)-B_k(\omega) \leq |B_{\infty}(\omega)-B_k(\omega)|\leq C_k(\omega)$, which implies $0\leq B_{\infty}(\omega) \leq B_k(\omega) + C_k(\omega)$. By combining this observation with  $B_k(\omega) \rightarrow 0$ and $C_k(\omega)\rightarrow 0$, as well as with the definition of $B_{\infty}(\omega)$ in (\ref{hmdghmg}), we can conclude that
\begin{equation}
\begin{split}
B_{\infty}(\omega)  & = \sum_{x_{p},\ldots,x_{1}} \bigg\Vert 
\boldsymbol{M}_{\infty}(\omega)\boldsymbol{U}_{\infty}(\omega)^{\dagger}A_{x_{1}}^{\dagger}\cdots A_{x_{p}}^{\dagger}A_{x_{p}} \cdots A_{x_{1}}\boldsymbol{U}_{\infty}(\omega)\boldsymbol{M}_{\infty}(\omega) \\
 & \quad\quad\quad \quad \quad - \boldsymbol{M}^2_{\infty}(\omega) \Tr\big(  A_{x_{1}}^{\dagger} \cdots  A_{x_{p}}^{\dagger}A_{x_{p}} \cdots  A_{x_{1}}\boldsymbol{U}_{\infty}(\omega)\boldsymbol{M}_{\infty}(\omega)\boldsymbol{U}_{\infty}(\omega)^{\dagger}  \big)\bigg\Vert,\\
& = 0.
\end{split}
\end{equation}
This in turn implies
\begin{equation}
\label{kjgbangjb}
\begin{split}
& \boldsymbol{M}_{\infty}(\omega)\boldsymbol{U}_{\infty}^{\dagger}(\omega)A_{x_{1}}^{\dagger}\cdots A_{x_{p}}^{\dagger}  A_{x_{p}} \cdots A_{x_{1}}\boldsymbol{U}_{\infty}(\omega)\boldsymbol{M}_{\infty}(\omega)\propto\boldsymbol{M}_{\infty}(\omega)\boldsymbol{U}_{\infty}^{\dagger}(\omega)\boldsymbol{U}_{\infty}(\omega)\boldsymbol{M}_{\infty}(\omega).
 \end{split}
\end{equation}

Since we have assumed the purity condition, it follows by Lemma \ref{netrswethn}, with  $O:=\boldsymbol{U}_{\infty}(\omega)\boldsymbol{M}_{\infty}(\omega)$, that $\mathrm{rank}(\boldsymbol{M}_{\infty}(\omega)) =\mathrm{rank}(\boldsymbol{U}_{\infty}(\omega)\boldsymbol{M}_{\infty}(\omega)) = 1$. Since this holds for almost all elements $\omega$ in the sample space, we can conclude that  $\mathrm{rank}(\boldsymbol{M}_{\infty})  = 1\,\, a.s.$
\end{proof}

\subsection{\label{SecRankOneImpliesPur}$\mathrm{rank}(\boldsymbol{M}_{\infty}) = 1\,\, a.s.$ implies the purity condition}

While we in Appendix \ref{SecPurImpliesRankOne} demonstrated that the purity condition is  sufficient for $\boldsymbol{M}_{\infty}$ being a rank-one operator, we here show that it also is a necessary condition. The idea is to assume that $\boldsymbol{M}_{\infty}$ has rank one, but that the purity condition does not hold. The latter means that there exists a projector $P$ with $\mathrm{rank}(P)>1$, while still $P A_{x_1}^{\dagger}\cdots A_{x_N}^{\dagger}A_{x_N}\cdots A_{x_1}P \propto P$. The latter is then showed to imply  $P\boldsymbol{M}_{\infty}(\omega)P  \propto P$. However, since $\mathrm{rank}(\boldsymbol{M}_{\infty}) = 1$, the only possibility is that $P\boldsymbol{M}_{\infty}(\omega)P  =0$. This turns out to be in contradiction with $\sum_{x=0}^{d-1}A_x^{\dagger}A_x = \1$.

\begin{lemma}
\label{ghdnghnhn}
Let $\{A_x\}_{x=0}^{d-1}$ be operators on a finite-dimensional complex Hilbert space, $\mathcal{H}$, such that $\sum_{x=0}^{d-1}A_x^{\dagger}A_x = \1$.  Let $(\boldsymbol{M}_N)_{N\in\mathbb{N}}$  be as defined in (\ref{gndghnghn}) with respect to $(\boldsymbol{x}_N)_{N\in\mathbb{N}}$ and distributed as in (\ref{dgnfsgn}). 
 Let $\boldsymbol{M}_{\infty} :=  \lim_{N\rightarrow \infty} \boldsymbol{M}_N\,\, a.s.$, as guaranteed by Lemma  \ref{fnjjnfddanj}.
If  $\mathrm{rank}(\boldsymbol{M}_{\infty}) = 1\,\, a.s.$, then $\{A_x\}_{x=0}^{d-1}$ satisfies the purity condition in Definition \ref{DefPur}.
\end{lemma}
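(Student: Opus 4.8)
The plan is to argue by contradiction: I would assume that $\mathrm{rank}(\boldsymbol{M}_{\infty}) = 1$ almost surely while the purity condition \emph{fails}. By Definition \ref{DefPur}, failure means there is an orthogonal projector $P$ with $\mathrm{rank}(P)\geq 2$ together with nonnegative scalars $r_{x_1,\ldots,x_N}$ such that $P A_{x_1}^{\dagger}\cdots A_{x_N}^{\dagger}A_{x_N}\cdots A_{x_1}P = r_{x_1,\ldots,x_N}P$ for all $N$ and all $(x_1,\ldots,x_N)$. The first concrete computation I would carry out is the clean identity $E(\boldsymbol{M}_N) = \tfrac{1}{D}\1$, valid for every $N$. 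Writing the expectation against the distribution (\ref{dgnfsgn}), the probability weight $\tfrac{1}{D}\Tr(\boldsymbol{W}_N^{\dagger}\boldsymbol{W}_N)$ cancels the normalizing trace $\Tr(\boldsymbol{W}_N^{\dagger}\boldsymbol{W}_N)$ in $\boldsymbol{M}_N$, leaving $\tfrac{1}{D}\sum_{x_1,\ldots,x_N}A_{x_1}^{\dagger}\cdots A_{x_N}^{\dagger}A_{x_N}\cdots A_{x_1}$, which collapses to $\tfrac{1}{D}\1$ by iterating $\sum_x A_x^{\dagger}A_x = \1$ from the innermost pair outward. Since Lemma \ref{fnjjnfddanj} supplies $\lim_{N\to\infty}E(\boldsymbol{M}_N)=E(\boldsymbol{M}_{\infty})$, this gives $E(\boldsymbol{M}_{\infty}) = \tfrac{1}{D}\1$.

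Next I would show that the hypothetical projector forces $P\boldsymbol{M}_N P$ to be a scalar multiple of $P$. Compressing $\boldsymbol{M}_N = \boldsymbol{W}_N^{\dagger}\boldsymbol{W}_N/\Tr(\boldsymbol{W}_N^{\dagger}\boldsymbol{W}_N)$ by $P$ and invoking the assumed proportionality yields $P\boldsymbol{M}_N P = \boldsymbol{c}_N P$ almost surely, where $\boldsymbol{c}_N := r_{\boldsymbol{x}_1,\ldots,\boldsymbol{x}_N}/\Tr(\boldsymbol{W}_N^{\dagger}\boldsymbol{W}_N)\geq 0$. Because $\boldsymbol{M}_N\to\boldsymbol{M}_{\infty}$ almost surely (Lemma \ref{fnjjnfddanj}) and $P\neq 0$, I can recover the scalar as $\boldsymbol{c}_N = \Tr(P\boldsymbol{M}_N P)/\Tr(P)$ and pass to the limit using continuity of the trace, obtaining $P\boldsymbol{M}_{\infty}P = \boldsymbol{c}_{\infty}P$ almost surely with $\boldsymbol{c}_{\infty} := \Tr(P\boldsymbol{M}_{\infty}P)/\Tr(P)\geq 0$.

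Then I would combine the two pieces. Since $\boldsymbol{M}_{\infty}$ has rank one almost surely, its compression $P\boldsymbol{M}_{\infty}P$ has rank at most one, whereas $\boldsymbol{c}_{\infty}P$ has rank $\mathrm{rank}(P)\geq 2$ unless $\boldsymbol{c}_{\infty}=0$; hence $\boldsymbol{c}_{\infty}=0$ and $P\boldsymbol{M}_{\infty}P = 0$ almost surely. Taking expectations and using $E(\boldsymbol{M}_{\infty})=\tfrac{1}{D}\1$ produces the contradiction
\begin{equation}
0 = E(P\boldsymbol{M}_{\infty}P) = P\,E(\boldsymbol{M}_{\infty})\,P = \tfrac{1}{D}P \neq 0,
\end{equation}
where the final inequality holds because $P$ is a nonzero projector. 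This contradiction establishes that the purity condition must hold.

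I expect the main obstacle to be the careful handling of the almost-sure framework at the limiting step: one must justify that the pointwise proportionality $P\boldsymbol{M}_N P \propto P$ genuinely survives the limit (this is exactly why it is worth extracting the explicit scalar $\Tr(P\boldsymbol{M}_N P)/\Tr(P)$ rather than arguing with the proportionality symbol directly) and that the interchange $\lim_N E(\boldsymbol{M}_N) = E(\boldsymbol{M}_{\infty})$ is licensed. Both of these are precisely what Lemma \ref{fnjjnfddanj} delivers, so once that lemma is in hand the remaining ingredients — the identity $E(\boldsymbol{M}_N)=\tfrac{1}{D}\1$ and the rank bookkeeping — are routine.
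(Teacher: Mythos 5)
Your proposal is correct and follows essentially the same route as the paper's proof: contradiction via a rank-$\geq 2$ projector $P$, the observation $P\boldsymbol{M}_N P \propto P$ passed to the limit by extracting the scalar $\Tr(P\boldsymbol{M}_N P)/\Tr(P)$, the rank-one property of $\boldsymbol{M}_\infty$ forcing $P\boldsymbol{M}_\infty P = 0$ a.s., and the contradiction with $E(\boldsymbol{M}_\infty) = \1/D$ obtained from the martingale expectation and Lemma \ref{fnjjnfddanj}. Your final display even makes the contradiction slightly more explicit than the paper does.
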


\begin{proof}
We proceed via a proof by contradiction, and thus assume that $\{A_x\}_{x=0}^{d-1}$ is such that $\mathrm{rank}(\boldsymbol{M}_{\infty}) = 1\,\, a.s$, but that the purity condition does  \emph{not} hold. 
The latter means that there exists a projector $P$ such that
\begin{equation}
\label{hgmhmm}
 P A_{x_1}^{\dagger}\cdots A_{x_N}^{\dagger}A_{x_N}\cdots A_{x_1}P \propto P,\hspace{0.5cm}\forall N\in\mathbb{N},\quad \forall (x_1,\ldots,x_N) \in \{0,\ldots,d-1\}^{\times N},
\end{equation}
but $\mathrm{rank}(P) >1$.
Recall that
\begin{equation}
\begin{split}
\boldsymbol{M}_{N}= & \left\{ \begin{matrix} \dfrac{A_{\boldsymbol{x}_1}^{\dagger}\cdots A_{\boldsymbol{x}_N}^{\dagger}A_{\boldsymbol{x}_N}\cdots  A_{\boldsymbol{x}_1}}{\Tr(A_{\boldsymbol{x}_1}^{\dagger}\cdots A_{\boldsymbol{x}_N}^{\dagger}A_{\boldsymbol{x}_N}\cdots  A_{\boldsymbol{x}_1})} & \textrm{if} & \Tr(A_{\boldsymbol{x}_1}^{\dagger}\cdots A_{\boldsymbol{x}_N}^{\dagger}A_{\boldsymbol{x}_N}\cdots  A_{\boldsymbol{x}_1}) \neq 0,\\
0  & \textrm{if} &  \Tr(A_{\boldsymbol{x}_1}^{\dagger}\cdots A_{\boldsymbol{x}_N}^{\dagger}A_{\boldsymbol{x}_N}\cdots  A_{\boldsymbol{x}_1}) = 0,
\end{matrix}\right.\\
\end{split}
\end{equation}
where we note that 
 $\Tr(A_{\boldsymbol{x}_1}^{\dagger}\cdots A_{\boldsymbol{x}_N}^{\dagger}A_{\boldsymbol{x}_N}\cdots  A_{\boldsymbol{x}_1}) = 0$ if and only if $A_{\boldsymbol{x}_1}^{\dagger}\cdots A_{\boldsymbol{x}_N}^{\dagger}A_{\boldsymbol{x}_N}\cdots  A_{\boldsymbol{x}_1} = 0$.
By (\ref{hgmhmm}), it thus follows that $P \boldsymbol{M}_NP \propto P$.
Let $\omega\in\Omega$ be such that $\lim_{N\rightarrow\infty} \boldsymbol{M}_N(\omega)  = \boldsymbol{M}_{\infty}(\omega)$.
Consequently,
\begin{equation}
\label{qadqgrna}
\lim_{N\rightarrow\infty} \Vert P\boldsymbol{M}_N(\omega)P  - P\boldsymbol{M}_{\infty}(\omega)P\Vert = 0.
\end{equation}
By  $P \boldsymbol{M}_NP \propto P$, we know that there exists a proportionality constant, $a_{N}(\omega)$, for each $N$ and $\omega$, such that 
\begin{equation}
\label{ziolhg}
P \boldsymbol{M}_N(\omega)P = a_N(\omega) P.
\end{equation}
Next we use the general relation $|\Tr(AB)|\leq \Vert A\Vert_1\Vert B\Vert$ to show
\begin{equation}
\begin{split}
|a_N(\omega) \Tr(P) -\Tr(P \boldsymbol{M}_{\infty}(\omega)P)| = & \Big|\Tr\Big(\1\big(a_N(\omega)P -P \boldsymbol{M}_{\infty}(\omega) P\big)\Big)\Big|,\\
\leq & D\Vert P \boldsymbol{M}_N(\omega)P  -P \boldsymbol{M}_{\infty}(\omega) P\Vert \rightarrow  0,
\end{split}
\end{equation}
where we have used $\Vert \1\Vert_1 = D$, (\ref{ziolhg}) and (\ref{qadqgrna}). With $a_{\infty}(\omega) := \Tr(P \boldsymbol{M}_{\infty}(\omega))/\Tr(P)$, we can thus conclude that $\lim_{N\rightarrow\infty}|a_N(\omega)-a_{\infty}(\omega)| = 0$.
Hence,
\begin{equation*}
\begin{split}
\Vert a_{\infty}(\omega)P -P\boldsymbol{M}_{\infty}(\omega)P\Vert = & \Vert a_{\infty}(\omega)P - a_N(\omega)P + a_N(\omega)P -P\boldsymbol{M}_{\infty}(\omega)P\Vert,\\
& \leq  | a_{\infty}(\omega)- a_N(\omega)| + \Vert P\boldsymbol{M}_N(\omega)P -P\boldsymbol{M}_{\infty}(\omega)P\Vert\rightarrow 0.
\end{split}
\end{equation*}

We can thus conclude that $P\boldsymbol{M}_{\infty}(\omega)P  \propto P$, and hence $P\boldsymbol{M}_\infty P \propto P \,\, a.s.$
However, since $\boldsymbol{M}_{\infty}$ by assumption is rank-one $a.s.$, and $\mathrm{rank}(P)>1$, the only possibility is that the proportionality constant is zero, i.e., that $P\boldsymbol{M}_\infty P  = 0\,\, a.s.$ Next, we note that  $E(\boldsymbol{M}_N) =\1/D$. By (\ref{nghngfhm}) in Lemma \ref{fnjjnfddanj}, we know that $E(\boldsymbol{M}_N)\rightarrow E(\boldsymbol{M}_{\infty})$, and thus $E(\boldsymbol{M}_{\infty})  =\1/D$. However, this is in contradiction with $P\boldsymbol{M}_\infty P  = 0\quad a.s.$
\end{proof}

\subsection{$w(N)$ goes to zero exponentially if and only if the purity condition holds}

In Appendices \ref{SecPurImpliesRankOne} and \ref{SecRankOneImpliesPur}, we have shown that $\{A_{x}\}_{x=0}^{d-1}$ satisfies the purity condition if and only if $\mathrm{rank}(\boldsymbol{M}_{\infty}) = 1$. Here we show that the latter in turn is equivalent to $\lim_{N\rightarrow\infty}w(N) = 0$, and that this in turn is equivalent to $w(N)$ converging exponentially fast to zero.

If $\boldsymbol{M}_{\infty}$ has rank one, i.e., $\mathrm{rank}(\boldsymbol{M}_{\infty}) = 1$, then it follows that $\Vert\boldsymbol{M}_{\infty}\Vert = 1$ and we can relate $\Vert\boldsymbol{M}_{\infty}\Vert -\Vert \boldsymbol{M}_N\Vert  = 1-\Vert \boldsymbol{M}_N\Vert$ to the eigenvalues of $\boldsymbol{M}_N$ and the singular values of $\boldsymbol{W}_N$. The latter directly connects to the definition of $w(N)$ in (\ref{bfgndfgnqsygn}). To this end, we introduce the following notation.
For a general operator, $O$, on a space of finite dimension, $D$, let 
$\nu_1^{\downarrow}(O)\geq \cdots \geq \nu_D^{\downarrow}(O)$ be the ordered singular values of $O$. Similarly, for a Hermitian operator, $J$, let $\lambda_1^{\downarrow}(J)\geq \cdots \geq \lambda_D^{\downarrow}(J)$ be the ordered eigenvalues of $J$.

The fact that $w(N)$ converges to zero does, of course, not guarantee that $w(N)$ converges exponentially fast to zero. The latter we obtain by first showing that $w(N)$ is submultiplicative, i.e., $w(N+M)\leq w(N)w(M)$, which implies that $\log w(N)$ is subadditive.

We obtain the submultiplicativity  by rewriting $w(N)$ in terms of the norm of the second order exterior power of $A_{x_N}\cdots A_{x_1}$. In order to introduce the exterior power of an operator, consider a  Hilbert space, $\mathcal{H}$, with an orthonormal basis, $\{|j\rangle\}_{j=1}^{D}$, and $D =\dim\mathcal{H}$.
On the product space $\mathcal{H}\otimes\mathcal{H}$, we construct the swap-operator, $S := \sum_{j,k=1}^{D}|j\rangle\langle k|\otimes|k\rangle\langle j|$, where one may note that $S^2 =\1\otimes\1$ and $S^{\dagger} = S$. We also define the projector $P_A := (\1\otimes\1- S)/2$ onto the the anti-symmetric subspace of  $\mathcal{H}\otimes\mathcal{H}$. 
For an operator $O$ on $\mathcal{H}$, we define the exterior power (of degree two) of $O$ as $\mywedge(O) := P_A [O\otimes O]P_A$. A consequence of this definition is that $\Vert \mywedge(O)\Vert = \nu_1^{\downarrow}(O)\nu_2^{\downarrow}(O)$. 
 By comparing these definitions with (\ref{bfgndfgnqsygn}) below, we can conclude that $w(N) =  \sum_{x_1,\ldots,x_N=0}^{d-1}\Vert\mywedge(A_{x_N}\cdots A_{x_1})\Vert$.

One may note that the above construction presumes that $O$ is an operator from one space to itself. However, due to the operator $F$ [see e.g.~(\ref{mainFdef})], we need a generalization to mappings from one space to another, $O:\mathcal{H}_1\rightarrow \mathcal{H}_2$. However, analogous to $P_A$, we can for these two spaces let $P_{A}^{(1)}$ and $P^{(2)}_A$ be the projectors onto the anti-symmetric subspaces of $\mathcal{H}_1\otimes\mathcal{H}_1$ and   $\mathcal{H}_2\otimes\mathcal{H}_2$, respectively.
Based on these, we define the generalization $\mywedge(O) := P^{(2)}_A [O\otimes O]P_A^{(1)}$. For this generalization it remains true that $\Vert \mywedge(O)\Vert = \nu_1^{\downarrow}(O)\nu_2^{\downarrow}(O)$. By comparing with (\ref{bofboneabona})
we see that $f(N) = \sum_{x_N,\ldots,x_1= 1}^{d-1}\Vert \mywedge (FA_{x_N}\cdots A_{x_1}\sqrt{\sigma})\Vert$.
A further observation that also holds for the generalization is 
\begin{equation}
\label{eqBound}
\Vert \mywedge(O)\Vert \leq \Vert O\Vert^2.
\end{equation}
Moreover, if $O_A:\mathcal{H}_1\rightarrow\mathcal{H}_2$ and  $O_B:\mathcal{H}_2\rightarrow\mathcal{H}_3$, then  
 $\mywedge(O_BO_A) = \mywedge(O_B)\mywedge(O_A)$, and consequently $\Vert \mywedge(O_BO_A)\Vert \leq \Vert\mywedge(O_B)\Vert \Vert\mywedge(O_A)\Vert$. 

The exponential decay of $w(N)$ is obtained by combining $\lim_{N\rightarrow \infty}w(N) = 0$ with the submultiplicativity of $\log w(N)$ and Fekete's subadditivity lemma. Fekete's Lemma is commonly attributed to Ref.~\onlinecite{Fekete}. For a proof, see Lemma 1.2.1 in Ref.~\onlinecite{Steele}, and for a historical overview, see Section 1.10 in Ref.~\onlinecite{Steele}.

\begin{lemma}[Fekete's subadditive lemma]
\label{FeketesLemma}
Let $(a_N)_{N\in\mathbb{N}}$ be a subadditive sequence of real numbers, i.e., $a_{N+M}\leq a_N + a_M$. Then the limit $\lim_{N\rightarrow\infty} a_N/N$ is well defined (but may be $-\infty$) and 
\begin{equation}
\lim_{N\rightarrow\infty}\frac{a_N}{N} = \inf_{N\in\mathbb{N}}\frac{a_N}{N}.
\end{equation}
\end{lemma}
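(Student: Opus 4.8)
The plan is to follow the classical argument for Fekete's lemma, which proceeds by sandwiching the limit between the liminf and the limsup and showing both coincide with the infimum. First I would set $L := \inf_{N\in\mathbb{N}} a_N/N$, allowing the value $L = -\infty$, and note immediately that since $a_N/N \geq L$ for every $N$ by definition of the infimum, we have the easy bound $\liminf_{N\to\infty} a_N/N \geq L$. The entire substance of the proof therefore lies in establishing the reverse inequality $\limsup_{N\to\infty} a_N/N \leq L$.

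The key step is to fix an arbitrary $m\in\mathbb{N}$ and exploit division with remainder. For $N\geq m$, write $N = qm + r$ with $q\geq 1$ and $0\leq r\leq m-1$. Iterating the subadditivity hypothesis $a_{N+M}\leq a_N+a_M$ gives $a_{qm}\leq q\,a_m$ by induction, and one further application yields
\begin{equation}
a_N = a_{qm+r} \leq q\,a_m + a_r,
\end{equation}
with the convention $a_0 := 0$ to cover the case $r=0$. Dividing by $N = qm+r$ produces
\begin{equation}
\frac{a_N}{N} \leq \frac{q\,a_m}{qm+r} + \frac{a_r}{N}.
\end{equation}
Now I would hold $m$ fixed and let $N\to\infty$, so that $q\to\infty$ while $r$ ranges only over the finite set $\{0,\ldots,m-1\}$. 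The first term converges to $a_m/m$ because $q/(qm+r)\to 1/m$, and the second term vanishes because the finitely many values $a_0,\ldots,a_{m-1}$ are bounded while $N\to\infty$. Hence $\limsup_{N\to\infty} a_N/N \leq a_m/m$. Since $m$ was arbitrary, taking the infimum over $m$ gives $\limsup_{N\to\infty} a_N/N \leq \inf_{m} a_m/m = L$, and combining with the liminf bound yields $\lim_{N\to\infty} a_N/N = L$.

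The argument is essentially routine, so the "main obstacle" is really just careful bookkeeping rather than a conceptual hurdle. The one point that warrants attention is the case $L = -\infty$: here the same chain of inequalities still shows $\limsup_{N\to\infty} a_N/N \leq a_m/m$ for every $m$, and since the infimum of $a_m/m$ is $-\infty$ one can drive the right-hand side arbitrarily low, forcing $\limsup = -\infty$ and hence the limit equals $-\infty$ as claimed. A secondary subtlety is ensuring the remainder term $a_r/N$ is controlled uniformly; this is immediate once one observes that $r$ takes only finitely many values, so no sign considerations on $a_m$ interfere with the convergence $q\,a_m/(qm+r)\to a_m/m$. Given that this is a textbook result (attributed to Ref.~\onlinecite{Fekete} with a proof in Ref.~\onlinecite{Steele}), I would present the above as the complete self-contained verification.
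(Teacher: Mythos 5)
Your proof is correct and complete: the division-with-remainder decomposition $N=qm+r$, the induction giving $a_{qm}\leq q\,a_m$, the observation that $qa_m/(qm+r)\to a_m/m$ regardless of the sign of $a_m$, and the separate treatment of the case $\inf_N a_N/N=-\infty$ are all handled properly. Note that the paper itself does not prove this lemma at all---it defers to Lemma 1.2.1 of the cited textbook of Steele---and your argument is exactly that classical proof, so you have simply supplied the self-contained verification that the paper chose to omit.
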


\begin{prop}
\label{wMain}
Let $\{A_x\}_{x=0}^{d-1}$ be linear operators on a finite-dimensional Hilbert space, such that $\sum_{x=0}^{d-1}A_x^{\dagger}A_x = \1$. 
Define
\begin{equation}
\label{bfgndfgnqsygn}
w(N) := \sum_{x_1,\ldots,x_N=0}^{d-1}\nu_1^{\downarrow}(A_{x_N}\cdots A_{x_1})\nu_2^{\downarrow}(A_{x_N}\cdots A_{x_1}).
\end{equation}
The following statements are equivalent:
\begin{enumerate}
\item $\{A_x\}_{x=0}^{d-1}$ satisfies the purity condition in Definition \ref{DefPur}.
\item $\lim_{N\rightarrow\infty}w(N) = 0$.
\item There exist real constants $C'\geq 0$ and $0<\gamma <1$ such that 
\begin{equation}
\label{dfbdfb}
w(N)\leq C'\gamma^N,\quad \forall N\in\mathbb{N}.
\end{equation}
\end{enumerate}
\end{prop}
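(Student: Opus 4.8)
The plan is to route every implication through the random density operator $\boldsymbol{M}_N$ and its almost-sure limit $\boldsymbol{M}_{\infty}$, whose rank is already linked to purity by Lemmas \ref{adfbafdba} and \ref{ghdnghnhn}. The first step is to record the identity that turns the deterministic sum $w(N)$ into an expectation over the measurement process (\ref{dgnfsgn}). Writing $\boldsymbol{W}_N = A_{\boldsymbol{x}_N}\cdots A_{\boldsymbol{x}_1}$, the singular values of $\boldsymbol{W}_N$ and the eigenvalues of $\boldsymbol{M}_N = \boldsymbol{W}_N^{\dagger}\boldsymbol{W}_N/\Tr(\boldsymbol{W}_N^{\dagger}\boldsymbol{W}_N)$ are related by $\lambda_j^{\downarrow}(\boldsymbol{M}_N) = \nu_j^{\downarrow}(\boldsymbol{W}_N)^2/\Tr(\boldsymbol{W}_N^{\dagger}\boldsymbol{W}_N)$, and since $\Tr(\boldsymbol{W}_N^{\dagger}\boldsymbol{W}_N) = D\,P(\boldsymbol{x}_N,\ldots,\boldsymbol{x}_1)$ the summand $\nu_1^{\downarrow}\nu_2^{\downarrow}$ equals $D\,P(\cdots)\sqrt{\lambda_1^{\downarrow}(\boldsymbol{M}_N)\lambda_2^{\downarrow}(\boldsymbol{M}_N)}$. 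Summing over all outcomes gives
\begin{equation}
w(N) = D\, E\!\left(\sqrt{\lambda_1^{\downarrow}(\boldsymbol{M}_N)\lambda_2^{\downarrow}(\boldsymbol{M}_N)}\right).
\end{equation}

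For the equivalence of (1) and (2), I combine this identity with the almost-sure convergence $\boldsymbol{M}_N\to\boldsymbol{M}_{\infty}$ from Lemma \ref{fnjjnfddanj}. Because eigenvalues depend continuously on the operator, $\sqrt{\lambda_1^{\downarrow}(\boldsymbol{M}_N)\lambda_2^{\downarrow}(\boldsymbol{M}_N)}\to\sqrt{\lambda_1^{\downarrow}(\boldsymbol{M}_{\infty})\lambda_2^{\downarrow}(\boldsymbol{M}_{\infty})}$ almost surely, and this quantity is bounded by $1$ since $\boldsymbol{M}_N$ is a density operator, so the bounded convergence theorem (Proposition \ref{fgsbsfgnfg}) yields $\lim_N w(N) = D\,E(\sqrt{\lambda_1^{\downarrow}(\boldsymbol{M}_{\infty})\lambda_2^{\downarrow}(\boldsymbol{M}_{\infty})})$. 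If purity holds, Lemma \ref{adfbafdba} gives $\mathrm{rank}(\boldsymbol{M}_{\infty})=1$ a.s., hence $\lambda_2^{\downarrow}(\boldsymbol{M}_{\infty})=0$ a.s.\ and $\lim_N w(N)=0$, which is (2). Conversely, if (2) holds the same limit shows $E(\sqrt{\lambda_1^{\downarrow}(\boldsymbol{M}_{\infty})\lambda_2^{\downarrow}(\boldsymbol{M}_{\infty})})=0$; as this is a non-negative random variable, Lemma \ref{ztuztut} forces $\sqrt{\lambda_1^{\downarrow}(\boldsymbol{M}_{\infty})\lambda_2^{\downarrow}(\boldsymbol{M}_{\infty})}=0$ a.s., and since $\lambda_1^{\downarrow}(\boldsymbol{M}_{\infty})\geq 1/D>0$ a.s.\ (as $\boldsymbol{M}_{\infty}$ is almost surely a density operator on a $D$-dimensional space), this means $\lambda_2^{\downarrow}(\boldsymbol{M}_{\infty})=0$ a.s., i.e.\ $\mathrm{rank}(\boldsymbol{M}_{\infty})=1$ a.s.\ Lemma \ref{ghdnghnhn} then returns purity, which is (1).

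It remains to show (2) is equivalent to (3). The direction (3)$\Rightarrow$(2) is immediate because $0<\gamma<1$. The remaining and most delicate step is (2)$\Rightarrow$(3), for which I use submultiplicativity. From $w(N) = \sum_{x_1,\ldots,x_N}\Vert\mywedge(A_{x_N}\cdots A_{x_1})\Vert$ together with $\mywedge(O_BO_A)=\mywedge(O_B)\mywedge(O_A)$ and $\Vert\mywedge(O_BO_A)\Vert\leq\Vert\mywedge(O_B)\Vert\,\Vert\mywedge(O_A)\Vert$, splitting each length-$(N+M)$ product at position $N$ and factorizing the resulting double sum gives $w(N+M)\leq w(N)w(M)$. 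If $w(N_0)=0$ for some $N_0$, then $w(N)=0$ for all $N\geq N_0$ and (3) holds with $C'$ large enough to cover the finitely many remaining terms. Otherwise $a_N:=\log w(N)$ is subadditive, so Fekete's lemma (Lemma \ref{FeketesLemma}) gives $\lim_N a_N/N=\inf_N a_N/N=:L$. The main obstacle is to upgrade the mere statement $w(N)\to 0$ into a strictly negative rate: since $w(N)\to 0$ there exists $N_1$ with $w(N_1)<1$, whence $L\leq a_{N_1}/N_1<0$. Choosing any $\gamma$ with $e^{L}<\gamma<1$, the convergence $a_N/N\to L$ yields $w(N)\leq\gamma^N$ for all sufficiently large $N$, and enlarging the constant to absorb the finitely many small $N$ produces $C'\geq 0$ with $w(N)\leq C'\gamma^N$ for all $N$. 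This closes the chain of equivalences.
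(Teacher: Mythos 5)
Your proof is correct, and its overall skeleton matches the paper's: both arguments route the equivalence of (1) and (2) through the martingale limit $\boldsymbol{M}_{\infty}$ of Lemma \ref{fnjjnfddanj} together with Lemmas \ref{adfbafdba} and \ref{ghdnghnhn}, use the identity $w(N) = D\, E\big(\sqrt{\lambda_1^{\downarrow}(\boldsymbol{M}_N)\lambda_2^{\downarrow}(\boldsymbol{M}_N)}\big)$, and obtain (2)$\Rightarrow$(3) from submultiplicativity of $w$ via the exterior power plus Fekete's lemma (Lemma \ref{FeketesLemma}). Where you genuinely differ is in the analytic glue for (1)$\Leftrightarrow$(2). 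The paper never exchanges limit and expectation directly; it proves two one-sided bounds instead: for (1)$\Rightarrow$(2) it uses $\mathrm{rank}(\boldsymbol{M}_{\infty})=1$ to write $\Vert\boldsymbol{M}_{\infty}\Vert = 1$ a.s., bounds $\sqrt{\lambda_1^{\downarrow}(\boldsymbol{M}_N)\lambda_2^{\downarrow}(\boldsymbol{M}_N)} \leq \sqrt{\Vert \boldsymbol{M}_{\infty}-\boldsymbol{M}_N\Vert}$ a.s., and applies dominated convergence (Proposition \ref{Lebesgues}) to $E\big(\sqrt{\Vert \boldsymbol{M}_{\infty}-\boldsymbol{M}_N\Vert}\big)$; for (2)$\Rightarrow$(1) it uses $1-\Vert\eta\Vert \leq \sqrt{D(D-1)}\sqrt{\lambda_1^{\downarrow}(\eta)\lambda_2^{\downarrow}(\eta)}$ to deduce $E(\Vert\boldsymbol{M}_N\Vert)\rightarrow 1$, invokes the norm-convergence statement of Lemma \ref{fnjjnfddanj} and Lemma \ref{ljhvxrushkl} to get $\Vert\boldsymbol{M}_{\infty}\Vert = 1$ a.s., and hence rank one. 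You instead compute the limit exactly, $\lim_N w(N) = D\,E\big(\sqrt{\lambda_1^{\downarrow}(\boldsymbol{M}_{\infty})\lambda_2^{\downarrow}(\boldsymbol{M}_{\infty})}\big)$, by combining continuity of ordered eigenvalues of Hermitian matrices (Weyl's perturbation bound, a standard fact in finite dimensions that the paper never invokes) with the bounded convergence theorem, and then read off both directions from this single identity — using Lemma \ref{ztuztut} and the neat observation $\lambda_1^{\downarrow}(\boldsymbol{M}_{\infty})\geq 1/D$ for the converse. This buys a shorter and more unified argument, at the price of the eigenvalue-continuity input; the paper's route stays entirely at the level of operator norms and explicit inequalities. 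Your Fekete step is also slightly more careful than the paper's: you dispose of the degenerate case $w(N_0)=0$ (where $\log w$ is undefined) before invoking subadditivity, and you make explicit that $\gamma$ must be taken in $(e^{L},1)$, a constraint the paper leaves implicit in its choice of $\epsilon$.
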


\begin{proof}
{\bf 1 $\boldsymbol{\Rightarrow}$ 2:} 
Let  $\boldsymbol{M}_N$ be as defined in (\ref{gndghnghn}).  We first distinguish the two cases that $\boldsymbol{M}_N$  is a density operator, or that it is the zero operator. In the case that $\boldsymbol{M}_N$  is a density operator, it follows that $1 = \Tr(\boldsymbol{M}_N)  \geq  \lambda_1^{\downarrow}(\boldsymbol{M}_N) + \lambda_2^{\downarrow}(\boldsymbol{M}_N)$, and thus $1 \geq 1 -\lambda_1^{\downarrow}(\boldsymbol{M}_N)  \geq  \lambda_2^{\downarrow}(\boldsymbol{M}_N) \geq 0$. By noting that $\Vert\boldsymbol{M}_N\Vert = \lambda^{\downarrow}_1(\boldsymbol{M}_N)$, we thus get $\sqrt{\Vert\boldsymbol{M}_N\Vert (1 -\Vert\boldsymbol{M}_N\Vert)}  \geq \sqrt{\lambda_1^{\downarrow}(\boldsymbol{M}_N) \lambda_2^{\downarrow}(\boldsymbol{M}_N)}$. Since $\boldsymbol{M}_N$ is assumed to be a density operator, it moreover follows that $\Vert\boldsymbol{M}_N\Vert\leq 1$, and thus 
\begin{equation}
\label{dghmdghm}
\begin{split}
\sqrt{|1 -\Vert\boldsymbol{M}_N\Vert|}   \geq &\sqrt{\lambda_1^{\downarrow}(\boldsymbol{M}_N) \lambda_2^{\downarrow}(\boldsymbol{M}_N)}.
 \end{split}
\end{equation}
In the case that $\boldsymbol{M}_N$ is the zero operator, then (\ref{dghmdghm}) is trivially true.

By Lemma \ref{fnjjnfddanj}, we know that $\boldsymbol{M}_{\infty}$ almost surely is a density operator. By Lemma \ref{adfbafdba}, we also know that $\boldsymbol{M}_{\infty}$ almost surely is a rank-one operator. Hence, $\boldsymbol{M}_{\infty}$ almost surely corresponds to a pure state. Consequently, $\Vert \boldsymbol{M}_{\infty}\Vert =1\,\, a.s.$ Combining this observation with the inverted triangle inequality yields 
\begin{equation}
\label{bfgnsfgn}
    \sqrt{\Vert \boldsymbol{M}_{\infty} -\boldsymbol{M}_N\Vert}\geq \sqrt{|\Vert \boldsymbol{M}_{\infty}\Vert -\Vert\boldsymbol{M}_N\Vert|} = \sqrt{|1 -\Vert\boldsymbol{M}_N\Vert|}\quad a.s.
\end{equation}
Combining (\ref{dghmdghm}) with (\ref{bfgnsfgn}) yields
\begin{equation}
\label{fgnsfgmsf}
\sqrt{\Vert \boldsymbol{M}_{\infty} -\boldsymbol{M}_N\Vert}  \geq  \sqrt{\lambda_1^{\downarrow}(\boldsymbol{M}_N) \lambda_2^{\downarrow}(\boldsymbol{M}_N)}\quad a.s.
\end{equation}
We next observe that
\begin{equation}
\label{fgndfgndgf}
\nu_k^{\downarrow}\left(\frac{\boldsymbol{W}_N}{\sqrt{\Tr(\boldsymbol{W}_N^{\dagger}\boldsymbol{W}_N)} }\right) = \sqrt{\lambda_k^{\downarrow}
\left(\frac{\boldsymbol{W}_N^{\dagger}\boldsymbol{W}_N}{\Tr(\boldsymbol{W}_N^{\dagger}\boldsymbol{W}_N)}\right)}= \sqrt{\lambda_k^{\downarrow}\left(\boldsymbol{M}_N\right)}.
\end{equation}
Thus, (\ref{fgnsfgmsf}) and (\ref{fgndfgndgf}) yields
\begin{equation}
\label{ngfdgfng}
\sqrt{\Vert \boldsymbol{M}_{\infty} -\boldsymbol{M}_N\Vert}  \geq \frac{   
\nu_1^{\downarrow}\left(\boldsymbol{W}_N\right) \nu_2^{\downarrow}\left(\boldsymbol{W}_N\right)   
}
{
\Tr(\boldsymbol{W}_N^{\dagger}\boldsymbol{W}_N)
}\quad a.s.,
\end{equation}
which, by Lemma \ref{htdhgtmdg}, results in
\begin{equation}
\label{fdgnfdgnd}
E\big(\sqrt{\Vert \boldsymbol{M}_{\infty} -\boldsymbol{M}_N\Vert} \big)D \geq 
 E\left(  \frac{   
\nu_1^{\downarrow}\left(\boldsymbol{W}_N\right) \nu_2^{\downarrow}\left(\boldsymbol{W}_N\right)   
}
{
\Tr(\boldsymbol{W}_N^{\dagger}\boldsymbol{W}_N)
}
 \right)D =  w(N).  
\end{equation}
By Lemma \ref{fnjjnfddanj}, we know that  $\boldsymbol{M}_N \rightarrow \boldsymbol{M}_{\infty}$ almost surely. Since the underlying Hilbert space is finite-dimensional, we then have $\Vert \boldsymbol{M}_{\infty} - \boldsymbol{M}_N \Vert \rightarrow 0\,\, a.s.$, and consequently
\begin{equation}
\label{fgnfdgndfg}
\boldsymbol{x}_N :=\sqrt{\Vert\boldsymbol{M}_{\infty}-\boldsymbol{M}_N \Vert} \rightarrow 0\quad a.s.
\end{equation}
We next observe that $\boldsymbol{M}_N$ is a density operator, or the zero operator, and thus  $\Vert\boldsymbol{M}_N\Vert\leq 1$. Hence, $\Vert\boldsymbol{M}_{\infty} - \boldsymbol{M}_N \Vert \leq  \Vert\boldsymbol{M}_{\infty}\Vert + \Vert\boldsymbol{M}_N\Vert \leq  1 +\Vert\boldsymbol{M}_{\infty}\Vert$, which yields
\begin{equation}
\label{dfgndg}
    \boldsymbol{x}_N = \sqrt{\Vert\boldsymbol{M}_{\infty} - \boldsymbol{M}_{N}\Vert}\leq \sqrt{1 +\Vert\boldsymbol{M}_{\infty}\Vert}\leq 1 +\Vert\boldsymbol{M}_{\infty}\Vert=: \boldsymbol{y}.
\end{equation}
By Lemma \ref{fnjjnfddanj} we know that $E(\Vert\boldsymbol{M}_{\infty}\Vert) < +\infty$, and thus $E(\boldsymbol{y}) = 1 + E(\Vert\boldsymbol{M}_{\infty}\Vert) < +\infty$. 
By using this observation and Eqns.~(\ref{fgnfdgndfg}) and (\ref{dfgndg}) into Proposition \ref{Lebesgues}, we can conclude that  
$\lim_{N\rightarrow \infty}E(\sqrt{\Vert\boldsymbol{M}_{\infty}-\boldsymbol{M}_N \Vert})  = 0$. By combining this with  (\ref{fdgnfdgnd}), it follows that $\lim_{N\rightarrow\infty}w(N) = 0$. Hence, we can conclude that statement 1 implies statement 2.

{\bf 2 $\boldsymbol{\Rightarrow}$ 3:} We first make the observation that 
\begin{equation}
\begin{split}
    \Vert \mywedge (A_{x_{N+M}}\cdots A_{x_1})\Vert  \leq \Vert \mywedge&(A_{x_{N+M}}\cdots A_{x_{N+1}})\Vert\Vert\mywedge(A_{x_N}\cdots A_{x_1})\Vert,
\end{split}
\end{equation}
which in turn yields $w(N+M) \leq w(M)w(N)$.
Hence, $w$ is submultiplicative, and thus $\log w(N)$ is subadditive. 
By statement 2 we know that $\lim_{N\rightarrow\infty}w(N) = 0$. It follows that there exists an $N_0\in\mathbb{N}$ such that $\log w(N_0) <0$.
Hence, since $\log w(N)$ is subadditive, it follows by Lemma \ref{FeketesLemma} that 
\begin{equation}
\label{dfbdf}
\begin{split}
0>  \frac{\log w(N_0)}{N_0} \geq  \inf_{N}\frac{\log w(N)}{N} =  \lim_{N\rightarrow\infty}\frac{\log w(N)}{N}. 
\end{split}
\end{equation} 
In the case that the limit is finite, let $l:= \lim_{N\rightarrow\infty}1/N\log w(N)$. By definition of the limit, we know that for any $\epsilon>0$, there exists an $N_{\epsilon}$ such that $[\log w(N) ]/N-l\leq \epsilon$ for all  $N\geq N_{\epsilon}$. We choose an arbitrary but fixed $\epsilon>0$, and thus $w(N)\leq \gamma^N$ for all $N\geq N_{\epsilon}$, where $\gamma := e^{l+\epsilon}$. Define $C':= \max\big\{1, \max_{N=1,\ldots,N_{\epsilon}}w(N)/N\big\}$, and thus (\ref{dfbdfb}) holds.

Finally consider the case that $\lim_{N\rightarrow\infty}1/N\log w(N) = -\infty$. This means that for every $a>0$ there exists an $N_{a}$ such that $[\log w(N)]/N \leq -a$ for all $N\geq N_{a}$, which we can easily rewrite as $w(N) \leq e^{-aN}$. Hence, with $\gamma := e^{-a}$ and $C':= \max\{1, \max_{N=1,\ldots,N_{a}}w(N)/N\}$ we again obtain (\ref{dfbdfb}). We can conclude that statement 2 implies statement 3.

{\bf 3 $\boldsymbol{\Rightarrow}$ 2:} This implication is trivial.

{\bf 2 $\boldsymbol{\Rightarrow}$ 1:}  In our first step, we show that  $\lim_{N\rightarrow\infty}w(N) = 0$ implies that $\Vert\boldsymbol{M}_{\infty}\Vert = 1\,\, a.s.$ 
We first observe that if $\eta$ is a density operator on a complex Hilbert space with finite dimension $D$, then $1-\Vert\eta\Vert \leq  \sqrt{(D-1)D}  \sqrt{\lambda^{\downarrow}_{1}(\eta)\lambda^{\downarrow}_{2}(\eta)}$. We know that $\boldsymbol{M}_N$ is either a density operator, or the zero operator, and thus $1-\Vert\boldsymbol{M}_N\Vert\geq 0$. We moreover know  that $\boldsymbol{M}_N$ almost surely is  a density operator. With $\boldsymbol{x} := 1-\Vert\boldsymbol{M}_N\Vert $ and $\boldsymbol{y} := \sqrt{D(D-1)}\sqrt{\lambda^{\downarrow}_{1}(\boldsymbol{M}_N)\lambda^{\downarrow}_{2}(\boldsymbol{M}_N)} $, we can use the above observations to conclude that $0\leq \boldsymbol{x} \leq \boldsymbol{y}\,\, a.s.$ Moreover, by Lemma \ref{htdhgtmdg}, we obtain
\begin{equation}
\label{hjmdghmjd}
    1 -E(\Vert\boldsymbol{M}_N\Vert)=E(1-\Vert\boldsymbol{M}_N\Vert)\leq D\sqrt{\frac{D-1}{D}}  E\left(\sqrt{\lambda^{\downarrow}_{1}(\boldsymbol{M}_N)\lambda^{\downarrow}_{2}(\boldsymbol{M}_N)}\right).
\end{equation}
Next we note that the observation in (\ref{fgndfgndgf}) yields
\begin{equation}
\label{hgngfhn}
E\left( \sqrt{\lambda_1^{\downarrow}(\boldsymbol{M}_N) \lambda_2^{\downarrow}(\boldsymbol{M}_N)} \right)
D= E\left(  \frac{   
\nu_1^{\downarrow}\left(\boldsymbol{W}_N\right) \nu_2^{\downarrow}\left(\boldsymbol{W}_N\right)   
}
{
\Tr(\boldsymbol{W}_N^{\dagger}\boldsymbol{W}_N)
}
 \right)D= w(N). 
\end{equation}
By combining (\ref{hjmdghmjd}) and (\ref{hgngfhn}), one obtains $1 -E(\Vert\boldsymbol{M}_N\Vert) \leq  w(N) \sqrt{(D-1)/D}$. 
By the assumption that $\lim_{N\rightarrow\infty}w(N) = 0$, it follows that $\lim_{N\rightarrow\infty}E(\Vert\boldsymbol{M}_N\Vert)  = 1$. By (\ref{dghndghnh}) in Lemma \ref{fnjjnfddanj}, we know that  $E(\Vert \boldsymbol{M}_N\Vert) \rightarrow E(\Vert \boldsymbol{M}_{\infty}\Vert)$. We can thus conclude that $ E(\Vert \boldsymbol{M}_{\infty}\Vert) = 1$. With $\boldsymbol{x}:= 1-\Vert\boldsymbol{M}_{\infty}\Vert$, it follows that $E(\boldsymbol{x}) = 0$. Since $\boldsymbol{M}_{\infty}$ is almost surely a density operator, it follows that $1\geq\Vert \boldsymbol{M}_{\infty}\Vert$ almost surely. Hence, $\boldsymbol{x} = 1-\Vert\boldsymbol{M}_{\infty}\Vert\geq 0\,\, a.s.$ By combining this observation and $E(\boldsymbol{x}) = 0$ with Lemma \ref{ljhvxrushkl}, we obtain $\boldsymbol{x} = 0\,\, a.s.$, and thus  $\Vert\boldsymbol{M}_{\infty}\Vert = 1\hspace{0.15cm}a.s.$ By Lemma \ref{fnjjnfddanj}, we know that $\boldsymbol{M}_{\infty}$ almost surely is a density operator. If $\boldsymbol{M}_{\infty}$ is a density operator, then $\boldsymbol{M}_{\infty}$ is a rank one operator if and only if $\Vert\boldsymbol{M}_{\infty}\Vert = 1$. We can thus conclude that $\mathrm{rank}(\boldsymbol{M}_{\infty}) = 1\,\, a.s.$ By Lemma \ref{ghdnghnhn}, this implies that  $\{A_x\}_{x=0}^{d-1}$ satisfies the purity condition in Definition \ref{DefPur}. Hence, statement 2 implies statement 1.

\end{proof}

\subsection{\label{SecTransition} Generalization to $F$ and $\sigma$}

The entire proof has up to this point concerned the exponential decay of $w(N)$, while we actually wish to find conditions for the exponential decay of $f(N)$. Here, we find necessary as well as sufficient conditions  for  the exponential decay of $f(N)$. We state a slightly more elaborate version of Proposition \ref{PropMain}.

\begin{prop}
\label{bofboneabona} 
Let $\{A_x\}_{x=1}^{d-1}$ be linear operators on the finite-dimensional complex Hilbert space $\mathcal{H}$, such that $\sum_{x=1}^{d-1}A_x^{\dagger}A_x = \1$. For an operator $\sigma$ on $\mathcal{H}$, and an operator $F:\mathcal{H}\rightarrow\mathcal{H}'$ for a  finite-dimensional complex Hilbert space $\mathcal{H}'$, define

\begin{equation}
f(N) := \sum_{x_N,\ldots,x_1= 1}^{d-1}\nu_1^{\downarrow}(FA_{x_N}\cdots A_{x_1}\sqrt{\sigma})\nu_2^{\downarrow}(FA_{x_N}\cdots A_{x_1}\sqrt{\sigma}).
\end{equation}
 If $\{A_x\}_{x=1}^{d-1}$ satisfies the purity condition in Definition \ref{DefPur}, then there exist real constants $ 0\leq \overline{c}$ and $0<\gamma <1$, which satisfy
\begin{equation}
\label{hgmghmg2}
f(N)\leq \overline{c}\gamma^N,\quad \forall N\in\mathbb{N}, 
\end{equation}
for all density operators $\sigma$, and all $F$ such that $F^{\dagger}F\leq \1$.
Conversely, if $f$ is defined with respect to some full-rank operators $\sigma$ and $F^{\dagger}F$, such that there exist constants $0\leq \overline{c}_{\sigma,F}$ and $0<\gamma <1$ which fulfill
\begin{equation}
\label{bdfndnjfkbjlndf}
f(N)\leq \overline{c}_{\sigma,F}\gamma^N,\quad \forall N\in\mathbb{N}, 
\end{equation}
then $\{A_x\}_{x=1}^{d-1}$ satisfies the purity condition.
\end{prop}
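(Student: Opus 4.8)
The plan is to reduce both halves of the statement to the already-established Proposition~\ref{wMain}, by sandwiching $f(N)$ between constant multiples of $w(N)$. The essential tool is the reformulation in terms of the exterior power of degree two: since $\Vert\mywedge(O)\Vert = \nu_1^{\downarrow}(O)\nu_2^{\downarrow}(O)$, one has $w(N) = \sum_{x_1,\ldots,x_N}\Vert\mywedge(A_{x_N}\cdots A_{x_1})\Vert$ and $f(N) = \sum_{x_1,\ldots,x_N}\Vert\mywedge(FA_{x_N}\cdots A_{x_1}\sqrt{\sigma})\Vert$, together with the multiplicativity $\mywedge(O_B O_A) = \mywedge(O_B)\mywedge(O_A)$ and the bound $\Vert\mywedge(O)\Vert\leq\Vert O\Vert^2$. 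Writing $W_N := A_{x_N}\cdots A_{x_1}$, both directions amount to estimating $\Vert\mywedge(FW_N\sqrt{\sigma})\Vert$ against $\Vert\mywedge(W_N)\Vert$, summing, and invoking the equivalence of purity with the exponential decay of $w$.

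For the sufficiency direction I would factor $\mywedge(FW_N\sqrt{\sigma}) = \mywedge(F)\mywedge(W_N)\mywedge(\sqrt{\sigma})$ and estimate
\begin{equation*}
\Vert\mywedge(FW_N\sqrt{\sigma})\Vert \leq \Vert\mywedge(F)\Vert\,\Vert\mywedge(W_N)\Vert\,\Vert\mywedge(\sqrt{\sigma})\Vert.
\end{equation*}
Since $F^{\dagger}F\leq\1$ gives $\Vert F\Vert\leq 1$, we have $\Vert\mywedge(F)\Vert\leq\Vert F\Vert^2\leq 1$; and since $\sigma$ is a density operator, $\Vert\mywedge(\sqrt{\sigma})\Vert\leq\Vert\sqrt{\sigma}\Vert^2 = \Vert\sigma\Vert\leq 1$. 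Summing over all $x_1,\ldots,x_N$ then yields $f(N)\leq w(N)$. By Proposition~\ref{wMain}, purity forces $w(N)\leq C'\gamma^N$, so $f(N)\leq C'\gamma^N$ with $\overline{c} := C'$; crucially $C'$ and $\gamma$ are inherited from $w$ and hence independent of the particular $\sigma$ and $F$, as required.

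For the converse, the full-rank hypotheses on $\sigma$ and $F^{\dagger}F$ are exactly what permit inverting this factorization. Here $\sqrt{\sigma}$ is invertible with $(\sqrt{\sigma})^{-1}$ bounded, while $F^{\dagger}F$ invertible means $F$ is injective with bounded left inverse $F^{+} := (F^{\dagger}F)^{-1}F^{\dagger}$ satisfying $F^{+}F = \1$. Multiplicativity then gives $\mywedge(W_N) = \mywedge(F^{+})\mywedge(FW_N\sqrt{\sigma})\mywedge((\sqrt{\sigma})^{-1})$, so that
\begin{equation*}
\Vert\mywedge(W_N)\Vert \leq \Vert\mywedge(F^{+})\Vert\,\Vert\mywedge(FW_N\sqrt{\sigma})\Vert\,\Vert\mywedge((\sqrt{\sigma})^{-1})\Vert.
\end{equation*}
Summing yields $w(N)\leq c_{\sigma,F}\,f(N)$ with $c_{\sigma,F} := \Vert\mywedge(F^{+})\Vert\,\Vert\mywedge((\sqrt{\sigma})^{-1})\Vert<+\infty$. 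The assumed bound $f(N)\leq\overline{c}_{\sigma,F}\gamma^N$ then forces $\lim_{N\to\infty}w(N)=0$, and Proposition~\ref{wMain} upgrades this to the purity condition.

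I expect the main obstacle to be the converse direction, specifically the finiteness of the constant $c_{\sigma,F}$. This is precisely where the full-rank assumptions are indispensable: without them $F$ need not admit a bounded left inverse and $\sqrt{\sigma}$ need not be invertible, so the factorization of $\mywedge(W_N)$ collapses. This mirrors the genuine phenomenon flagged in the remark after the proof overview, namely that a degenerate $\sigma$ (e.g.\ rank one) or $F$ can make $f(N)$ decay, or even vanish identically, regardless of whether purity holds; the full-rank conditions are exactly what prevent information from being trapped in a degenerate subspace and thereby tie the decay of $f$ back to that of the intrinsic quantity $w$.
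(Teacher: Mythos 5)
Your proposal is correct and follows essentially the same route as the paper's own proof: both directions sandwich $f(N)$ against $w(N)$ via the multiplicativity of $\mywedge$ and the bound $\Vert\mywedge(O)\Vert\leq\Vert O\Vert^2$, with the same left inverse $(F^{\dagger}F)^{-1}F^{\dagger}$ and $(\sqrt{\sigma})^{-1}$ appearing in the converse, before invoking Proposition~\ref{wMain}. The only cosmetic difference is that you pass through statement~2 of Proposition~\ref{wMain} ($w(N)\to 0$) for the converse, whereas the paper uses statement~3 directly; since these are equivalent there, nothing is lost.
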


\begin{proof}
We begin by proving the first claim of the proposition. 
We first note that 
\begin{equation}
\label{fdbadfbadfb}
\begin{split}
f(N) = & \sum_{x_1,\ldots,x_N}\Vert \mywedge (FA_{x_N}\cdots A_{x_1}\sqrt{\sigma})\Vert, \\
\leq &  \Vert \mywedge (F)\Vert \Vert\mywedge (\sqrt{\sigma})\Vert  w(N),\\
\leq  &  \Vert F\Vert^2 \Vert\sqrt{\sigma}\Vert^2  w(N),\\
\leq & w(N),
\end{split}
\end{equation}
where $w$ is as defined in (\ref{bfgndfgnqsygn}), and where the next to last inequality follows from (\ref{eqBound}). The last inequality follows  since $\sigma$ is assumed to be a density operator, and thus $\Vert \sqrt{\sigma}\Vert \leq 1$, and similarly $F^{\dagger}F\leq \1$ implies $\Vert F\Vert\leq 1$. If $\{A_x\}_{x=0}^{d-1}$ satisfies the purity condition, then it follows by Proposition \ref{wMain} that $w(N)\leq C'\gamma^N$. By combining this observation with (\ref{fdbadfbadfb}), we obtain (\ref{hgmghmg2}) with $\overline{c} := C'$. Note that Proposition \ref{wMain} makes no reference to $F$ or $\sigma$, and thus $C'$, and consequently $\overline{c}$, is independent of these.

Next, we turn to the second claim of the proposition. For this purpose, we first note that since $F^{\dagger}F$ and $\sigma$ (and thus $\sqrt{\sigma}$) are full-rank operators on a finite-dimensional space, it follows that $(F^{\dagger}F)^{-1}$ and $\sqrt{\sigma}^{-1}$ exist. With $w$ as defined in (\ref{bfgndfgnqsygn}), we thus find
\begin{equation}
\begin{split}
w(N) = \sum_{x_1,\ldots,x_N}\Vert \mywedge (A_{x_N}\cdots A_{x_1})\Vert\leq &\Vert \mywedge \big((F^{\dagger}F)^{-1}F^{\dagger}\big)\Vert\Vert\mywedge (\sqrt{\sigma}^{-1})\Vert f(N).
\end{split}
\end{equation}

Hence, with $\overline{c}'_{\sigma,F} :=  \Vert \mywedge \big((F^{\dagger}F)^{-1}F^{\dagger}\big)\Vert\Vert\mywedge (\sqrt{\sigma}^{-1})\Vert$, we get $w(N) \leq \overline{c}'_{\sigma,F} f(N)$. 
Combined with the assumption (\ref{bdfndnjfkbjlndf}),  it follows
\begin{equation}
w(N)\leq \overline{c}_{\sigma,F}\overline{c}'_{\sigma,F}\gamma^N,\quad \forall N\in\mathbb{N},
\end{equation}
where $w$ is as defined in Proposition \ref{wMain}, and $0<\gamma<1$. With $C':= \overline{c}_{\sigma,F}\overline{c}'_{\sigma,F}$ in Proposition \ref{wMain}, it follows that  $\{A_x\}_{x=1}^{d-1}$ satisfies the purity condition in Definition \ref{DefPur}.
\end{proof}

\nocite{*}
\bibliography{bib}

\end{document}